\definecolor{bookgreen}{RGB}{40,130,80}
\definecolor{bookblue}{RGB}{50,110,150}
\definecolor{bookred}{RGB}{180,15,47}
\newcounter{sauvegarde}
\newcommand\adjustc[1]{
  \setcounter{sauvegarde}{\thetheorem}
  \setcounterref{theorem}{#1}
  \addtocounter{theorem}{-1}
}
\newcommand\restorec{
  \setcounter{theorem}{\thesauvegarde}
}
\newcommand{\upset}[2][\Cs]{\ensuremath{{\mathord\uparrow_{#1}\,#2}}}
\newcommand{\canod}{\ensuremath{\leqslant_\Ds}\xspace}
\newcommand{\canof}{\ensuremath{\leqslant_\Fs}\xspace}
\newcommand{\cocl}[1]{\ensuremath{\mathit{co\textup{-}}\!#1}\xspace}
\newcommand{\fo}{\ensuremath{\textup{FO}}\xspace}
\newcommand{\MOD}{\textup{MOD}\xspace}
\newcommand{\fowm}{\mbox{\ensuremath{\fo({<},\MOD)}}\xspace}
\newcommand{\siwm}[1]{\ensuremath{\Sigma_{#1}({<},\MOD)}\xspace}
\newcommand{\siwmu}{\siwm{1}}
\newcommand{\siwmd}{\siwm{2}}
\newcommand{\siwmt}{\siwm{3}}
\newcommand{\bswm}[1]{\ensuremath{\Bs\Sigma_{#1}({<},\MOD)}\xspace}
\newcommand{\bswmu}{\bswm{1}}
\newcommand{\md}{\ensuremath{\textup{\MOD}}\xspace}
\newcommand{\grp}{\ensuremath{\textup{GR}}\xspace}
\newcommand{\sfr}{\ensuremath{\textup{SF}}\xspace}
\newcommand{\bool}[1]{\ensuremath{\mathit{Bool}(#1)}\xspace}
\newcommand{\pol}[1]{\ensuremath{\mathit{Pol}(#1)}\xspace}
\newcommand{\bpol}[1]{\ensuremath{\mathit{BPol}(#1)}\xspace}
\newcommand{\pbpol}[1]{\ensuremath{\mathit{PBPol}(#1)}\xspace}
\newcommand{\copol}[1]{\ensuremath{\mathit{co\textup{-}\!Pol}(#1)}\xspace}
\newcommand{\imprint}{imprint\xspace}
\newcommand{\imprints}{imprints\xspace}
\newcommand{\tame}{multiplicative\xspace}
\newcommand{\Ratms}{Rating maps\xspace}
\newcommand{\ratms}{rating maps\xspace}
\newcommand{\ratm}{rating map\xspace}
\newcommand{\Nice}{Nice\xspace}
\newcommand{\nice}{nice\xspace}
\newcommand{\mratm}{multiplicative rating map\xspace}
\newcommand{\mratms}{multiplicative rating maps\xspace}
\newcommand{\Mratms}{Multiplicative rating maps\xspace}
\newcommand{\iden}{\veps-approximation\xspace}
\newcommand{\idens}{\veps-approximations\xspace}
\newcommand{\prin}[2]{\ensuremath{\Is[#1](#2)}\xspace}
\newcommand{\opti}[2]{\ensuremath{\Is_{#1}\left[#2\right]}\xspace}
\newcommand{\copti}[1]{\opti{\Cs}{#1}}
\newcommand{\popti}[3]{\ensuremath{\Ps_{#1}^{#2}[#3]}\xspace}
\newcommand{\iopti}[2]{\ensuremath{\fri_{#1}[#2]}\xspace}
\newcommand{\ioptic}[1]{\iopti{\Cs}{#1}}
\newcommand{\pocopti}{\popti{\pol{\Cs}}{\alpha}{\rho}}
\newcommand{\bpoluopti}{\ioptic{\bratauxbc}}
\newcommand{\pbpoluopti}{\ioptic{\lratauxppc}}
\newcommand{\typ}[2]{\ensuremath{[#1]_{#2}}\xspace}
\newcommand{\Bs}{\ensuremath{\mathcal{B}}\xspace}
\newcommand{\Cs}{\ensuremath{\mathcal{C}}\xspace}
\newcommand{\Ds}{\ensuremath{\mathcal{D}}\xspace}
\newcommand{\Fs}{\ensuremath{\mathcal{F}}\xspace}
\newcommand{\Gs}{\ensuremath{\mathcal{G}}\xspace}
\newcommand{\Is}{\ensuremath{\mathcal{I}}\xspace}
\newcommand{\Ps}{\ensuremath{\mathcal{P}}\xspace}
\newcommand{\Ts}{\ensuremath{\mathcal{T}}\xspace}
\newcommand{\Hb}{\ensuremath{\mathbf{H}}\xspace}
\newcommand{\Kb}{\ensuremath{\mathbf{K}}\xspace}
\newcommand{\Lb}{\ensuremath{\mathbf{L}}\xspace}
\newcommand{\fri}{\ensuremath{\mathbbm{i}}\xspace}
\newcommand{\vari}{quotient-closed Boolean algebra\xspace}
\newcommand{\varis}{quotient-closed Boolean algebras\xspace}
\newcommand{\Varis}{Quotient-closed Boolean algebras\xspace}
\newcommand{\pvari}{quotient-closed lattice\xspace}
\newcommand{\pvaris}{quotient-closed lattices\xspace}
\newcommand{\nat}{\ensuremath{\mathbb{N}}\xspace}
\def\inv{^{-1}}
\newcommand{\veps}{\ensuremath{\varepsilon}\xspace}
\newcommand{\dclosp}[1]{\ensuremath{\mathord{\downarrow_{#1}}}\xspace}
\newcommand{\dclosr}{\dclosp{R}}
\newcommand{\brataux}[2]{\ensuremath{\xi_{#1}[#2]}\xspace}
\newcommand{\bratauxd}{\brataux{\Ds}{\rho}}
\newcommand{\bratauxbc}{\brataux{\bpol{\Cs}}{\rho}}
\newcommand{\lrataux}[3]{\ensuremath{\zeta_{#1}^{#2}[#3]}\xspace}
\newcommand{\lratauxd}{\lrataux{\Ds}{\alpha}{\rho}}
\newcommand{\lratauxppc}{\lrataux{\pbpol{\Cs}}{\alpha}{\rho}}
\newcommand{\quasi}[1]{\ensuremath{\mu_{#1}}\xspace}
\newcommand{\quasir}{\quasi{\rho}}
\theoremstyle{plain}
\newtheorem{theorem}{Theorem}
\newtheorem{corollary}[theorem]{Corollary}
\newtheorem{fact}[theorem]{Fact}
\newtheorem{proposition}[theorem]{Proposition}
\newtheorem{lemma}[theorem]{Lemma}
\newtheorem{remark}[theorem]{Remark}
\newtheorem{example}[theorem]{Example}
\begin{document}

\title{Separation and covering for group based concatenation hierarchies}
\author{Thomas Place}
\address{LaBRI, Bordeaux University and IUF}
\author{Marc Zeitoun}
\address{LaBRI, Bordeaux University}
\thanks{Both authors acknowledge support from the DeLTA project (ANR-16-CE40-0007).}

\maketitle

\begin{abstract}
  Concatenation hierarchies are natural classifications of regular languages. All such hierarchies are built through the same construction process: one starts from an initial, specific class of languages (the basis) and builds new levels using two generic operations. Concatenation hierarchies have gathered a lot of interest since the early 70s, notably thanks to an alternate logical definition: each concatenation hierarchy can be defined as the quantification alternation hierarchy within a variant of first-order logic over words (while the hierarchies differ by their bases, the variants differ by their set of available~predicates).

  Our goal is to understand these hierarchies. A typical approach is to look at two decision problems: membership and separation. In the paper we are interested in the latter, which is more general. For a class of languages \Cs, \Cs-separation takes two regular languages as input and asks whether there exists a third one in \Cs including the first one and disjoint from the second one. Settling whether separation is decidable for the levels within a given concatenation hierarchy is among the most fundamental and challenging questions in formal language theory. In all prominent cases, it is open, or answered positively for low levels only. Recently, a breakthrough was made using a generic approach for a specific kind of hierarchy: those with a \emph{finite} basis. In this case, separation is always decidable for levels 1/2, 1 and 3/2.

  Our main theorem is similar but independent: we consider hierarchies with possibly infinite bases, but that may only contain \emph{group languages}. An example is the group hierarchy of Pin and Margolis: its basis consists of all group languages. Another example is the quantifier alternation hierarchy of first-order logic with modular predicates (\fowm): its basis consists of the languages that count the length of words modulo some number. Using a generic approach, we show that for any such hierarchy, if separation is decidable for the basis, then it is decidable as well for levels 1/2, 1 and 3/2 (we actually solve a more general problem called covering). This complements the aforementioned result nicely: all bases considered in the literature are either finite or made of group languages. Thus, one may handle the lower levels of any prominent hierarchy in a \emph{generic way}.
\end{abstract}

\maketitle

\section{Introduction}
\label{sec:intro}
\noindent\textbf{Context.}
Concatenation hierarchies are natural classifications of regular languages. They were motivated by a celebrated theorem of Schützenberger~\cite{sfo}, McNaughton and Papert~\cite{mnpfo}, which is twofold. Its first part is simple, yet important: it states that the class of regular languages that can be defined from singletons using Boolean operations and concatenation only (but \emph{no} Kleene star) coincides with the class of languages that can be defined in  first-order logic. In other words, \emph{star-free} languages are exactly \emph{first-order definable} ones. The second and more difficult part consists of an algorithm that inputs a regular language, and outputs whether it belongs to this class.

This motivated Brzozowski and Cohen~\cite{BrzoDot} for introducing the \emph{dot-depth hierarchy}. It stratifies star-free languages in an infinitely increasing sequence of levels~\cite{BroKnaStrict} spanning the whole class of star-free languages. Intuitively, a level in this hierarchy captures the number of alternations between complement and concatenation that are needed to express a star-free language. The dot-depth rose to prominence following the work of Thomas~\cite{ThomEqu}, who proved an exact correspondence with the quantifier alternation hierarchy of first-order logic: each level in the dot-depth hierarchy consists of all languages that can be defined with a prescribed number of quantifier blocks.

One of the most famous open problems in automata theory is to settle whether the membership problem is decidable for each individual level: is there an algorithm deciding whether an input regular language belongs to this level?

The literature about this problem is rich. We refer the reader to the surveys~\cite{PZ:Siglog15,pzgenconcat,jep-dd45}. The bottom line is that after more than 45 years, little is known. Let us briefly survey the most important cornerstones in this line of research.

\smallskip\noindent\textbf{State of the art.}
The dot-depth is a particular instance of a \emph{concatenation hierarchy}.
All such hierarchies are built through a uniform construction scheme. Levels are numbered by integers 0,1,2... or half integers 1/2, 3/2, 5/2... Level~0 is a class of languages called the basis, which is the only parameter specific to the hierarchy. New levels are then built using two operations. The first is closure under marked concatenation $K,L\mapsto KaL$ (for some letter $a$) and union. This closure operator, when applied to level~$n\in\nat$, produces level~$n+1/2$. The second operator is closure under Boolean operations, which yields level $n+1$ out of level $n+1/2$.

For instance, the basis of the dot-depth hierarchy consists of 4 languages: the empty set, the singleton set consisting of the empty word, and their complements.
Between 1971 and 2015, membership was shown to be decidable up to level~5/2 and for level~7/2~\hbox{\cite{arfi87,knast83,gssig2,pzbpolc,pzqalt,pzsucc,pseps3,pseps3j}}.~The state of the art is the same for the \emph{Straubing-Thérien hierarchy}~\cite{StrauConcat,TheConcat}, whose basis is the empty set and the set of all words. Actually the above results for the dot-depth were obtained by reduction to this hierarchy, via transfer theorems~\cite{StrauVD,pzsucc}.

A crucial point is that all recent results are based on more general problems than membership: separation and covering. For instance, separation for a class \Cs takes as input \emph{two} regular languages, and asks whether there exists a third language, which belongs to \Cs, contains the first language, and is disjoint from the second. Membership is the particular case of separation when the input consists of a regular language and its complement.
Thus, deciding membership for a class \Cs reduces to deciding separation for \Cs, which itself reduces to deciding covering~for~\Cs.

\smallskip\noindent\textbf{A generic result.}
Recently, we proved that \emph{all} known results on both the Straubing-Thérien and the dot-depth hierarchies follow from a single one. This generic theorem~\cite{pzbpolc,pseps3j} states that in any hierarchy having a finite basis satisfying mild hypotheses, covering is decidable at levels 1/2, 1 and 3/2. All results stated above follow from this via simpler observations.

Establishing such generic results is of course desirable. This avoids developing specific arguments, pinpointing the key hypotheses  needed for obtaining covering or separation algorithms, and this is amenable to generalization.

\smallskip\noindent\textbf{Contribution.}
In the literature, there is another important kind of concatenation hierarchy: those with a basis made exclusively of group languages. A \emph{group language} is a language recognized by an automaton whose letters act as permutations on states. The most prominent representative is the \emph{group hierarchy}, whose basis consists of all group languages. Pin and Margolis~\cite{MargolisP85} proved that levels 1/2 and 1 have decidable membership. An algebraic attempt to go higher was proposed in \cite{pinbridges}. The idea was to reduce membership at a given level to covering at the same level in the Straubing-Thérien hierarchy. Unfortunately, while the connection between both hierarchies was established, it did not provide the desired algorithm. In this paper, we develop orthogonal, complementary ideas. Note that our aforementioned generic result does not apply here, since the class of group languages is infinite.

Our contribution is another \emph{\bfseries generic result} which nicely complements the one applying to finite bases. While both statements are similar, they are independent. Our main theorem states that \emph{covering} is decidable at levels 1/2, 1 and 3/2 in \emph{any} hierarchy whose basis consists of \emph{group languages}, provided that this basis has \emph{decidable separation}.

\smallskip\noindent\textbf{Applications.} Let us state three applications of this result. First, the class of all group languages is known to have decidable separation~(this follows from a result of Ash~\cite{Ash91}, which was connected to separation by Almeida~\cite{MR1709911}). Therefore, our generic statement applies to the group hierarchy of Pin and Margolis.

The other examples have a nice logical interpretation. Indeed, the result of Thomas~\cite{ThomEqu} was generalized in~\cite{pzgenconcat}: \emph{every} concatenation hierarchy corresponds, level by level, to the quantification alternation hierarchy within a variant of first-order logic over words. While the hierarchies differ by their bases, the variants differ by their set of available~predicates.

This correspondence makes it possible to present our second example as the quantifier alternation hierarchy of first-order logic with modular predicates (\fowm). It corresponds to the concatenation hierarchy whose basis consists of the languages that count the length of words modulo some number. It is simple to show that this basis consists of group languages and has decidable separation. Hence, our generic theorem applies: it strengthens and unifies results of~\cite{ChaubardPS06,KufleitnerW15}, which dealt with membership at levels up to 3/2, and~\cite{Zetzsche18}, which proved separation at level~1 with a combinatorial~proof leading to a brute force algorithm, orthogonal to our techniques.
Finally, when the basis consists of languages recognized by commutative groups, we obtain by~\cite{pzgenconcat,Eilenberg_book_B}  the quantifier alternation hierarchy of first-order logic endowed with predicates counting the number of occurrences of a letter before a position, modulo some integer. This basis has decidable covering~\cite{abelian_pt,MR1709911}, so our result applies to this hierarchy, which was not yet investigated.

\medskip\noindent
{\bf Organization.}
In Section~\ref{sec:prelims}, we recall the basic notions that we need. The main theorem is stated in Section~\ref{sec:hiera}. The framework that we use is recalled in Sections~\ref{sec:ratms} and~\ref{sec:units}. Finally, Sections~\ref{sec:polg},~\ref{sec:bpolg} and~\ref{sec:pbpolg} are devoted to presenting the algorithms of the three parts of the main theorem: the decidability of covering at levels 1/2, 1 and 3/2, respectively.

\noindent
Due to space limitations, some proofs are given in appendix.

\section{Preliminaries}
\label{sec:prelims}

\subsection{Classes of languages}

We fix an arbitrary finite alphabet $A$ for the whole paper. As usual, $A^*$ denotes the set of all words over $A$, including the empty word~\veps. We let $A^{+}=A^{*}\setminus\{\veps\}$. For $u,v \in A^*$, we write $u \cdot v$ or $uv$ the word obtained by concatenating $u$ and~$v$. Finally, for $w \in A^*$, we write $|w| \in \nat$ for its length.

A \emph{language} is a subset of $A^*$. We denote the singleton language $\{u\}$ by $u$. One may lift the concatenation operation to languages: for $K,L \subseteq A^*$, we let $KL = \{uv \mid u \in K \text{\;and\;} v \in L\}$. Additionally, we consider \emph{marked concatenation}. Given $K,L \subseteq A^*$, a marked concatenation of $K$ with $L$ is a language of the form $KaL$ for some letter $a \in A$.

A \emph{class of languages} \Cs is a set of languages. We shall work with robust classes satisfying standard closure properties:
\begin{itemize}
\item \Cs is a \emph{lattice} when it is closed under union and intersection, $\emptyset \in \Cs$ and $A^* \in \Cs$.
\item A \emph{Boolean algebra} is a lattice closed under complement.
\item \Cs is \emph{quotient-closed} when for every $L \in \Cs$ and $w \in A^*$, the following two languages belong to \Cs:
  \[
    \begin{array}{lll}
      w^{-1}L & \stackrel{\text{def}}= & \{u \in A^* \mid wu \in L\}, \\
      Lw^{-1} & \stackrel{\text{def}}= & \{u \in A^* \mid uw \in L\}.
    \end{array}
  \]
\end{itemize}
All classes considered in the paper are (at least) \pvaris. Furthermore, they  are included in the class of \emph{regular languages}. These are the languages that can be equivalently defined by monadic second-order logic, finite automata or finite monoids. We use the definition based on monoids.

\smallskip\noindent
{\bf Regular languages.} A \emph{semigroup} is a pair $(S,\cdot)$ where $S$ is a set and ``$\cdot$'' is an associative binary operation on $S$ (often called multiplication).  It is standard to abuse terminology and make the binary operation implicit: one simply says that ``$S$ is a semigroup''. A \emph{monoid} $M$ is a semigroup whose multiplication has a neutral element denoted by ``$1_M$''. Recall that an idempotent of a semigroup $S$ is an element $e \in S$ such that $ee = e$. A standard result in semigroup theory states that when $S$ is \emph{finite}, there exists $\omega(S) \in \nat$ (written $\omega$ when $S$ is understood) such that $s^\omega$ is idempotent for every $s \in S$.

Clearly, $A^*$ is a monoid whose multiplication is concatenation (\veps is the neutral element).  Hence, given a monoid $M$, we may consider morphisms $\alpha: A^* \rightarrow M$. We say that a language $L \subseteq A^*$ is \emph{recognized} by such a morphism $\alpha$ when there exists $F \subseteq M$ such that $L = \alpha\inv(F)$. It is well-known that the regular languages are exactly those which can be recognized by a morphism $\alpha: A^* \rightarrow M$ where $M$ is a \emph{finite} monoid.

\smallskip\noindent
{\bf Group languages.} A group is a monoid $G$ such that every element $g \in G$ has an inverse $g\inv \in G$, \emph{i.e.}, \hbox{$gg\inv = g\inv g = 1_G$}. We call ``\emph{group language}'' a language $L$ which is recognized by a morphism into a \emph{finite group}. In the paper, we consider classes that are \varis of group languages (\emph{i.e.}, containing group languages only).

\begin{example}\label{ex:allgroups}
  The most immediate example of \vari of group languages (which is also the largest one) is the class of \emph{all} group languages. We write it \grp.
\end{example}

\Varis of group languages are more general than the classes of group languages that are usually considered. Typically, publications on the topic consider \emph{varieties} of group languages, which involve an additional closure property called ``inverse morphic image'' (see~\cite{pinbridges} for details). Let us present a class which is a \vari of group languages, but not a variety.

\begin{example}\label{ex:mods}
  The class \md containing the finite Boolean combinations of languages $\{w \in A^* \mid |w| = k \mod m\}$ with $k,m \in \nat$ such that $k < m$, is a \vari of group languages. We use it as a running example.
\end{example}

\subsection{Separation and covering}

In the paper, we use two decision problems to investigate specific classes of languages (all built from \varis of group languages): separation and covering. We define them here. The former is standard while the latter was introduced in~\cite{pzcovering2}. Both of them are parametrized by an arbitrary class of languages \Cs. We start with separation.

\medskip
\noindent
{\bf Separation.} Given three languages $K,L_1,L_2$, we say that $K$ \emph{separates} $L_1$ from $L_2$ if $L_1 \subseteq K \text{ and } L_2 \cap K = \emptyset$. Given a class of languages \Cs, we say that $L_1$ is \emph{\Cs-separable} from $L_2$ if some language in \Cs separates $L_1$ from $L_2$. Observe that when \Cs is not closed under complement, the definition is not symmetrical: $L_1$ could be \Cs-separable from $L_2$ while $L_2$ is not \Cs-separable from $L_1$. The separation problem associated to a given class \Cs is as follows:

\medskip

\begin{tabular}{rl}
  {\bf INPUT:}  &  Two regular languages $L_1$ and $L_2$. \\
  {\bf OUTPUT:} &  Is $L_1$ \Cs-separable from $L_2$?
\end{tabular}

\begin{remark}
  Separation generalizes the classical membership problem which asks whether a single regular language belongs to \Cs. Indeed, $L \in \Cs$ if and only if $L$ is \Cs-separable from $A^* \setminus L$.
\end{remark}

\noindent
{\bf Covering.} This more general problem was introduced in~\cite{pzcovering2}. Given a language $L$, a \emph{cover of $L$} is a \emph{\bf finite} set of languages \Kb such that $L \subseteq \bigcup_{K \in \Kb} K$. Moreover, if \Cs is a class, a \Cs-cover of $L$  is a cover \Kb of $L$ such that all $K \in \Kb$ belong to \Cs.

Covering takes as input a language $L_1$ and a \emph{finite set of languages} $\Lb_2$. A \emph{separating cover} for the pair $(L_1,\Lb_2)$ is a cover \Kb of $L_1$ such that for every $K\in\Kb$, there exists $L \in \Lb_2$ which satisfies $K \cap L = \emptyset$. Finally, given a class \Cs, we say that the pair $(L_1,\Lb_2)$ is \Cs-coverable when there exists a separating \Cs-cover. The \Cs-covering problem is now defined as follows:

\medskip

\begin{tabular}{rl}
  {\bf INPUT:}  &  A regular language $L_1$ and\\
                & a finite set of regular languages $\Lb_2$.\\
  {\bf OUTPUT:} &  Is $(L_1,\Lb_2)$ \Cs-coverable?
\end{tabular}

\medskip

It is straightforward to prove that covering generalizes separation (provided that the class \Cs is a lattice) as stated in the following lemma (see Theorem~3.5 in~\cite{pzcovering2} for the proof).

\begin{lemma}\label{lem:septocove}
  Let \Cs be a lattice and $L_1,L_2 \subseteq A^*$. Then $L_1$ is \Cs-separable from $L_2$, if and only if $(L_1,\{L_2\})$ is \Cs-coverable.
\end{lemma}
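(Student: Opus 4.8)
The statement to prove is Lemma~\ref{lem:septocove}, asserting that for a lattice \Cs and languages $L_1,L_2$, separability of $L_1$ from $L_2$ is equivalent to coverability of the pair $(L_1,\{L_2\})$.

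\medskip

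The plan is to prove both implications directly from the definitions, using the lattice closure only in one direction. First I would handle the easy direction: if $(L_1,\{L_2\})$ is \Cs-coverable, then there is a separating \Cs-cover $\Kb$ of $L_1$, meaning $\Kb$ is a finite subset of \Cs with $L_1 \subseteq \bigcup_{K \in \Kb} K$ and such that every $K \in \Kb$ satisfies $K \cap L_2 = \emptyset$ (since $\Lb_2 = \{L_2\}$, the only available choice of $L \in \Lb_2$ is $L_2$ itself). Set $K_0 = \bigcup_{K \in \Kb} K$. Then $K_0 \in \Cs$ because \Cs is closed under (finite) union, $L_1 \subseteq K_0$ by the covering condition, and $K_0 \cap L_2 = \emptyset$ because each member of $\Kb$ is disjoint from $L_2$. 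Hence $K_0$ separates $L_1$ from $L_2$, so $L_1$ is \Cs-separable from $L_2$. This direction is where the lattice hypothesis (closure under union, and $\emptyset \in \Cs$ to cover the degenerate case $\Kb = \emptyset$, which forces $L_1 = \emptyset$) is actually used.

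\medskip

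For the converse direction: suppose $L_1$ is \Cs-separable from $L_2$, witnessed by some $K \in \Cs$ with $L_1 \subseteq K$ and $K \cap L_2 = \emptyset$. Take $\Kb = \{K\}$. This is a finite set of languages, all in \Cs, and $L_1 \subseteq K = \bigcup_{K' \in \Kb} K'$, so $\Kb$ is a \Cs-cover of $L_1$. Moreover, for the single element $K \in \Kb$ we have $K \cap L_2 = \emptyset$ with $L_2 \in \Lb_2$, so $\Kb$ is a separating cover for $(L_1, \{L_2\})$. Thus $(L_1,\{L_2\})$ is \Cs-coverable. This direction uses no closure property at all.

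\medskip

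There is essentially no obstacle here: the lemma is a routine unwinding of definitions, and the only subtlety to keep in mind is the edge case of the empty cover in the first direction, which is precisely why one needs $\emptyset \in \Cs$ (part of the lattice axioms as defined in the preliminaries). A full proof, as noted, can be found as Theorem~3.5 in~\cite{pzcovering2}.
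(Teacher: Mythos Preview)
Your proof is correct and is exactly the standard argument one expects. Note that the paper does not actually supply its own proof of this lemma: it simply calls the result ``straightforward'' and defers to Theorem~3.5 in~\cite{pzcovering2}, which is precisely the reference you cite at the end of your proposal.
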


\section{Concatenation hierarchies and main theorem}
\label{sec:hiera}
In this section, we present the particular classes that we investigate in the paper and outline our results.

\subsection{Closure operations and concatenation hierarchies}

We are interested in \emph{concatenation hierarchies}. We briefly recall this notion (see~\cite{jep-dd45,pzgenconcat} for details). A concatenation hierarchy is an increasing sequence of classes of languages, which depends on a single parameter: an arbitrary \vari, called its \emph{basis}. Once the basis is fixed, the construction is uniform. Languages are classified into levels: each new level is built by applying one of two generic operations to the previous one. Let us define these~operations.

\smallskip
\noindent
{\bf Boolean closure.} Given a class \Cs, its \emph{Boolean closure}, denoted by \bool\Cs is the least Boolean algebra containing \Cs. The following lemma is immediate from the definitions (it holds simply because quotients commute with Boolean operations).

\begin{lemma}\label{lem:boolclos}
  Let \Cs be a \pvari. Then \bool{\Cs} is a \vari.
\end{lemma}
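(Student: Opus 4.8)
The plan is to identify $\bool{\Cs}$ with the largest quotient-closed subclass of itself, by a standard argument. Since $\bool{\Cs}$ is a Boolean algebra by construction, the only thing left to establish is that it is quotient-closed; once this is done, being a quotient-closed Boolean algebra, it is a \vari.

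First I would record the elementary identities expressing that the quotient operations commute with the Boolean operations: for every $w \in A^*$ and all $K, L \subseteq A^*$,
\[
w\inv(K \cup L) = w\inv K \cup w\inv L, \qquad w\inv(K \cap L) = w\inv K \cap w\inv L, \qquad w\inv(A^* \setminus L) = A^* \setminus w\inv L,
\]
together with the mirror identities for right quotients $L \mapsto L w\inv$, and the trivial equalities $w\inv \emptyset = \emptyset$ and $w\inv A^* = A^*$ (and symmetrically on the right). All of these are immediate from the definitions of $w\inv L$ and $L w\inv$.

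Next I would consider the subclass
\[
\Ds = \{\, L \in \bool{\Cs} \mid w\inv L \in \bool{\Cs} \text{ and } L w\inv \in \bool{\Cs} \text{ for all } w \in A^* \,\}.
\]
Since $\Cs$ is a \pvari it is quotient-closed, and $\Cs \subseteq \bool{\Cs}$; hence $\Cs \subseteq \Ds$. Using the identities above together with the fact that $\bool{\Cs}$ is closed under union, intersection and complement, one checks directly that $\Ds$ contains $\emptyset$ and $A^*$ and is closed under these three operations, so $\Ds$ is a Boolean algebra containing $\Cs$. By minimality of $\bool{\Cs}$ among such Boolean algebras we get $\bool{\Cs} \subseteq \Ds$, whence $\bool{\Cs} = \Ds$. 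In particular every element of $\bool{\Cs}$ has all its left and right quotients in $\bool{\Cs}$, i.e., $\bool{\Cs}$ is quotient-closed, which finishes the proof.

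I do not expect a genuine obstacle here: the whole point is the routine observation that the quotient maps are Boolean-homomorphic in the sense above, so closure propagates from a quotient-closed generating class to its Boolean closure. The only mild care needed is to set up the witness class $\Ds$ so that quotients by arbitrary words $w$ (not merely single letters) are handled at once — which is automatic, since the commutation identities hold verbatim for every $w$, and in any case a quotient by a word decomposes into quotients by its individual letters.
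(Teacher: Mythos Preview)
Your proposal is correct and matches the paper's own justification: the paper does not give a detailed proof but simply remarks that the lemma ``is immediate from the definitions (it holds simply because quotients commute with Boolean operations).'' Your argument is exactly the standard unfolding of this remark, and the auxiliary class $\Ds$ is a clean way to package the minimality step.
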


\smallskip
\noindent
{\bf Polynomial closure.} Given a class \Cs,  the \emph{polynomial closure} of \Cs, denoted by \pol{\Cs}, is the least class containing \Cs which is closed under both union and marked concatenation: for every $K,L \in \pol{\Cs}$ and $a \in A$, we have $K \cup L \in \pol{\Cs}$ and $KaL \in \pol{\Cs}$. While this is not obvious from the definition, when the input class \Cs is a \pvari, its polynomial closure \pol{\Cs} remains a \pvari (the difficulty is to prove closure under intersection). This was originally proved by Arfi~\cite{arfi87} (see~\cite{jep-intersectPOL,pzgenconcat} for recent proofs).

\begin{theorem}[Arfi~\cite{arfi87}]\label{thm:polclos}
  Let \Cs be a \pvari. Then, \pol{\Cs} is a \pvari closed under concatenation and marked concatenation.
\end{theorem}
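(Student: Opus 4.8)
The goal is to show that for a \pvari \Cs, the class \pol{\Cs} is again a \pvari that is closed under concatenation and marked concatenation. Let me think about what needs proving and in what order.

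First, what's easy: \pol{\Cs} is closed under union by definition, contains $\emptyset$ and $A^*$ (since these are in \Cs $\subseteq$ \pol{\Cs}), and is closed under marked concatenation by definition. Closure under (unmarked) concatenation: if $K, L \in \pol{\Cs}$, write $K = \bigcup_i K_i$ where... hmm, actually concatenation $KL$ needs handling. Note $KL = K\veps L$ isn't marked. But $KL = \bigcup_{a\in A}(KaA^*\inv \cap\ldots)$—no, that's not right either. Actually the standard trick: $KL = \bigcup_{a \in A} K_a a L$ where $K_a = \{w : wa \in K\}$... no. Let me reconsider. One writes $KL = (K\cap\{\veps\})L \cup \bigcup_{a\in A}(\text{words of }K\text{ ending in }a \text{, truncated}) a L$? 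This requires \Cs-closure under quotients, which \pol{\Cs} inherits (quotients commute with union and marked concatenation, so \pol{\Cs} is quotient-closed). So once we know \pol{\Cs} is a \pvari (hence quotient-closed), closure under concatenation follows: $KL = \varepsilon L \cdot [\varepsilon \in K] \cup \bigcup_{a \in A} (Ka^{-1})aL$, and each $Ka^{-1} \in \pol{\Cs}$, each $(Ka^{-1})aL \in \pol{\Cs}$, and whether to include $L$ itself ($\veps \in K$ case) is a finite union too. And marked concatenation is already given.

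So the crux—the genuine obstacle—is proving \pol{\Cs} is a \pvari, i.e., that it is closed under finite intersection (closure under union, $\emptyset$, $A^*$, and quotients being easy as above). This is exactly the hard part Arfi identified. The plan is: every language in \pol{\Cs} is a finite union of languages of the form $L_0 a_1 L_1 a_2 \cdots a_n L_n$ with each $L_i \in \Cs$ and $a_i \in A$ (a "marked product"). Then since intersection distributes over union, it suffices to show the intersection of two marked products is in \pol{\Cs}. The natural approach is by induction on the total number of marked concatenations involved, unwinding the intersection by a case analysis on how the "first markers" of the two products align—before, simultaneously, or after each other—using the fact that \Cs is closed under intersection and quotients to keep the pieces inside \Cs, and closing the recursion using closure of \pol{\Cs} under union, marked concatenation, and (unmarked) concatenation with \Cs-languages. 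I expect this alignment/shuffle bookkeeping to be the main obstacle: making the induction parameter decrease correctly and handling the $\veps \in L_i$ corner cases cleanly.

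Since a fully rigorous treatment of this intersection argument is technical and is precisely the content of \cite{arfi87, jep-intersectPOL, pzgenconcat}, I would invoke those references for the closure-under-intersection step rather than reproduce it, and then assemble the remaining pieces (quotients commute with the generating operations, concatenation reduces to quotients plus marked concatenation, $\emptyset$ and $A^*$ lie in \Cs) as short direct verifications. In summary: (1) reduce to marked products via distributivity; (2) cite the intersection lemma to get that \pol{\Cs} is a lattice; (3) observe quotient-closure is inherited because quotients commute with union and marked concatenation; conclude \pol{\Cs} is a \pvari; (4) derive closure under concatenation from quotient-closure plus marked concatenation as above; marked concatenation closure is immediate. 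The only deep input is step~(2).
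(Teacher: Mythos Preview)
The paper does not prove this theorem at all: it merely attributes the result to Arfi~\cite{arfi87} and points to~\cite{jep-intersectPOL,pzgenconcat} for modern proofs. Your proposal is correct and in fact gives more detail than the paper; your plan---reduce to marked products, cite the intersection lemma for the only hard step, derive quotient-closure by structural induction on the \pol{\Cs} construction, and then get unmarked concatenation from the identity $KL = [\veps \in K]\cdot L \cup \bigcup_{a\in A} (Ka^{-1})aL$---is exactly the standard route followed in the cited references, so you are aligned with the paper's (outsourced) proof.
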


\smallskip
\noindent
{\bf Concatenation hierarchies.} We let \Cs as a \vari. The \emph{concatenation hierarchy of basis \Cs} classifies languages into levels of two kinds: full levels (denoted by 0, 1, 2,...) and half levels (denoted by 1/2, 3/2, 5/2,...):
\begin{itemize}
\item Level $0$ is the basis \Cs.
\item Each \emph{half level} $n+\frac{1}{2}$, for $n\in\nat$, is the \emph{polynomial closure} of the previous full level, \emph{i.e.}, of level $n$.
\item Each \emph{full level} $n+1$, for $n\in\nat$, is the \emph{Boolean closure} of the previous half level, \emph{i.e.}, of level $n+\frac12$.
\end{itemize}

In view of Lemma~\ref{lem:boolclos} and Theorem~\ref{thm:polclos}, it is immediate that every full level is a \vari while every half level is a \pvari. Moreover, we have the following useful fact (see~\cite[Lemma~3.4]{pseps3j} for a proof).

\begin{fact}\label{fct:finitelangs}
  Let \Ds be a level greater or equal to $1$ in some concatenation hierarchy. Then $\{w\} \in \Ds$ for every $w \in A^*$.
\end{fact}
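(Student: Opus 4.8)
The plan is to argue by induction on the structure of the hierarchy, reducing everything to the single observation that a finite (in particular, singleton) language belongs to level~$1$, and that all higher levels contain level~$1$. Since a concatenation hierarchy is increasing by construction (each half level is the polynomial closure of the previous full level, hence contains it, and each full level is the Boolean closure of the previous half level, hence contains it), it suffices to prove the statement for level~$1$ itself; the case of an arbitrary level $\Ds \geq 1$ then follows by monotonicity. So the real content is: $\{w\} \in$ (level $1$) for every $w \in A^*$.

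First I would recall that level~$1$ is $\bool{\pol{\Cs}}$, where $\Cs$ is the basis, a \vari. Since $\Cs$ is a \vari it is in particular a lattice, so $A^* \in \Cs \subseteq \pol{\Cs}$. Now I would show that every singleton $\{a\}$ with $a \in A$ lies in $\pol{\Cs}$: indeed, $\{a\} = A^* a A^* \setminus (\text{something})$ is \emph{not} directly polynomial, so instead I note that $\pol{\Cs}$ is closed under marked concatenation, and $\{a\} = \{\veps\} a \{\veps\}$ would require $\{\veps\} \in \pol{\Cs}$. The cleanest route: $\{\veps\}$ and $\emptyset$ and $A^*$ — one checks $\emptyset \in \pol{\Cs}$ (it is in the lattice $\Cs$), and for $\{\veps\}$ one uses that $\bool{\pol{\Cs}}$ is a Boolean algebra containing the languages $A^* a A^* = \bigcup_{b \in A} (\{\veps\} \cup \cdots)$ — here I would instead invoke the standard fact that $A^* a A^* \in \pol{\Cs}$ (as $A^* a A^*$ with $A^* \in \Cs$), whence $A^* \setminus \bigcup_{a \in A} A^* a A^* = \{\veps\} \in \bool{\pol{\Cs}}$. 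Then each letter language $\{a\} = A^* a A^* \setminus \bigl((A^* a A^* a A^*) \cup \bigcup_{b \neq a}(A^* a A^* b A^* \cup A^* b A^* a A^*)\bigr)$ — more simply, $\{a\}$ is a Boolean combination of the languages $A^*cA^*$ and $\{\veps\}$, all in $\bool{\pol{\Cs}}$, so $\{a\} \in \bool{\pol{\Cs}}$ for every $a \in A$.

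With $\{\veps\} \in \bool{\pol{\Cs}}$ and $\{a\} \in \bool{\pol{\Cs}} = $ level~$1$ in hand, I would then lift this to an arbitrary word $w = a_1 \cdots a_k$. Since $\pol{\Cs}$ is closed under marked concatenation and contains $\{\veps\}$ (because level~$1$, being the Boolean closure, contains it — but to iterate marked concatenation I need $\{\veps\}$ and the singletons already \emph{inside} $\pol{\Cs}$, not just inside its Boolean closure). The clean fix is to observe that once level~$1$ is known to contain $\{\veps\}$ and every $\{a\}$, and level~$3/2 = \pol{\text{level }1}$ contains them too and is closed under marked concatenation, we get $\{w\} = \{a_1\} a_2 \{a_3\} \cdots \in$ level~$3/2 \subseteq \Ds$ for any $\Ds \geq 3/2$. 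For $\Ds = 1$ exactly, I would instead argue directly: $\{w\}$ is the Boolean combination of the languages $\{u \in A^* : u \text{ has prefix } v\} = vA^*$ and $\{u : u \text{ has suffix } v\} = A^*v$ and the $A^*aA^*$'s needed to pin down length — all of which are quotients of $A^*$ under the \vari $\pol{\Cs}$ (note $vA^* = $ quotient-related, and $\pol{\Cs}$ is quotient-closed and contains $A^*aA^*$), so $\{w\} \in \bool{\pol{\Cs}}$.

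The main obstacle is the bookkeeping at \emph{exactly} level~$1$: the marked-concatenation trick naturally lands in level~$3/2$, not level~$1$, so one must instead express $\{w\}$ as a genuine Boolean combination of the ``positional'' languages $A^* a A^*$, $v A^*$, $A^* v$ (describing which letter occurs where, and the total length), and verify that each of these is already in $\pol{\Cs}$ — using that $\pol{\Cs}$ is a quotient-closed \pvari containing $A^*$ and hence containing $A^* a A^*$ and their quotients. Everything else is a routine monotonicity argument. I expect the cited proof in \cite[Lemma~3.4]{pseps3j} proceeds along exactly these lines.
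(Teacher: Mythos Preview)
The paper does not actually prove this fact; it defers to \cite[Lemma~3.4]{pseps3j}. So there is no in-paper argument to compare against, and I comment only on the correctness of your proposal.

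Your reduction to level~$1$ via monotonicity of the hierarchy is correct, as is your computation $\{\veps\} = A^* \setminus \bigcup_{a \in A} A^*aA^* \in \bool{\pol{\Cs}}$. The genuine gap is in your treatment of a general word $w$ at level~$1$. You propose expressing $\{w\}$ as a Boolean combination of the languages $A^*aA^*$, $vA^*$, and $A^*v$, asserting that the latter two lie in $\pol{\Cs}$ because $\pol{\Cs}$ is quotient-closed. This is false. Take the Straubing--Th\'erien basis $\Cs = \{\emptyset,A^*\}$: then $\bool{\pol{\Cs}}$ is exactly the class of piecewise-testable languages, which depend only on the set of scattered subwords of bounded length. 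For $|A| \geq 2$ the language $aA^*$ is \emph{not} piecewise testable (e.g.\ $ab$ and $ba$ have the same subwords but only one lies in $aA^*$), so $aA^* \notin \bool{\pol{\Cs}}$, let alone $\pol{\Cs}$. Quotients of $A^*$ or of $A^*a_1A^*\cdots a_kA^*$ never produce $vA^*$.

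The fix is to discard $vA^*$ and $A^*v$ entirely and use only the subword languages, which you already hint at with ``the $A^*aA^*$'s needed to pin down length''. For $w = a_1\cdots a_n$ write $L_w = A^*a_1A^*\cdots a_nA^* \in \pol{\Cs}$. Then
\[
  \{w\} \;=\; L_w \;\setminus\; \bigcup_{b_1,\dots,b_{n+1} \in A} L_{b_1\cdots b_{n+1}},
\]
since a word contains $w$ as a scattered subword and has length at most $n = |w|$ if and only if it equals $w$. This is a Boolean combination of languages in $\pol{\Cs}$, hence lies in level~$1$. Your instinct that the length constraint is the key was right; the point is simply that the subword languages alone already encode it.
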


In this paper, we are mainly interested in levels 1/2,~1 and 3/2. By definition, they correspond to the classes \pol{\Cs}, \bool{\pol{\Cs}} and \pol{\bool{\pol{\Cs}}}. For the sake of avoiding clutter, we shall write \bpol{\Cs} for \bool{\pol{\Cs}} and \pbpol{\Cs} for \pol{\bool{\pol{\Cs}}}. Our main theorem applies to those levels for every basis \Cs which is a \emph{\vari of group languages}. It is as follows.

\begin{theorem}[Main result]\label{thm:main}
  Consider a concatenation hierarchy whose basis \Cs contains only \textbf{group languages} and such that \Cs-separation is decidable. Then, covering and separation are decidable for levels 1/2, 1 and 3/2 of this hierarchy.
\end{theorem}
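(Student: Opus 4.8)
The plan is to reduce each of the three levels $1/2$, $1$, and $3/2$ to a covering computation that only needs, as an oracle, a separation algorithm for the basis \Cs. By Lemma~\ref{lem:septocove}, decidability of \Cs-separation is equivalent to decidability of \Cs-covering, and it suffices (again by Lemma~\ref{lem:septocove}) to prove decidability of covering for \pol{\Cs}, \bpol{\Cs}, and \pbpol{\Cs}: separation follows as the special case where $\Lb_2$ is a singleton. I would work inside the rating-map / optimal-imprint framework of Sections~\ref{sec:ratms}--\ref{sec:units}: fix a \mratm{} $\rho$ refining all input languages, and recall that $(L_1,\Lb_2)$ is \Ds-coverable if and only if the \Ds-optimal imprint $\opti{\Ds}{\rho}$ avoids a forbidden pattern determined by $L_1$ and $\Lb_2$. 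Thus the whole theorem boils down to giving an algorithm that, for $\Ds\in\{\pol{\Cs},\bpol{\Cs},\pbpol{\Cs}\}$, computes $\opti{\Ds}{\rho}$ from $\rho$ and from a \Cs-covering oracle.

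For level $1/2$ (Section~\ref{sec:polg}), the key is a characterization of \pol{\Cs}-covers in terms of \Cs-covers of carefully chosen ``pieces'' of the words: a \pol{\Cs}-optimal cover is obtained by taking marked concatenations $K_0 a_1 K_1 a_2 \cdots a_n K_n$ where each $K_i$ comes from an optimal \Cs-cover, and the number of marked concatenations can be bounded using the fact that \Cs consists of group languages (so that the relevant ``canonical preorder'' $\leqslant_\Cs$ has a controlled structure on $A^*$, and in particular words are identified according to a finite amount of group-theoretic data). Concretely, I would show that $\opti{\pol{\Cs}}{\rho}$ is the least set closed under a multiplication-style operation built from the image of \Cs-optimal imprints of quotients $w^{-1}L w^{-1}$; each such imprint is computable by the oracle, and the closure is reached in finitely many steps because the ambient monoid $M$ of $\rho$ is finite. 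The main technical lemma here is the ``upper bound'' direction: that every element produced by this closure is genuinely unavoidable, i.e.\ appears in the imprint of \emph{every} \pol{\Cs}-cover — this is where one must exploit that group languages are insensitive to factors that can be ``pumped'' (an idempotent-power argument using $\omega(M)$), which is exactly what lets a single optimal \Cs-cover of a quotient be promoted to a \pol{\Cs}-cover of the whole language.

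For level $1$ (Section~\ref{sec:bpolg}), I would use the fact that \bpol{\Cs}${}=\bool{\pol{\Cs}}$ and that, by general principles of the framework, $\opti{\bool{\pol{\Cs}}}{\rho}$ is governed by the \pol{\Cs}-optimal imprints of $\rho$ \emph{together with} the complementary information (the ``co-\pol{\Cs}'' side); operationally this becomes a fixpoint that alternates the level-$1/2$ computation with a complementation step, and terminates for the same finiteness reason. For level $3/2$ (Section~\ref{sec:pbpolg}) one iterates once more: \pbpol{\Cs}${}=\pol{\bpol{\Cs}}$, so I would re-run the polynomial-closure analysis of Section~\ref{sec:polg} but now with \bpol{\Cs} in the role previously played by \Cs. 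The crucial point that makes this second iteration work — and the step I expect to be the main obstacle — is establishing the right ``smoothing'' or factorization property for \bpol{\Cs}-covers analogous to the one used at level $1/2$; this is harder because \bpol{\Cs} is no longer a variety of group languages, so the clean pumping argument is unavailable and must be replaced by a more delicate analysis of how \bpol{\Cs}-optimal imprints behave under marked concatenation (typically via an ``induction on $\Jrel$-classes'' of the monoid, or via the alternation/tuple machinery $\polttup{n}$ recalled in the framework). Once that factorization property is in place, the algorithm and its correctness proof parallel the level-$1/2$ case, and decidability at all three levels follows. Finally, I would note that separation at each level is immediate from covering via Lemma~\ref{lem:septocove}, and that the hypotheses ``\Cs is a \vari of group languages'' and ``\Cs-separation decidable'' are used, respectively, for the pumping arguments and as the base-case oracle feeding the fixpoints.
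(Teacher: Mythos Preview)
Your high-level reduction to computing optimal imprints is right, and your level-$1/2$ sketch is in the right spirit (though the paper organizes it around a single object, the \Cs-optimal \emph{\iden} $\ioptic{\rho}$, rather than ``optimal imprints of quotients''). The genuine gap is in how you propose to handle levels~$1$ and~$3/2$.

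For level~$3/2$, your plan is to ``re-run the polynomial-closure analysis with \bpol{\Cs} in the role previously played by \Cs''. This does not work, and you essentially concede as much: the level-$1/2$ argument (Theorem~\ref{thm:half:gmainpolc} and Lemma~\ref{lem:half:pump}) hinges on the basis being made of group languages, so that the \iden $L$ is recognized by a finite group and one can collapse long factorizations by a pumping argument in that group. Since \bpol{\Cs} is not a class of group languages, there is no analogue of Lemma~\ref{lem:half:pump}, and your suggested replacements (``induction on \Jrel-classes'', ``alternation/tuple machinery'') are not developed and are not how the paper proceeds. The paper does \emph{not} iterate Section~\ref{sec:polg}: instead it introduces nested rating maps (Section~\ref{sec:units}) and characterizes, not $\opti{\pbpol{\Cs}}{\alpha,\rho}$ directly, but the refined object $\ioptic{\lratauxppc}$, as a least fixpoint involving auxiliary \nice \mratms $\pbwmrats$ (Theorem~\ref{thm:pbpolg}). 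Crucially, the completeness direction for level~$3/2$ is obtained by reducing to the \emph{finite-basis} theorem of~\cite{pseps3j}: one passes to a finite sub-\vari $\Gs\subseteq\Cs$ generated by a single group morphism and invokes the known characterization of $\popti{\pbpol{\Gs}}{\alpha}{\rho}$. This is a different mechanism from anything in your plan.

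For level~$1$, your ``alternate the level-$1/2$ computation with a complementation step'' is too vague to be a plan, and it is not what the paper does either. The paper characterizes $\ioptic{\bratauxbc}$ as the \emph{greatest} fixpoint of a condition phrased via auxiliary \nice \mratms $\bwmrats$ (Theorem~\ref{thm:bpolg}); soundness and completeness both go through an infinite tower of auxiliary rating maps $\tau_n$ borrowed from~\cite{pzbpolc}. A key technical point you miss is Remark~\ref{rem:bpol}: the level-$1/2$ characterization is proved for \emph{arbitrary} (not necessarily \nice) \mratms precisely so that it can be invoked on these auxiliary, non-\nice $\tau_n$ inside the level-$1$ proof. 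Your plan, which treats level~$1/2$ purely as a black-box decision procedure, would not supply this stronger statement.
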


Theorem~\ref{thm:main} is generic and applies to all ``group based'' hierarchies. It complements similar results which were recently proved in~\cite{pseps3,pseps3j} (for levels 1/2 and 3/2) and~\cite{pzboolpol,pzbpolc} (for level 1) for \emph{finitely based} hierarchies.

\begin{theorem}[\cite{pseps3j,pzbpolc}]\label{thm:fbasis}
  Consider a concatenation hierarchy whose basis is \textbf{finite}. Then, covering and separation are decidable for levels 1/2, 1 and 3/2.
\end{theorem}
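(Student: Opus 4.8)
The plan is to deduce the theorem from two generic ``preservation'' results, proved in the rating-map framework of Sections~\ref{sec:ratms} and~\ref{sec:units}, together with the elementary fact that \Cs-covering is decidable whenever \Cs is finite: there are finitely many members, each with decidable membership, so one can enumerate and test candidate separating covers. By Lemma~\ref{lem:septocove} it suffices to handle covering, and the three target levels are respectively \pol{\Cs}, \bpol{\Cs}=\bool{\pol{\Cs}} and \pbpol{\Cs}=\pol{\bpol{\Cs}}. The two preservation results are: \textbf{(P)} if \Ds is a \vari with decidable covering, then \pol{\Ds}-covering is decidable; and \textbf{(B)} if \Ds is a \emph{finite} \vari, then \bpol{\Ds}-covering is decidable. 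Granting these, the theorem follows in three steps: apply \textbf{(P)} to the finite basis \Cs for level~1/2; apply \textbf{(B)} to \Cs for level~1; and, since level~1 is then a \vari with decidable covering, apply \textbf{(P)} to \bpol{\Cs} for level~3/2. Note that \textbf{(P)} is required for an \emph{arbitrary} \vari with decidable covering — \bpol{\Cs} being infinite in general — which is precisely why one works with covering rather than membership throughout.

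The core of the argument is establishing \textbf{(P)} and \textbf{(B)}. For \textbf{(P)}, fix a morphism $\alpha\colon A^*\to M$ into a finite monoid and a \ratm $\rho$; the goal is to compute the optimal \pol{\Ds}-imprint of $\rho$ on $\alpha$, a finite object recording exactly which pairs $(L_1,\Lb_2)$ recognized by $\alpha$ are \pol{\Ds}-coverable. One shows this object is the least fixpoint of an operator whose base case is the optimal \Ds-imprint (this is where decidability of \Ds-covering is used) and which closes under a \tame operation mirroring marked concatenation $K,L\mapsto KaL$. Soundness — everything produced really lies in the optimal imprint — is a routine induction. The hard direction is completeness: any \pol{\Ds}-cover must be replaceable by one built solely from the prescribed ingredients, which is done by a Simon/Ramsey-style factorization of long words relative to idempotents of $M$, combined with the closure properties of \pol{\Ds} from Theorem~\ref{thm:polclos}. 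Since the fixpoint lives in a finite lattice, it is computable, giving decidability.

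For \textbf{(B)}, the strategy is similar but the combinatorics are substantially heavier: a Boolean combination of languages of \pol{\Ds} is not governed by imprints alone, but by a finer invariant over an enriched algebraic structure (intuitively, which \emph{pairs} of elements of $M$ can be kept apart by a language of \bpol{\Ds}). One again sets up a fixpoint characterization of the relevant optimal object, now using \pol{\Ds}-covering — hence, via \textbf{(P)}, \Ds-covering — as a subroutine, and proves it correct. The hard part, and the main obstacle of the whole theorem, is once more completeness: it must reconcile complementation with concatenation simultaneously, forcing a delicate analysis of how word factorizations interact with the \vari structure. This is the content of \cite{pzbpolc}, while the strengthening of \textbf{(P)} needed to re-apply it to the infinite class \bpol{\Cs} at level~3/2 is carried out in \cite{pseps3j}. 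Everything else — the reductions above and the verification of hypotheses — is bookkeeping.
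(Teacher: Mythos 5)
First, a point of comparison: the paper does not prove Theorem~\ref{thm:fbasis} at all. It is imported wholesale from~\cite{pseps3j} (levels 1/2 and 3/2) and~\cite{pzbpolc} (level 1) and stated only to contrast with the group-based Theorem~\ref{thm:main}. Your proposal is therefore necessarily a reconstruction of those external works, and it contains a genuine gap.

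The gap is your preservation result \textbf{(P)}: ``if \Ds is a \vari with decidable covering, then \pol{\Ds}-covering is decidable.'' No such black-box reduction is known, and your treatment of level 3/2 --- applying \textbf{(P)} to the infinite class \bpol{\Cs} on the sole ground that its covering problem has just been shown decidable --- is exactly the step that fails. What~\cite{pseps3j} actually proves for level 3/2 is a direct characterization of the \pbpol{\Cs}-optimal \imprints that exploits the \emph{finiteness of the basis} \Cs (via the finitely many \Cs-types and a \Cs-compatible morphism), not a reduction to \bpol{\Cs}-covering. The present paper explains why the black-box version cannot be expected to work: Remark~\ref{rem:bpol} states that the knowledge about level 1/2 needed to handle levels 1 and 3/2 is \emph{stronger} than decidability of covering --- one must compute optimal \imprints for auxiliary \ratms that are \textbf{not} \nice, hence not finitely representable, and this is precisely the obstruction that prevents the authors from climbing higher in the hierarchy. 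The same objection applies to your account of \textbf{(B)} as using \pol{\Ds}-covering ``as a subroutine, hence via (P), \Ds-covering'': the completeness argument of~\cite{pzbpolc} needs \pol{\Ds}-optimal \imprints for non-\nice rating maps, which decidable covering alone does not supply. If \textbf{(P)} held in the generality you assert, the discussion in the conclusion (where decidability at level 5/2 of the Straubing-Th\'erien hierarchy is attributed to an ad hoc coincidence with another finitely based hierarchy rather than to any generic Pol-preservation) would be moot. Your steps for levels 1/2 and 1 are correct as statements --- they are the theorems of~\cite{pseps3j} and~\cite{pzbpolc} for finite bases, proved directly from finiteness rather than by reduction to basis covering --- but the route you propose for level 3/2 does not go through.
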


All prominent concatenation hierarchies investigated in the literature have a basis which is either finite or made of group languages. Hence, when put together, the two above theorems can be used to handle the lower levels of all prominent concatenation hierarchies in a \emph{generic way}.

Additionally, one may combine Theorem~\ref{thm:main} with a result of~\cite{pzgenconcat} to get information on level 5/2 of group based hierarchies. It is shown in~\cite{pzgenconcat} that given an arbitrary hierarchy, if separation is decidable for some half level, then so is the \emph{membership problem} for the next half level. Hence, we get the following corollary of Theorem~\ref{thm:main}.

\begin{corollary}\label{cor:main}
  Consider a concatenation hierarchy whose basis \Cs contains only \textbf{group languages} and such that \Cs-separation is decidable. Then, given as input a regular language $L$, one may decide whether $L$ belongs to level 5/2 of this hierarchy.
\end{corollary}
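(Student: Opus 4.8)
The plan is to chain Theorem~\ref{thm:main} with the transfer result of~\cite{pzgenconcat} recalled just above the statement. First I would apply Theorem~\ref{thm:main} to the hierarchy at hand: since its basis \Cs is a \vari of group languages with decidable separation, the theorem yields in particular that separation is decidable for level~$3/2$, \emph{i.e.}, for the class \pbpol{\Cs}. The key point to observe is that $3/2 = 1 + \frac12$ is a \emph{half level}: by definition of concatenation hierarchies (Section~\ref{sec:hiera}) it is the polynomial closure \pol{\bool{\pol{\Cs}}} of the full level~$1$. This is exactly the shape of class to which the transfer result applies.

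Next I would invoke that transfer result: in an \emph{arbitrary} concatenation hierarchy, decidability of separation at a half level $n + \frac12$ entails decidability of the membership problem at the next half level $n + \frac32$. Instantiating with $n = 1$, decidability of \pbpol{\Cs}-separation (level~$3/2$) gives decidability of membership at level~$5/2$, which is precisely the conclusion of the corollary. Since the transfer result is stated for arbitrary hierarchies, no hypothesis on the basis is needed beyond what Theorem~\ref{thm:main} has already secured.

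There is essentially no obstacle here beyond bookkeeping: one only needs to check that the half level strictly above $3/2$ is $5/2$ (and not the intermediate full level~$2$), which is immediate from the numbering of levels, and that the input to the membership algorithm is a single regular language, as in the statement. Thus the corollary follows by a one-line composition of the two results, with Theorem~\ref{thm:main} carrying all the real content.
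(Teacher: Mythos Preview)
Your proposal is correct and matches the paper's own argument exactly: the paper derives the corollary in the paragraph preceding it by combining Theorem~\ref{thm:main} (giving decidable separation at level~$3/2$) with the transfer result of~\cite{pzgenconcat} (separation at a half level implies membership at the next half level). There is nothing to add.
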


The proof of Theorem~\ref{thm:main} spans the remaining sections of the paper. Each level mentioned in the statement is handled independently. In Section~\ref{sec:ratms}, we recall a framework designed to handle the covering problem. It was originally introduced in~\cite{pzcovering2} and we use it to obtain all results announced in Theorem~\ref{thm:main}. We apply it for level 1/2 in Section~\ref{sec:polg}. Handling levels 1 and 3/2 is much more involved and requires extending the framework. We do so in Section~\ref{sec:units} and handle these levels in Sections~\ref{sec:bpolg} and~\ref{sec:pbpolg} respectively.

However, let us first illustrate Theorem~\ref{thm:main} by presenting two prominent concatenation hierarchies to which it applies.

\subsection{Important applications of Theorem~\ref{thm:main}}

We present two prominent examples of concatenation hierarchies whose bases are \vari of group languages (we already presented these bases in Section~\ref{sec:prelims}).

\medskip
\noindent
{\bf Group hierarchy.} This is the hierarchy of basis \grp (the class of \emph{all} group languages). It was originally introduced by Margolis and Pin~\cite{MargolisP85}. To our knowledge, the following results were known for this hierarchy:
\begin{itemize}
\item Separation is decidable for the basis \grp. This is a corollary of a theorem by Ash~\cite{Ash91} which solves a longstanding conjecture in algebra (Rhodes' type~II conjecture). We refer the reader to~\cite{HenckellMPR91} for background on this question.
\item Membership is decidable for levels 1/2 and 1. The former follows from the decidability of separation for \grp and a transfer result of~\cite{pzgenconcat}. The latter is proved in~\cite{MargolisP85}.
\end{itemize}
Since separation is decidable for the basis \grp, we obtain the following result from Theorem~\ref{thm:main} and Corollary~\ref{cor:main}.

\begin{corollary}\label{cor:group}
  Separation and covering are decidable for levels 1/2, 1 and 3/2 in the group hierarchy. Moreover, membership is decidable for level 5/2.
\end{corollary}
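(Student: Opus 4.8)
The plan is to obtain Corollary~\ref{cor:group} as a direct instance of Theorem~\ref{thm:main} and Corollary~\ref{cor:main}: it suffices to check that the basis \grp meets their common hypothesis. By Example~\ref{ex:allgroups}, \grp is a \vari consisting of group languages, so the ``group language'' requirement holds trivially; the only nontrivial point is that \grp-separation is decidable. Note also that, by Lemma~\ref{lem:septocove}, this is genuinely a separation question, and no covering algorithm for the basis itself is required, since Theorem~\ref{thm:main} only asks for \Cs-separation when $\Cs = \grp$.

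To settle \grp-separation, I would invoke the classical connection with Ash's theorem. Ash~\cite{Ash91} proved Rhodes' type~II conjecture, and Almeida~\cite{MR1709911} (see also~\cite{HenckellMPR91}) translated this into a decidability statement for separation: given two regular languages, presented say by morphisms into finite monoids recognizing them, one can effectively decide whether some group language contains the first and is disjoint from the second, the key being that Ash's result makes the relevant group-pointlike sets computable. This yields an algorithm for \grp-separation.

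With this hypothesis discharged, Theorem~\ref{thm:main} applies and gives decidability of separation and covering for \pol{\grp}, \bpol{\grp} and \pbpol{\grp}, i.e.\ for levels 1/2, 1 and 3/2 of the group hierarchy; and Corollary~\ref{cor:main}, applied to the same basis, yields decidability of membership at level 5/2. The only external ingredient is Ash's theorem, which we use as a black box; there is no genuine obstacle inside this corollary, the real content having been isolated in Theorem~\ref{thm:main} (the generic machinery for group-based hierarchies) and in~\cite{Ash91} (the decidability of \grp-separation).
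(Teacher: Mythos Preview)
Your proposal is correct and matches the paper's own argument exactly: verify that \grp is a \vari of group languages with decidable separation (via Ash~\cite{Ash91} and its reformulation as a separation result~\cite{MR1709911,HenckellMPR91}), then invoke Theorem~\ref{thm:main} and Corollary~\ref{cor:main}. There is nothing to add.
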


\noindent
{\bf The hierarchy of basis \md.} This example is quite important because it has a natural \emph{logical} equivalent. It is known that \emph{every} concatenation hierarchy corresponds to the \emph{quantifier alternation hierarchy} within a particular variant of first-order logic over words (the variants differ by the set of predicates that are allowed in sentences). This was first observed by Thomas~\cite{ThomEqu} for a particular example: the dot-depth hierarchy of Brzozowski and Cohen~\cite{BrzoDot} (whose basis is the finite class $\{\emptyset,\{\veps\},A^+,A^*\}$). However, the ideas of Thomas can be generalized to all concatenation hierarchies (see~\cite{pzgenconcat}).

Of course, depending on the concatenation hierarchy, the variant of first-order logic one ends up with may or may not be natural. It turns out that for the hierarchy of basis \md, we get a \emph{standard} one: first-order logic with modular predicates (\fowm). One may view a finite word $w$ as a logical structure made of a sequence of positions numbered from $0$ to $|w|-1$. Each position can be quantified and carries a label in~$A$. We denote by \fowm the variant of first-order logic equipped with the following predicates:
\begin{itemize}
\item For each $a \in A$, a unary predicate $P_a$ selecting positions labeled with letter ``$a$''.
\item A binary predicate ``$<$'' interpreted as the linear order.
\item For each integers $0\leq i < d$, a unary predicate $\mathsf{MOD}^d_i$ selecting positions that are congruent to $i$ modulo $d$.
\item A constant $\mathsf{D}^d_i$, which holds for words whose length is congruent to $i$ modulo $d$.
\end{itemize}
A sentence of \fowm defines a language (it consists of all words satisfying the sentence). Thus, \fowm defines a \emph{class of languages} which we also denote by \fowm.

We are not interested in \fowm itself: we consider its quantifier alternation hierarchy. One may classify sentences of \fowm by counting their number of quantifier alternations. Let $n \in \nat$. A sentence is $\siwm{n}$, if it can be rewritten into a sentence in prenex normal form which has either, exactly $n$ blocks of quantifiers starting with an $\exists$ or \emph{strictly less} than $n$ blocks of quantifiers. For example, consider the following sentence (already in prenex normal form)
\[
  \exists x_1 \exists x_2 \forall x_3 \exists x_4
  \ \varphi(x_1,x_2,x_3,x_4) \quad \text{(with $\varphi$ quantifier-free)}
\]
\noindent
This sentence is \siwmt. In general, the negation of a \siwm{n} sentence is not a \siwm{n} sentence. Hence it is relevant to define \bswm{n} sentences as the Boolean combinations of \siwm{n} sentences. This gives a hierarchy of classes of languages: for every $n \in \nat$, we have $\siwm{n} \subseteq \bswm{n} \subseteq \siwm{n+1}$.

This hierarchy has been widely investigated in the literature. For membership, it was known that the problem is decidable for \bswmu (see~\cite{ChaubardPS06}) as well as \siwmu and \siwmd (see~\cite{KufleitnerW15}). Furthermore, it was recently shown that separation is decidable for \bswmu~\cite{Zetzsche18}. We are able to reprove these results, generalize them to the covering problem and push them to \siwmt. Indeed, the generic correspondence of~\cite{pzgenconcat} implies that for every $n \in \nat$, level $n$ in the hierarchy of basis \md corresponds to \bswm{n} while level $n+\frac{1}{2}$ corresponds to \siwm{n}. Moreover, proving the decidability of \md-separation is a simple exercise (we present a proof in appendix). Consequently, we obtain the following statement as a corollary of Theorem~\ref{thm:main}.

\begin{corollary}\label{cor:mod}
  Separation and covering are decidable for the logics \siwmu,\bswmu and \siwmd. Moreover, membership is decidable for \siwmt.
\end{corollary}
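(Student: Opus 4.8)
The plan is to obtain Corollary~\ref{cor:mod} as a direct consequence of Theorem~\ref{thm:main}, Corollary~\ref{cor:main}, and the generic logic/hierarchy correspondence of~\cite{pzgenconcat}, once the pertinent concatenation hierarchy has been identified and its hypotheses checked. First I would pin down that hierarchy. By the correspondence of~\cite{pzgenconcat} applied to \fowm, levels 1/2, 1, 3/2 and 5/2 of the associated concatenation hierarchy are exactly the classes $\siwmu$, $\bswmu$, $\siwmd$ and $\siwmt$, while its basis (level~$0$) is the class of languages definable by quantifier-free \fowm sentences, that is, finite Boolean combinations of the $0$-ary predicates $\mathsf{D}^d_i$. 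This basis is precisely the class \md of Example~\ref{ex:mods}.

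Next I would verify that \md satisfies the hypotheses of Theorem~\ref{thm:main}. That \md is a \vari of group languages is recorded in Example~\ref{ex:mods}: each language $\{w \in A^* \mid |w| \equiv k \pmod m\}$ is recognized by the morphism $A^* \to \mathbb{Z}/m\mathbb{Z}$ sending every letter to $1$, and \md is by construction the Boolean algebra they generate, which is moreover quotient-closed. The remaining hypothesis is decidability of \md-separation, which I expect to be the only step requiring a genuine (albeit short) argument. A \md language has the form $\{w \mid (|w| \bmod M) \in R\}$ for some modulus $M$ and some $R \subseteq \mathbb{Z}/M\mathbb{Z}$, so a \md-separator of $L_1$ from $L_2$ exists if and only if, for some modulus $M$, the set of residues modulo $M$ realised by the words of $L_1$ is disjoint from the corresponding set for $L_2$. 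These residue sets are computable from automata for $L_1$ and $L_2$; and since disjointness at a modulus $M$ forces disjointness at every multiple of $M$, it suffices to examine a single explicitly computable modulus (say, the least common multiple of the periods of the length sets of $L_1$ and $L_2$, taken large enough to absorb the transient parts). Hence \md-separation is decidable.

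With this setup in hand, the corollary follows at once: applying Theorem~\ref{thm:main} to the basis \md gives decidability of separation and covering for levels 1/2, 1 and 3/2 of the hierarchy of basis \md, i.e.\ for $\siwmu$, $\bswmu$ and $\siwmd$; and applying Corollary~\ref{cor:main} to the same basis gives decidability of membership for level 5/2, i.e.\ for $\siwmt$. The point that deserves the most care --- and where a slip would be easiest --- is the bookkeeping of half-level indices through the correspondence of~\cite{pzgenconcat}: a shift by a single half-level would change which fragments are captured. I would therefore spell out explicitly that level~$0$ of this hierarchy is \md and that its levels 1/2, 1, 3/2 and 5/2 correspond respectively to $\siwmu$, $\bswmu$, $\siwmd$ and $\siwmt$, before invoking the two decidability statements.
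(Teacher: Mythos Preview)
Your proposal is correct and follows exactly the approach the paper takes: the corollary is obtained by invoking the logic/hierarchy correspondence of~\cite{pzgenconcat} to identify $\siwmu$, $\bswmu$, $\siwmd$, $\siwmt$ with levels $1/2$, $1$, $3/2$, $5/2$ of the hierarchy of basis \md, checking that \md is a \vari of group languages with decidable separation, and then applying Theorem~\ref{thm:main} and Corollary~\ref{cor:main}. Your sketch of \md-separation via eventually periodic length sets is a legitimate route (the paper merely calls this ``a simple exercise'' and relegates it to the appendix), and your explicit caution about the half-level bookkeeping is well placed.
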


\section{Framework}
\label{sec:ratms}
In this section, we present the framework which we shall use to obtain the three decidability results announced in Theorem~\ref{thm:main}. It was originally introduced in~\cite{pzcovering2}. We refer the reader to this paper for details and the proofs of the statements.

\subsection{\Ratms}

The framework is based on an algebraic object called ``\ratm''. They are morphisms of  commutative and idempotent monoids. We write such monoids $(R,+)$: we call the binary operation  ``$+$'' \emph{addition} and denote the neutral element by $0_R$. Being idempotent means that for all $r \in R$, we have $r + r = r$. For every commutative and idempotent monoid $(R,+)$, one may define a canonical ordering $\leq$ over $R$:
\[\text{For all }
  r, s\in R,\quad r\leq s \text{ when } r+s=s.
\]
One may verify that $\leq$ is a partial order which is compatible with addition. Moreover, every morphism between commutative and idempotent monoids is increasing for this ordering.

\begin{example}
  For every set $E$, $(2^E,\cup)$ is an idempotent and commutative monoid. The neutral element is $\emptyset$ and the canonical ordering is inclusion.
\end{example}

When dealing with subsets of a commutative and idempotent monoid $(R,+)$, we shall often apply a \emph{downset operation}. Given $S \subseteq R$, we write:
\[
  \dclosr S = \{r \in R \mid r \leq s \text{ for some $s \in S$}\}.
\]
We extend this notation to Cartesian products of arbitrary sets with $R$. Given some set $X$ and $S \subseteq X \times R$, we write,
\[
  \dclosr S = \{(x,r) \in X \times R \mid \text{$\exists s \in R$ s.t. $r \leq s$ and $(x,s) \in S$}\}
\]
We may now define \ratms. A \ratm is a morphism $\rho: (2^{A^*},\cup) \to (R,+)$ where $(R,+)$ is a \emph{finite} idempotent and commutative monoid, called the \emph{rating set of $\rho$}. That is, $\rho$ is a map from $2^{A^{*}}$ to $R$ satisfying the following properties:
\begin{enumerate}[resume,label=(\arabic*)]
\item\label{itm:bgen:fzer} $\rho(\emptyset) = 0_R$.
\item\label{itm:bgen:ford} For all $K_1,K_2 \subseteq A^*$, $\rho(K_1\cup K_2)=\rho(K_1)+\rho(K_2)$.
\end{enumerate}

For the sake of improved readability, when applying a \ratm $\rho$ to a singleton set $\{w\}$, we shall write $\rho(w)$ for $\rho(\{w\})$. Additionally, we write $\rho_*: A^* \to R$ for the restriction of $\rho$ to $A^*$: for every $w \in A^*$, we have $\rho_*(w) = \rho(w)$ (this notation is useful when referring to the language $\rho_*\inv(r) \subseteq A^*$, which consists of all words $w \in A^*$ such that $\rho(w) = r$).

\smallskip

Most of the theory makes sense for arbitrary \ratms. However, we shall often have to work with special \ratms satisfying additional properties. We define two kinds.

\smallskip
\noindent
{\bf \Nice \ratms.} We say a \ratm $\rho: 2^{A^*} \to R$ is \nice when, for every language $K \subseteq A^*$, there exists finitely many words $w_1,\dots,w_n \in K$ such that $\rho(K) = \rho(w_1) + \cdots + \rho(w_k)$.

When a \ratm $\rho: 2^{A^*} \to R$ is \nice, it is characterized by the canonical map $\rho_*: A^* \to R$. Indeed, for $K \subseteq A^*$, we may consider the sum of all elements $\rho(w)$ for $w \in K$: while it may be infinite, this sum boils down to a finite one since $R$ is commutative and idempotent. The hypothesis that $\rho$ is \nice implies that $\rho(K)$ is equal to this sum.

\smallskip
\noindent
{\bf \Mratms.} A \ratm $\rho: 2^{A^*} \to R$ is \tame when its rating set $R$ has more structure: it needs to be an \emph{idempotent semiring}. Moreover, $\rho$ has to satisfy an additional property connecting this structure to language concatenation. Namely, it has to be a morphism of semirings.

A \emph{semiring} is a tuple $(R,+,\cdot)$ where $R$ is a set and ``$+$'' and ``$\cdot$''  are two binary operations called addition and multiplication, such that the following axioms are satisfied:
\begin{itemize}
\item $(R,+)$ is a commutative monoid.
\item $(R,\cdot)$ is a monoid (the neutral element is denoted by $1_R$).
\item Multiplication distributes over addition. For $r,s,t \in R$, $r \cdot (s + t) = (r \cdot s) + (r \cdot t)$  and $(r + s) \cdot t = (r \cdot t) + (s \cdot t)$.
\item The neutral element ``$0_R$'' of $(R,+)$ is a zero for $(R,\cdot)$: $0_R \cdot r = r \cdot 0_R = 0_R$ for every $r \in R$.
\end{itemize}
A semiring $R$ is \emph{idempotent} when $r + r = r$ for every $r \in R$, \emph{i.e.}, when the additive monoid $(R,+)$ is idempotent (there is no additional constraint on the multiplicative monoid $(R,\cdot)$).

\begin{example}\label{ex:bgen:semiring}
  A key example of infinite idempotent semiring is the set $2^{A^*}$. Union is the addition and language concatenation is the multiplication (with $\{\varepsilon\}$ as neutral element).
\end{example}

Clearly, any finite idempotent semiring $(R,+,\cdot)$ is in particular a rating set: $(R,+)$ is an idempotent and commutative monoid. In particular, one may verify that the canonical ordering ``$\leq$'' on $R$ is compatible with multiplication as well.

We may now define \mratms: as expected they are semiring morphisms. Let $\rho: 2^{A^*} \to R$ be a \ratm: $(R,+)$ is an idempotent commutative monoid and $\rho$ is a morphism from $(2^{A^*},\cup)$ to $(R,+)$. We say that $\rho$ is \tame when the rating set $R$ is equipped with a multiplication ``$\cdot$'' such that $(R,+,\cdot)$ is an idempotent semiring and $\rho$ is also a monoid morphism from $(2^{A^*},\cdot)$ to $(R,\cdot)$. That is, the two following additional axioms have to be satisfied:
\begin{enumerate}[resume,label=(\arabic*)]
\item\label{itm:bgen:funit} $\rho(\varepsilon) = 1_R$.
\item\label{itm:bgen:fmult} For all $K_1,K_2 \subseteq A^*$, we have $\rho(K_1K_2) = \rho(K_1) \cdot \rho(K_2)$.
\end{enumerate}
Altogether, this exactly says that $\rho$ must be a semiring morphism from $(2^{A^*},\cup,\cdot)$ to $(R,+,\cdot)$.

\begin{remark}
  The \ratms which are both \nice and \tame are finitely representable. Indeed, as we explained above, if a \ratm $\rho: 2^{A^*} \to R$ is \nice, it is characterized by the canonical map $\rho_*: A^* \to R$.  When $\rho$ is additionally \tame, $\rho_*$ is finitely representable since it is a morphism into a finite monoid. Hence, we may speak about algorithms taking \nice \mratms as input.

  The \ratms which are not \nice and \tame cannot be finitely represented in general. However, they remain crucial in the paper: while our main statements consider \nice \mratms, many proofs involve auxiliary \ratms which are neither \nice nor \tame.
\end{remark}

\subsection{Optimal \imprints}

Now that we have \ratms, we turn to \imprints. Consider a \ratm $\rho: 2^{A^*} \to R$. Given any finite set of languages~\Kb, we define the $\rho$-\imprint of~\Kb. Intuitively, when \Kb is a cover of some language $L$, this object measures the ``quality'' of \Kb. The $\rho$-\imprint \emph{of \Kb} is the following subset of~$R$:
\[
  \prin{\rho}{\Kb} = \dclosr \{\rho(K) \mid K \in\Kb\}.
\]
We may now define optimality. Consider an arbitrary \ratm $\rho: 2^{A^*} \to R$ and a lattice \Cs. Given a language $L$, an optimal \Cs-cover of $L$ for $\rho$ is a \Cs-cover \Kb of $L$ which satisfies the following property:
\[
  \prin{\rho}{\Kb} \subseteq \prin{\rho}{\Kb'} \quad \text{for every \Cs-cover $\Kb'$ of $L$}.
\]
In general, there can be infinitely many optimal \Cs-covers for a given \ratm $\rho$. The key point is that there always exists at least one (this requires the hypothesis that \Cs is a lattice).

\begin{lemma} \label{lem:bgen:opt}
  Let \Cs be a lattice. For every language $L$ and every \ratm $\rho$, there exists an optimal \Cs-cover of $L$ for $\rho$.
\end{lemma}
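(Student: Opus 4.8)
The plan is to show that the family of $\rho$-imprints of the $\Cs$-covers of $L$ has a smallest element for inclusion, and that this smallest imprint is attained by some $\Cs$-cover. Since the rating set $R$ is finite, there are only finitely many subsets of $R$, hence only finitely many distinct imprints; so everything reduces to proving that this finite, nonempty family of imprints is downward directed under inclusion.

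First I would note that the collection of $\Cs$-covers of $L$ is nonempty: as $\Cs$ is a lattice, $A^* \in \Cs$, so $\{A^*\}$ is a $\Cs$-cover of $L$. The key step is a ``meet'' construction for covers. Given two $\Cs$-covers $\Kb_1,\Kb_2$ of $L$, I would consider $\Kb = \{K_1 \cap K_2 \mid K_1 \in \Kb_1,\ K_2 \in \Kb_2\}$. This set is finite; its members belong to $\Cs$ because $\Cs$ is closed under intersection; and it covers $L$, since any $w \in L$ lies in some $K_1 \in \Kb_1$ and some $K_2 \in \Kb_2$, hence in $K_1 \cap K_2$. So $\Kb$ is again a $\Cs$-cover of $L$. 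Moreover $\rho$, being a morphism of idempotent commutative monoids, is increasing for the canonical order on $R$; from $K_1 \cap K_2 \subseteq K_1$ and $K_1 \cap K_2 \subseteq K_2$ we get $\rho(K_1 \cap K_2) \leq \rho(K_1)$ and $\rho(K_1 \cap K_2) \leq \rho(K_2)$. Hence every generator of $\prin{\rho}{\Kb}$ lies in $\prin{\rho}{\Kb_1} \cap \prin{\rho}{\Kb_2}$, which is a downset (an intersection of two downsets), and therefore $\prin{\rho}{\Kb} \subseteq \prin{\rho}{\Kb_1} \cap \prin{\rho}{\Kb_2}$.

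To finish, let $\Es$ be the set of all imprints $\prin{\rho}{\Kb}$ where $\Kb$ ranges over the $\Cs$-covers of $L$. It is nonempty and finite (a subset of $2^R$), and by the previous paragraph, for any two of its members there is a member of $\Es$ contained in both. A finite nonempty downward-directed subset of a poset has a least element: I would take an inclusion-minimal $I_0 \in \Es$, and for an arbitrary $I \in \Es$ pick $I' \in \Es$ with $I' \subseteq I_0$ and $I' \subseteq I$; minimality forces $I' = I_0$, so $I_0 \subseteq I$. Choosing a $\Cs$-cover $\Kb_0$ of $L$ with $\prin{\rho}{\Kb_0} = I_0$ then gives $\prin{\rho}{\Kb_0} \subseteq \prin{\rho}{\Kb'}$ for every $\Cs$-cover $\Kb'$ of $L$, i.e.\ $\Kb_0$ is an optimal $\Cs$-cover of $L$ for $\rho$.

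I do not expect a genuine obstacle here: the proof rests entirely on two facts already available, namely that $R$ is finite (capping the number of imprints and turning ``downward directed'' into ``has a minimum'') and that $\Cs$ is closed under intersection (making the pointwise intersection of two covers a legitimate $\Cs$-cover). The one point worth stating carefully is that intersecting covers can only shrink imprints — the monotonicity of $\rho$ — and that, over a finite poset, a downward-directed family really does have a least element and not merely a minimal one.
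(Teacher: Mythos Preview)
Your argument is correct. The paper does not actually prove Lemma~\ref{lem:bgen:opt} in the text: Section~\ref{sec:ratms} explicitly defers the proofs of all statements in the framework to~\cite{pzcovering2}. The standard proof there is essentially the one you give: one uses closure of \Cs under finite intersection to form the ``meet'' $\{K_1 \cap K_2 \mid K_i \in \Kb_i\}$ of two \Cs-covers, observes via monotonicity of $\rho$ that its \imprint is contained in both $\prin{\rho}{\Kb_1}$ and $\prin{\rho}{\Kb_2}$, and concludes by finiteness of $2^R$. Your phrasing via ``finite, nonempty, downward directed $\Rightarrow$ has a least element'' is a clean way to package the final step; the reference presents it slightly more directly by intersecting one representative cover per achievable \imprint, but the content is identical.
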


Clearly, for a lattice \Cs, a language $L$ and a \ratm $\rho$, all optimal \Cs-covers of $L$ for $\rho$ have the same $\rho$-\imprint. Hence, this unique $\rho$-\imprint is a \emph{canonical} object for \Cs, $L$ and $\rho$. We call it the \emph{\Cs-optimal $\rho$-\imprint on $L$} and we write it $\opti{\Cs}{L,\rho}$:
\[
  \opti{\Cs}{L,\rho} = \prin{\rho}{\Kb} \quad \text{for any optimal \Cs-cover \Kb of $L$  for $\rho$}.
\]
An important special case is when $L = A^*$. In that case, we write \copti{\rho} for \copti{A^*,\rho}. Finally, we have the following useful fact which is immediate from the definitions.

\begin{fact} \label{fct:linclus}
  Let $\rho$ be a \ratm and consider two languages $H,L$ such that $H \subseteq L$. Then, $\opti{\Cs}{H,\rho} \subseteq \opti{\Cs}{L,\rho}$.
\end{fact}

\subsection{Connection with covering}

We may now connect these definitions to the covering problem. The key idea is that solving \Cs-covering boils down to computing \Cs-optimal \imprints from input \nice \mratms. There are actually two statements (both taken from~\cite{pzcovering2}). The first one is simpler but it only applies to classes \Cs which are Boolean algebras while the second (more involved) one applies to all lattices. We start with the former.

\begin{proposition} \label{prop:breduc}
  Let \Cs be a Boolean algebra. Assume that there exists an algorithm which computes \copti{\rho} from an input \nice \mratm $\rho$. Then, \Cs-covering is decidable.
\end{proposition}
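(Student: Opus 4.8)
The plan is to reduce the $\Cs$-covering problem to the problem of computing $\Cs$-optimal $\rho$-\imprints of the form $\copti{\rho}$ on $A^*$, for a suitably chosen \nice \mratm $\rho$ built from the input. Suppose we are given an instance of $\Cs$-covering: a regular language $L_1$ and a finite set $\Lb_2 = \{L^{(1)},\dots,L^{(m)}\}$ of regular languages. First I would choose a single morphism $\alpha : A^* \to M$ into a finite monoid that simultaneously recognizes $L_1$ and every $L^{(i)}$; such an $\alpha$ exists and is computable since all inputs are regular. From $\alpha$ I would build a \nice \mratm whose rating set records, for each word, its $\alpha$-image. Concretely, one can take $R = 2^{M}$ with union as addition and the multiplication induced from $M$ (so $(R,+,\cdot)$ is a finite idempotent semiring), and let $\rho : 2^{A^*} \to R$ be the \nice \mratm determined by $\rho_*(w) = \{\alpha(w)\}$. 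This $\rho$ is \nice and \tame by construction, hence a valid input for the assumed algorithm.

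The crux is then a purely combinatorial characterization: $(L_1,\Lb_2)$ is $\Cs$-coverable if and only if a certain computable condition on $\copti{\rho}$ holds. The intuition is that $\copti{\rho}$ tells us exactly which ``combinations of $\alpha$-images'' can be realized by a single $\Cs$-language inside $A^*$ in an optimal cover, and from this we can read off whether $L_1$ admits a separating $\Cs$-cover with respect to $\Lb_2$. Since $\Cs$ is a Boolean algebra (hence a lattice), Lemma~\ref{lem:bgen:opt} guarantees that an optimal $\Cs$-cover of $A^*$ for $\rho$ exists, so $\copti{\rho}$ is well-defined. The condition to check is roughly: for every element $s \in M$ that lies in the image $\alpha(L_1)$, the subset $\{s\}$ — or, more precisely, the element of $R$ it determines — belongs to $\copti{\rho}$ \emph{via} a part of an optimal cover that avoids the ``forbidden'' images, namely those $s' \in M$ for which $\alpha^{-1}(s')$ meets some $L^{(i)}$ that $L_1$-blocks would need to avoid. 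One then argues both directions: a separating $\Cs$-cover of $(L_1,\Lb_2)$ yields, after restricting each piece appropriately and closing under Boolean operations, a $\Cs$-cover of $A^*$ witnessing the required membership in $\copti{\rho}$; conversely, from an optimal $\Cs$-cover of $A^*$ one extracts, using the Boolean-algebra closure of $\Cs$ to intersect with $\alpha^{-1}(\text{appropriate sets})$, a genuine separating $\Cs$-cover of $(L_1,\Lb_2)$. The fact that $\Cs$ is a Boolean algebra is precisely what makes the ``$A^*$-version'' suffice here: one can freely intersect cover pieces with $\alpha$-preimages, which are themselves in $\Cs$ only in favourable cases, but the Boolean closure lets the argument go through cleanly by working with the full covering semantics rather than tracking a fixed language $L$.

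Finally, given the characterization, decidability is immediate: run the assumed algorithm to compute $\copti{\rho} \subseteq R$ (a finite set, since $R$ is finite), compute $\alpha$ and the relevant subsets of $M$ from the inputs $L_1, \Lb_2$, and check the finitary condition. I expect the main obstacle to be stating and proving the characterization correctly — in particular, pinning down exactly which subset of $R$ must be hit and showing that optimality on $A^*$ transfers to a separating cover of the actual instance $(L_1,\Lb_2)$. The forward direction (cover of the instance $\Rightarrow$ condition on $\copti{\rho}$) should be routine once $\rho$ is set up, using Fact~\ref{fct:linclus} and the definition of \imprint; the reverse direction is the delicate one, as it requires using optimality of the $A^*$-cover and the Boolean operations of $\Cs$ to manufacture a cover of $L_1$ whose pieces are each disjoint from some member of $\Lb_2$. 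Everything else — building $\rho$, verifying it is a \nice \mratm, the final algorithmic assembly — is bookkeeping.
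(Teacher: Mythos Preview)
Your high-level plan is the standard one (and the one the paper cites from~\cite{pzcovering2}): from the input build a recognizing morphism $\alpha:A^*\to M$ for $L_1$ and all $L^{(i)}\in\Lb_2$, take the \nice \mratm $\rho:2^{A^*}\to 2^M$ determined by $\rho_*(w)=\{\alpha(w)\}$, compute $\copti{\rho}$, and test a finitary condition. That skeleton is right.

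Where the proposal has a real gap is the characterization itself. Your tentative condition --- that for each $s\in\alpha(L_1)$ the singleton $\{s\}$ lie in $\copti{\rho}$ ``via a piece avoiding forbidden images'' --- is not the right object to test, and as stated is vacuous: every singleton $\{\alpha(w)\}=\rho(w)$ already lies in $\copti{\rho}$ because $\copti{\rho}$ is a downset containing all $\rho(K)$ for $K$ in an optimal cover. The correct criterion is the opposite shape. Writing $F_1,F^{(1)},\dots,F^{(m)}\subseteq M$ for the accepting sets (so $L_1=\alpha^{-1}(F_1)$ and $L^{(i)}=\alpha^{-1}(F^{(i)})$), one has:
\[
(L_1,\Lb_2)\text{ is \emph{not} }\Cs\text{-coverable}\ \Longleftrightarrow\ \exists\,T\in\copti{\rho}\text{ with }T\cap F_1\neq\emptyset\text{ and }T\cap F^{(i)}\neq\emptyset\text{ for all }i.
\]
The ``bad'' witnesses are \emph{large} elements of $\copti{\rho}$, not small ones.

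Relatedly, you misidentify where the Boolean-algebra hypothesis enters. It is not used to intersect cover pieces with $\alpha$-preimages (those preimages need not be in $\Cs$). It is used for \emph{complementation}: given an arbitrary \Cs-cover $\Kb'$ of $L_1$, the language $H=A^*\setminus\bigcup_{K\in\Kb'}K$ lies in $\Cs$, so $\Kb=\Kb'\cup\{H\}$ is a \Cs-cover of $A^*$. Any bad $T\in\copti{\rho}\subseteq\prin{\rho}{\Kb}$ is then below some $\rho(K)=\alpha(K)$; since $T$ meets $F_1$ we have $K\cap L_1\neq\emptyset$, forcing $K\in\Kb'$, and since $T$ meets every $F^{(i)}$ the piece $K$ meets every $L^{(i)}$, so $\Kb'$ is not separating. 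The forward direction is as you say: take an optimal \Cs-cover of $A^*$, restrict to the pieces meeting $L_1$, and observe that if none is ``bad'' then this restricted family is already a separating \Cs-cover of $L_1$. Once you replace your condition and your use of the Boolean hypothesis by these, the rest of your plan goes through.
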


In order to handle all lattices, one needs to consider several optimal \imprints simultaneously. This is formalized by the following additional object. Consider a  lattice \Cs, a morphism $\alpha: A^* \to M$ into a finite monoid $M$ and a \ratm $\rho: 2^{A^*} \to R$, we define the following subset of $M \times R$:
\[
  \popti{\Cs}{\alpha}{\rho} = \{(s,r) \in M \times R \mid r \in \opti{\Cs}{\alpha\inv(s),\rho}\}.
\]
We call \popti{\Cs}{\alpha}{\rho} the \emph{$\alpha$-pointed \Cs-optimal $\rho$-\imprint}. Clearly, \popti{\Cs}{\alpha}{\rho} encodes all the sets \copti{\alpha\inv(s),\rho} for $s \in M$.

\begin{proposition} \label{prop:lreduc}
  Let \Cs be a lattice. Assume that there exists an algorithm which computes \popti{\Cs}{\alpha}{\rho} from an input morphism $\alpha$ and an input \nice \mratm $\rho$. Then, \Cs-covering is decidable.
\end{proposition}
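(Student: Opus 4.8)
The plan is to reduce an arbitrary \Cs-covering instance to a single query to the assumed algorithm, mirroring the proof of Proposition~\ref{prop:breduc}, with the pointed \imprint \popti{\Cs}{\alpha}{\rho} playing the role that \copti{\rho} played there. So fix an instance $(L_1,\Lb_2)$ with $\Lb_2 = \{L_2^1,\dots,L_2^k\}$. All these languages being regular, I would pick one morphism $\alpha : A^* \to M$ into a finite monoid recognizing each of them, writing $L_1 = \alpha\inv(G)$ and $L_2^i = \alpha\inv(F_i)$ for suitable $G,F_1,\dots,F_k \subseteq M$. For the rating set, take $R = 2^M$, with union as addition ($0_R = \emptyset$) and the complex product $X \cdot Y = \{xy \mid x \in X,\ y \in Y\}$ as multiplication ($1_R = \{1_M\}$): this is a finite idempotent semiring. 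Then define $\rho : 2^{A^*} \to R$ by $\rho(K) = \alpha(K)$; a routine check shows that $\rho$ is a \ratm which is \nice (because $\alpha(K) \subseteq M$ is finite) and \tame (because $\rho(\veps) = \{1_M\} = 1_R$ and the product on $R$ lifts that of $M$). The feature I want is the elementary equivalence $K \cap L_2^i = \emptyset \iff \rho(K) \cap F_i = \emptyset$, valid for every $K \subseteq A^*$ and every $i \leq k$.

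Next I would isolate the target elements: call $X \in R$ \emph{good} if $X \cap F_i = \emptyset$ for some $i \leq k$; the good elements form a downset of $(R,\leq) = (2^M,\subseteq)$, which is exactly what makes this notion compatible with \imprints. The crux of the argument is the equivalence
\[
  (L_1,\Lb_2)\text{ is \Cs-coverable}
  \iff
  \text{every $(s,r)\in\popti{\Cs}{\alpha}{\rho}$ with $s \in G$ has $r$ good;}
\]
granting it, the proposition follows at once, since the algorithm computes \popti{\Cs}{\alpha}{\rho} from $\alpha$ and $\rho$ and the right-hand side is decidable by inspection.

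To establish the equivalence, observe that $L_1 = \bigcup_{s\in G}\alpha\inv(s)$ is a finite union. A separating \Cs-cover of $(L_1,\Lb_2)$ is in particular a separating \Cs-cover of each $(\alpha\inv(s),\Lb_2)$ with $s \in G$, and conversely the union of separating \Cs-covers of the $(\alpha\inv(s),\Lb_2)$ is one of $(L_1,\Lb_2)$; hence $(L_1,\Lb_2)$ is \Cs-coverable iff each $(\alpha\inv(s),\Lb_2)$ with $s \in G$ is. Now, using $K \cap L_2^i = \emptyset \iff \rho(K) \cap F_i = \emptyset$, a \Cs-cover $\Kb$ of $\alpha\inv(s)$ witnesses \Cs-coverability of $(\alpha\inv(s),\Lb_2)$ iff $\rho(K)$ is good for every $K \in \Kb$, iff — the good elements being downward closed — $\prin{\rho}{\Kb}$ consists only of good elements, iff — since an optimal \Cs-cover exists (Lemma~\ref{lem:bgen:opt}) and its \imprint is contained in the \imprint of every \Cs-cover — the optimal \imprint $\opti{\Cs}{\alpha\inv(s),\rho}$ consists only of good elements. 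Since $\popti{\Cs}{\alpha}{\rho} = \{(s,r) \mid r \in \opti{\Cs}{\alpha\inv(s),\rho}\}$, this yields the displayed equivalence.

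I do not anticipate a genuine technical obstacle: the semiring axioms for $2^M$, the niceness and multiplicativity of $\rho$, and the downward closure of the good elements are all routine verifications. The one point requiring real thought is conceptual, and it explains why the statement is phrased with the \emph{pointed} object rather than with a single optimal \imprint as in Proposition~\ref{prop:breduc}: for a lattice one cannot reduce covering $L_1$ to covering $A^*$ (that step relies on closure under complement), so one must split $L_1$ into its $\alpha$-classes and recombine local separating covers — which is precisely the per-class information that \popti{\Cs}{\alpha}{\rho} records.
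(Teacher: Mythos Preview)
Your proof is correct. The paper does not supply its own proof of this proposition, deferring instead to~\cite{pzcovering2}; your construction---taking a single morphism $\alpha$ recognizing all input languages, using the idempotent semiring $R = 2^M$ with $\rho(K) = \alpha(K)$, and reducing coverability of $(L_1,\Lb_2)$ to the per-class ``goodness'' test on \popti{\Cs}{\alpha}{\rho}---is the standard reduction from that reference, and your explanation of why the pointed object is needed (no complement closure, hence no reduction to a cover of $A^*$) is exactly the right conceptual point.
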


\section{Polynomial closure}
\label{sec:polg}
We prove the first part of Theorem~\ref{thm:main}: for every \vari of group languages \Cs, if \Cs-separation is decidable, then so is \pol{\Cs}-covering.

We use \ratms: for every morphism $\alpha: A^* \to M$ and \nice \mratm $\rho: 2^{A^*} \to R$, we characterize $\pocopti$. When \Cs-separation is decidable, it is simple to deduce a least fixpoint algorithm for computing \pocopti. Proposition~\ref{prop:lreduc} then yields that \pol{\Cs}-covering is decidable.

\begin{remark}
  Our characterization of \pocopti does not actually require $\rho$ to be \nice. While useless for deciding \pol{\Cs}-covering, this is significant. This will later be a requirement when handling \bpol{\Cs} and \pbpol{\Cs} (see remark~\ref{rem:bpol}).
\end{remark}

Before we can present the characterization, we require some terminology. We define a new notion for \ratms: optimal \idens. They are specifically designed to handle covering for concatenation hierarchies whose bases are made of group languages (it is through this notion that our characterizations are parameterized by \Cs-separation).

\subsection{Optimal \idens}

When handling \pol{\Cs} for a \vari of group languages \Cs (as well as \bpol{\Cs} and \pbpol{\Cs} later), we shall often encounter optimal \Cs-covers of the singleton $\{\veps\}$ for various \ratms. The definitions presented here are based on a single key idea: there always exists an optimal \Cs-cover of $\{\veps\}$ which consists of a single language.

\begin{remark}
  The definitions make sense for any lattice \Cs. However, they are mainly relevant when \Cs is a class of group languages. In practice, the other important lattices contain the singleton $\{\veps\}$ and $\{\{\veps\}\}$ is always an optimal \Cs-cover of $\{\veps\}$.
\end{remark}

Let \Cs be a lattice and $\rho: 2^{A^*} \to R$ be a \ratm. A \emph{\Cs-optimal \iden for $\rho$} is a language $L \in \Cs$ such that $\veps \in L$ and,
\[
  \rho(L) \leq \rho(L') \quad \text{for every $L' \in \Cs$ such that $\veps \in L'$}.
\]

\begin{remark}
  Since $\{\veps\}$ is a singleton, one may show that $L$ is a \Cs-optimal \iden for $\rho$ if and only if $\{L\}$ is an optimal \Cs-cover of $\{\veps\}$~for~$\rho$.
\end{remark}

As expected, there always exists a \Cs-optimal \iden for any \ratm $\rho$, provided that \Cs is a lattice.

\begin{lemma}\label{lem:epswit}
  For any lattice \Cs and any \ratm $\rho: 2^{A^*} \to R$, there exists a \Cs-optimal \iden for $\rho$.
\end{lemma}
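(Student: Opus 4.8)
The statement to prove is Lemma~\ref{lem:epswit}: for any lattice \Cs and any \ratm $\rho: 2^{A^*} \to R$, there exists a \Cs-optimal \iden for $\rho$. The plan is to obtain this as an instance of Lemma~\ref{lem:bgen:opt}, which guarantees the existence of an optimal \Cs-cover of any language $L$ for $\rho$. First I would apply Lemma~\ref{lem:bgen:opt} with $L = \{\veps\}$, producing an optimal \Cs-cover \Kb of $\{\veps\}$ for $\rho$. This gives a \emph{finite} set of languages in \Cs whose union contains \veps and whose $\rho$-\imprint is minimal. The remaining work is to collapse this cover into a single language, which is exactly the ``single key idea'' flagged in the subsection: an optimal \Cs-cover of $\{\veps\}$ may be taken to consist of one language.

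Concretely, since \Kb covers $\{\veps\}$, there is at least one $L_0 \in \Kb$ with $\veps \in L_0$. I claim $L_0$ is a \Cs-optimal \iden for $\rho$. It lies in \Cs and contains \veps by construction, so only the minimality condition $\rho(L_0) \leq \rho(L')$ for every $L' \in \Cs$ with $\veps \in L'$ needs checking. Suppose $L' \in \Cs$ satisfies $\veps \in L'$. Then $\{L'\}$ is itself a \Cs-cover of $\{\veps\}$, so by optimality of \Kb we have $\prin{\rho}{\Kb} \subseteq \prin{\rho}{\{L'\}} = \dclosr\{\rho(L')\}$. Since $L_0 \in \Kb$, we get $\rho(L_0) \in \prin{\rho}{\Kb} \subseteq \dclosr\{\rho(L')\}$, which by definition of the downset operation means $\rho(L_0) \leq \rho(L')$. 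This is precisely what is required, so $L_0$ is a \Cs-optimal \iden for $\rho$.

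The argument is short and essentially bookkeeping; the only point needing care is the direction of the inclusions in the definition of optimality and \imprint, and the fact that a singleton family $\{L'\}$ counts as a legitimate \Cs-cover of $\{\veps\}$ whenever $\veps \in L'$ (which uses only that \Cs is a lattice, so that $L' \in \Cs$ and the family is finite and nonempty). There is no real obstacle here: the content is entirely carried by Lemma~\ref{lem:bgen:opt}, and this lemma merely repackages that existence result for the special target language $\{\veps\}$ and records that the resulting cover can be assumed to be a single language — a fact already anticipated in the two surrounding remarks. One could alternatively phrase the whole proof without invoking \Kb explicitly by noting that $\{\rho(L') \mid L' \in \Cs,\ \veps \in L'\}$ is a subset of the finite set $R$, hence has a minimal element with respect to $\leq$; but lattice closure is what ensures that two candidate languages $L', L''$ can be replaced by $L' \cap L''$, driving the value down, so that a global minimum (not merely minimal element) exists. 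I would present the clean version via Lemma~\ref{lem:bgen:opt} as above.
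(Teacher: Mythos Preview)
Your proof is correct, but it takes a different route from the paper's. You invoke Lemma~\ref{lem:bgen:opt} as a black box to get an optimal \Cs-cover \Kb of $\{\veps\}$, then observe that any $L_0 \in \Kb$ containing $\veps$ already satisfies the required minimality, by comparing \Kb against the singleton cover $\{L'\}$. The paper instead gives a direct construction: it lets $U = \{\rho(K) \mid K \in \Cs,\ \veps \in K\} \subseteq R$, picks one witness $K_r$ for each $r \in U$, and takes $K = \bigcap_{r \in U} K_r$; finiteness of $R$ ensures this is a finite intersection, hence $K \in \Cs$, and $K \subseteq K_r$ forces $\rho(K) \leq r$ for every $r \in U$. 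Your approach is more modular and shows that the lemma is essentially a corollary of Lemma~\ref{lem:bgen:opt}; the paper's is self-contained and makes the role of the finiteness of $R$ explicit. Amusingly, the alternative you sketch at the end (taking a global minimum of $\{\rho(L') \mid L' \in \Cs,\ \veps \in L'\}$ via intersections) is precisely the paper's argument.
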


We complete the definition with a key remark. Clearly, all \Cs-optimal \idens for $\rho$ have the same image under $\rho$. This is a canonical object for \Cs and $\rho$. We write it $\ioptic{\rho} \in R$:
\[
  \ioptic{\rho} = \rho(L) \quad \text{for any \Cs-optimal \iden $L$ for $\rho$}.
\]
We turn to a crucial property of \idens: when $\rho$ is a \emph{\nice \ratm}, computing \iopti{\Cs}{\rho} boils down to separation.

\begin{lemma}\label{lem:sepepswit}
  Let \Cs be a lattice and $\rho: 2^{A^*} \to R$ be a \nice \ratm. Then, $\iopti{\Cs}{\rho}$ is the sum of all $r \in R$ such that $\{\veps\}$ is not \Cs-separable from $\rho_*\inv(r)$.
\end{lemma}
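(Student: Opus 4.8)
The plan is to prove the two inequalities separately, using the fact that $\rho$ is \nice so that $\rho(L)$ for any language $L$ is the (finite) sum $\sum_{w \in L} \rho(w)$, equivalently $\rho(L) = \sum \{\, r \in R \mid L \cap \rho_*\inv(r) \neq \emptyset \,\}$. Write $s$ for the sum of all $r \in R$ such that $\{\veps\}$ is not \Cs-separable from $\rho_*\inv(r)$; we must show $\iopti{\Cs}{\rho} = s$.

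\textbf{Step 1: $s \leq \iopti{\Cs}{\rho}$.} Let $L \in \Cs$ be a \Cs-optimal \iden for $\rho$, so $\veps \in L$ and $\iopti{\Cs}{\rho} = \rho(L)$. Take any $r \in R$ contributing to $s$, i.e.\ such that $\{\veps\}$ is not \Cs-separable from $\rho_*\inv(r)$. Since $L \in \Cs$ and $\veps \in L$, the language $L$ does not separate $\{\veps\}$ from $\rho_*\inv(r)$, which by definition of separation forces $L \cap \rho_*\inv(r) \neq \emptyset$. Pick a word $w$ in this intersection; then $\rho(w) = r$ and $w \in L$, so by niceness $r \leq \rho(L) = \iopti{\Cs}{\rho}$. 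Summing over all such $r$ and using that $\leq$ is compatible with addition gives $s \leq \iopti{\Cs}{\rho}$.

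\textbf{Step 2: $\iopti{\Cs}{\rho} \leq s$.} For each $r \in R$ that does \emph{not} contribute to $s$, fix a language $K_r \in \Cs$ separating $\{\veps\}$ from $\rho_*\inv(r)$, so $\veps \in K_r$ and $K_r \cap \rho_*\inv(r) = \emptyset$. Let $L_0 = \bigcap \{\, K_r \mid r \text{ does not contribute to } s \,\}$, a finite intersection, hence $L_0 \in \Cs$ since \Cs is a lattice; also $\veps \in L_0$. By construction $L_0$ is disjoint from $\rho_*\inv(r)$ for every non-contributing $r$, so every word of $L_0$ lies in some $\rho_*\inv(r)$ with $r$ contributing to $s$; by niceness $\rho(L_0) = \sum_{w \in L_0}\rho(w) \leq s$. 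Since $L_0 \in \Cs$ and $\veps \in L_0$, optimality of the \iden gives $\iopti{\Cs}{\rho} \leq \rho(L_0) \leq s$.

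Combining the two steps yields $\iopti{\Cs}{\rho} = s$. The only mildly delicate points are invoking niceness correctly so that image computations reduce to finite sums over words, and noting that the ingredients used — finite intersections staying in \Cs, compatibility of $\leq$ with $+$, and the equivalence between \Cs-optimal \idens and optimal \Cs-covers of $\{\veps\}$ — are all already available; there is no real obstacle beyond bookkeeping. The one step worth double-checking is Step~2, where one must make sure the intersection defining $L_0$ is indeed finite (it is, since $R$ is finite) and that it does not accidentally exclude $\veps$ (it does not, since each $K_r$ contains $\veps$).
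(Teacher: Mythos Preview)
Your proof is correct and follows essentially the same approach as the paper's: the two inequalities are proved in the same way, with the same choice of the intersection $L_0$ (the paper calls it $H$) in Step~2. One very minor remark: in Step~1 you invoke niceness to get $r \leq \rho(L)$, but only monotonicity of the \ratm is needed there ($\{w\} \subseteq L$ implies $\rho(w) \leq \rho(L)$); niceness is genuinely used only in Step~2, exactly as you do.
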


Lemma~\ref{lem:sepepswit} has an important corollary. When $\rho: 2^{A^*} \to R$ is a \nice \mratm, the languages $\rho_*\inv(r)$ are regular (they are recognized by $\rho_*$). Hence, given a lattice \Cs, if \Cs-separation is decidable, we may compute \ioptic{\rho} from an input \nice \mratm.

\begin{corollary}\label{cor:epswit2}
  Consider a lattice \Cs such that \Cs-separation is decidable. Then, given as input a \nice \mratm $\rho: 2^{A^*} \to R$, one may compute the subset $\ioptic{\rho}$ of $R$.
\end{corollary}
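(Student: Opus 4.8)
The plan is to obtain this as an immediate algorithmic consequence of Lemma~\ref{lem:sepepswit}. A \nice \mratm is in particular a \nice \ratm, so that lemma applies and tells us that $\ioptic{\rho}$ is the sum, taken in $(R,+)$, of all elements $r \in R$ such that $\{\veps\}$ is \emph{not} \Cs-separable from the language $\rho_*\inv(r)$. Thus it suffices to argue that this finite sum can be computed effectively from the input.

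First I would observe that all the relevant separation instances are effectively presented. Since $\rho$ is a \nice \mratm, its canonical restriction $\rho_*: A^* \to R$ is a monoid morphism into the finite monoid $R$, hence finitely representable; consequently, for each $r \in R$ the language $\rho_*\inv(r)$ is regular, and a representation of it (for instance $\rho_*$ together with the target element $r$, or the associated automaton) is obtained directly from the input. The language $\{\veps\}$ is trivially regular. So for every $r \in R$, the pair $(\{\veps\},\rho_*\inv(r))$ is a legitimate input to the \Cs-separation problem.

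It then remains to put the pieces together. The monoid $R$ being finite, there are only finitely many elements $r$ to inspect; for each of them we run the decision procedure for \Cs-separation, which is available by hypothesis, to determine whether $\{\veps\}$ is \Cs-separable from $\rho_*\inv(r)$. Letting $X \subseteq R$ be the set of those $r$ for which separability fails and computing $\sum_{r \in X} r$ in $(R,+)$ — a finite computation — we obtain, by Lemma~\ref{lem:sepepswit}, exactly $\ioptic{\rho}$.

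There is no genuine obstacle here; this is a routine corollary. The one point worth making explicit is the complementary role of the two hypotheses on $\rho$: niceness is what makes Lemma~\ref{lem:sepepswit} applicable, yielding the closed description of $\ioptic{\rho}$ in terms of separation, while multiplicativity is what forces $\rho_*$ to be a morphism into the \emph{finite} monoid $R$, so that the languages $\rho_*\inv(r)$ are regular with computable descriptions and the separation instances are effective. This is why the statement is phrased for \nice \mratms and not for arbitrary \nice \ratms.
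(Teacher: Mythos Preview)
Your proposal is correct and follows exactly the same approach as the paper: apply Lemma~\ref{lem:sepepswit}, observe that multiplicativity of $\rho$ makes $\rho_*$ a morphism into the finite monoid $R$ so that each $\rho_*\inv(r)$ is regular, and then use the decidability of \Cs-separation on the finitely many instances $(\{\veps\},\rho_*\inv(r))$ to compute the sum. Your explicit remark about the complementary roles of niceness and multiplicativity is a nice addition but not a departure from the paper's argument.
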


Corollary~\ref{cor:epswit2} is crucial for our approach: this is exactly how the algorithms for \pol{\Cs}-, \bpol{\Cs}- and \pbpol{\Cs}-covering announced in Theorem~\ref{thm:main} are parametrized by \Cs-separation.

Let us point out that when considering a \emph{specific} lattice \Cs, it is usually possible to write an algorithm for computing \ioptic{\rho} from an input \nice \mratm which \emph{does not} involve separation. We present an example for the class~\md.

\begin{lemma}\label{lem:mdiopti}
  Let $\rho: 2^{A^*} \to R$ be a \nice \mratm. Then, $\iopti{\md}{\rho} = (\rho(A))^\omega + \rho(\veps)$.
\end{lemma}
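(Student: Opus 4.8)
The plan is to compute $\iopti{\md}{\rho}$ directly from the definition: I need to exhibit a \md-optimal \iden for $\rho$, i.e.\ a language $L \in \md$ containing $\veps$ whose image $\rho(L)$ is minimal (for the canonical ordering $\leq$) among all such languages. Since $\rho$ is a \nice \mratm, I can use Lemma~\ref{lem:sepepswit}: $\iopti{\md}{\rho}$ is the sum of all $r \in R$ such that $\{\veps\}$ is not \md-separable from $\rho_*\inv(r)$. So the real task reduces to a \md-separation analysis: for which $r \in R$ does every language of \md containing $\veps$ also meet $\rho_*\inv(r)$?

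The key observation is that \md is a class of \emph{length-modular} languages: a language $L \in \md$ is a Boolean combination of the languages $\{w \mid |w| \equiv k \bmod m\}$, so membership of a word $w$ in $L$ depends only on $|w|$, and moreover $L$ is a finite union of arithmetic progressions of lengths (together with possibly a finite or cofinite correction — actually just a union of residue classes modulo the lcm of the moduli involved). Consequently, if $\veps \in L$ (so $0$ is among the admissible lengths), then $L$ must contain $A^n$ for \emph{every} $n$ in some arithmetic progression $n \equiv 0 \bmod m$, and the smallest such $L$ of a fixed modulus $m$ is exactly $\{w \mid |w| \equiv 0 \bmod m\} = (A^m)^*$. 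Thus the candidate optimal \idens are the languages $(A^m)^*$ for $m \geq 1$, and I should compute $\rho((A^m)^*)$ for each and take the infimum.

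First I would compute $\rho((A^m)^*)$ using that $\rho$ is \tame and \nice. Since $\rho$ is a semiring morphism, $\rho(A^m) = (\rho(A))^m$, and since $(A^m)^* = \bigcup_{j \geq 0} A^{mj}$ with $\rho$ additive and $R$ idempotent, $\rho((A^m)^*) = \sum_{j \geq 0} (\rho(A))^{mj} = \rho(\veps) + \sum_{j \geq 1}(\rho(A))^{mj}$. Taking $\omega = \omega(R)$ (so that $(\rho(A))^\omega$ is idempotent), when $m = \omega$ the sum $\sum_{j \geq 1}(\rho(A))^{\omega j}$ collapses to $(\rho(A))^\omega$ by idempotency, giving $\rho((A^\omega)^*) = (\rho(A))^\omega + \rho(\veps)$. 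It remains to check this is $\leq \rho((A^m)^*)$ for every $m \geq 1$ and, more importantly, $\leq \rho(L)$ for \emph{every} $L \in \md$ with $\veps \in L$. For the latter: any such $L$ contains $A^n$ for all $n$ in a progression $n \equiv 0 \bmod m$ with $m \mid$ (some multiple), so in particular $L \supseteq A^{m'j}$ for all $j \geq 0$ where $m' = \mathrm{lcm}(m,\omega)$ say; then $\rho(L) \geq \rho(\veps) + \sum_{j\geq 1}(\rho(A))^{m'j} \geq \rho(\veps) + (\rho(A))^{\omega}$ since $(\rho(A))^{m'}$ has $(\rho(A))^\omega$ below it (as $\omega \mid m'$ and powers only go down in $\leq$... more precisely $(\rho(A))^{\omega} = (\rho(A))^{\omega} \cdot (\rho(A))^{m'-\omega} \cdot \text{stuff}$, handled by a short computation with $\omega$). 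This shows $(A^\omega)^*$ is a \md-optimal \iden, whence $\iopti{\md}{\rho} = \rho((A^\omega)^*) = (\rho(A))^\omega + \rho(\veps)$.

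The main obstacle I expect is the lower-bound direction: rigorously arguing that \emph{every} $L \in \md$ with $\veps \in L$ satisfies $\rho(L) \geq (\rho(A))^\omega + \rho(\veps)$. This requires (i) a clean structural description of \md-languages as unions of length-residue classes — being careful that Boolean combinations of the generators only produce such unions and cannot isolate finitely many lengths — and (ii) the monoid-theoretic fact that for any $t \in R$ and any arithmetic progression of exponents containing arbitrarily large multiples of $\omega$, the idempotent-and-commutative sum over that progression dominates $t^\omega$. Both are routine once set up, but the bookkeeping with the moduli (reducing an arbitrary modulus $m$ to the canonical $\omega$) is where care is needed. An alternative, perhaps cleaner, route is to invoke Lemma~\ref{lem:sepepswit} and Lemma~\ref{lem:mdiopti}'s companion \md-separation algorithm: $\{\veps\}$ is \md-separable from a regular language $K$ iff $K$ contains no word of length divisible by $\omega$ (equivalently, $K \cap (A^\omega)^* = \emptyset$), so $\{\veps\}$ is \emph{not} \md-separable from $\rho_*\inv(r)$ iff $\rho_*\inv(r) \cap (A^\omega)^* \neq \emptyset$, iff $r \leq \rho((A^\omega)^*)$; summing these $r$ gives exactly $\rho((A^\omega)^*) = (\rho(A))^\omega + \rho(\veps)$.
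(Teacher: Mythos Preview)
Your proposal is correct and follows essentially the same approach as the paper: both exhibit the language $(A^\omega)^* = \{w : |w| \equiv 0 \bmod \omega\}$ as a \md-optimal \iden, computing $\rho((A^\omega)^*) = (\rho(A))^\omega + \rho(\veps)$ via niceness and idempotency of $(\rho(A))^\omega$, and for the lower bound use that any $L \in \md$ with $\veps \in L$ contains all words of length divisible by some modulus $d$, hence $A^{d\omega} \subseteq L$ and $(\rho(A))^\omega \leq \rho(L)$. The paper's write-up is slightly tighter (it picks a single exponent $d\omega$ rather than an lcm and a progression), and your ``alternative route'' via Lemma~\ref{lem:sepepswit} is not needed and its biconditional is not fully justified as stated, but your primary argument is the paper's.
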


Finally, we have the following lemma, which considers the special case when the lattice \Cs is a \vari of group languages. We shall need this lemma for proving the three main results of the paper.

\begin{lemma}\label{lem:polctoc}
  Let \Cs be a \vari of group languages and $\rho: 2^{A^*} \to R$ be a \ratm. Then, $\ioptic{\rho} \in \opti{\pol{\Cs}}{\{\veps\},\rho}$.
\end{lemma}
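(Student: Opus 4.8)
The plan is to unfold the definition of $\opti{\pol{\Cs}}{\{\veps\},\rho}$ and reduce the statement to an elementary structural fact about the languages of $\pol{\Cs}$ that contain the empty word. First I would fix a \Cs-optimal \iden $L$ for $\rho$, which exists by Lemma~\ref{lem:epswit}: thus $L \in \Cs$, $\veps \in L$, $\ioptic{\rho} = \rho(L)$, and $\rho(L) \leq \rho(L')$ for every $L' \in \Cs$ with $\veps \in L'$. Since \Cs is in particular a \pvari, Theorem~\ref{thm:polclos} says that $\pol{\Cs}$ is a \pvari, hence a lattice, so Lemma~\ref{lem:bgen:opt} yields an optimal $\pol{\Cs}$-cover \Kb of $\{\veps\}$ for $\rho$, and by definition $\opti{\pol{\Cs}}{\{\veps\},\rho} = \prin{\rho}{\Kb} = \dclosr\{\rho(K) \mid K \in \Kb\}$. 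As this set is downward closed, it suffices to exhibit some $K \in \Kb$ with $\rho(L) \leq \rho(K)$: that will give $\ioptic{\rho} = \rho(L) \in \prin{\rho}{\Kb} = \opti{\pol{\Cs}}{\{\veps\},\rho}$, which is the claim.

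Such a $K$ is obtained as follows. Since \Kb covers $\{\veps\}$, the empty word belongs to some $K \in \Kb$. Now I would use the standard description of $\pol{\Cs}$ as the set of finite unions of \emph{monomials}, i.e.\ languages $L_0 a_1 L_1 \cdots a_n L_n$ with $n \geq 0$, each $a_i \in A$ and each $L_i \in \Cs$ (this is immediate from the definition of $\pol{\Cs}$ as the closure of \Cs under union and marked concatenation). Writing $K$ as such a finite union, $\veps$ lies in one of these monomials $L_0 a_1 L_1 \cdots a_n L_n$; but every word of this language has length at least $n$, so $n = 0$, which means the monomial is just a language $L' \in \Cs$, with $\veps \in L' \subseteq K$. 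Optimality of $L$ then gives $\rho(L) \leq \rho(L')$, and $L' \subseteq K$ gives $\rho(L') \leq \rho(K)$, since $\rho(L') + \rho(K) = \rho(L' \cup K) = \rho(K)$ and thus $\rho(L') \leq \rho(K)$ by definition of the canonical order on $R$. Hence $\ioptic{\rho} = \rho(L) \leq \rho(K)$, as needed.

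The computations here are routine; the one point that requires attention — and hence the main obstacle — is that the conclusion is about the \emph{optimal} $\pol{\Cs}$-imprint on $\{\veps\}$, which may be strictly smaller than the imprint of the naive cover $\{L\}$. So one cannot simply pick a convenient $\pol{\Cs}$-cover: the monomial decomposition is exactly what makes the argument go through for an \emph{arbitrary} $\pol{\Cs}$-cover of $\{\veps\}$, in particular an optimal one. Incidentally, the argument relies only on \Cs being a lattice and on $\rho$ being a \ratm; no further hypothesis is used.
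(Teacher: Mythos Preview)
Your proof is correct and follows essentially the same approach as the paper: take an optimal $\pol{\Cs}$-cover $\Kb$ of $\{\veps\}$, pick $K \in \Kb$ containing $\veps$, observe that $K$ must contain some $L' \in \Cs$ with $\veps \in L'$ (the paper phrases this as ``a simple induction on the definition of polynomial closure'' noting that marked concatenations cannot contain $\veps$, which is exactly your monomial argument), and then use optimality of the $\veps$-approximation to get $\ioptic{\rho} \leq \rho(L') \leq \rho(K)$. Your closing remark that the group-language hypothesis on $\Cs$ is not actually used here is also accurate.
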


\begin{proof}
  Let \Kb be an optimal \pol{\Cs}-cover of $\{\veps\}$. By definition, there exists $K \in \Kb$ such that $\veps \in K$. Since $K \in \pol{\Cs}$, it is immediate from a simple induction on the definition of polynomial closure that we have $H \in \Cs$ such that $\veps \in H$ and $H \subseteq K$ (the key idea is that the marked concatenation of two languages cannot contain $\veps$). By definition of \ioptic{\rho}, we have $\ioptic{\rho} \leq \rho(H)$. Consequently, $\ioptic{\rho} \leq \rho(K)$, which yields $\ioptic{\rho} \in \prin{\rho}{\Kb}$. Since \Kb is an optimal \pol{\Cs}-cover of $\{\veps\}$, we obtain $\ioptic{\rho} \in \opti{\pol{\Cs}}{\{\veps\},\rho}$, as desired.
\end{proof}

\subsection{Characterization}

We characterize \pocopti when \Cs is a \vari of group languages. Let \Cs be such a class.

We first state the characterization. Consider some morphism $\alpha: A^* \to M$ and some \mratm $\rho: 2^{A^*} \to R$. Consider a subset $S \subseteq M \times R$. We say that $S$ is \emph{\pol{\Cs}-complete} for $\alpha$ and $\rho$ when it satisfies the following properties:
\begin{itemize}
\item \emph{\bfseries Trivial elements}: For all $w \in A^*$, $(\alpha(w),\rho(w)) \in S$.
\item \emph{\bfseries Downset}: We have $\dclosr S = S$.
\item \emph{\bfseries Multiplication}: For all, $(s,q),(t,r) \in S$, $(st,qr) \in S$.
\item\label{op:half:gpolclos} \emph{\bfseries \Cs-operation}: We have $(1_M,\ioptic{\rho}) \in S$.
\end{itemize}

We may now state the main theorem of the section: the least \pol{\Cs}-complete subset of $M \times R$ with respect to inclusion is exactly \pocopti (recall that \Cs is required to be a \vari of group languages).

\begin{theorem}\label{thm:half:gmainpolc}
  Consider a morphism $\alpha: A^* \to M$ and a \mratm $\rho:2^{A^*} \to R$. Then, \pocopti is the least \pol{\Cs}-complete subset of $M \times R$.
\end{theorem}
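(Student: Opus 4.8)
The statement asserts that $\pocopti$ coincides with the least $\pol{\Cs}$-complete subset of $M \times R$. As usual, this splits into two inclusions: a \emph{soundness} direction, showing that $\pocopti$ is itself $\pol{\Cs}$-complete (hence contains the least such set), and a \emph{completeness} direction, showing that \emph{every} $\pol{\Cs}$-complete subset $S$ contains $\pocopti$, so that the least one does too. I expect the second direction to be the main obstacle.

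\medskip
\noindent\textbf{Soundness.} I would first verify the four closure properties for $\pocopti = \popti{\pol{\Cs}}{\alpha}{\rho}$. For \textbf{Trivial elements}, fix $w \in A^*$; since $\pol{\Cs}$ is a \pvari\ containing the singleton $\{w\}$ (Fact~\ref{fct:finitelangs}, as $\pol{\Cs}$ is a half level $\geq 1/2$, or more directly because singletons lie in $\pol{\Cs}$ for \varis\ $\Cs$), any $\pol{\Cs}$-cover of $\alpha\inv(\alpha(w))$ must use a language containing $w$, so $\rho(w) \in \opti{\pol{\Cs}}{\alpha\inv(\alpha(w)),\rho}$, giving $(\alpha(w),\rho(w)) \in \pocopti$. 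The \textbf{Downset} property is immediate from the definition of $\prin{\rho}{\cdot}$ via $\dclosr$. For \textbf{Multiplication}, given optimal $\pol{\Cs}$-covers $\Kb$ of $\alpha\inv(s)$ and $\Kb'$ of $\alpha\inv(t)$, the set $\{KaK' \mid K \in \Kb,\ K' \in \Kb',\ a \in A\} \cup (\text{something for the empty word if needed})$ — more carefully, the set of marked products $KaL$ together with handling of $\veps$ — is a $\pol{\Cs}$-cover of $\alpha\inv(st)$ since $\pol{\Cs}$ is closed under marked concatenation and union (Theorem~\ref{thm:polclos}); its $\rho$-\imprint\ is then bounded using the multiplicativity axiom~\ref{itm:bgen:fmult} of $\rho$, yielding $(st, qr) \in \pocopti$ whenever $(s,q),(t,r) \in \pocopti$. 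For the \textbf{\Cs-operation}, $(1_M, \ioptic{\rho}) \in \pocopti$ is exactly Lemma~\ref{lem:polctoc} combined with Fact~\ref{fct:linclus} applied to $\{\veps\} \subseteq \alpha\inv(1_M)$.

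\medskip
\noindent\textbf{Completeness.} This is the hard direction: I must show that for any $\pol{\Cs}$-complete $S$ and any $s \in M$, every element of $\opti{\pol{\Cs}}{\alpha\inv(s),\rho}$ appears in $S$, i.e.\ I need to exhibit, for each $(s,r) \notin S$, a $\pol{\Cs}$-cover of $\alpha\inv(s)$ whose $\rho$-\imprint\ avoids $r$ — equivalently, build from $S$ a single $\pol{\Cs}$-cover $\Kb$ of $A^*$, refined per fiber $\alpha\inv(s)$, with $\prin{\rho}{\Kb} \subseteq \dclosr S$. The standard strategy is to define, for each $(s,r) \in S$, a language $U_{s,r} \in \pol{\Cs}$ with $\rho(U_{s,r}) \leq r$ and such that the $U_{s,r}$ for fixed $s$ cover $\alpha\inv(s)$. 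The construction of $U_{s,r}$ should proceed by induction mirroring how $(s,r)$ was produced inside the least $\pol{\Cs}$-complete set: trivial elements give singletons $\{w\} \in \pol{\Cs}$; the downset step only weakens $r$; the multiplication step combines two previously-built languages by marked concatenation $U_1 a U_2$; and the \Cs-operation step uses a $\Cs$-optimal \iden\ $L \in \Cs \subseteq \pol{\Cs}$ with $\rho(L) = \ioptic{\rho} \leq$ the target. The delicate point — where the hypothesis that $\Cs$ is a \vari\ of \emph{group languages} is essential — is verifying that the resulting family actually \emph{covers} each fiber $\alpha\inv(s)$: this relies on a combinatorial argument about factorizing an arbitrary word $w$ with $\alpha(w) = s$ into blocks whose $\alpha$- and $\rho$-images can be tracked along the inductive construction, using that group languages behave well under the relevant "unambiguous marked concatenation" decompositions and that $\Cs$-separation controls the $\veps$-case. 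Concretely, I would run a Simon-type factorization / induction on $|w|$: either $w = \veps$ and the \Cs-operation term handles it, or $w = w'aw''$ and one splits according to $\alpha(w')$, $\alpha(w'')$, invoking multiplication; the group hypothesis ensures the covers obtained for sub-fibers are compatible (every element of $\Cs$ is a group language, so $A^*$ is partitioned, up to refinement, by $\Cs$-definable languages). I expect this covering verification, together with correctly bounding $\rho$ of the marked products via axioms~\ref{itm:bgen:fzer}–\ref{itm:bgen:fmult}, to be the technical heart of the argument; the matching upper bound $\prin{\rho}{\Kb} \subseteq \dclosr S = S$ then follows because every $U_{s,r}$ satisfies $\rho(U_{s,r}) \leq r$ with $(s,r) \in S$ and $S$ is a downset.
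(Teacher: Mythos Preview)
Your soundness outline is essentially the paper's, with one slip: singletons $\{w\}$ need \emph{not} belong to $\pol{\Cs}$ when $\Cs$ consists of group languages (Fact~\ref{fct:finitelangs} requires level $\geq 1$, while $\pol{\Cs}$ is level $1/2$; e.g.\ with $\Cs=\{\emptyset,A^*\}$ one gets $\siw{1}$, which omits singletons). Fortunately trivial elements need no such fact: for any lattice $\Ds$, every $\Ds$-cover of $\alpha\inv(\alpha(w))$ contains some $K\ni w$, so $\rho(w)\leq\rho(K)$ and $\rho(w)\in\opti{\Ds}{\alpha\inv(\alpha(w)),\rho}$. The paper indeed treats trivial elements, downset and multiplication as generic facts (deferred to~\cite{pzcovering2}) and argues only the $\Cs$-operation, exactly via Lemma~\ref{lem:polctoc} and Fact~\ref{fct:linclus} as you say.

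For completeness, your plan --- build witnesses $U_{s,r}$ by induction on a derivation, then check covering by a ``Simon-type factorization / induction on $|w|$'' --- aims at the right object but the mechanism is vague and misses the key lemma. The paper does not induct on derivations of elements of $S$. It fixes once and for all a $\Cs$-optimal \iden $L$ (so $L\in\Cs$, $\veps\in L$, $\rho(L)=\ioptic{\rho}$) and proves a purely combinatorial statement (Lemma~\ref{lem:half:pump}): for any regular $H$ and any \emph{group} language $L\ni\veps$, one can cover $H$ by \emph{finitely many} languages $La_1L\cdots a_nL$ with $a_1\cdots a_n\in H$. Applied to $H=\alpha\inv(t)$ this directly yields a $\pol{\Cs}$-cover $\Kb_t$; for each $K=La_1L\cdots a_nL\in\Kb_t$ one has
\[
(t,\rho(K))=\Bigl(\textstyle\prod_i\alpha(a_i),\ \ioptic{\rho}\cdot\textstyle\prod_i\bigl(\rho(a_i)\cdot\ioptic{\rho}\bigr)\Bigr)\in S
\]
by trivial elements, $\Cs$-operation and multiplication. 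The proof of Lemma~\ref{lem:half:pump} is a pigeonhole on $N\times G$ (with $N$ recognizing $H$ and $G$ a finite group recognizing $L$): if $|w|>|N\times G|$, two prefixes share an image, and deleting the intervening factor yields a shorter $v\in H$ with $F'_w\subseteq F'_v$. The inclusion $F'_w\subseteq F'_v$ is exactly where the group hypothesis is used: the excised infix has $G$-image $1_G$, so it can be absorbed into an adjacent $\beta\inv(1_G)$-block. There is no Simon factorization forest here; your ``induction on $|w|$'' is morally this pigeonhole shortening, but without identifying that the group structure is what lets the shorter word's pattern \emph{contain} the longer one, the covering claim has no engine.
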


By Theorem~\ref{thm:half:gmainpolc}, if \Cs-separation is decidable, then one may compute \pocopti when given a morphism $\alpha: A^* \to M$ and a \nice \mratm $\rho:2^{A^*} \to R$ as input. Indeed, computing the least \pol{\Cs}-complete subset of $M \times R$ is achieved by an obvious least fixpoint procedure. We are able to implement \Cs-operation, since we may compute $\ioptic{\rho} \in R$: by Corollary~\ref{cor:epswit2}, this boils down to deciding \Cs-separation.

Together with Proposition~\ref{prop:lreduc}, this yields that when \Cs-separation is decidable, so is \pol{\Cs}-covering. Therefore, we get the first part of Theorem~\ref{thm:main}, regarding level~1/2. We conclude the section with the proof of Theorem~\ref{thm:half:gmainpolc}.

\subsection{Proof of Theorem~\ref{thm:half:gmainpolc}}

We fix a morphism $\alpha: A^* \to M$ and a \mratm $\rho: 2^{A^*} \to R$ for the proof. We need to show that \pocopti is the least \pol{\Cs}-complete subset of $M \times R$. We first prove that it is \pol{\Cs}-complete. This means that the least fixpoint procedure computing \pocopti is sound.

\smallskip
\noindent
{\bf Soundness.} We have to show that \pocopti satisfies the four properties in the definition of \pol{\Cs}-complete subsets. The first three (trivial elements, downset and multiplication) are generic: they are satisfied by \popti{\Ds}{\alpha}{\rho} for any \pvari \Ds (we refer the reader to~\cite{pzcovering2} for the proof). Hence, they are satisfied by \pocopti since \pol{\Cs} is a \pvari by Theorem~\ref{thm:polclos}. We concentrate on \pol{\Cs}-operation.

We show that $(1_M,\ioptic{\rho}) \in \pocopti$. In other words, we prove $\ioptic{\rho} \in \opti{\pol{\Cs}}{\alpha\inv(1_M),\rho}$. By Lemma~\ref{lem:polctoc}, we know that $\ioptic{\rho} \in \opti{\pol{\Cs}}{\{\veps\},\rho}$. Moreover, $\{\veps\} \subseteq \alpha\inv(1_M)$ since $\alpha$ is a morphism. Thus, Fact~\ref{fct:linclus} yields that $\ioptic{\rho} \in \opti{\pol{\Cs}}{\alpha\inv(1_M),\rho}$, finishing the soundness proof.

\smallskip
\noindent
{\bf Completeness.} We now prove that \pocopti is included in every \pol{\Cs}-complete subset of $M \times R$. This direction corresponds to completeness of the least fixpoint procedure computing \pocopti. For the proof, we fix $S \subseteq M \times R$ which is \pol{\Cs}-complete subset and show that $\pocopti \subseteq S$. The argument is based on the following proposition

\begin{proposition}\label{prop:polc}
  Let $t \in M$. There exists a \pol{\Cs}-cover $\Kb_t$ of $\alpha\inv(t)$ such that for every $K \in \Kb_t$, we have $(t,\rho(K)) \in S$.
\end{proposition}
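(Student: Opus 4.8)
The proposition is the engine of the completeness half of Theorem~\ref{thm:half:gmainpolc}: granting it, any $(t,r)\in\pocopti$ has $r\in\opti{\pol{\Cs}}{\alpha\inv(t),\rho}$, and since an optimal imprint is contained in the imprint of \emph{every} cover, $r\le\rho(K)$ for some $K\in\Kb_t$; as $(t,\rho(K))\in S$ and $S$ is a downset, $(t,r)\in S$, so $\pocopti\subseteq S$. I therefore concentrate on producing $\Kb_t$. The plan is a Simon-factorization-forest induction in which the \Cs-operation absorbs the ``idempotent'' nodes. Concretely, I would first fix a \Cs-optimal \iden $L_\veps\in\Cs$ for $\rho$ (Lemma~\ref{lem:epswit}), so $\veps\in L_\veps$ and $\rho(L_\veps)=\ioptic{\rho}$. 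Being a group language, $L_\veps=\eta\inv(F)$ for some morphism $\eta:A^*\to G$ into a finite group and some $F\subseteq G$ with $1_G\in F$, hence $\eta\inv(1_G)\subseteq L_\veps$. Set $\beta=(\alpha,\eta):A^*\to M\times G$. The structural fact that makes the whole argument work is that a group has no nontrivial idempotent: every idempotent $e$ of $\beta(A^*)$ is of the form $(m,1_G)$ with $m\in M$ idempotent, and therefore $\beta\inv(e)\subseteq\eta\inv(1_G)\subseteq L_\veps$.

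By the factorization forest theorem of Simon there is a bound $d$, depending only on $|M|$ and $|G|$, such that every word over $A$ admits a $\beta$-factorization forest of height at most $d$ (leaves have height $1$, internal nodes add $1$). I would then prove, by induction on $h=1,\dots,d$: for each $t\in M$ there is a \pol{\Cs}-cover $\Kb_t^h$ of the set of all words $x$ with $\alpha(x)=t$ admitting a $\beta$-factorization forest of height at most $h$, such that $(t,\rho(K))\in S$ for every $K\in\Kb_t^h$, and with $\Kb_t^h\subseteq\Kb_t^{h+1}$; the case $h=d$ is exactly the proposition. For $h=1$ the words involved are single letters (and $\veps$). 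A letter $a$ with $\alpha(a)=t$ is placed in $K:=L_\veps\,a\,L_\veps$: this belongs to \pol{\Cs} since $L_\veps\in\Cs$, it contains $a$ since $\veps\in L_\veps$, and $\rho(K)=\ioptic{\rho}\,\rho(a)\,\ioptic{\rho}$, which pairs with $t=1_M\cdot\alpha(a)\cdot 1_M$ inside $S$ by combining the \Cs-operation element $(1_M,\ioptic{\rho})$, the trivial element $(\alpha(a),\rho(a))$ and closure of $S$ under multiplication; the word $\veps$ (needed only if $t=1_M$) is handled by $L_\veps$ alone, using $(1_M,\ioptic{\rho})\in S$.

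For the step $h\to h+1$, a word $x$ with $\alpha(x)=t$ and a forest of height at most $h+1$ has a root that is a leaf (then $x$ is covered already by $\Kb_t^h\subseteq\Kb_t^{h+1}$), a binary node, or an idempotent node. A binary node $x=x_1x_2$ with $\alpha(x_i)=t_i$ (so $t_1t_2=t$) is covered by $K_1K_2$, where $x_i\in K_i\in\Kb_{t_i}^h$ comes from the induction hypothesis: $K_1K_2\in\pol{\Cs}$ because \pol{\Cs} is closed under concatenation (Theorem~\ref{thm:polclos}), and $(t,\rho(K_1K_2))=(t_1t_2,\rho(K_1)\rho(K_2))\in S$ by multiplication. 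An idempotent node $x=x_1\cdots x_n$ ($n\ge 2$) with all $\beta(x_i)=e=(m,1_G)$ has $\alpha(x_i)=m$ and $t=\alpha(x)=m$; taking $x_1\in K'$ and $x_n\in K''$ with $K',K''\in\Kb_m^h$ from the induction hypothesis, $x$ is covered by $K'\,L_\veps\,K''$. Indeed this lies in \pol{\Cs}, and it contains $x$ because the middle factor $x_2\cdots x_{n-1}$ (empty when $n=2$) has $\eta$-image $1_G$, hence lies in $\eta\inv(1_G)\subseteq L_\veps$; moreover $(t,\rho(K'L_\veps K''))=(m\cdot 1_M\cdot m,\ \rho(K')\,\ioptic{\rho}\,\rho(K''))\in S$, once more by the \Cs-operation and multiplication. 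Collecting the finitely many languages produced over all node types gives $\Kb_t^{h+1}$, which covers every word mapping to $t$ with a forest of height at most $h+1$.

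The hard part is precisely the idempotent node, which is also the only place where the hypothesis that \Cs consists of group languages enters. The group structure forces the $\eta$-coordinate of any idempotent to be $1_G$, and this is exactly what lets an arbitrarily long ``loop'' $x_2\cdots x_{n-1}$ be swallowed by the \emph{single} \Cs-language $L_\veps$, whose $\rho$-image is already the $\ioptic{\rho}$ handed to us by the \Cs-operation. Getting this to interlock with Simon's bound, so that the recursion terminates and produces a finite cover with every member contributing an element of $S$, is the core of the argument; the binary case, the base case, and staying inside \pol{\Cs} are routine consequences of Theorem~\ref{thm:polclos} and the generic properties (trivial elements, downset, multiplication) of \pol{\Cs}-complete sets.
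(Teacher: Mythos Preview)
Your proof is correct but takes a genuinely different route from the paper. The paper does not invoke Simon's factorization forest theorem; instead it first proves a direct pumping lemma (Lemma~\ref{lem:half:pump}): for any regular language $H$ and any group language $L$ with $\veps\in L$, there is a finite cover of $H$ by monomials $La_1L\cdots a_nL$ with $a_1\cdots a_n\in H$. The argument is a pigeonhole on the product of a monoid recognizing $H$ with the group recognizing $L$: in a sufficiently long word two prefixes share the same image, and the infix between them has trivial image in the group, so it can be absorbed into one of the $L$-factors, shortening the witness. Applying this with $H=\alpha\inv(t)$ and $L$ a \Cs-optimal \iden for $\rho$ immediately produces the cover $\Kb_t$, and verifying $(t,\rho(K))\in S$ is then the same trivial-elements/\Cs-operation/multiplication computation you carry out. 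Both arguments rest on the identical group-theoretic fact (the only idempotent in $G$ is $1_G$, so ``loops'' land in the \iden); you package this via the idempotent nodes of a Simon forest and a height induction, whereas the paper's pumping lemma is more elementary, avoids the external dependency on Simon's theorem, and delivers explicit monomial cover elements in a single step.
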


Let us first use Proposition~\ref{prop:polc} to show that $\pocopti \subseteq S$. Let $(t,r) \in \pocopti$. We show that $(t,r) \in S$. By definition, we have $r \in \opti{\pol{\Cs}}{\alpha\inv(t),\rho}$. Consider the \pol{\Cs}-cover $\Kb_t$ of $\alpha\inv(t)$ given by Proposition~\ref{prop:polc}. By hypothesis, we have $r \in \prin{\rho}{\Kb_t}$. Thus, we get $K \in \Kb_t$ such that $r \leq \rho(K)$. Finally, since $(t,\rho(K)) \in S$ and $S = \dclosr S$ ($S$ is \pol{\Cs}-complete), we obtain that $(t,r) \in S$, finishing the proof.

It remains to prove Proposition~\ref{prop:polc}. We rely on the following key technical result.

\begin{lemma}\label{lem:half:pump}
  Let $H$ be a regular language and $L$ be a group language such that $\veps \in L$. There is a cover \Kb of $H$ such that for every language $K \in \Kb$, we have $n \in \nat$ and $a_1,\dots,a_n \in A$ such that $K = La_1L \cdots a_nL$ and $a_1 \cdots a_n \in H$.
\end{lemma}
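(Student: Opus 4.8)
The plan is to build the cover $\Kb$ of $H$ directly from the automaton-theoretic structure of $H$ and the group recognizing $L$. Let $\beta\colon A^* \to G$ be a morphism into a finite group recognizing $L$, so $L = \beta\inv(T)$ for some $T \subseteq G$ with $1_G \in T$ (since $\veps \in L$). Let $\gamma\colon A^* \to N$ be a morphism into a finite monoid recognizing $H$. Consider the product morphism $\eta = (\beta,\gamma)\colon A^* \to G \times N$. The first step is to associate to each word $w \in H$ a language $K_w$ of the required shape $La_1L\cdots a_nL$ by ``spelling out'' $w = a_1\cdots a_n$ letter by letter and inserting a copy of $L$ before, between, and after the letters; then to show that the finitely many distinct languages arising this way (they only depend on bounded data, as explained below) cover $H$.

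The key point for getting a \emph{finite} cover is that I do not need the full word $w$: I claim $K_w = La_1L\cdots a_nL$ only depends on the sequence of ``states'' visited, and this can be pumped down to a bounded length. Concretely, reading $w = a_1\cdots a_n$, record the sequence of pairs $(\beta(a_1\cdots a_i),\gamma(a_1\cdots a_i)) \in G\times N$ for $i = 0,\dots,n$. If two positions $i < j$ give the same pair, then since $L$ is a group language one can check $La_1L\cdots a_iL \cdots a_jL\cdots a_nL$ is contained in the language obtained by deleting the infix $a_{i+1}\cdots a_j$ (the crucial fact: for a group language $L$ recognized by $\beta$, if $\beta(u) = 1_G$ then $LuL = L$, and more generally $L\beta\inv(g)L$ depends only on $g$; also $a_1\cdots a_n \in H$ is preserved because $\gamma$ agrees at $i$ and $j$). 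So by iterated pumping, every $w \in H$ can be replaced by a word $w'$ of length bounded by $|G\times N|$ with $w' \in H$, $w' \in K_{w'}$, and $w \in K_{w'}$. Take $\Kb = \{K_{w'} \mid w' \in H, |w'| \le |G\times N|\}$: this is finite, each member has the required form $La_1L\cdots a_nL$ with $a_1\cdots a_n \in H$, and it covers $H$ since every $w\in H$ lies in some $K_{w'}$.

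I expect the main obstacle to be the bookkeeping behind the pumping argument — specifically, verifying that deleting an infix between two coincident states genuinely keeps the membership $w\in K_{w'}$ and keeps $a_1\cdots a_n \in H$, which forces the choice of the product morphism $\eta$ and a careful statement of ``$La_1L\cdots a_nL$ depends only on the $\beta$-images of the infixes between consecutive $a_i$'s''. One must also be slightly careful about the edge case $H = \emptyset$ (then $\Kb = \emptyset$ works) and the case $\veps \in H$ (covered by $n = 0$, i.e. the single language $L$ itself, which is legitimate since $\veps = a_1\cdots a_n$ for $n=0$ lies in $H$). The rest — that $L \in \grp$, that each $K_{w'}$ has the stated syntactic shape, and finiteness — is immediate from the construction.
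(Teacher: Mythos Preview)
Your overall strategy is the same as the paper's: form the product morphism $\eta=(\beta,\gamma)\colon A^*\to G\times N$, associate to each $w=a_1\cdots a_n\in H$ the language $K_w=La_1L\cdots a_nL$, and pump $w$ down to bounded length by deleting an infix between two prefixes with equal $\eta$-image. However, the step you call the ``crucial fact'' is wrong as stated. The claim that $LuL=L$ whenever $\beta(u)=1_G$, and hence that $K_w\subseteq K_{w'}$ after deleting such an infix, fails when $L$ is not the full kernel $\beta^{-1}(1_G)$. For a concrete counterexample take $A=\{a\}$, $G=\mathbb{Z}/3\mathbb{Z}$, $\beta(a)=1$, and $L=\beta^{-1}(\{0,1\})$. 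With $w=aaa$ (so $\beta(w)=1_G$) and $w'=\veps$, one has $K_w=LaLaLaL$ and $K_{w'}=L$; but $a^5=\veps\cdot a\cdot\veps\cdot a\cdot a\cdot a\cdot a\in K_w$ while $a^5\notin L$ since $5\equiv 2\pmod 3$. The same obstruction breaks the iterated version: once the gaps $u_k$ are merely in $L$ rather than in $\beta^{-1}(1_G)$, a merged gap $u_{i}a_{i+1}u_{i+1}\cdots a_{j}u_{j}$ need not land in~$L$.

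The paper repairs exactly this point by running the whole pumping argument with the auxiliary languages $F'_w=\beta^{-1}(1_G)\,a_1\,\beta^{-1}(1_G)\cdots a_n\,\beta^{-1}(1_G)$ instead of $K_w$. Since $\beta^{-1}(1_G)$ is a submonoid of $A^*$ and the deleted infix has $\beta$-image $1_G$, the merged gap stays in $\beta^{-1}(1_G)$, so $F'_w\subseteq F'_{w'}$ genuinely holds. At the end one uses $\beta^{-1}(1_G)\subseteq L$ (valid because $\veps\in L$) to get $F'_{w'}\subseteq K_{w'}$, which is all that is needed for the cover. Your outline becomes correct once you make this substitution.
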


\begin{proof}
  Since $H$ is regular, we have a morphism $\gamma: A^* \to N$ into a finite monoid $N$ recognizing $H$. Moreover, since $L$ is a group language it is recognized by a morphism $\beta: A^* \to G$ where $G$ is a finite group. Since $\veps \in L$, we have $\beta\inv(1_G) \subseteq L$.

  For every $w = a_1 \cdots a_n \in A^*$, we let $F_w \subseteq A^*$ be the language $F_w = La_1L  \cdots  a_nL$ ($F_\veps = L$). Moreover, we let $F'_w \subseteq F_w$ be the language $F'_w = \beta\inv(1_G) a_1 \beta\inv(1_G)  \cdots  a_n\beta\inv(1_G)$. Let $n = |N \times G|$. We show that the following is a cover of $H$:
  \[
    \{F'_w \mid \text{$w \in H$ and $|w| \leq n$}\}.
  \]
  Clearly, this implies that $\Kb = \{F_w \mid \text{$w \in H$ and $|w| \leq n$}\}$ is a cover of $H$, finishing the proof. We use the following fact.

  \begin{fact}\label{fct:gcovers}
    Let $w \in H$ such that $|w| > n$. Then, there exists $v \in H$ such that $|v| < |w|$ and $F'_{w} \subseteq F'_{v}$.
  \end{fact}

  We first use the fact to finish the main proof. Let $w \in H$. We have to find $K \in \{F'_w \mid \text{$w \in H$ and $|w| \leq n$}\}$ such that $w\in K$. Observe that $w \in F'_{w}$ since by definition, $\veps \in \beta\inv(1_G)$. We may then apply Fact~\ref{fct:gcovers} repeatedly to build $v \in H$ such that $|v| \leq n$ and $F'_{w} \subseteq F'_{v}$. Consequently, we have $w \in F'_{v}$ and we may choose $K = F'_v$. It remains to prove Fact~\ref{fct:gcovers}.

  We fix $w \in H$ such that $|w| > n$. We let $w = a_1 \cdots a_{|w|}$, where each $a_i$ is a letter. Clearly, $N \times G$ is a monoid for the componentwise multiplication. Let $\eta: A^* \to N \times G$ be the morphism defined by $\eta(u) = (\gamma(u),\beta(u))$ for every $u \in A^*$. Since $|w| > n = |N \times G|$, it follows from the pigeonhole principle that there exist $i,j$ such that $1 \leq i < j \leq |w|$ and $\eta(a_1 \cdots a_i) = \eta(a_1 \cdots a_j)$. Define $v = a_1 \cdots a_{i}a_{j+1} \cdots a_{|w|}$. Clearly, we have $|v| < |w|$. Moreover, $\eta(w) = \eta(v)$, hence $\gamma(w) = \gamma(v)$. Consequently, $v \in H$, since $w \in H$ and $H$ is recognized by $\gamma$. It remains to show that $F'_{w} \subseteq F'_{v}$.

  Let $u \in F'_{w}$: we have $u = u_0a_1u_1a_2u_2 \cdots a_{|w|} u_{|w|}$ with $u_i \in \beta\inv(1_G)$ for every $i \leq |w|$. Let $x = u_ia_{i+1}u_{i+1} \cdots a_{j}u_{j}$. Clearly, $\beta(x) = \beta(a_{i+1} \cdots a_j)$ and we know by definition of $i,j$ that $\beta(a_1 \cdots a_i) = \beta(a_1 \cdots a_j)$. Thus, $\beta(a_1 \cdots a_i) \cdot \beta(x) = \beta(a_1 \cdots a_i)$. Since $G$ is a group, it follows that $\beta(x) = 1_G$. Consequently, $x \in \beta\inv(1_G)$. Moreover, we have,
  \[
    \begin{array}{lll}
      u & = & u_0a_1u_1 \cdots a_i x a_{j+1}u_{j+1} \cdots a_{|w|} u_{|w|} \\
      v & = & a_1 \cdots a_{i}a_{j+1} \cdots a_{|u|}
    \end{array}
  \]
  It follows that $u \in F'_{v}$, which concludes the proof.
\end{proof}

For $t\in M$, we are now ready to construct the \pol{\Cs}-cover $\Kb_t$ of Proposition~\ref{prop:polc}. Let $L$ be a \Cs-optimal \iden for $\rho$: we have $L \in \Cs$, $\veps \in L$ and $\rho(L) = \ioptic{\rho}$. Since $\alpha\inv(t)$ is a regular language and $L$ is a group language (by hypothesis on \Cs), Lemma~\ref{lem:half:pump} yields a cover $\Kb_t$ of $\alpha\inv(t)$ such that every language $K \in \Kb_t$ is of the form $K = La_1L \cdots a_nL$ with $a_1,\dots,a_n \in A$ such that $a_1 \cdots a_n \in \alpha\inv(t)$.

Clearly, $\Kb_t$ is a \pol{\Cs}-cover of $\alpha\inv(t)$ since $L \in \Cs$. It remains to prove that for every $K \in \Kb_t$, we have $(t,\rho(K)) \in S$. By definition of $\Kb_t$, there exists $w = a_1 \cdots a_n \in \alpha\inv(t)$ such that $K = La_1L \cdots a_nL$. By definition of \pol{\Cs}-complete subsets, we have $(\alpha(a_i),\rho(a_i)) \in S$ for every $i \leq n$ (these are trivial elements). Moreover, we know from \pol{\Cs}-operation and the definition of $L$ as a \Cs-optimal \iden for~$\rho$ that $(1_M,\rho(L)) = (1_M,\ioptic{\rho}) \in S$. It then follows from closure under multiplication that:
\[
  \left(1_M \cdot\!\!\!\! \prod_{1 \leq i \leq n} (\alpha(a_i) \cdot 1_M),\quad \rho(L) \cdot\!\!\!\! \prod_{1 \leq i \leq n}(\rho(a_i) \cdot \rho(L))\right) \in S.
\]
Since $\alpha$ is a morphism and $\rho$ is a \mratm (and therefore a morphism for multiplication), this exactly says that $(\alpha(w),\rho(K)) \in S$. Finally, $\alpha(w)=t$ by definition and we obtain as desired that $(t,\rho(K)) \in S$, finishing the proof.\qed

\section{Extending the Framework}
\label{sec:units}
In this section, we introduce additional material about \ratms required to prove the remaining part of Theorem~\ref{thm:main} (\emph{i.e.}, concerning levels~1 and~3/2). Let us first overview the situation.

In Section~\ref{sec:ratms}, we proved that given an arbitrary lattice \Ds, deciding \Ds-covering boils down to computing \popti{\Ds}{\alpha}{\rho} from a morphism $\alpha$ and a \nice \mratm $\rho$ (see Proposition~\ref{prop:lreduc}). In fact, if \Ds is a Boolean algebra, we may even work with the simpler set \opti{\Ds}{\rho} (see Proposition~\ref{prop:breduc}). This is how we handled \pol{\Cs} in the previous section: we showed that given a fixed \vari of group languages \Cs, if  \Cs-separation is decidable, then \pocopti can be computed via a least fixpoint algorithm. We shall use the same approach for the classes \bpol{\Cs} and \pbpol{\Cs}: we obtain algorithms for computing \opti{\bpol{\Cs}}{\rho} and \popti{\pbpol{\Cs}}{\alpha}{\rho}. However, we do not compute these sets directly. Instead, we work with more involved sets, which \emph{carry more information}. In this section, we introduce these other sets.

\begin{remark}
  This situation is not surprising. Typically, computing optimal \imprints is achieved via fixpoint procedures (what we did for \pol{\Cs} in Section~\ref{sec:bpolg} is a typical example). Hence, replacing the object that we truly want to compute by another one which carries more information makes sense.
\end{remark}

First, we present two constructions for building new \ratms out of already existing ones. They are taken from~\cite{pzbpolc}, where they are used as technical proof objects. Here, we use them in a more prominent way: they are central to the definitions of this section. We use these constructions to present refined variants of Proposition~\ref{prop:breduc} and Proposition~\ref{prop:lreduc}.

\subsection{Nested \ratms}

We present two constructions. The first one involves two objects: a lattice \Ds and a \ratm $\rho: 2 ^{A^*} \to R$. We build a new map \bratauxd whose rating set is $2^R$. We let:
\[
  \begin{array}{llll}
    \bratauxd: & (2^{A^*},\cup) & \to     & (2^R,\cup)         \\
               & K              & \mapsto & \opti{\Ds}{K,\rho}
  \end{array}
\]
The second construction takes an additional object as input: a map $\alpha: A^* \to M$ where $M$ is some arbitrary finite set (in practice, $\alpha$ will be a monoid morphism). We build a new map \lratauxd whose rating set is $2^{M \times R}$:
\[
  \begin{array}{llll}
    \lratauxd: & (2^{A^*},\cup) & \to     & (2^{M \times R},\cup)                                      \\
               & K              & \mapsto & \{(s,r) \mid r \in \opti{\Ds}{\alpha\inv(s) \cap K,\rho}\}
  \end{array}
\]
It is straightforward to verify that these two maps are in fact \ratms (a proof is available in~\cite{pzbpolc}, see Proposition~6.2). We state this result in the following proposition.

\begin{proposition}\label{prop:areratms}
  Given a lattice \Ds, a map $\alpha: A^* \to M$ and a \ratm $\rho: 2 ^{A^*} \to R$, \bratauxd and \lratauxd are \ratms.
\end{proposition}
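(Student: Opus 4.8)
The plan is to verify that $\bratauxd$ and $\lratauxd$ satisfy the two defining axioms of a \ratm, namely \ref{itm:bgen:fzer} ($\rho(\emptyset)=0$) and \ref{itm:bgen:ford} (additivity over unions), after first checking that the rating sets $2^R$ and $2^{M\times R}$ are indeed finite idempotent commutative monoids under union. The latter is trivial: for any set $E$, $(2^E,\cup)$ is a finite idempotent commutative monoid whenever $E$ is finite, and both $R$ and $M\times R$ are finite (here $M$ is finite by hypothesis). So the real content is the two \ratm axioms.

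For $\bratauxd$, axiom \ref{itm:bgen:fzer} asks that $\opti{\Ds}{\emptyset,\rho}=\emptyset$. This holds because the empty set of languages $\Kb=\emptyset$ is a \Ds-cover of $\emptyset$ (vacuously, $\emptyset\subseteq\bigcup_{K\in\emptyset}K$), and its $\rho$-\imprint is $\dclosr\emptyset=\emptyset$; since $\emptyset$ is the least element of $2^R$, this cover is optimal and $\opti{\Ds}{\emptyset,\rho}=\emptyset$. For axiom \ref{itm:bgen:ford}, I must show $\opti{\Ds}{K_1\cup K_2,\rho}=\opti{\Ds}{K_1,\rho}\cup\opti{\Ds}{K_2,\rho}$. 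The inclusion $\supseteq$ follows from Fact~\ref{fct:linclus}, since $K_i\subseteq K_1\cup K_2$ gives $\opti{\Ds}{K_i,\rho}\subseteq\opti{\Ds}{K_1\cup K_2,\rho}$. For $\subseteq$: take optimal \Ds-covers $\Kb_1$ of $K_1$ and $\Kb_2$ of $K_2$; then $\Kb_1\cup\Kb_2$ is a \Ds-cover of $K_1\cup K_2$, so $\opti{\Ds}{K_1\cup K_2,\rho}\subseteq\prin{\rho}{\Kb_1\cup\Kb_2}=\prin{\rho}{\Kb_1}\cup\prin{\rho}{\Kb_2}=\opti{\Ds}{K_1,\rho}\cup\opti{\Ds}{K_2,\rho}$, using that $\prin{\rho}{\cdot}$ distributes over union of finite sets of languages (immediate from the definition $\prin{\rho}{\Kb}=\dclosr\{\rho(K)\mid K\in\Kb\}$).

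For $\lratauxd$, the argument is essentially the same, carried out componentwise over $s\in M$. Axiom \ref{itm:bgen:fzer}: $\alpha\inv(s)\cap\emptyset=\emptyset$, so each component $\opti{\Ds}{\emptyset,\rho}=\emptyset$ by the computation above, hence $\lratauxd(\emptyset)=\emptyset$. Axiom \ref{itm:bgen:ford}: for each fixed $s$, $\alpha\inv(s)\cap(K_1\cup K_2)=(\alpha\inv(s)\cap K_1)\cup(\alpha\inv(s)\cap K_2)$, so by the additivity of $\bratauxd$ already established (applied to the languages $\alpha\inv(s)\cap K_i$), we get $\opti{\Ds}{\alpha\inv(s)\cap(K_1\cup K_2),\rho}=\opti{\Ds}{\alpha\inv(s)\cap K_1,\rho}\cup\opti{\Ds}{\alpha\inv(s)\cap K_2,\rho}$. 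Taking the union over all $s\in M$ of the corresponding slices $\{(s,r)\mid r\in\cdots\}$ yields $\lratauxd(K_1\cup K_2)=\lratauxd(K_1)\cup\lratauxd(K_2)$.

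I do not expect any serious obstacle here; this proposition is a routine bookkeeping check, and indeed the excerpt notes the proof is available in~\cite{pzbpolc}. The only point requiring a moment's care is the handling of the empty cover for axiom \ref{itm:bgen:fzer} — one must remember that $\emptyset$ is a legitimate cover of $\emptyset$ and that the empty \imprint is optimal — but once this is observed, everything reduces to the distributivity of the \imprint operation over finite unions of language families, which is immediate. The componentwise structure of $\lratauxd$ means its verification simply inherits from that of $\bratauxd$ with no new ideas.
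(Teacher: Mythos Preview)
Your proposal is correct and is exactly the straightforward verification the paper alludes to when it says ``it is straightforward to verify that these two maps are in fact \ratms'' and defers to~\cite{pzbpolc}. The paper gives no in-text proof, so your direct check of axioms~\ref{itm:bgen:fzer} and~\ref{itm:bgen:ford} (empty cover for the former, combining optimal covers plus Fact~\ref{fct:linclus} for the latter, then the componentwise reduction for $\lratauxd$) is precisely what is intended.
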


A key point is that the \ratms \bratauxd and \lratauxd are \textbf{not} \nice in general, even when the original \ratm $\rho$ is (we refer the reader to~\cite{pzbpolc} for a counterexample).

Another question is whether they are \tame. Clearly, when $\alpha: A^* \to M$ is a monoid morphism and $\rho: 2^{A^*} \to R$ is a \mratm, we may lift the multiplications of $M$ and $R$ to the sets $2^R$ and $2^{M \times R}$ in the natural way. This makes $(2^R,\cup,\cdot)$ and $(2^{M \times R},\cup,\cdot)$ semirings. Unfortunately, \bratauxd and \lratauxd are {\bf not} \tame, even under these hypotheses. However, they behave almost as \mratms when the class \Ds satisfies appropriate properties.

Consider an arbitrary lattice \Gs and a \ratm $\rho: 2^{A^*} \to R$ whose rating set is a semiring $(R,+,\cdot)$ (but $\rho$ need not be \tame).  We say that $\rho$ is \emph{\Gs-\tame} when there exists an endomorphism \quasir of $(R,+)$ such that:
\begin{enumerate}
\item\label{itm:ax1} For every $q,r,s \in R$, $\quasir(q\quasir(r)s) = \quasir(qrs)$.
\item\label{itm:ax2} $1_R \leq \rho(\veps)$.
\item\label{itm:ax3} For every $H,K \in \Gs$ and $a \in A$, we have,
  \[
    \begin{array}{lll}
      \rho(H)     & = & \quasir(\rho(H))                             \\
      \rho(HK)    & = & \quasir\big(\rho(H) \cdot \rho(K)\big)               \\
      \rho(HaK)   & = & \quasir\big(\rho(H) \cdot \rho(a) \cdot \rho(K)\big)
    \end{array}
  \]
\end{enumerate}
When \Gs is the class of all languages (\emph{i.e.}, $\Gs = 2^{A^*}$), we say that $\rho$ is \emph{quasi-\tame}. In practice, we shall always assume implicitly that the endomorphism \quasir is fixed.

\begin{remark}\label{rem:truem}
  A true \mratm is always quasi-\tame. Indeed, in this case, it suffices to choose $\quasir$ as the identity; $\quasir(r) = r$ for all $r \in R$.
\end{remark}

We have the two following lemmas, which apply to \lratauxd and \bratauxd respectively (see~\cite[Lemmas~6.7 and~6.8]{pzbpolc} for proofs). The statements are rather \emph{ad hoc}: they are designed to accommodate the situations that we shall encounter. We use \lratauxd in the case when $\Ds = \pol{\Cs}$ while we use \bratauxd in the case when $\Ds = \bpol{\Cs}$.

\begin{lemma}\label{lem:copelrat}
  Let \Ds be a \pvari closed under concatenation, $\alpha: A^* \to M$ be a morphism and $\rho: 2 ^{A^*} \to~R$ be a \mratm. Then, \lratauxd is quasi-\tame, and the endomorphism $\quasi{\lratauxd}$ of $(2^{M \times R},\cup)$~is:
  \[
    \quasi{\lratauxd}(T) = \dclosr T \quad \text{for all $T \in 2^{M \times R}$}.
  \] \end{lemma}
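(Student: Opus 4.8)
The goal is to show that $\lratauxd$ (for $\Ds$ a \pvari closed under concatenation, $\alpha$ a morphism, $\rho$ a \mratm) is quasi-\tame, with the endomorphism $\quasi{\lratauxd}$ of $(2^{M\times R},\cup)$ given by the downset operation $T \mapsto \dclosr T$. The work splits into two parts: first, checking that $\quasir := \dclosr(\cdot)$ is indeed an endomorphism of the additive monoid $(2^{M\times R},\cup)$; second, verifying the three axioms \ref{itm:ax1}--\ref{itm:ax3} in the definition of \Gs-\tame with $\Gs = 2^{A^*}$ (quasi-\tame).

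\textbf{Step 1: $\dclosr$ is an additive endomorphism.} The operation $T \mapsto \dclosr T$ clearly satisfies $\dclosr(S\cup T) = \dclosr S \cup \dclosr T$ (an element lies below something in $S\cup T$ iff it lies below something in $S$ or in $T$) and $\dclosr\emptyset = \emptyset$. So it is a monoid morphism from $(2^{M\times R},\cup)$ to itself. This is immediate from the definition of $\dclosr$ on Cartesian products recalled in Section~\ref{sec:ratms}.

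\textbf{Step 2: the three axioms.} For axiom~\ref{itm:ax2}, I must show $1_{2^{M\times R}} \leq \lratauxd(\veps)$, i.e. that the multiplicative neutral element of the semiring $(2^{M\times R},\cup,\cdot)$ — which is $\{(1_M,1_R)\}$ — is a subset of (equivalently, $\leq$ in the canonical order is $\subseteq$, so: contained in the downset closure of) $\lratauxd(\{\veps\}) = \{(s,r)\mid r\in\opti{\Ds}{\alpha\inv(s)\cap\{\veps\},\rho}\}$. Since $\alpha(\veps) = 1_M$, we have $\alpha\inv(1_M)\cap\{\veps\} = \{\veps\}$, so it suffices to show $1_R \in \dclosr\opti{\Ds}{\{\veps\},\rho}$. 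As $\{A^*\}$ is a \Ds-cover of $\{\veps\}$ (here $A^*\in\Ds$ since $\Ds$ is a \pvari, hence a lattice), any optimal \Ds-cover $\Kb$ of $\{\veps\}$ has imprint contained in $\dclosr\{\rho(A^*)\}$; but in fact one uses that for the singleton $\{\veps\}$ some $K\in\Kb$ contains $\veps$, and $\rho(\{\veps\})\leq\rho(K)$ while $1_R \leq \rho(\{\veps\})$ because $\rho$ is a \mratm (axiom~\ref{itm:bgen:funit}: $\rho(\veps)=1_R$, so $1_R\leq 1_R$ trivially) — so $1_R\in\opti{\Ds}{\{\veps\},\rho}$, giving~\ref{itm:ax2}.

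For axiom~\ref{itm:ax3} with $\Gs = 2^{A^*}$, the three required identities are, for arbitrary languages $H,K$ and $a\in A$: $\lratauxd(H) = \quasir(\lratauxd(H))$, $\lratauxd(HK) = \quasir(\lratauxd(H)\cdot\lratauxd(K))$, and $\lratauxd(HaK) = \quasir(\lratauxd(H)\cdot\lratauxd(a)\cdot\lratauxd(K))$. The first is the statement that $\lratauxd(H)$ is already downward closed in $M\times R$, which follows because $\opti{\Ds}{\alpha\inv(s)\cap H,\rho}$ is a downset in $R$ for each $s$ (it is an imprint, and imprints are downsets by definition $\prin{\rho}{\Kb} = \dclosr\{\dots\}$). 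The second and third are the substantive points: they say that the $\Ds$-optimal pointed imprint of a concatenation $HK$ (resp. marked concatenation $HaK$) is obtained from those of $H$ and $K$ (and of $a$) by multiplying in the semiring $2^{M\times R}$ and taking the downset. This is precisely where the hypotheses that $\Ds$ is closed under concatenation and marked concatenation enter — one direction ($\subseteq$) builds an optimal cover of $\alpha\inv(s)\cap HK$ by taking products $K_1 a' K_2$ of languages from optimal covers of the pieces (legal since $\Ds$ is closed under concatenation), and the other direction ($\supseteq$, up to downset) uses that any $\Ds$-cover of the big language restricts/intersects to give covers of the pieces, together with the fact that $\rho$ is multiplicative so $\rho(K_1K_2) = \rho(K_1)\rho(K_2)$.

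\textbf{Main obstacle.} The genuine content — and the place I would expect to spend real effort — is the concatenation identity in axiom~\ref{itm:ax3}, i.e. relating $\opti{\Ds}{\alpha\inv(t)\cap HK,\rho}$ to the products of the pointed optimal imprints of $H$ and $K$. The forward inclusion requires care with the pointing: a word in $\alpha\inv(t)\cap HK$ factors as $u\cdot v$ with $u\in H$, $v\in K$, and one must track the $M$-images, using $\alpha(u)\alpha(v) = t$ and distributing the optimal covers of $\alpha\inv(s)\cap H$ over all $s$ and $\alpha\inv(s')\cap K$ over all $s'$ with $ss' = t$. The reverse inclusion (modulo $\dclosr$) needs the observation that restricting a $\Ds$-cover of the concatenation to a fixed factorization pattern still yields $\Ds$-languages and that optimality of the component covers forces the imprint down. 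Fortunately, this is exactly the kind of argument carried out in \cite{pzbpolc} (Lemma~6.7), so I would structure the proof to mirror that, invoking Theorem~\ref{thm:polclos}-style closure of $\Ds$ and the multiplicativity axioms \ref{itm:bgen:funit}--\ref{itm:bgen:fmult} of $\rho$ at the key steps, and otherwise treat the set-theoretic bookkeeping as routine.
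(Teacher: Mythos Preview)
The paper does not actually prove this lemma; it simply cites \cite[Lemmas~6.7 and~6.8]{pzbpolc}. Your sketch follows the standard argument from that reference and is structurally sound, so in that sense you are aligned with what the paper does.

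Two small points are worth flagging. First, you never address axiom~\ref{itm:ax1}, namely $\dclosr(q\cdot \dclosr(r)\cdot s) = \dclosr(q\cdot r\cdot s)$ for all $q,r,s \in 2^{M\times R}$; this is easy (it follows from compatibility of the multiplication on $R$ with $\leq$) but should be stated. Second, you write that ``the hypotheses that \Ds is closed under concatenation and marked concatenation enter'' --- but the lemma only assumes closure under \emph{concatenation}. The marked-concatenation identity in axiom~\ref{itm:ax3} is not handled by a separate closure hypothesis on \Ds; rather, one writes $HaK = H\cdot\{a\}\cdot K$, applies the concatenation identity twice, and then uses axiom~\ref{itm:ax1} (which you will have verified) to collapse the nested $\dclosr$. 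So the third identity is a consequence of the second, not an independent verification requiring marked-concatenation closure of \Ds.
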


\begin{lemma}\label{lem:copelbrat}
  Let \Gs be a \pvari closed under concatenation and {\bf marked} concatenation and let \Ds = \bool{\Gs}. Let $\rho: 2 ^{A^*} \to R$ a \mratm. Then, \bratauxd is \Gs-\tame and the endomorphism $\quasi{\bratauxd}$ of $(2^R,\cup)$ is:
  \[
    \quasi{\bratauxd}(T) = \dclosr T \quad \text{for all $T \in 2^{R}$}.
  \]
\end{lemma}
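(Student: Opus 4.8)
The plan is to prove the two lemmas by unfolding the definitions of \bratauxd and \lratauxd and the notion of \Gs-\tame (respectively quasi-\tame), then checking each axiom in turn, using the structural properties of \Ds = \bool\Gs (resp. of a \pvari closed under concatenation) together with Fact~\ref{fct:linclus} and the behaviour of optimal imprints under the language operations. Both proofs follow the same template, so I will describe Lemma~\ref{lem:copelbrat} in detail and indicate the (simpler) changes for Lemma~\ref{lem:copelrat}.

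For Lemma~\ref{lem:copelbrat}: recall \bratauxd sends $K \mapsto \opti{\bool\Gs}{K,\rho}$, with rating set $(2^R,\cup)$, and its multiplication is the lifting of the multiplication of $R$. I will take the candidate endomorphism $\quasi{\bratauxd}(T) = \dclosr T$; this is an endomorphism of $(2^R,\cup)$ because the downset operation commutes with unions (if $r \le s_1$ or $r \le s_2$ then $r$ lies below an element of $T_1 \cup T_2$, and conversely). Then I check the three axioms. Axiom~\ref{itm:ax1}, $\quasir(q\quasir(r)s) = \quasir(qrs)$: here $q,r,s$ are subsets of $R$ and the equation reads $\dclosr(q \cdot (\dclosr r) \cdot s) = \dclosr(q r s)$; this holds because in the idempotent semiring $R$ the canonical order is compatible with multiplication, so $\dclosr(q(\dclosr r)s)$ and $\dclosr(qrs)$ have the same downward closure. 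Axiom~\ref{itm:ax2}, $1_R \le \rho(\veps)$ where now $1_R$ is the unit of $2^R$, i.e. $\{1_R\}$, and $\bratauxd(\veps) = \opti{\bool\Gs}{\{\veps\},\rho}$: since $A^* \in \bool\Gs$ and $\veps \in A^*$, any optimal \bool\Gs-cover of $\{\veps\}$ contains a language $K$ with $\veps \in K$, so $1_R = \rho(\veps) \cdot \text{(something)} \ge \ldots$; more precisely $1_R \leq \rho(K)$ because $\varepsilon \in K$ gives $\rho(\varepsilon) \le \rho(K)$ and $1_R = \rho(\varepsilon)$ — wait, $\rho$ is a \mratm so $\rho(\veps) = 1_R$, hence $1_R \in \opti{\bool\Gs}{\{\veps\},\rho}$, which is exactly $\{1_R\} \le \bratauxd(\veps)$ in the order of $2^R$ (inclusion up to downset). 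Axiom~\ref{itm:ax3} is the heart: I must show $\bratauxd(H) = \quasir(\bratauxd(H))$ (immediate: optimal imprints are already downsets), $\bratauxd(HK) = \quasir(\bratauxd(H)\cdot\bratauxd(K))$ and $\bratauxd(HaK) = \quasir(\bratauxd(H)\cdot\bratauxd(a)\cdot\bratauxd(K))$ for $H,K \in \Gs$, $a \in A$. The inclusion $\subseteq$ in the product formulas uses that \Gs is closed under (marked) concatenation, so from optimal \bool\Gs-covers $\Kb_H$ of $H$ and $\Kb_K$ of $K$ one forms $\{K_1 a K_2 : K_i \in \Kb_i\}$, a \bool\Gs-cover — actually a \Gs-cover — of $HaK$ whose imprint is contained in $\dclosr(\opti{\bool\Gs}{H,\rho}\cdot\rho(a)\cdot\opti{\bool\Gs}{K,\rho})$, using that $\rho$ is a \mratm. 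The reverse inclusion $\supseteq$ is the subtler direction and relies on a known property of optimal imprints under concatenation for \pvaris closed under the relevant operations — this is precisely the content cited from \cite[Lemmas~6.7, 6.8]{pzbpolc}, so I will invoke it.

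For Lemma~\ref{lem:copelrat}: the argument is the same but easier because \Ds = \pol\Cs is itself a \pvari closed under concatenation (so there is no intervening $\Gs \ne \Ds$), the rating set is $2^{M \times R}$, and the products are computed componentwise from the morphism $\alpha$ and the \mratm $\rho$. The candidate endomorphism $\quasi{\lratauxd}(T) = \dclosr T$ (downset in the $R$-component only) is again additive. Axiom~\ref{itm:ax1} is the compatibility of $\le$ with componentwise multiplication on $M \times R$; axiom~\ref{itm:ax2} follows since $A^* \in \pol\Cs$ and $\rho(\veps) = 1_R$, so $(1_M,1_R) \in \lratauxd(\veps)$; axiom~\ref{itm:ax3} uses closure of \pol\Cs under concatenation and marked concatenation (Theorem~\ref{thm:polclos}), Fact~\ref{fct:linclus}, and the fact that $\alpha\inv(st)\supseteq \alpha\inv(s)\alpha\inv(t)$ to relate $\opti{\pol\Cs}{\alpha\inv(st)\cap HaK,\rho}$ to the product of the pointed optimal imprints on $H$ and $K$. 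As with the other lemma, the nontrivial $\supseteq$ direction is exactly the statement imported from \cite{pzbpolc}.

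The main obstacle is the reverse inclusion in axiom~\ref{itm:ax3}, i.e. showing that the optimal imprint on a (marked) concatenation $HaK$ is \emph{at least} (in the $2^R$-order) the product of the optimal imprints on $H$ and $K$ — equivalently, that an optimal cover of $HaK$ cannot do better than gluing together optimal covers of the factors. This is where one genuinely needs the hypothesis that \Gs is closed under concatenation \emph{and} marked concatenation and the combinatorial analysis of how a \bool\Gs-cover of $HaK$ decomposes; fortunately this is proved in \cite[Lemmas~6.7 and~6.8]{pzbpolc}, which the statement explicitly cites, so in this paper I would simply quote it rather than reprove it. Everything else — additivity of the downset endomorphism, axioms~\ref{itm:ax1} and~\ref{itm:ax2}, and the forward inclusions in~\ref{itm:ax3} — is routine bookkeeping with the definitions of \ratm, \mratm, optimal imprint and the canonical order.
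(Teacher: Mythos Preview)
The paper itself does not prove this lemma: it simply cites \cite[Lemmas~6.7 and~6.8]{pzbpolc}. So your decision to defer the substantive part of axiom~\ref{itm:ax3} to that reference is exactly what the paper does, and your treatment of the endomorphism $T \mapsto \dclosr T$ and of axioms~\ref{itm:ax1} and~\ref{itm:ax2} is correct and routine.

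There is, however, a genuine gap in your sketch of what you call the ``easy'' inclusion $\subseteq$ in axiom~\ref{itm:ax3}. You propose to take optimal $\Ds$-covers $\Kb_H$, $\Kb_K$ of $H$ and $K$ and form $\{K_1 a K_2 \mid K_i \in \Kb_i\}$, claiming this is a $\Ds$-cover (``actually a \Gs-cover'') of $HaK$. But $K_1, K_2 \in \Ds = \bool{\Gs}$, and $\bool{\Gs}$ is \emph{not} closed under marked concatenation: only \Gs is, and marked concatenation does not distribute over complement or intersection. So $K_1 a K_2$ need not lie in $\Ds$, and your product cover is not a $\Ds$-cover. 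This is precisely the asymmetry that makes the $\Gs$-\tame notion (as opposed to quasi-\tame) necessary here: axiom~\ref{itm:ax3} is only asserted for $H, K \in \Gs$, and the actual argument in~\cite{pzbpolc} exploits that $H, K$ themselves lie in \Gs to control how $\Ds$-covers of $HaK$ interact with the factorization, rather than naively concatenating covers. Neither inclusion in axiom~\ref{itm:ax3} is as routine as you suggest; both require the analysis from~\cite{pzbpolc}, so your instinct to cite it is right, but you should cite it for the whole of axiom~\ref{itm:ax3}, not just one direction.
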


\subsection{Application to classes of group languages}

We may now present the refinements of Proposition~\ref{prop:breduc} and Proposition~\ref{prop:lreduc} announced at the beginning. They are designed to handle the classes investigated in the paper. Here, we discuss and prove the variant for Boolean algebras and simply state the one for lattices (we prove it in appendix).

By Proposition~\ref{prop:breduc}, given a Boolean algebra \Ds, deciding \Ds-covering boils down to computing $\opti{\Ds}{\rho} \subseteq R$ from a \nice \mratm $\rho: 2^{A^*} \to R$. It turns out that when \Ds is a full level in some concatenation hierarchy whose basis \Cs is made of group languages, one may instead compute a specific subset of \opti{\Ds}{\rho}.

By definition, $\opti{\Ds}{\rho} = \opti{\Ds}{A^*,\rho} = \bratauxd(A^*)$. We may consider the element $\ioptic{\bratauxd} \in 2^R$ which is equal to $\bratauxd(L)$ for a \Cs-optimal \iden $L$ for $\bratauxd$. Hence, we have,
\[
  \ioptic{\bratauxd} = \bratauxd(L) \subseteq \bratauxd(A^*) = \opti{\Ds}{\rho}.
\]
It turns out that because of our hypothesis on \Ds, one may compute \opti{\Ds}{\rho} from its subset \ioptic{\bratauxd} (see below for the proof). This yields the following corollary of Proposition~\ref{prop:breduc}.

\begin{proposition}\label{prop:utooptibool}
  Let \Cs be a \vari of group languages and let \Ds be a strictly positive full level within the concatenation hierarchy of basis \Cs. Assume that there exists an algorithm computing \ioptic{\bratauxd} from an input \nice \mratm $\rho$. Then, \Ds-covering is decidable.
\end{proposition}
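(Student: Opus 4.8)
The strategy is to invoke Proposition~\ref{prop:breduc}: since \Ds is a full level it is a Boolean algebra, so it suffices to exhibit an algorithm that computes $\opti{\Ds}{\rho} = \bratauxd(A^*)$ from an input \nice \mratm $\rho$. Write $U := \ioptic{\bratauxd}$, which by hypothesis is computable from $\rho$. The whole argument then reduces to establishing the single identity
\[
  \opti{\Ds}{\rho} \;=\; \dclosr\langle\, U \cup \rho(A)\,\rangle,
\]
where $\langle\, U \cup \rho(A)\,\rangle$ denotes the submonoid of $(R,\cdot)$ generated by $U$ together with the elements $\rho(a)$ for $a \in A$ (it contains $1_R$, as we shall see that $1_R \in U$). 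Granting this identity, its right-hand side is plainly computable from the multiplication of $R$, the values $\rho(a)$, and the set $U$, which completes the reduction.

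For the setup, write $\Ds = \bool{\Gs}$ with \Gs the half level immediately below \Ds; by Theorem~\ref{thm:polclos}, \Gs is a \pvari closed under concatenation and marked concatenation, and $\Cs \subseteq \Gs$ since \Cs is level~$0$ and the hierarchy is increasing. Put $\tau := \bratauxd$; by Proposition~\ref{prop:areratms} it is a \ratm, and by Lemma~\ref{lem:copelbrat} it is \Gs-\tame with $\quasi{\tau}(T) = \dclosr T$. Let $L_0 \in \Cs$ be a \Cs-optimal \iden for $\tau$ (it exists by Lemma~\ref{lem:epswit} applied to the lattice \Cs and the \ratm $\tau$); then $L_0$ is a group language, $\veps \in L_0$, and $\tau(L_0) = U$. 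Since $\{a\} \in \Ds$ for every $a \in A$ by Fact~\ref{fct:finitelangs}, the singleton cover $\{\{a\}\}$ is an optimal \Ds-cover of $\{a\}$, so $\tau(\{a\}) = \opti{\Ds}{\{a\},\rho} = \dclosr\{\rho(a)\}$; in particular $1_R \in \tau(\{\veps\})$, and $\tau(\{\veps\}) \subseteq \tau(L_0) = U$ by Fact~\ref{fct:linclus}, whence $1_R \in U$.

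The technical heart is an induction on $k$ establishing, for every word $a_1 \cdots a_k$ over $A$,
\[
  \tau\big(L_0 a_1 L_0 a_2 L_0 \cdots a_k L_0\big) \;=\; \dclosr\big(U \cdot \rho(a_1) \cdot U \cdots \rho(a_k) \cdot U\big).
\]
The base case $k=0$ reads $\tau(L_0) = U = \dclosr U$. For the step, each prefix $L_0 a_1 L_0 \cdots a_j L_0$ lies in \Gs (as $L_0 \in \Cs \subseteq \Gs$ and \Gs is closed under marked concatenation), so the \Gs-\tame axioms apply: $\tau(HaK) = \quasi{\tau}\big(\tau(H)\cdot\tau(a)\cdot\tau(K)\big)$ for $H,K \in \Gs$, and combining this with $\quasi{\tau}\big(q\,\quasi{\tau}(r)\,s\big) = \quasi{\tau}(qrs)$, with $\quasi{\tau}$ being the downset operation (hence compatible with the product of $R$), with $\tau(L_0) = U = \dclosr U$ and $\tau(a_j) = \dclosr\{\rho(a_j)\}$, collapses the nested expression to the displayed one. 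Taking the union over all words $a_1 \cdots a_k$ (the empty word contributing $\dclosr U = U$) and noting that every element of $\langle\, U \cup \rho(A)\,\rangle$ can be written, by padding with $1_R \in U$, in the interleaved form $u_0 \rho(a_1) u_1 \cdots \rho(a_k) u_k$, we obtain $\bigcup_{w \in A^*} \tau\big(L_0 w_1 L_0 \cdots w_{|w|} L_0\big) = \dclosr\langle\, U \cup \rho(A)\,\rangle$.

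It remains to show $\opti{\Ds}{\rho} = \tau(A^*) = \dclosr\langle\, U \cup \rho(A)\,\rangle$. For ``$\supseteq$'': since $\veps \in L_0$, every word $w = w_1 \cdots w_{|w|}$ lies in $L_0 w_1 L_0 \cdots w_{|w|} L_0 \subseteq A^*$, and \ratms are increasing, so $\tau(A^*) \supseteq \tau\big(L_0 w_1 L_0 \cdots w_{|w|} L_0\big)$ for every $w$; taking the union and using the identity of the previous paragraph gives $\tau(A^*) \supseteq \dclosr\langle\, U \cup \rho(A)\,\rangle$. For ``$\subseteq$'': apply Lemma~\ref{lem:half:pump} to the regular language $A^*$ and the group language $L_0$ to obtain a cover \Kb of $A^*$ --- finite by definition of a cover --- each of whose members has the form $L_0 a_1 L_0 \cdots a_k L_0$; since \ratms distribute over \emph{finite} unions, $\tau(A^*) = \bigcup_{K \in \Kb}\tau(K)$, and every $\tau(K)$ occurs among the terms of the union computed in the previous paragraph, so $\tau(A^*) \subseteq \dclosr\langle\, U \cup \rho(A)\,\rangle$. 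As $\opti{\Ds}{\rho} = \bratauxd(A^*)$ by definition, this proves the identity, and with Proposition~\ref{prop:breduc} the decidability of \Ds-covering. I expect the main obstacle to be not conceptual but one of discipline: the whole computation must be carried out with $\tau = \bratauxd$ rather than with raw $\rho$-\imprints, so that the \Gs-\tame axioms of Lemma~\ref{lem:copelbrat} are in force on the marked products, and one must keep in mind that \ratms commute only with \emph{finite} unions --- which is precisely why Lemma~\ref{lem:half:pump} is needed for the ``$\subseteq$'' inclusion, even though ``$\supseteq$'' uses only the trivial (infinite) covering of $A^*$ by the languages $L_0 a_1 L_0 \cdots a_k L_0$.
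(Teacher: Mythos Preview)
Your approach is essentially the paper's: reduce to Proposition~\ref{prop:breduc} by showing that $\opti{\Ds}{\rho}$ equals the closure of $\ioptic{\bratauxd} \cup \{\rho(a) \mid a \in A\}$ under multiplication and downset, using Lemma~\ref{lem:half:pump} and the \Gs-\tame structure of $\tau = \bratauxd$ from Lemma~\ref{lem:copelbrat}. The paper states this closure as a least-fixpoint set $S$; your $\dclosr\langle U \cup \rho(A)\rangle$ is the same object.

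There is, however, a real gap in one step. Your ``padding with $1_R$'' argument is meant to show that every element of $\langle U \cup \rho(A)\rangle$ lies in some set $U \cdot \{\rho(a_1)\} \cdot U \cdots \{\rho(a_k)\} \cdot U$. But padding only lets you \emph{insert} factors $1_R \in U$; it does not let you merge two adjacent $U$-factors. Concretely, if $u_1,u_2 \in U$ and $u_1u_2 \notin U$, the product $u_1u_2 \in \langle U \cup \rho(A)\rangle$ need not be of the interleaved form (for $k=0$ the form is a single $u_0 \in U$). So the claimed equality $\bigcup_{w} \dclosr(U\,\rho(w_1)\,U \cdots \rho(w_{|w|})\,U) = \dclosr\langle U \cup \rho(A)\rangle$ is not justified, and with it your inclusion $\tau(A^*) \supseteq \dclosr\langle U \cup \rho(A)\rangle$.

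Two easy repairs are available. Either choose the \Cs-optimal \iden $L_0$ so that $L_0L_0 = L_0$ (take $L_0 = \beta^{-1}(1_G)$ for a group morphism $\beta$ recognizing some optimal \iden; this is Lemma~\ref{lem:epswitbis} in the paper): then $\tau(L_0L_0) = \dclosr(U\cdot U) = \tau(L_0) = U$, so $U\cdot U \subseteq U$ and your padding argument goes through. Or, as the paper does, prove $\tau(A^*) \supseteq \dclosr\langle U \cup \rho(A)\rangle$ directly by observing that $\tau(A^*) = \opti{\Ds}{\rho}$ contains $U$ and each $\rho(a)$, is a downset, and is closed under multiplication (the latter being a generic fact for \varis and \mratms, quoted from~\cite{pzcovering2}). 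Either way the rest of your argument stands.
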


Before we prove Proposition~\ref{prop:utooptibool}, let us present the generalized variant of Proposition~\ref{prop:lreduc} which considers all lattices instead of just Boolean algebras (we detail it in appendix).

\begin{proposition}\label{prop:utooptilat}
  Let \Cs be a \vari of group languages and let \Ds be a half level within the concatenation hierarchy of basis \Cs. Assume that there exists an algorithm computing \ioptic{\lratauxd} from a morphism $\alpha$ and a \nice \mratm $\rho$. Then, \Ds-covering is decidable.
\end{proposition}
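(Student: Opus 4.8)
The plan is to combine Proposition~\ref{prop:lreduc} with a least-fixpoint characterization in the spirit of Theorem~\ref{thm:half:gmainpolc}, but carried out for the nested \ratm \lratauxd instead of for \rho directly. By Proposition~\ref{prop:lreduc} it suffices to compute $\popti{\Ds}{\alpha}{\rho}$ from a morphism $\alpha\colon A^*\to M$ and a \nice \mratm $\rho\colon 2^{A^*}\to R$, and by definition $\popti{\Ds}{\alpha}{\rho}=\lratauxd(A^*)$. Since \Ds is a half level, it is a \pvari closed under concatenation and marked concatenation (Theorem~\ref{thm:polclos}), so by Proposition~\ref{prop:areratms} and Lemma~\ref{lem:copelrat} the map \lratauxd is a \emph{quasi-\tame} \ratm with rating set $2^{M\times R}$ whose associated endomorphism $\quasi{\lratauxd}$ is the downset operation $\dclosr(-)$. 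The statement is given for an arbitrary half level, but in the sequel it is only applied with \Ds the level $3/2$, so I would treat the main case where \Ds is a level at least $1$; then $\{w\}\in\Ds$ for every word $w$ by Fact~\ref{fct:finitelangs}, so the trivial values $\lratauxd(a)=\dclosr\{(\alpha(a),\rho(a))\}$ are explicitly computable from $\alpha$ and $\rho$ for every letter $a$. (The case $\Ds=\pol\Cs$, where $\{a\}$ need not belong to $\pol\Cs$, would need a small extra argument to recover $\opti{\pol\Cs}{\{a\},\rho}$, or can be deferred to Section~\ref{sec:polg}.)

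The core claim I would establish is that $\lratauxd(A^*)$ is the least subset $T\subseteq M\times R$ such that $\dclosr T=T$, such that $T$ is closed under the componentwise multiplication $(s,q),(t,r)\mapsto(st,qr)$, and such that $T$ contains $(1_M,1_R)$, contains each $\lratauxd(a)$ for $a\in A$, and contains the set $\ioptic{\lratauxd}$. Soundness — that $\lratauxd(A^*)$ itself satisfies these requirements — unwinds directly from the quasi-tameness axioms: $\lratauxd(A^*)=\quasi{\lratauxd}(\lratauxd(A^*))=\dclosr\lratauxd(A^*)$ by axiom~\ref{itm:ax3}; closure under multiplication follows from $A^*=A^*A^*$ together with $\lratauxd(A^*A^*)=\dclosr(\lratauxd(A^*)\cdot\lratauxd(A^*))$ (again axiom~\ref{itm:ax3}) and reflexivity of $\le$; $(1_M,1_R)\in\lratauxd(\veps)\subseteq\lratauxd(A^*)$ by axiom~\ref{itm:ax2} (the semiring unit of $2^{M\times R}$ being $\{(1_M,1_R)\}$); and both each $\lratauxd(a)$ and $\ioptic{\lratauxd}=\lratauxd(L)$, where $L$ is a \Cs-optimal \iden for \lratauxd, are contained in $\lratauxd(A^*)$ by monotonicity of \lratauxd, since $\{a\}\subseteq A^*$ and $L\subseteq A^*$.

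For completeness I would invoke Lemma~\ref{lem:half:pump} with $H=A^*$ and the \emph{same} \Cs-optimal \iden $L$ for \lratauxd: because \Cs is a \vari of group languages, $L$ is a group language containing \veps, so the lemma produces a \emph{finite} cover of $A^*$ by languages $F_w=La_1L\cdots a_nL$ with $w=a_1\cdots a_n$. As $A^*$ is the union of this finite cover and \lratauxd is a \ratm, $\lratauxd(A^*)=\bigcup_w\lratauxd(F_w)$; and repeatedly applying axiom~\ref{itm:ax3} and then flattening the nested occurrences of $\dclosr(-)$ with axiom~\ref{itm:ax1} yields $\lratauxd(F_w)=\dclosr\big(\lratauxd(L)\cdot\lratauxd(a_1)\cdot\lratauxd(L)\cdots\lratauxd(a_n)\cdot\lratauxd(L)\big)$. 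Since $\lratauxd(L)=\ioptic{\lratauxd}$ and every $\lratauxd(a_i)$ belongs to any $T$ as above, and $T$ is closed under multiplication and under $\dclosr(-)$, we get $\lratauxd(F_w)\subseteq T$ for all $w$, hence $\lratauxd(A^*)\subseteq T$. This establishes the characterization.

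Granting the characterization, $\popti{\Ds}{\alpha}{\rho}=\lratauxd(A^*)$ is computed by a straightforward monotone fixpoint over the finite lattice $2^{M\times R}$: seed with $(1_M,1_R)$, with each $\dclosr\{(\alpha(a),\rho(a))\}$, and with the set $\ioptic{\lratauxd}$ supplied by the hypothesised algorithm, then close under componentwise multiplication and under $\dclosr(-)$. Feeding the output into Proposition~\ref{prop:lreduc} gives decidability of \Ds-covering. I expect the only real difficulty to be careful bookkeeping: one must apply the quasi-tameness axioms of Lemma~\ref{lem:copelrat} in exactly the right form — in particular the fact that $\quasi{\lratauxd}$ is concretely $\dclosr(-)$, which is what makes the "downset" and "closed under multiplication" conditions line up with those axioms — and one must check that every object fed into the fixpoint is genuinely effective. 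The latter is the conceptual point: \lratauxd is neither \nice nor \tame and admits no finite presentation, so the argument must only ever touch it through the explicitly computable values $\lratauxd(a)$ and through $\ioptic{\lratauxd}$, whose computability is precisely the standing hypothesis.
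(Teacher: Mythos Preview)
Your approach is essentially the paper's: reduce via Proposition~\ref{prop:lreduc} to computing $\popti{\Ds}{\alpha}{\rho}=\lratauxd(A^*)$, characterize it as the least subset of $M\times R$ containing $\ioptic{\lratauxd}$ and some trivial seeds, closed under downset and multiplication, and prove completeness with Lemma~\ref{lem:half:pump} plus the quasi-tameness of $\lratauxd$ from Lemma~\ref{lem:copelrat}. The paper seeds with the pairs $(\alpha(w),\rho(w))$ rather than with the sets $\lratauxd(a)$; modulo downset and multiplication these coincide for levels $\geq 3/2$, so the two fixpoints agree there.

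The one part you defer, the case $\Ds=\pol\Cs$, the paper does handle, and the argument is worth knowing. With the paper's seeds $(\alpha(a),\rho(a))$ computability is never in doubt; the work moves to completeness, where one must show $\lratauxd(a)\subseteq S$ even though $\{a\}\notin\pol\Cs$ in general. The trick is that $HaH\in\pol\Cs$ for $H$ a \Cs-optimal \iden for $\rho$, so any $(t,q)\in\lratauxd(a)$ satisfies $t=\alpha(a)$ and $q\leq\ioptic\rho\cdot\rho(a)\cdot\ioptic\rho$. One then checks $(1_M,\ioptic\rho)\in\ioptic{\lratauxd}\subseteq S$: by Lemma~\ref{lem:polctoc} we have $\ioptic\rho\in\opti{\pol\Cs}{\{\veps\},\rho}$, and since $\veps\in L\cap\alpha\inv(1_M)$ this gives $(1_M,\ioptic\rho)\in\lratauxd(L)=\ioptic{\lratauxd}$. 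Multiplication with $(\alpha(a),\rho(a))$ and downset then yield $(t,q)\in S$. So your parenthetical ``small extra argument'' is exactly this, and it does not reduce to anything already in Section~\ref{sec:polg}.
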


\subsection{Proof of Proposition~\ref{prop:utooptibool}}

We fix \Cs and \Ds as in the statement of the proposition. We show that for an arbitrary \nice \mratm $\rho$, one may compute \opti{\Ds}{\rho} from its subset \ioptic{\bratauxd}. The result is then an immediate corollary of Proposition~\ref{prop:breduc}.

Fix a \nice \mratm $\rho: 2^{A^*} \to R$ for the proof. We let $S \subseteq R$ be the least subset of $R$ that satisfies the following properties:
\begin{enumerate}
\item We have $\ioptic{\bratauxd} \subseteq S$.
\item We have $\rho(w) \in S$ for every $w \in A^*$.
\item We have $\dclosr S = S$.
\item For every $r,r' \in S$, we have $rr' \in S$.
\end{enumerate}
Clearly, one may compute $S$ from $\rho$ and \ioptic{\bratauxd}. Hence, Proposition~\ref{prop:utooptibool} is now immediate from the following lemma.

\begin{lemma}\label{lem:utooptibool}
  We have $S = \opti{\Ds}{\rho}$.
\end{lemma}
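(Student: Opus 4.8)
The plan is to prove the two inclusions $S \subseteq \opti{\Ds}{\rho}$ and $\opti{\Ds}{\rho} \subseteq S$ separately. For $S \subseteq \opti{\Ds}{\rho}$, it suffices (by minimality of $S$) to check that $\opti{\Ds}{\rho}$ satisfies the four closure properties defining $S$. The containment $\ioptic{\bratauxd} \subseteq \opti{\Ds}{\rho}$ is exactly the inclusion $\bratauxd(L) \subseteq \bratauxd(A^*)$ noted right before Proposition~\ref{prop:utooptibool}, which holds because $\bratauxd$ is a \ratm (hence increasing) and $L \subseteq A^*$. The containment $\rho(w) \in \opti{\Ds}{\rho}$ for $w \in A^*$ holds because $\{\{w\}\}$ is a \Ds-cover of $A^*$... wait, it is a cover of $\{w\}$; more precisely $\rho(w) \in \opti{\Ds}{\{w\},\rho} \subseteq \opti{\Ds}{A^*,\rho}$ using Fact~\ref{fct:linclus} together with the fact that $\{w\} \in \Ds$ (Fact~\ref{fct:finitelangs}, since \Ds is a strictly positive full level), so $\{\{w\}\}$ is a \Ds-cover of $\{w\}$. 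The downset property $\dclosr \opti{\Ds}{\rho} = \opti{\Ds}{\rho}$ and closure under multiplication are the generic "downset" and "multiplication" properties of pointed optimal \imprints specialised to the case $A^* = \alpha\inv(1)$ for the trivial morphism; they are recalled in~\cite{pzcovering2} and already invoked in the soundness part of the proof of Theorem~\ref{thm:half:gmainpolc}. This establishes $S \subseteq \opti{\Ds}{\rho}$.

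For the reverse inclusion $\opti{\Ds}{\rho} \subseteq S$, the strategy is to exhibit an explicit \Ds-cover \Kb of $A^*$ whose $\rho$-\imprint is contained in $S$; since $\opti{\Ds}{\rho}$ is the \emph{smallest} $\rho$-\imprint over all \Ds-covers of $A^*$, this gives $\opti{\Ds}{\rho} \subseteq \prin{\rho}{\Kb} \subseteq S$. To build \Kb we use that \Cs is a \vari of group languages: take $L$ a \Cs-optimal \iden for the \emph{nested} \ratm $\bratauxd$, so $L \in \Cs$, $\veps \in L$, and $\bratauxd(L) = \ioptic{\bratauxd}$. Applying Lemma~\ref{lem:half:pump} with $H = A^*$ and this group language $L$ yields a cover $\Kb_0$ of $A^*$ in which every member has the shape $K = La_1L\cdots a_nL$ with $a_1\cdots a_n \in A^*$; each such $K$ lies in \Ds (it lies in $\pol{\Cs} \subseteq \Ds$, using $L \in \Cs$), so $\Kb_0$ is a \Ds-cover of $A^*$. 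It remains to show $\rho(K) \in S$ for each such $K$. This is where the quasi-\tame behaviour of $\bratauxd$ enters, via Lemma~\ref{lem:copelbrat}: with $\Gs = \pol{\Cs}$ (a \pvari closed under concatenation and marked concatenation by Theorem~\ref{thm:polclos}) and $\Ds = \bool{\Gs}$, the \ratm $\bratauxd$ is $\Gs$-\tame with $\quasi{\bratauxd}(T) = \dclosr T$. The third clause of $\Gs$-tameness computes $\bratauxd(La_1L\cdots a_nL)$ as $\dclosr\big(\bratauxd(L)\cdot\rho(a_1)\cdot\bratauxd(L)\cdots\rho(a_n)\cdot\bratauxd(L)\big)$ — here one must be careful about which \ratm the multiplication lives in; really the clean way is to track membership in $S$ directly: $\ioptic{\bratauxd} \subseteq S$ gives the $L$-factors into $S$, $\rho(a_i) \in S$ by the trivial-element clause, closure of $S$ under multiplication handles the product, and the downset clause absorbs the $\quasir$. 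Concretely, $\rho(K) = \bigcup_{\text{some choices}} (\text{elements of }S)$ lying in the downset of products of $S$-elements, hence $\rho(K) \in S$.

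The main obstacle I anticipate is bookkeeping the interaction between the two "levels" of rating sets. The element $\ioptic{\bratauxd} \in 2^R$ is a \emph{subset} of $R$, and the statement $\ioptic{\bratauxd} \subseteq S$ asserts that each of its elements is in $S$; meanwhile Lemma~\ref{lem:copelbrat} describes $\bratauxd$'s behaviour on concatenations at the level of $2^R$. One has to unwind that $\bratauxd(La_1L\cdots a_nL)$, viewed as a subset of $R$, consists precisely of elements $\le q_0\rho(a_1)q_1\cdots \rho(a_n)q_n$ with each $q_i \in \bratauxd(L) = \ioptic{\bratauxd} \subseteq S$, and that $\rho(K) = \opti{\Ds}{K,\rho} = \bratauxd(K)$ when $K$ itself is in \Ds (here $\opti{\Ds}{K,\rho}$ has $\{K\}$ as an optimal cover, so it is $\dclosr\{\rho(K)\}$, whose maximum is $\rho(K)$). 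Threading this so that we conclude $\rho(K) \in S$ — rather than merely $\dclosr\{\rho(K)\} \subseteq S$, which is anyway equivalent given $\dclosr S = S$ — is the one delicate point; everything else is a routine verification of closure properties and an application of the group-language pumping lemma already proved in Section~\ref{sec:polg}.
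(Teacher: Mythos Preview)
Your soundness direction ($S \subseteq \opti{\Ds}{\rho}$) matches the paper exactly. One small remark: when you invoke Lemma~\ref{lem:copelbrat} you fix $\Gs = \pol{\Cs}$ and hence $\Ds = \bpol{\Cs}$, but the lemma is stated for an arbitrary strictly positive full level $\Ds$; the paper takes $\Gs$ to be the \emph{preceding} half level, so that $\Ds = \bool{\Gs}$ and $L \in \Cs \subseteq \Gs$ still holds. This is a trivial fix.

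The reverse inclusion, however, has a genuine gap. Your plan is to exhibit a \Ds-cover $\Kb_0 = \{La_1L\cdots a_nL\}$ of $A^*$ and show $\prin{\rho}{\Kb_0} \subseteq S$, which amounts to $\rho(K) \in S$ for each $K \in \Kb_0$. But the $\Gs$-tameness of $\bratauxd$ only lets you control $\bratauxd(K) = \opti{\Ds}{K,\rho}$, not $\rho(K)$. You try to bridge this with the claim that $\{K\}$ is an \emph{optimal} \Ds-cover of $K$, so that $\opti{\Ds}{K,\rho} = \dclosr\{\rho(K)\}$; this is false. The singleton $\{K\}$ is merely \emph{a} \Ds-cover, giving only $\opti{\Ds}{K,\rho} \subseteq \dclosr\{\rho(K)\}$, the wrong direction. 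In general $\rho(K)$ need not lie in $\opti{\Ds}{K,\rho}$ (a finer cover can split $K$ into pieces whose individual $\rho$-images are all strictly below $\rho(K)$), and there is no reason for $\rho(L)$ to lie in $S$ either, so the direct multiplicativity of $\rho$ does not help.

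The fix is to abandon the $\prin{\rho}{\Kb_0}$ framing. Since $\bratauxd$ is a \ratm, the cover $A^* = \bigcup_K K$ gives $\opti{\Ds}{\rho} = \bratauxd(A^*) = \bigcup_K \bratauxd(K)$ directly. Now it \emph{is} enough to show $\bratauxd(K) \subseteq S$ for each $K$, and your computation via $\Gs$-tameness (together with $\bratauxd(L) = \ioptic{\bratauxd} \subseteq S$ and $\bratauxd(a) \subseteq \dclosr\{\rho(a)\} \subseteq S$) does exactly that. This is precisely the paper's argument: it fixes $r \in \opti{\Ds}{\rho}$, uses the cover to land $r$ in some $\bratauxd(La_1L\cdots a_\ell L)$ (Lemma~\ref{lem:grouplem}), and then unfolds $\Gs$-tameness to conclude $r \in S$.
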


We prove Lemma~\ref{lem:utooptibool}. Let $L$ be a \Cs-optimal \iden for \bratauxd: we have $L \in \Cs$, $\veps \in L$ and $\ioptic{\bratauxd} = \bratauxd(L)$.

We first show that $S \subseteq \opti{\Ds}{\rho}$. This amounts to proving that \opti{\Ds}{\rho} satisfies the four properties in the definition of $S$. That $\ioptic{\bratauxd} \subseteq \opti{\Ds}{\rho}$ is immediate since $\ioptic{\bratauxd} = \bratauxd(L)$, $\opti{\Ds}{\rho} = \bratauxd(A^*)$, $L \subseteq A^*$ and \bratauxd is a \ratm. The other three properties are generic: they are satisfied whenever $\rho$ is a \mratm and \Ds is a \vari (which is the case here as \Ds is a full level within the hierarchy of basis \Cs). We refer the reader to~\cite[Lemma~6.11]{pzcovering2} for the proof.

We turn to the converse inclusion: $\opti{\Ds}{\rho} \subseteq S$. Consider $r \in \opti{\Ds}{\rho}$, we show that $r \in S$. The argument is based on the following lemma, which is where we use the hypothesis that \Cs is a class of group languages (which implies that $L \in \Cs$ is recognized by a finite group).

\begin{lemma}\label{lem:grouplem}
  There exist $\ell \in \nat$ and $\ell$ letters $a_1,\dots,a_\ell \in A$ such that $r \in \bratauxd(La_1L \cdots a_\ell L)$.
\end{lemma}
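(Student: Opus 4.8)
The goal is to write $r \in \opti{\Ds}{\rho}$ as coming from a cover of the form $La_1L \cdots a_\ell L$, using that $L \in \Cs$ is recognized by a finite group. Since $r \in \opti{\Ds}{\rho} = \bratauxd(A^*)$ and $\bratauxd$ is a \ratm into the idempotent commutative monoid $(2^R,\cup)$, the first step is to decompose $A^*$ into finitely many pieces over which the value of $\bratauxd$ sums to $\bratauxd(A^*)$. Concretely, write $A^* = \bigcup_{w \in A^*} \{w\}$; by niceness-type reasoning (or rather: since $R$ is finite, so is $2^R$, hence $\bratauxd(A^*)$ is already the union of $\bratauxd(\{w\})$ over finitely many well-chosen words $w$, because $\bratauxd$ is additive and $(2^R,\cup)$ is idempotent), there exist finitely many words $w_1,\dots,w_m$ with $\bratauxd(A^*) = \bigcup_{i} \bratauxd(w_i)$. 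Thus $r \in \bratauxd(w)$ for some single word $w = a_1 \cdots a_\ell \in A^*$.

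\textbf{Key step: absorbing $L$ around the letters.} It then remains to show $\bratauxd(\{w\}) \subseteq \bratauxd(La_1L \cdots a_\ell L)$ whenever $w = a_1\cdots a_\ell$. This is where the group hypothesis enters, via Lemma~\ref{lem:half:pump}: because $L$ is a group language containing $\veps$, for \emph{any} regular language $H$ there is a cover of $H$ by languages of the form $La_1'L\cdots a_n'L$ with $a_1'\cdots a_n' \in H$. Apply this with $H = \{w\}$ (which is regular, in fact finite): this gives a cover of $\{w\}$ by languages of the shape $La_1L\cdots a_\ell L$ — indeed the only candidate word is $w$ itself, so the cover is essentially the single language $K_w = La_1L\cdots a_\ell L$ (together with possibly the degenerate $F_\veps = L$ cases, but these only enlarge the target and one may discard them). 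Since $\{w\} \subseteq K_w$, monotonicity of $\bratauxd$ (it is a \ratm, hence increasing for $\subseteq$, by Fact~\ref{fct:linclus}) gives $\bratauxd(\{w\}) \subseteq \bratauxd(K_w) = \bratauxd(La_1L\cdots a_\ell L)$, so $r \in \bratauxd(La_1L\cdots a_\ell L)$, which is exactly the statement.

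\textbf{Expected obstacle.} The main subtlety is the first step — reducing from $A^*$ to a single word $w$ — rather than the group-theoretic part. One must be careful that $\bratauxd$ is \emph{not} \nice in general, so we cannot directly invoke the niceness definition to get $\bratauxd(A^*) = \sum \bratauxd(w_i)$. The correct justification is purely that $2^R$ is finite and $(2^R,\cup)$ is idempotent and commutative: the family $\{\bratauxd(w) : w \in A^*\}$ is a family of elements of a finite semilattice, so its (possibly infinite) supremum in $(2^R,\subseteq)$ equals a finite sub-union, and this supremum is $\geq \bratauxd(A^*)$ since $A^* \supseteq \{w\}$ forces $\bratauxd(A^*) \supseteq \bratauxd(w)$ for all $w$ — while conversely each $\bratauxd(w) \subseteq \bratauxd(A^*)$, so equality holds and some single $\bratauxd(w)$ already contains the given element $r$. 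Once this is phrased cleanly, the rest is a one-line application of Lemma~\ref{lem:half:pump} plus monotonicity of $\bratauxd$, and the lemma follows.
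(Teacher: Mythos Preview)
There is a genuine gap in your first step. You want to show $\bratauxd(A^*) = \bigcup_{w \in A^*} \bratauxd(\{w\})$, and you correctly note that \bratauxd is not \nice, so this cannot be read off the definition. But your workaround is circular: you argue that the supremum $\bigcup_w \bratauxd(\{w\})$ is $\geq \bratauxd(A^*)$ ``since $A^* \supseteq \{w\}$ forces $\bratauxd(A^*) \supseteq \bratauxd(\{w\})$'', and then ``conversely each $\bratauxd(\{w\}) \subseteq \bratauxd(A^*)$''. These two assertions are the \emph{same} inequality, and they only give $\bigcup_w \bratauxd(\{w\}) \subseteq \bratauxd(A^*)$. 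The reverse inclusion is exactly what niceness would provide, and it genuinely fails here. Concretely, since \Ds contains all singletons (Fact~\ref{fct:finitelangs}), one has $\bratauxd(\{w\}) = \dclosr\{\rho(w)\}$; but an element $r \in \opti{\Ds}{A^*,\rho}$ satisfies $r \leq \rho(K)$ for some $K$ in an optimal \Ds-cover of $A^*$, and $\rho(K)$ may be a nontrivial sum $\rho(w_1)+\cdots+\rho(w_k)$ not bounded by any single $\rho(w)$. So there is no reason for $r$ to lie in any $\bratauxd(\{w\})$.

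The paper avoids this by applying Lemma~\ref{lem:half:pump} not to $H=\{w\}$ but directly to $H = A^*$: since $A^*$ is regular and $L$ is a group language with $\veps \in L$, Lemma~\ref{lem:half:pump} yields a \emph{finite} set $U \subseteq A^*$ with $A^* \subseteq \bigcup_{w\in U} H_w$ where $H_w = La_1L\cdots a_\ell L$ for $w=a_1\cdots a_\ell$. Since this is a finite union, the \ratm axiom (finite additivity) gives $\bratauxd(A^*) = \bigcup_{w\in U}\bratauxd(H_w)$, and then $r$ lies in some $\bratauxd(H_w)$. No passage through singletons is needed, and niceness of \bratauxd is never invoked.
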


\begin{proof}
  For every word $w = a_1 \cdots a_\ell \in A^*$, we let $H_w$ be the language $H_w = La_1L \cdots a_\ell L$ ($H_\veps = L$). We have to find $w \in A^*$ such that $r \in \bratauxd(H_w)$. Clearly $A^*$ is a regular language and $L$ is a group language by hypothesis. Therefore, Lemma~\ref{lem:half:pump} implies that there exists a \emph{finite} language $U$ such that $A^* \subseteq \bigcup_{w \in U} H_w$. Since \bratauxd is a \ratm, this implies:
  \[
    \bratauxd(A^*) = \bigcup_{w \in U}  \bratauxd(H_w).
  \]
  Moreover, by hypothesis, we have $r \in \opti{\Ds}{\rho} = \opti{\Ds}{A^*,\rho} = \bratauxd(A^*)$. Thus, $r \in \bratauxd(H_w)$ for some $w \in A^*$ and the lemma is proved.
\end{proof}

By definition \Ds is a \textbf{strictly positive} full level within the hierarchy of basis \Cs. Therefore, $\Ds = \bool{\Gs}$ where \Gs is the preceding half level in the hierarchy. In particular, \Gs is a \pvari closed under concatenation and marked concatenation by Lemma~\ref{lem:boolclos} and Theorem~\ref{thm:polclos}. Consequently, we obtain from Lemma~\ref{lem:copelbrat} that \bratauxd is \Gs-\tame and the associated endomorphism of $(2^R,\cup)$ is defined by $\quasi{\bratauxd}(T) = \dclosr T$ for all $T \in 2^{R}$. Moreover, we have $L \in \Cs \subseteq \Gs$ by definition. Hence, by Axioms~\ref{itm:ax1} and~\ref{itm:ax3} in the definition of \Gs-\mratms, the hypothesis that $r \in \bratauxd(La_1L \cdots a_\ell L)$ given by Lemma~\ref{lem:grouplem} implies that,
\[
  r \in \dclosr(\bratauxd(L) \cdot \bratauxd(a_1) \cdot \bratauxd(L) \cdots \bratauxd(a_\ell) \cdot \bratauxd(L))
\]
By definition of $L$, we have $\bratauxd(L) = \ioptic{\bratauxd}$. Moreover, we have the following fact.

\begin{fact}\label{fct:theletters}
  For every $a \in A$, we have $\bratauxd(a) \subseteq \dclosr \{\rho(a)\}$.
\end{fact}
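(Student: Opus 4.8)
\textbf{Proof plan for Fact~\ref{fct:theletters}.}
The plan is to exhibit a particularly simple $\Ds$-cover of the singleton $\{a\}$ and read off its $\rho$-\imprint. The crucial point is that the class $\Ds$ is large enough to contain every singleton language. Indeed, $\Ds$ is a \emph{strictly positive} full level within the concatenation hierarchy of basis $\Cs$, hence a level greater or equal to $1$. Fact~\ref{fct:finitelangs} then applies and yields $\{w\} \in \Ds$ for every $w \in A^*$; in particular $\{a\} \in \Ds$. Consequently $\Kb = \{\{a\}\}$ is a $\Ds$-cover of $\{a\}$.

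Now I would compute the $\rho$-\imprint of this cover directly from the definition:
\[
  \prin{\rho}{\Kb} = \dclosr \{\rho(K) \mid K \in \Kb\} = \dclosr \{\rho(\{a\})\} = \dclosr \{\rho(a)\}.
\]
On the other hand, by definition of the $\Ds$-optimal $\rho$-\imprint (and recalling $\bratauxd(a) = \bratauxd(\{a\}) = \opti{\Ds}{\{a\},\rho}$), we have $\opti{\Ds}{\{a\},\rho} \subseteq \prin{\rho}{\Kb'}$ for \emph{every} $\Ds$-cover $\Kb'$ of $\{a\}$. Instantiating $\Kb'$ with the cover $\Kb$ above gives $\bratauxd(a) \subseteq \dclosr \{\rho(a)\}$, which is exactly the claimed inclusion.

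There is no real obstacle here: the whole argument is a one-line consequence of the definition of optimality, once one observes that $\{a\}$ itself belongs to $\Ds$. The only ingredient that is specific to the setting — and the single place where the hypothesis on $\Ds$ is used — is that $\Ds$ must contain the singletons, which is guaranteed by Fact~\ref{fct:finitelangs} precisely because $\Ds$ is a full level at least $1$ (this would fail for level $0$, i.e.\ for $\Cs$ itself, since a \vari of group languages need not contain $\{a\}$).
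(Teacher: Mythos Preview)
Your proposal is correct and follows essentially the same approach as the paper: both use Fact~\ref{fct:finitelangs} to get $\{a\}\in\Ds$, take $\{\{a\}\}$ as a $\Ds$-cover of $\{a\}$, and read off the inclusion $\bratauxd(a)=\opti{\Ds}{\{a\},\rho}\subseteq\prin{\rho}{\{\{a\}\}}=\dclosr\{\rho(a)\}$ from the definition of optimality.
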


\begin{proof}
  Fact~\ref{fct:finitelangs} yields that $\{a\} \in \Ds$ (\Ds is a level $n \geq 1$ in the hierarchy of basis~\Cs). Hence, $\{\{a\}\}$ is a \Ds-cover of $\{a\}$. Thus, $\bratauxd(a) = \opti{\Ds}{\{a\},\rho} \subseteq \prin{\rho}{\{a\}} = \dclosr \{\rho(a)\}$, which finishes the proof.
\end{proof}

Altogether, we obtain:
\[
  r \in \dclosr(\ioptic{\bratauxd} \cdot \{\rho(a_1)\} \cdot \ioptic{\bratauxd} \cdots \{\rho(a_\ell)\} \cdot \ioptic{\bratauxd}).
\]
By definition of $S$, this implies $r \in S$, concluding the proof.

\section{Boolean polynomial closure}
\label{sec:bpolg}
We turn to the second part of Theorem~\ref{thm:main}: for a \vari of group languages \Cs, if \Cs-separation~is decidable, then so is \bpol{\Cs}-covering. We fix \Cs for the section.

We rely on \ratms and use the notions introduced in Section~\ref{sec:units}. For any \emph{\nice} \mratm $\rho: 2^{A^*} \to R$, we  characterize \ioptic{\bratauxbc} as the greatest subset of $R$ satisfying specific properties. When \Cs-separation is decidable, this yields a fixpoint algorithm for computing \ioptic{\bratauxbc} from $\rho$. Consequently, we get the decidability of \bpol{\Cs}-covering from Proposition~\ref{prop:utooptibool}.

\begin{remark}\label{rem:bpol}
  The conditions for applying our characterizations are more restrictive than those we had for \pol{\Cs} in the previous section.  We require $\rho$ to be \nice: while we are able to handle \pol{\Cs} for arbitrary \mratms, we are restricted to \nice ones for \bpol{\Cs}. This is irrelevant for the decidability of covering: considering \nice \mratms suffices. However, this is a key point: the proof of the \bpol{\Cs}-characterization involves handling the simpler class \pol{\Cs} for auxiliary \pol{\Cs}-\mratms which are \textbf{not} \nice. This is why we are not able to get results for higher levels in concatenation hierarchies: our knowledge about level 1/2 is stronger than the decidability of covering, and we are unable to replicate it for levels 1 and 3/2 (the situation is the same for \pbpol{\Cs}).
\end{remark}

\newcommand{\bwmrat}[1]{\ensuremath{\eta_{\rho,#1}}\xspace}
\newcommand{\bwmrats}{\bwmrat{S}}

We first present the characterization of \bpoluopti. For every \mratm $\rho: 2^{A^*} \to R$, we shall define the \emph{\bpol{\Cs}-complete subsets of $R$ for $\rho$}. Our theorem states that when $\rho$ is \nice, \bpoluopti is the \emph{greatest} such subset.

We fix the \mratm $\rho: 2^{A^*} \to R$ for the definition. We successively lift the multiplication ``$\cdot$'' of $R$ to the sets $2^R$, $R \times 2^R$ and $2^{R \times 2^R}$ in the natural way. This makes $(2^{R \times 2^R},\cup,\cdot)$ an idempotent semiring. For every  $S \subseteq R$, we use it as the rating set of a \nice \mratm,
\[
  \bwmrats: (2^{A^*},\cup,\cdot) \to (2^{R \times 2^R},\cup,\cdot).
\]
Since we are defining a \nice \mratm, it suffices to specify the evaluation of letters. For $a \in A$, we let,
\[
  \bwmrats(a)  =  \big\{(\rho(a), \quad S \cdot \{\rho(a)\} \cdot S)\big\}.
\]
By definition, we have $\ioptic{\bwmrats} \subseteq R \times 2^R$ for every $S \subseteq R$.

We now define the \bpol{\Cs}-complete subsets. We say that $S \subseteq R$ is \bpol{\Cs}-complete for $\rho$ if for every $s \in S$, there exist $(r_1,U_1),\dots,(r_k,U_k) \in \ioptic{\bwmrats}$ such that,
\begin{equation}\label{eq:bpolcarac}
  \begin{array}{c}
    \text{$s \leq r_1 + \cdots + r_k$ and,}\\ \text{$r_1 + \cdots + r_k \in\ \dclosr U_i$ for every $i \leq k$}
  \end{array}
\end{equation}

We are ready to state the main theorem of this section: when $\rho$ is \nice, the greatest \bpol{\Cs}-complete subset of $R$ (with respect to inclusion) is exactly \bpoluopti.

\begin{theorem}\label{thm:bpolg}
  Let $\rho: 2^{A^*} \to R$ be a \nice \mratm. Then, \bpoluopti is the greatest \bpol{\Cs}-complete subset of~$R$ for $\rho$.
\end{theorem}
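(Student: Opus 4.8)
The plan is to establish the two inclusions whose conjunction says that \bpoluopti is \bpol{\Cs}-complete for $\rho$ \emph{and} contains every \bpol{\Cs}-complete subset of $R$; this simultaneously shows that a greatest \bpol{\Cs}-complete subset exists and that it equals \bpoluopti (the switch from a least fixpoint in the \pol{\Cs} case to a greatest one here reflects that Boolean closure only refines covers and hence shrinks optimal \imprints). Throughout I would fix a \Cs-optimal \iden $L$ for \bratauxbc, so that $\bpoluopti = \bratauxbc(L) = \opti{\bpol{\Cs}}{L,\rho}$, and — since \Cs consists of group languages — take $L$ recognized by a morphism $\beta\colon A^*\to G$ into a finite group with $\beta\inv(1_G)\subseteq L$. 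The standing facts I expect to reuse are: (i) since $\bpol{\Cs}=\bool{\pol{\Cs}}$ and \pol{\Cs} is a \pvari closed under concatenation and marked concatenation (Theorem~\ref{thm:polclos}), Lemma~\ref{lem:copelbrat} makes \bratauxbc a \pol{\Cs}-\tame \ratm with $\quasi{\bratauxbc}(T)=\dclosr T$; (ii) $1_R\in\bpoluopti$, because $\veps\in L$ and $\{\veps\}\in\bpol{\Cs}$ (Fact~\ref{fct:finitelangs}), so every \bpol{\Cs}-cover of $L$ has a member containing \veps; (iii) $\bpoluopti\subseteq\dclosr\{\rho(L)\}$, because $L\in\Cs\subseteq\bpol{\Cs}$ makes $\{L\}$ a \bpol{\Cs}-cover of $L$; and (iv) $\bratauxbc(a)=\dclosr\{\rho(a)\}$ for every $a\in A$, by Fact~\ref{fct:theletters}.

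\smallskip\noindent\emph{First inclusion: \bpoluopti is \bpol{\Cs}-complete.} Fix $s\in\bpoluopti$. First I would replace $L$ by its intersection with a \Cs-optimal \iden for \bwmrat{\bpoluopti}: the intersection is again a group language in \Cs containing \veps and, by optimality, changes neither $\bratauxbc(L)$ nor the \iden image of \bwmrat{\bpoluopti}, so I may assume $L$ is \Cs-optimal for both maps at once. As \bwmrat{\bpoluopti} is \nice, $\ioptic{\bwmrat{\bpoluopti}}$ is then the finite set of pairs $(\rho(w),V_w)$ with $w\in L$, where $V_w$ is the \bpoluopti-padded product read along $w$. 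Next, Lemma~\ref{lem:half:pump} applied to the regular language $L$ and the group language $L$ writes $L$ as a finite union of languages $La_1L\cdots a_nL$ with $a_1\cdots a_n\in L$; iterating the \pol{\Cs}-\tame axioms of \bratauxbc together with $\bratauxbc(L)=\bpoluopti$ and~(iv) shows that each $\bratauxbc(La_1L\cdots a_nL)$ is the downset of $\bpoluopti\cdot\{\rho(a_1)\}\cdot\bpoluopti\cdots\{\rho(a_n)\}\cdot\bpoluopti$. The remaining step is a bookkeeping exercise: using $1_R\in\bpoluopti$ (so that a lone \bpoluopti absorbs into $\bpoluopti\cdot\bpoluopti$) and~(iii), I would match this description against $\ioptic{\bwmrat{\bpoluopti}}$ and read off, from a decomposition of $s$, a finite family $(r_1,U_1),\dots,(r_k,U_k)$ of pairs $(\rho(w_i),V_{w_i})$ with $w_i\in L$ such that $s\le r_1+\cdots+r_k$ and $r_1+\cdots+r_k\in\dclosr U_i$ for all $i$ — which is precisely \eqref{eq:bpolcarac}. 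I expect this matching to pass through an auxiliary, non-\nice, \pol{\Cs}-\tame rating map built from $\rho$ and \bpoluopti, to which Theorem~\ref{thm:half:gmainpolc} is applied.

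\smallskip\noindent\emph{Second inclusion: every \bpol{\Cs}-complete $S$ satisfies $S\subseteq\bpoluopti$.} Fix such an $S$ and $s\in S$. It suffices to prove $s\in\opti{\bpol{\Cs}}{L,\rho}$, i.e. that every \bpol{\Cs}-cover \Kb of $L$ has a member $K$ with $s\le\rho(K)$; so I fix \Kb. Unfolding \bpol{\Cs}-completeness yields $(r_1,U_1),\dots,(r_k,U_k)\in\ioptic{\bwmrats}$ with $s\le\sum_i r_i$ and $\sum_i r_i\in\dclosr U_i$ for all $i$, and, \bwmrats being \nice, Lemma~\ref{lem:sepepswit} identifies each $(r_i,U_i)$ as $(\rho(w_i),V_{w_i})$ for words $w_i$ witnessing that $\{\veps\}$ is not \Cs-separable from the matching \bwmrats-class — such witnesses being available inside \emph{any} $L'\in\Cs$ with $\veps\in L'$. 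I would then take $L'$ to be a single group language in \Cs refining $L$, the cover \Kb, and all the classes in play at once, pump inside $L'$ via the pigeonhole argument behind Lemma~\ref{lem:half:pump} so that every ``$S$-padded factorization'' of a suitable word of $L$ is swallowed by one fixed member $K$ of \Kb, and run a Simon-style Ramsey contraction that carries the inequalities $\sum_i r_i\in\dclosr U_i$ through the Boolean combination defining $K\in\bpol{\Cs}$, concluding $s\le\sum_i r_i\le\rho(K)$. The purpose of the second coordinates $U_i=V_{w_i}$ is exactly to record the invariant preserved by the Ramsey contraction and by passage to Boolean combinations.

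\smallskip\noindent I expect the second inclusion to be the main obstacle. Controlling an \emph{arbitrary} \bpol{\Cs}-cover of $L$ forces the Ramsey/Simon argument over the piecewise structure of its members, and this must be carried out while the rating maps in play (\bratauxbc and the auxiliary \pol{\Cs}-\tame maps) are \emph{not} \nice, so they can be handled only through their \pol{\Cs}-\tame axioms and through the \pol{\Cs}-characterization of Theorem~\ref{thm:half:gmainpolc}, never by explicit computation on words; it is this interaction between the Boolean layer and non-niceness, rather than either ingredient alone, that makes the argument delicate. The first inclusion is comparatively routine but still needs the ``intersection of \idens'' reduction and careful tracking of the padding factors \bpoluopti.
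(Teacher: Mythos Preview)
Your overall plan has the right shape (two inclusions), but both directions miss the actual technical engines, which are imported wholesale from~\cite{pzbpolc} and are not captured by your outline.

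\smallskip
For the direction ``$S$ \bpol{\Cs}-complete $\Rightarrow S\subseteq\bpoluopti$'' (your second inclusion), the paper does \emph{not} argue directly against an arbitrary \bpol{\Cs}-cover with a Ramsey/Simon contraction. Instead it builds an \emph{infinite tower} of auxiliary \ratms $\tau_0,\tau_1,\dots$ with $\tau_0(K)=\{\rho(w)\mid w\in K\}$ and $\tau_n=\lrataux{\pol{\Cs}}{\rho_*}{\tau_{n-1}}$, together with maps $f_n:Q_n\to 2^R$, and invokes the generic Boolean-closure result $\bigcap_n f_n(\tau_n(L))\subseteq\opti{\bpol{\Cs}}{L,\rho}$ from~\cite{pzbpolc}. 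The group-specific work is then an induction on $n$ showing $S\subseteq f_n(\ioptic{\tau_n})$. Your ``Simon-style Ramsey contraction that carries the inequalities through the Boolean combination defining $K$'' would have to reprove this tower result from scratch; as stated it is only a slogan, and it is precisely the place where the literature switches to the $\tau_n$ hierarchy because a one-shot Ramsey argument does not suffice for Boolean closure.

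\smallskip
For the direction ``\bpoluopti is \bpol{\Cs}-complete'' (your first inclusion), the decomposition of $L$ via Lemma~\ref{lem:half:pump} and the computation $\bratauxbc(La_1L\cdots a_nL)=\dclosr(\bpoluopti\cdot\{\rho(a_1)\}\cdots)$ give you a bound on $s$ by a \emph{product}, whereas \eqref{eq:bpolcarac} requires witnesses $(r_i,U_i)$ with $s\le\sum_i r_i$ and $\sum_i r_i\in\dclosr U_i$. Bridging product to sum here is not bookkeeping: the paper imports a second generic result from~\cite{pzbpolc} (its Proposition~9.1), stated as ``for $L\in\bpol{\Cs}$ and $s\in\tau(L)$ there exist $r_1,\dots,r_k$ with $(r_i,\{r_1+\cdots+r_k\})\in\lrataux{\pol{\Cs}}{\rho_*}{\tau}(L)$ and $s\le\sum_i r_i$'', and this is exactly where niceness of $\rho$ is consumed. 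Your sentence ``I expect this matching to pass through an auxiliary, non-\nice, \pol{\Cs}-\tame rating map\ldots to which Theorem~\ref{thm:half:gmainpolc} is applied'' points in the right direction (the auxiliary map is $\lrataux{\pol{\Cs}}{\rho_*}{\tau}$), but Theorem~\ref{thm:half:gmainpolc} alone does not deliver the sum-witnesses; the missing ingredient is the cited Boolean-closure proposition. Only \emph{after} that does Lemma~\ref{lem:half:pump} enter, to convert elements of $\lrataux{\pol{\Cs}}{\rho_*}{\tau}(L)$ into elements of $\ioptic{\bwmrats}$.
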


The proof of Theorem~\ref{thm:bpolg} is presented in Appendix~\ref{app:bpolg}. It is rather involved and exploits two results for \ratms that were designed in~\cite{pzbpolc} to handle Boolean closure in a general context. We complete them with arguments which are specific to our setting. Here, we discuss the applications of Theorem~\ref{thm:bpolg}.

Provided that \Cs-separation is decidable, Theorem~\ref{thm:bpolg} yields a greatest fixpoint procedure for computing \bpoluopti from an input \nice \mratm $\rho: 2^{A^*} \to R$. Indeed, consider the following sequence of subsets $S_0 \supseteq S_1 \supseteq S_2 \cdots$. We let $S_0 = R$ and for $i \geq 1$, $S_i$ contains all $s \in S_{i-1}$ such that there exists $(r_1,U_1),\dots,(r_k,U_k) \in \ioptic{\bwmrat{S_{i-1}}}$ satisfying~\eqref{eq:bpolcarac}:
\[
  \begin{array}{c}
    \text{$s \leq r_1 + \cdots + r_k$ and,}\\ \text{$r_1 + \cdots + r_k \in\ \dclosr U_i$ for all $i \leq k$}
  \end{array}
\]
Clearly, computing $S_i$ from $S_{i-1}$ boils down to computing \ioptic{\bwmrat{S_{i-1}}}. By Corollary~\ref{cor:epswit2}, this is possible since we have an algorithm for \Cs-separation (\bwmrat{S_{i-1}} is a \textbf{\nice} \mratm that we may compute from $S_{i-1}$).

Finally, since $R$ is finite, the sequence $S_0 \supseteq S_1 \supseteq S_2 \cdots$ stabilizes at some point: there exists some $i \in \nat$ such that $S_i = S_{j}$ for all $j \geq i$. One may verify that $S_i$ is the greatest \bpol{\Cs}-complete subset of $R$ and we may compute it. Theorem~\ref{thm:bpolg} then states that $S_i = \bpoluopti$.

By Proposition~\ref{prop:utooptibool}, a procedure for computing \bpoluopti yields an algorithm for \bpol{\Cs}-covering. Hence, we get the part of Theorem~\ref{thm:main} regarding \bpol{\Cs}  (\emph{i.e.}, level~1) as a corollary: when \Cs-separation is decidable, so is \bpol{\Cs}-covering (as well as \bpol{\Cs}-separation by Lemma~\ref{lem:septocove}).

\section{Nested polynomial closure}
\label{sec:pbpolg}
This section is devoted to the final part of Theorem~\ref{thm:main}: for every \vari of group languages \Cs, if \Cs-separation is decidable, then so is  \pbpol{\Cs}-covering. As usual, we fix \Cs for the section.

We rely on the framework outlined in Sections~\ref{sec:ratms} and~\ref{sec:units}. For every morphism $\alpha: A^* \to M$ and every \emph{\nice} \mratm $\rho: 2^{A^*} \to R$, we characterize \ioptic{\lratauxppc} as the least subset of $M \times R$ satisfying specific properties. When \Cs-separation is decidable, this yields a least fixpoint algorithm for computing \ioptic{\lratauxppc} from~$\rho$. Consequently, we get the decidability of \pbpol{\Cs}-covering by Proposition~\ref{prop:utooptilat}.

\newcommand{\pbwmrat}[1]{\ensuremath{\eta_{\alpha,\rho,#1}}\xspace}
\newcommand{\pbwmrats}{\pbwmrat{S}}

\medskip

We start by presenting the characterization. Consider a morphism $\alpha: A^* \to M$ and a \mratm $\rho: 2^{A^*} \to R$. We define a notion of \pbpol{\Cs}-complete subset of $M \times R$. Our main theorem then states that when $\rho$ is \nice, the least such subset is exactly \pbpoluopti. The definition depends on auxiliary \nice \mratms. We first present them.

We successively lift the multiplications of $M$ and $R$ to the sets $2^{M \times R}$, $R \times 2^{M \times R}$ and $2^{R \times 2^{M \times R}}$ in the natural way. This makes $(2^{R \times 2^{M \times R}},\cup,\cdot)$ an idempotent semiring. For every  $S \subseteq M \times R$, we define a \nice \mratm:
\[
  \pbwmrats: (2^{A^*},\cup,\cdot) \to (2^{R \times 2^{M \times R}},\cup,\cdot).
\]
Since we are defining a \nice \mratm, it suffices to specify the evaluation of letters. For $a \in A$, we let,
\[
  \pbwmrats(a)  =  \big\{(\rho(a),\quad S \cdot \{(\alpha(a),\rho(a))\} \cdot S)\big\}
\]
Observe that by definition, we have $\ioptic{\pbwmrats} \subseteq R \times 2^{M \times R}$.

We may now define \pbpol{\Cs}-complete subsets. Consider $S \subseteq M \times R$. We say that $S$ is \pbpol{\Cs}-complete for $\alpha$ and~$\rho$ when the following conditions are satisfied:
\begin{itemize}
\item \emph{\bfseries Downset.} We have $\dclosr S \subseteq S$.
\item \emph{\bfseries Multiplication.} We have $S \cdot S \subseteq S$.
\item \emph{\bfseries \Cs-operation.} For all $(r,T) \in \ioptic{\pbwmrats}$, we have $T \subseteq S$.
\item \emph{\bfseries \pbpol{\Cs}-operation.} For all $(r,T) \in \ioptic{\pbwmrats}$ and every idempotent $(e,f) \in \dclosr T \subseteq M \times R$, we~have:
  \[
    (e, f \cdot (1_R + r) \cdot f) \in S.
  \]
\end{itemize}

We may now state the main theorem of this section. When $\rho$ is \nice, \pbpoluopti is the least \pbpol{\Cs}-complete subset of $M \times R$ (with respect to inclusion).

\begin{theorem}\label{thm:pbpolg}
  Fix a morphism $\alpha: A^* \to M$ and a \nice \mratm $\rho: 2^{A^*} \to R$. Then, \pbpoluopti is the least \pbpol{\Cs}-complete subset of $M \times R$.
\end{theorem}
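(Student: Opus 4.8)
The plan is to prove the statement by two inclusions, exactly as for Theorem~\ref{thm:half:gmainpolc} and Theorem~\ref{thm:bpolg}: \textbf{soundness}, that $\pbpoluopti$ is itself a $\pbpol{\Cs}$-complete subset of $M\times R$ for $\alpha$ and $\rho$ (so that the least $\pbpol{\Cs}$-complete subset is contained in it), and \textbf{completeness}, that $\pbpoluopti$ is contained in every $\pbpol{\Cs}$-complete subset. Throughout I write $\Gs=\bpol{\Cs}$, so that $\pbpol{\Cs}=\pol{\Gs}$; since $\Gs$ is a \pvari closed under concatenation and marked concatenation (Lemma~\ref{lem:boolclos} and Theorem~\ref{thm:polclos}), Lemma~\ref{lem:copelrat} gives that the nested \ratm $\lratauxppc$ is quasi-\tame with $\quasi{\lratauxppc}(T)=\dclosr T$. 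I also record two facts used everywhere: for $a\in A$ we have $\{a\}\in\pbpol{\Cs}$ by Fact~\ref{fct:finitelangs}, so $\lratauxppc(\{a\})$ contains the trivial pair $(\alpha(a),\rho(a))$; and $\pbwmrats$ is a \nice \mratm, so $\ioptic{\pbwmrats}=\pbwmrats(L)$ for a $\Cs$-optimal \iden $L$ of $\pbwmrats$, which, \Cs being a \vari of group languages, is a group language containing \veps.

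For soundness, set $S=\pbpoluopti=\lratauxppc(L_0)$ for a $\Cs$-optimal \iden $L_0$ of $\lratauxppc$, and verify the four conditions. Downset holds since every value of $\lratauxppc$ is a union of optimal \imprints, hence a downset; together with Multiplication it is a generic property of $\Cs$-optimal \idens of quasi-\tame \ratms over \varis of group languages (the proof of Multiplication runs the pumping Lemma~\ref{lem:half:pump} on the group language $L_0$). For the \Cs-operation and the \pbpol{\Cs}-operation I would unfold the definition of $\pbwmrats$: for $w=a_1\cdots a_n$,
\[
  \pbwmrats(w)=\big\{\big(\rho(w),\ S\cdot\{(\alpha(a_1),\rho(a_1))\}\cdot S\,\cdots\,S\cdot\{(\alpha(a_n),\rho(a_n))\}\cdot S\big)\big\},
\]
and, since $S=\lratauxppc(L_0)$ and $\lratauxppc$ is quasi-\tame, the second component is contained in $\lratauxppc(L_0a_1L_0\cdots a_nL_0)$; as $L_0\in\Cs\subseteq\Gs$, this language lies in $\pol{\Gs}=\pbpol{\Cs}$, which is precisely what makes the two operations legitimate. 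Concretely, the term $f\cdot(1_R+r)\cdot f$ of the \pbpol{\Cs}-operation is the $\rho$-image of inserting an idempotent loop between two copies of an idempotent type, an operation under which $\pbpol{\Cs}$ is closed; combining this with the \nice-ness of $\pbwmrats$ (so that $\ioptic{\pbwmrats}$ is a \emph{finite} sum of its values on words) and with Lemma~\ref{lem:half:pump} shows that every pair produced by these two operations already lies in $S$.

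For completeness, fix a $\pbpol{\Cs}$-complete subset $S$ and the group language $L$ above. Since $L\in\Cs$ and $\veps\in L$, optimality gives $\pbpoluopti\leq\lratauxppc(L)$, i.e.\ $\pbpoluopti\subseteq\lratauxppc(L)$, so it suffices to prove $\lratauxppc(L)\subseteq S$. Take $(s,r)\in\lratauxppc(L)$, i.e.\ $r\in\opti{\pbpol{\Cs}}{\alpha\inv(s)\cap L,\rho}$. The plan is to build a $\pbpol{\Cs}$-cover $\Kb$ of $\alpha\inv(s)\cap L$ with $(s,\rho(K))\in S$ for every $K\in\Kb$; then $r$, lying in that optimal \imprint, satisfies $r\leq\rho(K)$ for some $K\in\Kb$, and the downset condition of $S$ yields $(s,r)\in S$. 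The cover is built in the two layers dictated by $\pbpol{\Cs}=\pol{\bpol{\Cs}}$: first, Lemma~\ref{lem:half:pump} applied to the regular language $\alpha\inv(s)\cap L$ and the group language $L$ yields a finite cover by languages $La_1L\cdots a_nL$ with $a_1\cdots a_n\in\alpha\inv(s)\cap L$, each lying in $\pol{\Gs}=\pbpol{\Cs}$ since $L\in\Cs\subseteq\Gs$; second, for each such piece, the inequality $\pbwmrats(a_1\cdots a_n)\leq\pbwmrats(L)=\ioptic{\pbwmrats}$ (valid since $a_1\cdots a_n\in L$ and $\pbwmrats$ is \nice) together with the \Cs-operation of $S$ feeds the relevant pair into $S$, while repeated $M\times R$-types along $a_1\cdots a_n$ are absorbed by the \pbpol{\Cs}-operation and the remaining short factors are handled by the multiplication closure of $S$ and the trivial pairs. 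The inner Boolean--polynomial layer is controlled at this step through the interplay between the $S\cdot(-)\cdot S$ shape of $\pbwmrats$ and the $1_R+r$ of the \pbpol{\Cs}-operation, drawing on the analysis of $\bpol{\Cs}$ from Section~\ref{sec:bpolg} and Theorem~\ref{thm:bpolg}.

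The main obstacle is this second layer of the completeness argument, and two difficulties compound there. First, we face a \emph{nested} closure, so the construction must simultaneously handle the outer $\pol{-}$ bookkeeping --- cutting words at repeated $M\times R$-types and inserting idempotent loops $f\cdot(1_R+r)\cdot f$, which requires a pigeonhole argument on $M\times R$ meshed with the quasi-\tame identities of Lemma~\ref{lem:copelrat} --- and the inner $\bpol{\Cs}$, which never appears explicitly among the four conditions and must be recovered from the structure of $\pbwmrats$. Second, $\lratauxppc$ is \textbf{not} \nice, so no estimate can be carried out on it directly: every quantity has to be transferred to the \nice \mratms $\pbwmrats$, for which Corollary~\ref{cor:epswit2} and Lemma~\ref{lem:sepepswit} apply and make the resulting fixpoint effective once \Cs-separation is decidable. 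Pinning down the exact match between the element produced by the \pbpol{\Cs}-operation and the $\rho$-image of an idempotent loop is the technical heart of the proof, heavier than the pumping argument inside the proof of Lemma~\ref{lem:half:pump}.
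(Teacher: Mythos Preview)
Your high-level decomposition into soundness and completeness is of course right and matches the paper. But both halves of your plan miss the actual technical content, and for completeness the plan is in fact the wrong approach.

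\medskip
\textbf{Completeness.} Your plan is to build, directly, a \pbpol{\Cs}-cover $\Kb$ of $\alpha\inv(s)\cap L$ with $(s,\rho(K))\in S$ for every $K\in\Kb$, starting from the \pol{\Cs}-cover $\{La_1L\cdots a_nL\}$ given by Lemma~\ref{lem:half:pump}. This first layer is already a dead end: those languages lie in \pol{\Cs} only, and their $\rho$-images can be far larger than anything forced into $S$ by the four operations --- that cover is adequate for Theorem~\ref{thm:half:gmainpolc} precisely because it is a \pol{\Cs}-cover, but it is much too coarse to witness a \pbpol{\Cs}-optimal imprint. Your ``second layer'' is where the refinement would have to happen, and here you offer only heuristics (``repeated $M\times R$-types \dots\ are absorbed by the \pbpol{\Cs}-operation'', ``drawing on Theorem~\ref{thm:bpolg}''). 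There is no mechanism in your plan for turning closure properties of $S$ into a refined cover of languages; the \pbpol{\Cs}-operation produces \emph{elements} of $S$, not languages, and Theorem~\ref{thm:bpolg} characterizes $\ioptic{\bratauxbc}$, not covers. The paper does \emph{not} attempt a direct cover construction. Instead, it fixes a \Cs-optimal \iden $H$ for \pbwmrats, takes its syntactic morphism $\psi:A^*\to G$ into a finite group, sets $L=\psi\inv(1_G)$, and lets \Gs be the \emph{finite} \vari of languages recognized by $\psi$. Since $\Gs\subseteq\Cs$, one has $\opti{\pbpol{\Cs}}{\alpha\inv(s)\cap L,\rho}\subseteq\opti{\pbpol{\Gs}}{\alpha\inv(s)\cap L,\rho}$, and the latter is characterized by the main theorem of~\cite{pseps3j} for \emph{finitely} based hierarchies (restated here as Proposition~\ref{prop:pbp:algo}, via \pbpol{\Gs}-saturated pairs $(S',\Ts)$). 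The whole argument is then to manufacture such a saturated pair out of $S$ and \pbwmrats and check the seven closure conditions. So completeness is a \emph{reduction to the finite basis case}, not a new cover construction; this is the key idea your plan lacks.

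\medskip
\textbf{Soundness.} Your sketch for downset and multiplication is fine (the paper uses a \Cs-optimal \iden $L$ for $\tau=\lratauxppc$ with $LL=L$, which cleans up multiplication). For the \Cs-operation your idea is correct. But for the \pbpol{\Cs}-operation you write that ``$f\cdot(1_R+r)\cdot f$ is the $\rho$-image of inserting an idempotent loop \dots, an operation under which \pbpol{\Cs} is closed''. That is the right intuition, but turning it into a proof is exactly the work: given an arbitrary \pbpol{\Cs}-cover \Kb of $L\cap\alpha\inv(e)$, one must exhibit $K\in\Kb$ with $f+frf\leq\rho(K)$. The paper does this by first stratifying: it builds a \emph{finite} \pvari $\Ds\subseteq\pol{\Cs}$ containing $L$ and all ``ingredients'' of the languages in \Kb, and a finite $\Fs\subseteq\pol{\cocl{\Ds}}\subseteq\pbpol{\Cs}$ containing every $K\in\Kb$ (Fact~\ref{fct:pbp:strat}). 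It then invokes the generic \pol-property from~\cite{pzgenconcat} (Lemma~\ref{lem:pbp:choiceofk}): for suitable $p,k$ and all $u,v$ with $v\canod u$, one has $u^{pk+1}\leq_{\Fs}u^{pk}vu^{pk}$. This is what forces the ``idempotent loop'' language $H^{pk}vH^{pk}\cup H^{pk+1}$ into a single $K\in\Kb$, where $H=\upset[\Fs]{u}$ and $v$ is extracted via Lemma~\ref{lem:pbp:pbpolsound} (which links $\ioptic{\pbwmrats}$ to $\opti{\pol{\Cs}}{L\cap\rho_*\inv(r),\tau}$). None of this machinery --- the finite stratification and the $u^{pk+1}\leq_{\Fs}u^{pk}vu^{pk}$ lemma --- appears in your plan, and without it the soundness of the \pbpol{\Cs}-operation is an assertion, not an argument.
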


The proof of Theorem~\ref{thm:pbpolg} is presented in Appendix~\ref{app:pbpolg}. Similarly to most results of this kind, it involves two directions: one needs to show that \pbpoluopti is \pbpol{\Cs}-complete and then that it is included in every other \pbpol{\Cs}-complete subset. The former direction is proved directly. The latter one requires applying the main theorem of~\cite{pseps3j} (for classes \pbpol{\Ds} with \Ds a {\bf finite} \vari).

As expected, when \Cs-separation is decidable, Theorem~\ref{thm:pbpolg} yields a least fixpoint procedure for computing \pbpoluopti from a morphism $\alpha: A^* \to M$ and a \nice \mratm $\rho: 2^{A^*} \to R$. One starts from the empty set $\emptyset \subseteq R$ and saturates it with the four operations in the definition of \pbpol{\Cs}-complete subsets. Clearly, they may be implemented. This is immediate for downset and multiplication. Moreover, we are able to implement \Cs-operation and \pbpol{\Cs}-operation by Corollary~\ref{cor:epswit2} since \Cs-separation is decidable. Eventually, the computation reaches a fixpoint and it is straightforward to verify that this set is the least  \pbpol{\Cs}-complete subset of $M \times R$, \emph{i.e.}, \pbpoluopti by Theorem~\ref{thm:pbpolg}.

By Proposition~\ref{prop:utooptilat}, we know that a procedure for computing \pbpoluopti yields an algorithm for \pbpol{\Cs}-covering. Hence, we get the \pbpol{\Cs} part of Theorem~\ref{thm:main} as a corollary: when \Cs-separation is decidable, so is \pbpol{\Cs}-covering (as well as \pbpol{\Cs}-separation by Lemma~\ref{lem:septocove}).

\section{Conclusion}
\label{sec:conc}
We proved that for any \vari of \emph{group languages} \Cs, if \Cs-separation is decidable, then so are separation and covering for levels 1/2,  1 and 3/2 in the concatenation hierarchy of basis \Cs. A corollary is that these levels enjoy decidable membership, as well as level 5/2. This result nicely complements analogous statements that apply to \emph{finitely based} concatenation hierarchies~\cite{pzbpolc,pseps3j}.

These results can be instantiated for several classical bases. First, one may consider the basis made of all group languages. This yields the so-called ``group hierarchy'' of Pin and Margolis~\cite{MargolisP85}, for which decidability of membership was known only up to level~1. Another application is the quantifier alternation hierarchy of first-order logic with modular predicates. It corresponds to the basis consisting of languages counting the length of words modulo some number (separation was shown decidable at level~1 in~\cite{Zetzsche18} with specific techniques). A third example is the basis consisting of all languages counting the number of occurrences of letters modulo some number, which are exactly languages recognized by finite commutative groups. In all cases, we get decidability of separation and covering for levels 1/2, 1 and 3/2 (and membership for level~5/2).

There are natural follow-up questions to this work. The most immediate one is whether our results may be pushed to higher levels. This is difficult. There are no known generic results for the levels above 2, even for finitely based hierarchies. Actually, there is one hierarchy for which covering is know to be decidable up to level 5/2: the Straubing-Thérien hierarchy (its basis is $\{\emptyset,A^*\}$). Unfortunately, this is based on a specific property of this hierarchy: its levels 2 and 5/2 are also levels 1 and 3/2 in another finitely based hierarchy~\cite{pin-straubing:upper}. We do not have a similar property for arbitrary group based hierarchies.

One can also investigate other closure operations, such as the star-free closure $\Cs \mapsto \sfr(\Cs)$, which is the union of all levels in the hierarchy of basis \Cs (\emph{i.e.}, $\sfr(\Cs)$ is the least class containing \Cs, closed under Boolean operations and marked concatenation). If \Cs is a \hbox{\vari}~of group languages and \Cs-separation is decidable, is $\sfr(\Cs)$-covering decidable?


\begin{thebibliography}{10}

\bibitem{MR1709911}
J.~Almeida.
\newblock Some algorithmic problems for pseudovarieties.
\newblock {\em Publicationes Mathematicae Debrecen}, 54:531--552, 1999.

\bibitem{arfi87}
M.~Arfi.
\newblock Polynomial operations on rational languages.
\newblock In {\em STACS'87}, Lect. Notes Comp. Sci., pages 198--206, 1987.

\bibitem{Ash91}
C.~J. Ash.
\newblock Inevitable graphs: a proof of the type {II} conjecture and some
  related decision procedures.
\newblock {\em {IJAC}}, 1(1):127--146, 1991.

\bibitem{BrzoDot}
J.~A. Brzozowski and R.~S. Cohen.
\newblock Dot-depth of star-free events.
\newblock {\em J. Comp. Sys. Sci.}, 5(1):1--16, 1971.

\bibitem{BroKnaStrict}
J.~A. Brzozowski and R.~Knast.
\newblock The dot-depth hierarchy of star-free languages is infinite.
\newblock {\em J. Comp. Sys. Sci.}, 16(1):37--55, 1978.

\bibitem{ChaubardPS06}
L.~Chaubard, J.~Pin, and H.~Straubing.
\newblock First order formulas with modular predicates.
\newblock In {\em LICS'06}, 2006.

\bibitem{abelian_pt}
M.~Delgado.
\newblock Abelian pointlikes of a monoid.
\newblock {\em Semigroup Forum}, 56:339--361, 1998.

\bibitem{Eilenberg_book_B}
S.~Eilenberg.
\newblock {\em Automata, languages, and machines}, volume~B.
\newblock Academic Press, 1976.

\bibitem{gssig2}
C.~Gla{\ss}er and H.~Schmitz.
\newblock Languages of dot-depth 3/2.
\newblock In {\em STACS'00}, pages 555--566. Springer, 2000.

\bibitem{HenckellMPR91}
K.~Henckell, S.~W. Margolis, J.~Pin, and J.~Rhodes.
\newblock Ash's type {II} theorem, profinite topology and malcev products: Part
  {I}.
\newblock {\em {IJAC}}, 1(4):411--436, 1991.

\bibitem{knast83}
R.~Knast.
\newblock A semigroup characterization of dot-depth one languages.
\newblock {\em RAIRO TIA}, 17(4):321--330, 1983.

\bibitem{KufleitnerW15}
M.~Kufleitner and T.~Walter.
\newblock One quantifier alternation in first-order logic with modular
  predicates.
\newblock {\em {RAIRO} TIA}, 49(1), 2015.

\bibitem{MargolisP85}
S.~W. Margolis and J.~Pin.
\newblock Products of group languages.
\newblock In {\em {FCT}'85}. Springer, 1985.

\bibitem{mnpfo}
R.~McNaughton and S.~A. Papert.
\newblock {\em Counter-Free Automata}.
\newblock {MIT} Press, 1971.

\bibitem{pinbridges}
J.-{\'E}. Pin.
\newblock Bridges for concatenation hierarchies.
\newblock In {\em ICALP'98}, pages 431--442. Springer, 1998.

\bibitem{jep-intersectPOL}
J.-{\'E}. Pin.
\newblock An explicit formula for the intersection of two polynomials of
  regular languages.
\newblock In {\em {DLT 2013}}, pages 31--45. Springer, 2013.

\bibitem{jep-dd45}
J.-{\'E}. Pin.
\newblock The dot-depth hierarchy, 45 years later.
\newblock In {\em The Role of Theory in Computer Science. Essays Dedicated to
  Janusz Brzozowski}. 2017.

\bibitem{pingoodref}
J.-{\'E}. Pin.
\newblock Mathematical foundations of automata theory.
\newblock 2018.

\bibitem{pin-straubing:upper}
J.-{\'E}. Pin and H.~Straubing.
\newblock Monoids of upper triangular boolean matrices.
\newblock In {\em Semigroups. Structure and Universal Algebraic Problems},
  volume~39, pages 259--272. North-Holland, 1985.

\bibitem{pseps3}
T.~Place.
\newblock Separating regular languages with two quantifiers alternations.
\newblock In {\em LICS'15}, pages 202--213. {IEEE} Computer Society, 2015.

\bibitem{pseps3j}
T.~Place.
\newblock Separating regular languages with two quantifier alternations.
\newblock {\em Logical Methods in Computer Science}, 14(4), 2018.

\bibitem{pzqalt}
T.~Place and M.~Zeitoun.
\newblock Going higher in the first-order quantifier alternation hierarchy on
  words.
\newblock In {\em ICALP'14}, pages 342--353, 2014.

\bibitem{pzsucc}
T.~Place and M.~Zeitoun.
\newblock Separation and the successor relation.
\newblock In {\em STACS'15}, pages 662--675. Springer, 2015.

\bibitem{PZ:Siglog15}
T.~Place and M.~Zeitoun.
\newblock The tale of the quantifier alternation hierarchy of first-order logic
  over words.
\newblock {\em SIGLOG news}, 2(3):4--17, 2015.

\bibitem{pzboolpol}
T.~Place and M.~Zeitoun.
\newblock Separation for dot-depth two.
\newblock In {\em {32th Annual ACM/IEEE Symposium on Logic in Computer
  Science}}, LICS'17, 2017.

\bibitem{pzcovering2}
T.~Place and M.~Zeitoun.
\newblock The covering problem.
\newblock {\em Logical Methods in Computer Science}, 14(3), 2018.

\bibitem{pzgenconcat}
T.~Place and M.~Zeitoun.
\newblock Generic results for concatenation hierarchies.
\newblock {\em Theory of Computing Systems (ToCS)}, 2018.
\newblock Selected papers from CSR'17.

\bibitem{pzbpolc}
T.~Place and M.~Zeitoun.
\newblock Separation for dot-depth two.
\newblock In preparation, preprint
  at~\url{http://www.labri.fr/~zeitoun/research/pdf/boolpol-full.pdf}, 2018.

\bibitem{sfo}
M.~P. Sch{\"u}tzenberger.
\newblock On finite monoids having only trivial subgroups.
\newblock {\em Information and Control}, 8(2):190--194, 1965.

\bibitem{StrauConcat}
H.~Straubing.
\newblock A generalization of the {Sch{\"u}tzenberger} product of finite
  monoids.
\newblock {\em Theoret. Comp. Sci.}, 13(2):137--150, 1981.

\bibitem{StrauVD}
H.~Straubing.
\newblock Finite semigroup varieties of the form {V {\textasteriskcentered} D}.
\newblock {\em Journal of Pure and Applied Algebra}, 1985.

\bibitem{TheConcat}
D.~Th{\'e}rien.
\newblock Classification of finite monoids: The language approach.
\newblock {\em Theoretical Computer Science}, 14(2):195--208, 1981.

\bibitem{ThomEqu}
W.~Thomas.
\newblock Classifying regular events in symbolic logic.
\newblock {\em J. Comp. Sys. Sci.}, 25(3):360--376, 1982.

\bibitem{Zetzsche18}
G.~Zetzsche.
\newblock Separability by piecewise testable languages and downward closures
  beyond subwords.
\newblock In {\em {LICS}'18}, 2018.

\end{thebibliography}

\newpage

\section{Omitted proofs in Section~\ref{sec:polg}}
\label{app:polg}
In this appendix, we present the arguments for the statements of Section~\ref{sec:units} whose proofs are missing.

\subsection{Proof of Lemma~\ref{lem:epswit}}

Let us first recall the statement of Lemma~\ref{lem:epswit}.

\adjustc{lem:epswit}
\begin{lemma}
  Let $\rho: 2^{A^*} \to R$ be a \ratm and \Cs a lattice. There exists a \Cs-optimal \iden for $\rho$.
\end{lemma}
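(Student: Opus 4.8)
The plan is to exploit two features of the setting: the rating set $R$ is \emph{finite}, and the lattice \Cs is closed under \emph{finite} intersection. The first thing I would record is that $\rho$ is monotone for the canonical ordering of $R$: if $K \subseteq L$ then $K \cup L = L$, hence $\rho(K) + \rho(L) = \rho(L)$, i.e. $\rho(K) \leq \rho(L)$. (This also follows from the general fact, recalled earlier, that morphisms between commutative and idempotent monoids are increasing.)

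Next, consider the set $E = \{\rho(L') \mid L' \in \Cs \text{ and } \veps \in L'\} \subseteq R$. It is nonempty, since $A^* \in \Cs$ because \Cs is a lattice, and $\veps \in A^*$; and it is finite because $R$ is finite. Write $E = \{r_1,\dots,r_n\}$, and for each $i \leq n$ choose a language $L_i \in \Cs$ with $\veps \in L_i$ and $\rho(L_i) = r_i$. Put $L = L_1 \cap \cdots \cap L_n$. Then $L \in \Cs$ because \Cs, being a lattice, is closed under finite intersection, and $\veps \in L$ because $\veps$ lies in each $L_i$.

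Finally I would check optimality. For every $i \leq n$ we have $L \subseteq L_i$, so monotonicity of $\rho$ gives $\rho(L) \leq \rho(L_i) = r_i$. Since any $L' \in \Cs$ with $\veps \in L'$ satisfies $\rho(L') \in E = \{r_1,\dots,r_n\}$, it follows that $\rho(L) \leq \rho(L')$ for every such $L'$. Hence $L$ is a \Cs-optimal \iden for $\rho$, as desired.

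There is essentially no obstacle in this argument; the only subtlety worth flagging is that one cannot simply intersect \emph{all} languages of \Cs containing $\veps$ at once, as there may be infinitely many of them. The point of passing first to the finite set $E$ of their $\rho$-images, and of intersecting only a finite family of representatives, is precisely to stay within the closure properties guaranteed by \Cs being a lattice.
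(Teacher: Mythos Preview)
Your proof is correct and follows essentially the same approach as the paper: both consider the finite set of $\rho$-images of languages in \Cs containing $\veps$, pick one representative language per image, and intersect these finitely many representatives to obtain the desired \Cs-optimal \iden. The only cosmetic difference is that you spell out the monotonicity of $\rho$ explicitly, whereas the paper uses it implicitly.
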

\restorec
\begin{proof}
  Let $U \subseteq R$ be the set of all elements $r \in R$ such that $r = \rho(K)$ for $K \in \Cs$ such that $\veps \in K$. Clearly, $\rho(A^*) \in U$ which means that $U$ is non-empty since $A^* \in \Cs$ (\Cs is a lattice). For all $r \in U$, we fix an arbitrary language $K_r \in \Cs$ such that $\veps \in K_r$ and $r = \rho(K_r)$ ($K_r$ exists by definition of $U$). Finally, we let,
  \[
    K = \bigcap_{r \in U} K_r
  \]
  Since \Cs is a lattice, we have $K \in \Cs$. Moreover, $\veps \in K$ by definition. Since $K \subseteq K_r$ for all $r \in U$, it follows that $\rho(K) \leq r$ for every $r \in U$. By definition of $U$, this implies that $\rho(K) \leq \rho(K')$ for every $K' \in \Cs$ such that $\veps \in K'$. Hence, $K$ is a \Cs-optimal \iden for $\rho$.
\end{proof}

\subsection{Proof of Lemma~\ref{lem:sepepswit}}

Let us first recall the statement of Lemma~\ref{lem:sepepswit}.

\adjustc{lem:sepepswit}
\begin{lemma}
  Let $\rho: 2^{A^*} \to R$ be a \nice \ratm and \Cs a lattice. Then, $\iopti{\Cs}{\rho}$ is the sum of all $r \in R$ such that $\{\veps\}$ is not \Cs-separable from $\rho_*\inv(r)$.
\end{lemma}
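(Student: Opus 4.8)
Write $T\subseteq R$ for the set of those $r$ such that $\{\veps\}$ is not \Cs-separable from $\rho_*\inv(r)$, and set $t=\sum_{r\in T}r$. By Lemma~\ref{lem:epswit} there is a \Cs-optimal \iden $L$ for $\rho$, so by definition $\iopti{\Cs}{\rho}=\rho(L)$. Since the canonical ordering on $R$ is a partial order, it suffices to prove the two inequalities $t\leq\rho(L)$ and $\rho(L)\leq t$. Throughout, the crucial tool is that $\rho$ is \nice: for any language $K$ we have $\rho(K)=\sum_{w\in K}\rho(w)$ (a finite sum, since $R$ is finite, commutative and idempotent).

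\textbf{First inequality ($t\leq\rho(L)$).} Fix $r\in T$. Since $L\in\Cs$ and $\veps\in L$, the language $L$ does not separate $\{\veps\}$ from $\rho_*\inv(r)$; as the first inclusion $\{\veps\}\subseteq L$ already holds, failure of separation forces $L\cap\rho_*\inv(r)\neq\emptyset$, so there is $w\in L$ with $\rho(w)=r$. By niceness $\rho(L)=\sum_{u\in L}\rho(u)$, and $r=\rho(w)$ is one of the summands, so $r+\rho(L)=\rho(L)$, i.e.\ $r\leq\rho(L)$. As this holds for every $r\in T$ and addition is compatible with $\leq$, we get $t=\sum_{r\in T}r\leq\rho(L)$.

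\textbf{Second inequality ($\rho(L)\leq t$).} The idea is to exhibit one competitor language and invoke optimality of $L$. For each $r\in R\setminus T$, the definition of $T$ gives a language $K_r\in\Cs$ separating $\{\veps\}$ from $\rho_*\inv(r)$, i.e.\ $\veps\in K_r$ and $K_r\cap\rho_*\inv(r)=\emptyset$. Put $K=\bigcap_{r\in R\setminus T}K_r$; this is a \emph{finite} intersection because $R$ is finite, hence $K\in\Cs$ since \Cs is a lattice, and $\veps\in K$. Now every $w\in K$ lies in each $K_r$ with $r\notin T$, so $\rho(w)\neq r$ for all such $r$, which means $\rho(w)\in T$. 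Using niceness once more, $\rho(K)=\sum_{w\in K}\rho(w)$ is a sum of elements of $T$, so $\rho(K)+t=t$, i.e.\ $\rho(K)\leq t$. Finally, $K\in\Cs$ with $\veps\in K$, so optimality of $L$ gives $\rho(L)\leq\rho(K)\leq t$.

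Combining the two inequalities yields $\iopti{\Cs}{\rho}=\rho(L)=t$, as claimed. There is no genuinely hard step here: the only points to be careful about are the systematic appeal to niceness (it is what turns ``$\rho$ of a language'' into ``a sum of $\rho$ of its words'', used in both directions), the fact that failure of separation from $\{\veps\}$ reduces to nonempty intersection because the $\{\veps\}\subseteq K$ part is automatic once $\veps\in K$, and the finiteness of $R$ which keeps the intersection $\bigcap_{r\notin T}K_r$ inside the lattice~\Cs.
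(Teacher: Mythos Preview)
Your proof is correct and follows essentially the same approach as the paper's: define the set of ``non-separable'' elements, prove both inequalities via (i) the optimal \iden fails to separate, forcing each such $r$ below $\rho(L)$, and (ii) intersecting separators for the complementary $r$'s to build a competitor $K$ whose image lies below $t$. The only cosmetic difference is that in the first inequality you invoke niceness where monotonicity of $\rho$ already suffices (from $\{w\}\subseteq L$ one gets $\rho(w)\leq\rho(L)$ directly); niceness is genuinely needed only in the second inequality, as in the paper.
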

\restorec

\begin{proof}
  We let $Q \subseteq R$ as the set of all $r \in R$ such that $\{\veps\}$ is not \Cs-separable from $\rho_*\inv(r)$. Moreover, we let $q = \sum_{r \in Q} r$. We prove that $q = \iopti{\Cs}{\rho}$.  First, we show that $q \leq \iopti{\Cs}{\rho}$. This amounts to proving that given $r \in Q$, we have $r \leq \iopti{\Cs}{\rho}$. By definition, $\iopti{\Cs}{\rho} = \rho(K)$ for some $K \in \Cs$ such that $\veps \in K$. Therefore, since $\{\veps\}$ is not \Cs-separable from $\rho_*\inv(r)$ (this is the definition of $Q$), we obtain that $K \cap\rho_*\inv(r) \neq \emptyset$. We get a word $w \in K \cap\rho_*\inv(r)$. By definition $w \in K$ and $\rho(w) = r$. This implies that $r \leq \rho(K) = \iopti{\Cs}{\rho}$.

  Conversely, we show that $\iopti{\Cs}{\rho} \leq q$. By definition, for  every $r \in R \setminus Q$, $\{\veps\}$ is \Cs-separable from $\rho_*\inv(r)$. Hence, we may fix $H_r \in \Cs$ as a separator: $\veps \in H_r$ and $H_r \cap \rho_*\inv(r) = \emptyset$. We define,
  \[
    H = \bigcap_{r \in R \setminus Q} H_r \in \Cs
  \]
  Clearly, $\veps \in H$ which implies that $\iopti{\Cs}{\rho} \leq \rho(H)$. Moreover, since $\rho$ is \nice, we have $w_1,\dots,w_n \in H$ such that $\rho(H) = \rho(w_1) + \cdots + \rho(w_n)$. Finally, it is clear that $H \cap \rho_*\inv(r) = \emptyset$ for every $r \in R \setminus Q$. Therefore, that $\rho(w_1),\dots,\rho(w_n) \in Q$. This yields $\rho(H) = \rho(w_1) + \cdots + \rho(w_n) \leq \sum_{r \in Q} r = q$. Altogether, we get that $\iopti{\Cs}{\rho} \leq q$.
\end{proof}

\subsection{Proof of Lemma~\ref{lem:mdiopti}}

Let us first recall the statement of Lemma~\ref{lem:mdiopti}.

\adjustc{lem:mdiopti}
\begin{lemma}
  Let $\rho: 2^{A^*} \to R$ be a \nice \mratm. Then, $\iopti{\md}{\rho} = (\rho(A))^\omega + \rho(\veps)$.
\end{lemma}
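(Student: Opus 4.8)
The plan is to exhibit an explicit $\md$-optimal \iden for $\rho$ and compute its image. Throughout, write $\omega$ for $\omega(R)$ relative to the multiplicative monoid $(R,\cdot)$, so that $(\rho(A))^\omega$ is idempotent, and for each integer $m\geq 1$ let $L_m=\{w\in A^*\mid |w|\equiv 0\bmod m\}$; this is one of the generating languages of $\md$, so $L_m\in\md$, and $\veps\in L_m$ since $|\veps|=0$. Since $\rho$ is a \nice \mratm, $\rho(L_m)=\sum_{w\in L_m}\rho(w)=\sum_{k\geq 0}\rho(A^{mk})=\sum_{k\geq 0}(\rho(A))^{mk}$, where the sum collapses to a finite one because $R$ is finite, commutative and idempotent. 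The term $k=0$ contributes $\rho(\veps)=1_R$, and the term $k=\omega$ contributes $(\rho(A))^{m\omega}=\bigl((\rho(A))^\omega\bigr)^m=(\rho(A))^\omega$ by idempotence; hence $(\rho(A))^\omega+\rho(\veps)\leq\rho(L_m)$ for every $m\geq1$. Moreover, taking $m=\omega$, every term with $k\geq1$ equals $(\rho(A))^{\omega k}=(\rho(A))^\omega$, so $\rho(L_\omega)=\rho(\veps)+(\rho(A))^\omega$ exactly.

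It remains to show that $(\rho(A))^\omega+\rho(\veps)$ is a lower bound for $\rho(L)$ over \emph{all} $L\in\md$ with $\veps\in L$. The key (elementary) observation is that any such $L$ contains some $L_m$ with $m\geq 1$: by definition $L$ is a finite Boolean combination of languages of the form $\{w\mid |w|\equiv k_i\bmod m_i\}$; letting $m$ be the least common multiple of the $m_i$ (and $m=1$ if there are none), each such language is a union of residue classes modulo $m$, so the Boolean algebra they generate consists of unions of residue classes modulo $m$, whence $L$ is such a union; since $\veps\in L$ has length $0$, the class $L_m$ is entirely contained in $L$. Because $\rho$ is increasing for inclusion, this gives $\rho(L)\geq\rho(L_m)\geq(\rho(A))^\omega+\rho(\veps)$ by the previous paragraph.

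Combining the two: $L_\omega$ is a language in $\md$ containing $\veps$ with $\rho(L_\omega)=(\rho(A))^\omega+\rho(\veps)$, and no language in $\md$ containing $\veps$ has smaller image; hence $L_\omega$ is a $\md$-optimal \iden for $\rho$ and $\iopti{\md}{\rho}=\rho(L_\omega)=(\rho(A))^\omega+\rho(\veps)$. One could equivalently deduce this from Lemma~\ref{lem:sepepswit}: $\{\veps\}$ is $\md$-separable from a language $K$ iff $K\cap L_m=\emptyset$ for some $m\geq1$, and determining for which $r\in R$ this fails for $K=\rho_*\inv(r)$ recovers the same sum. The only delicate point is the combinatorial claim that every $\md$-language containing $\veps$ swallows a full residue class $L_m$; the rest is routine semiring bookkeeping, with niceness of $\rho$ used to turn the infinite union defining $L_m$ into a finite sum.
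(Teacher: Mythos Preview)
Your proof is correct and follows essentially the same approach as the paper's: both exhibit the language $L_\omega=\{w\mid |w|\equiv 0\bmod\omega\}$ to obtain the upper bound, and both argue that any $L\in\md$ containing $\veps$ must contain a full residue class $L_m$ (with $m$ the lcm of the moduli in a Boolean decomposition of $L$) to obtain the lower bound. Your computation of $\rho(L_\omega)$ as an exact value via the sum $\sum_{k\geq0}(\rho(A))^{\omega k}$ is slightly more direct than the paper's case analysis on finitely many witness words, but the substance is identical.
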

\restorec

\begin{proof}
  We first prove that $(\rho(A))^\omega + \rho(\veps) \leq \iopti{\md}{\rho}$. Let $K$ be \md-optimal \iden for $\rho$: we have $K \in \md$, $\veps \in K$ and $\iopti{\md}{\rho} = \rho(K)$. Therefore, we have to show that $(\rho(A))^\omega + \rho(\veps) \leq \rho(K)$. By definition of \md, $K$ is a Boolean combination of languages $\{w \in A^* \mid |w| = k \mod m\}$ with $k,m \in \nat$ such that $k < m$. We let $d$ as the least common multiplier of the numbers $m$ involved in the definitions of all languages in this Boolean combination. It is simple to verify that given two words $w,w' \in A^*$, if the lengths of $w$ and $w'$ are congruent modulo $d$, then $w \in K$ if and only if $w' \in K$. Hence, since $\veps \in K$. We know that every $w \in A^*$ whose length is congruent to $0$ modulo $d$ belongs to $K$. In particular, we have,
  \[
    A^{d\omega} \cup \{\veps\} \subseteq K
  \]
  This implies as desired that $(\rho(A))^\omega + \rho(\veps) \leq \rho(K)$ since $\rho$ is a \mratm.

  We now prove that $\iopti{\md}{\rho} \leq (\rho(A))^\omega + \rho(\veps)$. By definition, this amounts to finding $K \in \md$ such that $\veps \in K$ and $\rho(K) \leq (\rho(A))^\omega + \rho(\veps)$. Recall that $\omega \in \nat$ is the idempotent power of $(R,\cdot)$. Hence, we may define $K$ as the language of all words $w \in A^*$ whose lengths are congruent to $0$ modulo $\omega$. Clearly, $K \in \md$ and $\veps \in K$. Since $\rho$ is \nice, we get $w_1,\dots,w_n \in K$ such that $\rho(K) = \rho(w_1) + \cdots + \rho(w_n)$. Let $i \leq n$, by definition of $K$, one of the two following properties holds:
  \begin{itemize}
  \item $w_i = \veps$ which implies that $\rho(w_i) = \rho(\veps)$, or,
  \item $w_i \in A^+$ and $|w_i| = k\omega$ for some $k \geq 1$ which implies that $\rho(w_i) \leq \rho(A^{k\omega}) = (\rho(A))^\omega$.
  \end{itemize}
  Altogether, this yields that $\rho(K) \leq (\rho(A))^\omega + \rho(\veps)$, finishing the proof of this direction.
\end{proof}

\section{Omitted proofs in Section~\ref{sec:units}}
\label{app:units}
In this appendix, we present an additional property of quasi-\mratms taken from~\cite{pzbpolc} that we shall need later. Moreover we present a proof for  Proposition~\ref{prop:utooptilat}.

\subsection{A property of quasi-\mratms}

We shall need the following property of quasi-\mratms which is proved in~\cite[Lemma~6.6]{pzbpolc}.

\begin{lemma}\label{lem:qmult}
  Let \Gs be a \pvari~and~\mbox{$\rho: 2^{A^*} \!\!\to R$} be a \Gs-\mratm. Given $H,L \subseteq A^*$, $q \in \opti{\Gs}{H,\rho}$ and $r \in \opti{\Gs}{L,\rho}$, we have $\quasir(qr) \in  	\opti{\Gs}{HL,\rho}$.
\end{lemma}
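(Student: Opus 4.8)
The plan is to reduce the claim to a statement about a single arbitrary \Gs-cover of $HL$, and then to shuttle covers back and forth between $H$, $L$ and $HL$ using quotient-closure of \Gs and the \Gs-\tameness axioms. First I would record the following standard consequence of Lemma~\ref{lem:bgen:opt} and the definition of optimal \imprints: for any lattice \Gs, any language $X$ and any \ratm $\rho$, an element $x \in R$ lies in $\opti{\Gs}{X,\rho}$ \emph{if and only if} every \Gs-cover \Hb of $X$ has a member $N$ with $x \le \rho(N)$ (the ``only if'' direction uses that the optimal cover minimizes all \imprints; the ``if'' direction is obtained by applying the hypothesis to an optimal \Gs-cover). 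Hence, to prove $\quasir(qr) \in \opti{\Gs}{HL,\rho}$, it suffices to fix an arbitrary \Gs-cover \Hb of $HL$ and to exhibit some $N \in \Hb$ with $\quasir(qr) \le \rho(N)$.

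\textbf{Key steps.} Having fixed \Hb, I would start from optimal \Gs-covers \Kb of $H$ and \Kb' of $L$, together with members $K_0 \in \Kb$ and $K_0' \in \Kb'$ witnessing $q \le \rho(K_0)$ and $r \le \rho(K_0')$; note $K_0, K_0' \in \Gs$. The core is then to transfer \Hb into \Gs-covers of $H$ and of $L$ by taking quotients of its members: for $v \in L$ the family $\{N v\inv : N \in \Hb\}$ is a \Gs-cover of $H$ (since \Gs is quotient-closed and $u \in H$ forces $uv \in HL$), and symmetrically $\{w\inv N : N \in \Hb\}$ is a \Gs-cover of $L$ for $w \in H$; conversely, marked/unmarked products of members of \Gs-covers of $H$ and $L$ produce \Gs-covers of $HL$ (using that \Gs is closed under concatenation and marked concatenation, as it is in every instance where this lemma is applied). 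Playing the optimality of the $H$- and $L$-covers against the arbitrary cover \Hb, and using a pigeonhole over the finite set \Hb, one should be able to pin down a single member $N \in \Hb$ together with \Gs-languages built from $N$ that witness both $q$ and $r$ and that admit a factorization tying them back to $N$. Finally, using that $\quasir$ is $\le$-increasing (an endomorphism of the idempotent commutative monoid $(R,+)$ is monotone) and the three identities of axiom~\ref{itm:ax3} in the definition of \Gs-\tameness, one rewrites the relevant products $\rho(PP')$, $\rho(PaP')$ as $\quasir(\rho(P)\rho(P'))$, $\quasir(\rho(P)\rho(a)\rho(P'))$, collapses any nested occurrences of $\quasir$ via axiom~\ref{itm:ax1}, and concludes $\quasir(qr) \le \rho(N)$.

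\textbf{Main obstacle.} The delicate point is precisely this middle step: a cover of $HL$ by \Gs-languages does not restrict \emph{cleanly} to \Gs-covers of $H$ and $L$ with a common witnessing member, because the natural ``slices'' of $H$ and $L$ one is tempted to cut out (sets such as $\{v \in L : q \le \rho(Nv\inv)\}$) are regular but need not belong to \Gs, and because $H$ and $L$ themselves are arbitrary languages, not members of \Gs. Resolving this requires combining quotient-closure of \Gs with the finiteness of the cover \Hb — each $N \in \Hb$ is regular and so has only finitely many quotients, which keeps all auxiliary languages inside \Gs via finite intersections — and a careful pigeonhole argument to align the witnesses of $q$ and $r$ at one and the same $N$. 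Compared to that, the manipulations with $\quasir$ (monotonicity plus the collapsing identity~\ref{itm:ax1}) are purely routine bookkeeping.
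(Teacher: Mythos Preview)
The paper does not prove this lemma itself: it is stated in Appendix~\ref{app:units} with a bare citation to~\cite[Lemma~6.6]{pzbpolc}, so there is no in-paper proof to compare against.

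Your overall strategy is correct, and your ``main obstacle'' paragraph even names the right resolution (finitely many quotients, finite intersections). The genuine gap is the ``conversely'' step, where you form products of members of \Gs-covers of $H$ and $L$ and invoke closure of \Gs under (marked) concatenation ``as it is in every instance where this lemma is applied.'' The lemma assumes only that \Gs is a quotient-closed lattice; concatenation closure is \emph{not} a hypothesis, so you may not use it, and you do not need it.

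A clean argument uses only the one-way transfer. Given a \Gs-cover \Hb of $HL$, for each $N \in \Hb$ and each left quotient $M$ of $N$ with $r \le \rho(M)$ (finitely many, all in \Gs), set $Q_{N,M} = \{u \mid M \subseteq u\inv N\} = \bigcap_{w \in M} Nw\inv$; this is a finite intersection of right quotients of $N$, hence in \Gs. These $Q_{N,M}$ form a \Gs-cover of $H$: for $u \in H$, the family $\{u\inv N : N \in \Hb\}$ is a \Gs-cover of $L$, so optimality of $r$ yields some $N$ with $r \le \rho(u\inv N)$, and then $u \in Q_{N,u\inv N}$. By optimality of $q$ there is some $Q_{N,M}$ with $q \le \rho(Q_{N,M})$. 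Since $Q_{N,M} \cdot M \subseteq N$ and both factors lie in \Gs, axiom~\ref{itm:ax3} and monotonicity of $\quasir$ give $\quasir(qr) \le \quasir(\rho(Q_{N,M})\cdot\rho(M)) = \rho(Q_{N,M}\cdot M) \le \rho(N)$. So your endgame with $P,P' \in \Gs$ and $PP' \subseteq N$ is exactly right; you just have to reach it via quotients alone, without the concatenation-closure crutch.
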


\subsection{Proof of Proposition~\ref{prop:utooptilat}}

Let us first recall the statement of Proposition~\ref{prop:utooptilat}. It is as follows.

\adjustc{prop:utooptilat}
\begin{proposition}
  Let \Cs be a \vari of group languages and \Ds a half level within the concatenation hierarchy of basis \Cs. Assume that there exists an algorithm which computes \ioptic{\lratauxd} from a morphism $\alpha$ and a \nice \mratm $\rho$. Then, \Ds-covering is decidable.
\end{proposition}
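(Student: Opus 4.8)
The plan is to transpose the proof of Proposition~\ref{prop:utooptibool} from subsets of $R$ to subsets of $M \times R$, with \lratauxd playing the role of \bratauxd. Fix a morphism $\alpha: A^* \to M$ and a \nice \mratm $\rho: 2^{A^*} \to R$, and observe first that $\popti{\Ds}{\alpha}{\rho} = \lratauxd(A^*)$, since $\alpha\inv(s) \cap A^* = \alpha\inv(s)$. I would then let $S \subseteq M \times R$ be the least subset satisfying: (i) $\ioptic{\lratauxd} \subseteq S$; (ii) $(\alpha(w),\rho(w)) \in S$ for every $w \in A^*$; (iii) $\dclosr S = S$; (iv) for all $(s,q),(t,r) \in S$ we have $(st,qr) \in S$. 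Since $\ioptic{\lratauxd}$ is computable by hypothesis, $S$ is computable: the saturation starts from $\ioptic{\lratauxd}$ together with $(1_M,1_R)$ and the finitely many pairs $(\alpha(a),\rho(a))$ for $a \in A$, and closes under (iii) and (iv), which terminates as $M \times R$ is finite. By Proposition~\ref{prop:lreduc}, it then suffices to prove $S = \popti{\Ds}{\alpha}{\rho}$, which I would establish in two steps.

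For $S \subseteq \popti{\Ds}{\alpha}{\rho}$, I would check that $\popti{\Ds}{\alpha}{\rho}$ satisfies (i)--(iv). Properties (ii), (iii), (iv) are the generic ``trivial elements'', ``downset'' and ``multiplication'' properties, which hold for $\popti{\Ds'}{\alpha}{\rho}$ for every \pvari $\Ds'$ and every \mratm $\rho$ (see~\cite{pzcovering2}); they apply since \Ds, being a half level, is a \pvari by Theorem~\ref{thm:polclos}. Property (i) holds because $\ioptic{\lratauxd} = \lratauxd(L)$ for a \Cs-optimal \iden $L$ for \lratauxd, and since $L \subseteq A^*$ while \lratauxd is a \ratm, hence increasing, $\lratauxd(L) \subseteq \lratauxd(A^*) = \popti{\Ds}{\alpha}{\rho}$.

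The substantial direction is $\popti{\Ds}{\alpha}{\rho} \subseteq S$. Fix a \Cs-optimal \iden $L$ for \lratauxd; then $L \in \Cs$, $\veps \in L$, $\lratauxd(L) = \ioptic{\lratauxd}$, and $L$ is a group language because \Cs consists of group languages. Applying Lemma~\ref{lem:half:pump} to $H = A^*$ and $L$ yields a finite set $U \subseteq A^*$ with $A^* = \bigcup_{w \in U} H_w$, where $H_w = La_1L\cdots a_\ell L$ for $w = a_1\cdots a_\ell$; as \lratauxd is a \ratm, this gives $\popti{\Ds}{\alpha}{\rho} = \bigcup_{w \in U}\lratauxd(H_w)$. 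Now \Ds is a \pvari closed under concatenation, so Lemma~\ref{lem:copelrat} applies: \lratauxd is quasi-\tame with associated endomorphism $T \mapsto \dclosr T$. Iterating the quasi-\tame axioms (Axioms~\ref{itm:ax1} and~\ref{itm:ax3}) exactly as in the argument following Lemma~\ref{lem:grouplem}, I obtain, for each $w = a_1\cdots a_\ell \in U$,
\[
  \lratauxd(H_w) = \dclosr\!\big(\lratauxd(L)\cdot\lratauxd(a_1)\cdot\lratauxd(L)\cdots\lratauxd(a_\ell)\cdot\lratauxd(L)\big).
\]
Next I would prove the ``letters fact'' $\lratauxd(a) \subseteq \dclosr\{(\alpha(a),\rho(a))\}$ for every $a \in A$: for $s \neq \alpha(a)$ the set $\alpha\inv(s)\cap\{a\}$ is empty and contributes nothing to $\lratauxd(a)$, while for $s = \alpha(a)$ we have $\{a\} \in \Ds$ by Fact~\ref{fct:finitelangs}, so $\{\{a\}\}$ is a \Ds-cover of $\{a\}$ and $\opti{\Ds}{\{a\},\rho} \subseteq \prin{\rho}{\{\{a\}\}} = \dclosr\{\rho(a)\}$. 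Substituting $\lratauxd(L) = \ioptic{\lratauxd}$ and the letters fact into the display shows that every element of $\lratauxd(H_w)$ lies in
\[
  \dclosr\!\big(\ioptic{\lratauxd}\cdot\{(\alpha(a_1),\rho(a_1))\}\cdot\ioptic{\lratauxd}\cdots\{(\alpha(a_\ell),\rho(a_\ell))\}\cdot\ioptic{\lratauxd}\big),
\]
which is contained in $S$ by (i), (ii), (iv) and (iii). As $\popti{\Ds}{\alpha}{\rho}$ is the union of the $\lratauxd(H_w)$, this completes the inclusion and the proof.

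I do not expect a genuine obstacle: this is a faithful transposition of Proposition~\ref{prop:utooptibool}, Lemma~\ref{lem:grouplem} and the surrounding argument from $2^R$ to $2^{M\times R}$, with \lratauxd in place of \bratauxd. The one point worth noting is that here we use that \lratauxd is \emph{quasi}-\tame (Lemma~\ref{lem:copelrat}), i.e.\ well behaved with respect to \emph{all} languages, so the decomposition of $\lratauxd(La_1L\cdots a_\ell L)$ is unconditional --- unlike the Boolean case, where \bratauxd is only \Gs-\tame for \Gs the preceding half level and one had to verify $L \in \Gs$. (For \Ds the level $1/2$, $\pol\Cs$-covering is already handled in Section~\ref{sec:polg}, so one may assume \Ds is a level at least $3/2$, which is exactly what makes Fact~\ref{fct:finitelangs} applicable.)
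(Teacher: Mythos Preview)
Your proof follows the paper's almost exactly: same set $S$, same two inclusions, same use of Lemma~\ref{lem:half:pump} together with the quasi-multiplicativity of \lratauxd (Lemma~\ref{lem:copelrat}). The only divergence is how the case $\Ds = \pol{\Cs}$ is handled in the ``letters fact''.

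Your parenthetical disposes of level~$1/2$ by appealing to Section~\ref{sec:polg}. That does not quite work as stated: Section~\ref{sec:polg} assumes decidable \Cs-\emph{separation}, while the present proposition only assumes an algorithm for $\ioptic{\lratauxd}$, and neither hypothesis is known to imply the other. The paper instead keeps level~$1/2$ inside the same argument, by adding a second case to the letters lemma. When $\Ds = \pol{\Cs}$, Fact~\ref{fct:finitelangs} is unavailable, so one takes a \Cs-optimal \iden $H$ for $\rho$, notes that $HaH \in \pol{\Cs}$ gives a one-element \Ds-cover of $\{a\}$, and obtains $\opti{\Ds}{\{a\},\rho} \subseteq \dclosr\{\ioptic{\rho}\cdot\rho(a)\cdot\ioptic{\rho}\}$. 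It then suffices to show $(1_M,\ioptic{\rho}) \in S$: by Lemma~\ref{lem:polctoc} one has $\ioptic{\rho} \in \opti{\pol{\Cs}}{\{\veps\},\rho}$, and since $\veps \in \alpha\inv(1_M)\cap L$, Fact~\ref{fct:linclus} gives $(1_M,\ioptic{\rho}) \in \lratauxd(L) = \ioptic{\lratauxd} \subseteq S$. This patch stays entirely within your framework and needs no extra decidability assumption on~\Cs.
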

\restorec

We fix \Cs and \Ds as in the statement of the proposition. We show that given a morphism $\alpha$ and a \nice \mratm $\rho$, one may compute \popti{\Ds}{\alpha}{\rho} from its subset \ioptic{\lratauxd}. The result is then an immediate corollary of Proposition~\ref{prop:lreduc}.

We fix a morphism $\alpha: A^* \to M$ and a \nice \mratm $\rho: 2^{A^*} \to R$ for the proof. We let $S \subseteq M \times  R$ as the least subset of $R$ which satisfies the following properties:
\begin{enumerate}
\item We have $\ioptic{\lratauxd} \subseteq S$.
\item We have $(\alpha(w),\rho(w)) \in S$ for every $w \in A^*$.
\item We have $\dclosr S = S$.
\item For every $(s,r),(s',r') \in S$, we have $(ss',rr') \in S$.
\end{enumerate}
Clearly, one may compute $S$ from $\alpha$, $\rho$ and \ioptic{\lratauxd}. Hence, Proposition~\ref{prop:utooptilat} is now immediate from the following lemma.

\begin{lemma} \label{lem:utooptilat}
  We have $S = \popti{\Ds}{\alpha}{\rho}$.
\end{lemma}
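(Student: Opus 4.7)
The proof runs in two directions, in direct parallel with Lemma~\ref{lem:utooptibool}. For $S \subseteq \popti{\Ds}{\alpha}{\rho}$, I verify that $\popti{\Ds}{\alpha}{\rho}$ satisfies the four defining clauses of~$S$. The trivial-element, downset, and multiplication clauses are generic properties of pointed optimal imprints for any lattice closed under concatenation, which \Ds is by Theorem~\ref{thm:polclos}; the justification is identical to its counterpart in Lemma~\ref{lem:utooptibool} and in~\cite{pzcovering2}. For the remaining clause $\ioptic{\lratauxd} \subseteq \popti{\Ds}{\alpha}{\rho}$, fix a \Cs-optimal \iden $L$ for \lratauxd; then $\ioptic{\lratauxd} = \lratauxd(L) \subseteq \lratauxd(A^*) = \popti{\Ds}{\alpha}{\rho}$, since $L \subseteq A^*$ and \lratauxd is a \ratm.

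The substantial direction is $\popti{\Ds}{\alpha}{\rho} \subseteq S$. Take $(t,r) \in \popti{\Ds}{\alpha}{\rho}$, i.e.\ $(t,r) \in \lratauxd(\alpha\inv(t))$, and fix again a \Cs-optimal \iden $L$ for \lratauxd. Since \Cs is a \vari of group languages, $L$ is a group language with $\veps \in L$, and Lemma~\ref{lem:half:pump} applied to the regular language $\alpha\inv(t)$ yields a finite cover $\Kb_t$ of $\alpha\inv(t)$ whose members have the form $K = La_1L\cdots a_\ell L$ with $a_1 \cdots a_\ell \in \alpha\inv(t)$. Hence $(t,r) \in \lratauxd(K)$ for some such~$K$. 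By Lemma~\ref{lem:copelrat}, \lratauxd is quasi-\tame with $\quasi{\lratauxd}(T) = \dclosr T$; iterating axiom~3 and collapsing nested applications of $\quasi{\lratauxd}$ via axiom~1, I obtain
\[
  \lratauxd(K) = \dclosr\bigl(\lratauxd(L) \cdot \lratauxd(a_1) \cdot \lratauxd(L) \cdots \lratauxd(a_\ell) \cdot \lratauxd(L)\bigr),
\]
under the componentwise multiplication on $2^{M \times R}$. Since $\lratauxd(L) = \ioptic{\lratauxd} \subseteq S$ and $S$ is closed under multiplication and downset, it suffices to show $\lratauxd(a_i) \subseteq S$. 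By definition $\lratauxd(a) = \{(\alpha(a), u) \mid u \in \opti{\Ds}{\{a\},\rho}\}$; when \Ds is a half level strictly above $1/2$, Fact~\ref{fct:finitelangs} gives $\{a\} \in \Ds$, so $\opti{\Ds}{\{a\},\rho} \subseteq \dclosr\{\rho(a)\}$ and $\lratauxd(a) \subseteq \dclosr\{(\alpha(a),\rho(a))\} \subseteq S$ via the trivial-element clause.

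The main obstacle is the level-$1/2$ corner case $\Ds = \pol{\Cs}$: singletons $\{a\}$ may fail to lie in \Ds (e.g.\ when $\Cs = \grp$, so that $\{\veps\} \notin \Cs$), so the analogue of Fact~\ref{fct:theletters} breaks down and the pump-and-cover step cannot be completed as above. I dispose of this case separately by reducing to Theorem~\ref{thm:half:gmainpolc}: Lemma~\ref{lem:polctoc} together with Fact~\ref{fct:linclus} yields $(1_M, \ioptic{\rho}) \in \lratauxd(L) = \ioptic{\lratauxd} \subseteq S$, so $S$ is \pol{\Cs}-complete in the sense of Theorem~\ref{thm:half:gmainpolc}, and that theorem directly gives $\popti{\pol{\Cs}}{\alpha}{\rho} \subseteq S$ without invoking the pump-and-cover machinery.
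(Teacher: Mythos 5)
Your proposal is correct, and its overall architecture coincides with the paper's: the easy inclusion is handled identically, and the hard inclusion goes through the same pump-and-cover step (Lemma~\ref{lem:half:pump} applied with a \Cs-optimal \iden $L$ for \lratauxd as the group language), followed by quasi-multiplicativity of \lratauxd (Lemma~\ref{lem:copelrat}) to reduce everything to showing $\lratauxd(a) \subseteq S$ for letters. The one genuine divergence is how you dispose of the problematic case $\Ds = \pol{\Cs}$, where $\{a\}$ need not belong to \Ds. The paper stays inside the pump-and-cover argument and handles the letters directly (Lemma~\ref{lem:pollet}, second case): it covers $\{a\}$ by $\{HaH\}$ for a \Cs-optimal \iden $H$ for $\rho$, deduces $q \leq \ioptic{\rho}\cdot\rho(a)\cdot\ioptic{\rho}$, and then shows $(1_M,\ioptic{\rho}) \in \ioptic{\lratauxd} \subseteq S$ via Lemma~\ref{lem:polctoc} and Fact~\ref{fct:linclus} to conclude by multiplication and downset. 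You instead bypass the whole second direction for this level by observing that the same membership $(1_M,\ioptic{\rho}) \in S$ makes $S$ a \pol{\Cs}-complete subset, so the completeness half of Theorem~\ref{thm:half:gmainpolc} yields $\popti{\pol{\Cs}}{\alpha}{\rho} \subseteq S$ outright. Both are valid; your route is more economical since it reuses Theorem~\ref{thm:half:gmainpolc} as a black box rather than re-running its underlying cover construction, at the cost of making the $\pol{\Cs}$ case structurally disjoint from the uniform treatment of the higher half levels.
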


We now concentrate on proving Lemma~\ref{lem:utooptilat}. For the proof, we let $L$ as a \Cs-optimal \iden for \lratauxd: $L \in \Cs$, $\veps \in L$ and $\ioptic{\lratauxd} = \lratauxd(L)$.

\medskip

We first prove that $S \subseteq \popti{\Ds}{\alpha}{\rho}$. This amounts to proving that \popti{\Ds}{\alpha}{\rho} satisfies the four properties in the definition of $S$. That $\ioptic{\lratauxd} \subseteq \popti{\Ds}{\alpha}{\rho}$ is immediate since $\ioptic{\lratauxd} = \lratauxd(L)$, $\popti{\Ds}{\alpha}{\rho} = \lratauxd(A^*)$, $L \subseteq A^*$ and \lratauxd is a \ratm. The other three properties are actually generic: they are satisfied whenever $\rho$ is a \mratm and \Ds is a \pvari (which is the case here since \Ds is a half level within the hierarchy of basis \Cs). We refer the reader to~\cite[Lemma~9.11]{pzcovering2} for the proof.

\medskip

We turn to the converse inclusion: $\popti{\Ds}{\alpha}{\rho} \subseteq S$. Consider $(s,r) \in \popti{\Ds}{\alpha}{\rho}$, we show that $(s,r) \in S$. The argument is based on the following lemma which is where we use the hypothesis that \Cs is a class of group languages (which implies that $L \in \Cs$ is recognized by a finite group).

\begin{lemma} \label{lem:grouplem2}
  There exist $\ell \in \nat$ and $\ell$ letters $a_1,\dots,a_\ell \in A$ such that $(s,r) \in \lratauxd(La_1L \cdots a_\ell L)$.
\end{lemma}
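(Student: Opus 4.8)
The plan is to mirror, almost verbatim, the proof of Lemma~\ref{lem:grouplem} (its analogue in the proof of Proposition~\ref{prop:utooptibool}), the only change being that the ambient \ratm is now \lratauxd rather than \bratauxd. The essential combinatorial content has already been isolated in the pumping statement Lemma~\ref{lem:half:pump}; the rest is bookkeeping with the fact that a \ratm is, in particular, additive over finite unions.

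Concretely, I would proceed as follows. First, recall that $L$ was chosen as a \Cs-optimal \iden for \lratauxd, so $L \in \Cs$ and $\veps \in L$; since \Cs is a \vari of group languages, $L$ is a group language. Apply Lemma~\ref{lem:half:pump} to the regular language $H = A^*$ and to $L$: this yields a (finite) cover \Kb of $A^*$ such that every $K \in \Kb$ has the form $K = La_1L \cdots a_nL$ for some $n \in \nat$ and letters $a_1,\dots,a_n \in A$ (the side condition $a_1 \cdots a_n \in A^*$ being vacuous here). Each such $K$ is contained in $A^*$, and \Kb covers $A^*$, so in fact $A^* = \bigcup_{K \in \Kb} K$.

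Next, since \lratauxd is a \ratm by Proposition~\ref{prop:areratms}, it is a morphism from $(2^{A^*},\cup)$ to $(2^{M \times R},\cup)$ and hence commutes with finite unions. Therefore
\[
  \lratauxd(A^*) = \bigcup_{K \in \Kb} \lratauxd(K).
\]
By hypothesis, $(s,r) \in \popti{\Ds}{\alpha}{\rho} = \lratauxd(A^*)$, so $(s,r) \in \lratauxd(K)$ for some $K \in \Kb$. Writing that particular $K$ as $La_1L \cdots a_\ell L$ gives exactly the desired conclusion.

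I do not expect a genuine obstacle in this lemma: the hard work is entirely carried by Lemma~\ref{lem:half:pump}. The single point that warrants a moment of care is that \lratauxd is \emph{not} \nice and \emph{not} \tame in general, so one cannot invoke any multiplicativity of \lratauxd; but one does not need to, since additivity over the finite cover \Kb already suffices, and that holds for any \ratm. (The symmetric statement Lemma~\ref{lem:grouplem} is proved in precisely this way, and Lemma~\ref{lem:grouplem2} will follow the same template with $R$ replaced by $M \times R$ and \bratauxd replaced by \lratauxd.)
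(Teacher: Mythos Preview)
Your proposal is correct and follows essentially the same approach as the paper, which simply states that the proof is identical to that of Lemma~\ref{lem:grouplem}, using that $L$ is a group language and that \lratauxd is a \ratm. Your explicit unpacking of the argument via Lemma~\ref{lem:half:pump} and the additivity of \lratauxd over a finite cover is exactly what the paper intends.
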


\begin{proof}
  Identical to the proof of Lemma~\ref{lem:grouplem} using the hypothesis that $L$ is a group language and \lratauxd is a \ratm.
\end{proof}

By definition \Ds is a half level within the hierarchy of basis \Cs. This means that \Ds is the polynomial closure of a \vari and by Theorem~\ref{thm:polclos}, we obtain that \Ds is a \pvari closed under concatenation and marked concatenation. Consequently, we obtain from Lemma~\ref{lem:copelrat} that \lratauxd is quasi-\tame and the associated endomorphism of $(2^{M \times R},\cup)$ is defined by $\quasi{\lratauxd}(T) = \dclosr T$ for all $T \in 2^{M \times R}$. Hence, by Axioms~\ref{itm:ax1} and~\ref{itm:ax3} in the definition of quasi-\mratms, the hypothesis that $(s,r) \in \lratauxd(La_1L \cdots a_\ell L)$ given by Lemma~\ref{lem:grouplem2} implies that,
\[
  (s,r) \in \dclosr\left(\lratauxd(L) \cdot \prod_{1 \leq i \leq \ell}(\lratauxd(a_i) \cdot \lratauxd(L))\right)
\]
By definition of $L$, we have $\lratauxd(L) = \ioptic{\lratauxd}$. Moreover, we know that $\ioptic{\lratauxd} \subseteq S$ by definition of $S$. Hence, since $S$ is also closed under multiplication and downset, that that $(s,r) \in S$ is now immediate from the following lemma.

\begin{lemma} \label{lem:pollet}
  For every $a \in A$, we have $\lratauxd(a) \subseteq S$.
\end{lemma}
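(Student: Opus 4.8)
The plan is to unwind the definition of $\lratauxd$ on the singleton $\{a\}$, reduce the claim to an upper bound on $\opti{\Ds}{\{a\},\rho}$, and obtain that bound from general facts about the levels of the hierarchy. First I would make $\lratauxd(a)$ explicit: by definition $\lratauxd(a)=\{(s,r)\mid r\in\opti{\Ds}{\alpha\inv(s)\cap\{a\},\rho}\}$, and for $s\neq\alpha(a)$ we have $\alpha\inv(s)\cap\{a\}=\emptyset$, so (as the empty family of languages is a $\Ds$-cover of $\emptyset$ with empty $\rho$-\imprint) $\opti{\Ds}{\emptyset,\rho}=\emptyset$, whereas for $s=\alpha(a)$ we have $\alpha\inv(s)\cap\{a\}=\{a\}$. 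Hence $\lratauxd(a)=\{\alpha(a)\}\times\opti{\Ds}{\{a\},\rho}$. Since $(\alpha(a),\rho(a))\in S$ by definition of $S$ and $\dclosr S=S$, it therefore suffices to show that each $r\in\opti{\Ds}{\{a\},\rho}$ satisfies $r\le\rho(a)$, or more generally that $(\alpha(a),r)\in S$.

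For every half level $\Ds$ that is $\ge 1$ — in particular for $\Ds=\pbpol{\Cs}$, the case in which the proposition is actually applied — this is immediate: Fact~\ref{fct:finitelangs} gives $\{a\}\in\Ds$, so $\{\{a\}\}$ is a $\Ds$-cover of $\{a\}$ and hence $\opti{\Ds}{\{a\},\rho}\subseteq\dclosr\{\rho(a)\}$, which yields $\lratauxd(a)\subseteq\dclosr\{(\alpha(a),\rho(a))\}\subseteq S$. The only remaining half level is $1/2$, i.e.\ $\Ds=\pol{\Cs}$, to which Fact~\ref{fct:finitelangs} does not apply (the singleton $\{a\}$ is not in $\pol{\Cs}$); here I would instead fix a \Cs-optimal \iden $L_0$ for $\rho$, so that $L_0\in\Cs$, $\veps\in L_0$ and $\rho(L_0)=\ioptic{\rho}$. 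Then $L_0aL_0\in\pol{\Cs}=\Ds$ and $\{a\}\subseteq L_0aL_0$, so $\{L_0aL_0\}$ is a $\Ds$-cover of $\{a\}$, and therefore (using that $\rho$ is a \mratm) $\opti{\Ds}{\{a\},\rho}\subseteq\dclosr\{\rho(L_0aL_0)\}=\dclosr\{\ioptic{\rho}\cdot\rho(a)\cdot\ioptic{\rho}\}$. Consequently $\lratauxd(a)\subseteq\dclosr\{(1_M,\ioptic{\rho})\cdot(\alpha(a),\rho(a))\cdot(1_M,\ioptic{\rho})\}$, and since $S$ is closed under multiplication and under $\dclosr$, the claim reduces to checking that $(1_M,\ioptic{\rho})\in S$.

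This last point is the only place where there is anything to prove. I would establish $(1_M,\ioptic{\rho})\in\ioptic{\lratauxd}$, which suffices since $\ioptic{\lratauxd}\subseteq S$. Writing $L_1$ for a \Cs-optimal \iden of the \ratm $\lratauxd$, so that $\ioptic{\lratauxd}=\lratauxd(L_1)$, unwinding the definition of $\lratauxd$ shows that $(1_M,\ioptic{\rho})\in\lratauxd(L_1)$ is equivalent to $\ioptic{\rho}\in\opti{\pol{\Cs}}{\alpha\inv(1_M)\cap L_1,\rho}$. Since $\veps\in\alpha\inv(1_M)\cap L_1$, Fact~\ref{fct:linclus} reduces this to $\ioptic{\rho}\in\opti{\pol{\Cs}}{\{\veps\},\rho}$, which is exactly the content of Lemma~\ref{lem:polctoc} (applicable because \Cs is a \vari of group languages). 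The main obstacle is thus precisely this level-$1/2$ reduction: one cannot bound $\opti{\Ds}{\{a\},\rho}$ by $\dclosr\{\rho(a)\}$ there, so one must route through the \Cs-optimal \iden and then absorb the extra factors $\ioptic{\rho}$, which is exactly what Lemma~\ref{lem:polctoc} supplies. Everything else is routine bookkeeping with the defining closure properties of $S$ and with $\lratauxd$ being a \ratm.
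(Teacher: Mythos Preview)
Your proof is correct and follows essentially the same approach as the paper: both reduce to the case $s=\alpha(a)$, split on whether the half level \Ds is $\geq 3/2$ (where Fact~\ref{fct:finitelangs} gives $\{a\}\in\Ds$) or equals $\pol{\Cs}$ (where one covers $\{a\}$ by $L_0aL_0$ for a \Cs-optimal \iden $L_0$ of $\rho$), and in the latter case establish $(1_M,\ioptic{\rho})\in\ioptic{\lratauxd}\subseteq S$ via Lemma~\ref{lem:polctoc} and Fact~\ref{fct:linclus}. The only cosmetic difference is that you compute $\lratauxd(a)$ as a product $\{\alpha(a)\}\times\opti{\Ds}{\{a\},\rho}$ up front, whereas the paper works elementwise; the substance is identical.
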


\begin{proof}
  Consider an element $(t,q) \in \lratauxd(a)$. We show that $(t,q) \in S$. By definition of \lratauxd, we have $q \in \opti{\Ds}{\alpha\inv(t) \cap \{a\},\rho}$. This implies that $\alpha\inv(t)\cap\{a\} \neq \emptyset$ which means that $t = \alpha(a)$. Hence, we get,
  \[
    q \in \opti{\Ds}{\{a\},\rho}
  \]
  Recall that \Ds is a half level in the hierarchy of basis \Cs. We now consider two cases depending on whether this level is 1/2 (i.e. $\Ds = \pol{\Cs}$) or a strictly higher one. We start with the latter case which is simpler.

  \smallskip

  If \Ds is at least the level 3/2 in the hierarchy of basis \Cs. It is immediate from Fact~\ref{fct:finitelangs} that $\{a\} \in \Ds$.   Hence, $\{\{a\}\}$ is a \Ds-cover of $\{a\}$. Therefore, $\opti{\Ds}{\{a\},\rho} = \prin{\rho}{\{a\}} = \dclosr \{\rho(a)\}$. It follows that $q \leq \rho(a)$. By hypothesis, we know that $(\alpha(a),\rho(a))\in S$ and $\dclosr S = S$. Hence, since $t=\alpha(a)$, we conclude that $(t,q) \in S$ as desired.

  \smallskip

  We now assume that is the level 1/2: $\Ds = \pol{\Cs}$. Let $H$ be a \Cs-optimal \iden for $\rho$: $H \in \Cs$, $\veps \in H$ and $\ioptic{\rho} = \rho(H)$. Since \Ds is a half level within the hierarchy of basis \Cs, it contains \Cs and is closed under marked concatenation. Thus, we know that $HaH \in \Ds$. Moreover, we have $a \in HaH$ since $\veps \in H$. Hence, $\{HaH\}$ is a \Ds-cover of $\{a\}$ which means that $\opti{\Ds}{\{a\},\rho} \subseteq \prin{\rho}{\{HaH\}}$. Since $q \in \opti{\Ds}{\{a\},\rho}$, this yields $q \in \prin{\rho}{\{HaH\}}$ which means that,
  \[
    q \leq \rho(HaH) = \rho(H) \cdot \rho(a) \cdot \rho(H) = \ioptic{\rho} \cdot \rho(a) \cdot \ioptic{\rho}
  \]
  Since $t = \alpha(a)$ and $S$ is closed under downset by definition, it remains to show that $(\alpha(a), \ioptic{\rho} \cdot\rho(a)\cdot \ioptic{\rho})\in S$. This will imply as desired that $(t,q) \in S$.

  Since \Ds = \pol{\Cs}, Lemma~\ref{lem:polctoc} yields that $\ioptic{\rho} \in \opti{\Ds}{\{\veps\},\rho}$.   Moreover $\veps \in \alpha\inv(1_M) \cap L$ by definition of $L$. Hence, Fact~\ref{fct:linclus} yields that,
  \[
    \ioptic{\rho} \in \opti{\Ds}{\alpha\inv(1_M) \cap L,\rho}
  \]
  This exactly says that $(1_M,\ioptic{\rho}) \in \lratauxd(L)$ which is equal to $\ioptic{\lratauxd}$ by definition of $L$. Moreover, we know that $\ioptic{\lratauxd} \subseteq S$ by definition of $S$. Hence, we obtain that, $(1_M,\ioptic{\rho}) \in S$. Finally, we also know that $(\alpha(a),\rho(a)) \in S$ and that $S$ is closed under multiplication. Altogether, this yields,
  \[
    (\alpha(a), \ioptic{\rho} \cdot\rho(a)\cdot \ioptic{\rho})\in S
  \]
  This concludes the proof of Lemma~\ref{lem:pollet}.
\end{proof}

\section{Omitted proofs in Section~\ref{sec:bpolg}}
\label{app:bpolg}
This appendix is devoted to the proof of Theorem~\ref{thm:bpolg}. Let us first recall the statement. Recall that a \vari of group languages \Cs is fixed.

\adjustc{thm:bpolg}
\begin{theorem} 	Let $\rho: 2^{A^*} \to R$ be a \nice \mratm. Then, \bpoluopti is the greatest \bpol{\Cs}-complete subset of~$R$.
\end{theorem}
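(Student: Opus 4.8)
The plan is to establish the two inclusions separately: (a) that $\bpoluopti$ is itself $\bpol{\Cs}$-complete for $\rho$, hence contained in the greatest such subset, and (b) that every $\bpol{\Cs}$-complete subset $S$ of $R$ satisfies $S\subseteq\bpoluopti$, hence that $\bpoluopti$ is the greatest one. Throughout, write $\Gs=\pol{\Cs}$, so that $\bpol{\Cs}=\bool{\Gs}$ and, by Theorem~\ref{thm:polclos}, $\Gs$ is a positive variety closed under concatenation and marked concatenation; by Lemma~\ref{lem:copelbrat}, $\bratauxbc$ is $\Gs$-tame with associated endomorphism $\quasi{\bratauxbc}(T)=\dclosr T$. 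Fix a $\Cs$-optimal $\veps$-approximation $L$ for $\bratauxbc$: then $L$ is a group language with $\veps\in L$, and $\bpoluopti=\bratauxbc(L)=\opti{\bpol{\Cs}}{L,\rho}$. A preliminary step, shared by both directions, pins down what the auxiliary maps compute: a short calculation from $\Gs$-tameness of $\bratauxbc$ and from $\bratauxbc(a)=\dclosr\{\rho(a)\}$ (Fact~\ref{fct:theletters}) shows that, for any $S\subseteq R$ and any word $w=a_1\cdots a_\ell$, the downset of the second component of $\bwmrats(w)$ is tightly related to $\opti{\bpol{\Cs}}{La_1L\cdots a_\ell L,\rho}$, while by Lemma~\ref{lem:sepepswit} the element $\ioptic{\bwmrats}$ keeps exactly those words $w$ from which $\{\veps\}$ is not $\Cs$-separable.

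For direction (a), I would take $s\in\bpoluopti=\opti{\bpol{\Cs}}{L,\rho}$ and unfold it through the generic ``upper bound'' lemma for Boolean closure of~\cite{pzbpolc}, which expresses an element of a $\bool{\Gs}$-optimal imprint as a finite sum of $\Gs$-optimal imprint elements over sub-languages that are $\Gs$-indistinguishable. Since $L$ is a group language with $\veps\in L$, the pumping Lemma~\ref{lem:half:pump} covers those sub-languages by ones of the shape $La_1L\cdots a_\ell L$, whose $\Gs=\pol{\Cs}$-optimal imprints are computed by Theorem~\ref{thm:half:gmainpolc} and, by the dictionary above, are contained in the corresponding sets $\dclosr U_i$. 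This yields pairs $(r_i,U_i)$ meeting~\eqref{eq:bpolcarac}; that they lie in $\ioptic{\bwmrat{\bpoluopti}}$ follows because the $\Gs$-indistinguishability condition translates, through Lemma~\ref{lem:sepepswit} applied to the nice \mratm $\bwmrat{\bpoluopti}$, into non-$\Cs$-separability of $\{\veps\}$ from the relevant preimages.

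For direction (b), I would fix a $\bpol{\Cs}$-complete subset $S$, an element $s\in S$, and an \emph{arbitrary} $\bpol{\Cs}$-cover $\Kb$ of $L$, aiming to exhibit some $K\in\Kb$ with $s\leq\rho(K)$; this gives $s\in\opti{\bpol{\Cs}}{L,\rho}=\bpoluopti$. By $\bpol{\Cs}$-completeness there are $(r_1,U_1),\dots,(r_k,U_k)\in\ioptic{\bwmrats}$ with $s\leq r_1+\cdots+r_k$ and $r_1+\cdots+r_k\in\dclosr U_i$ for all $i$; by Lemma~\ref{lem:sepepswit} each $(r_i,U_i)$ is realized by words from which $\{\veps\}$ is not $\Cs$-separable, and this information is fed --- together with the second, ``extraction''-type generic lemma for Boolean closure of~\cite{pzbpolc}, the group-pumping Lemma~\ref{lem:half:pump}, and the $\pol{\Cs}$-characterization of Theorem~\ref{thm:half:gmainpolc} --- to build the required block $K$. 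I expect this direction to be the main obstacle: passing from the purely algebraic fixpoint data $(r_i,U_i)\in\ioptic{\bwmrats}$ to a concrete member of an arbitrary $\bpol{\Cs}$-cover forces one to reduce the infinitely many relevant sub-languages, via Lemma~\ref{lem:half:pump}, to finitely many products $La_1L\cdots a_\ell L$, and --- as flagged in Remark~\ref{rem:bpol} --- to apply Theorem~\ref{thm:half:gmainpolc} to auxiliary \mratms that are \emph{not} nice (these arise when running the $\pol{\Cs}$-analysis on objects built from $\bratauxbc$), precisely the generality in which that theorem was proved. Direction (a) is comparatively routine once this dictionary and the upper-bound lemma of~\cite{pzbpolc} are available.
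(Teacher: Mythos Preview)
Your overall decomposition and your treatment of direction~(a) are aligned with the paper. The paper's ``completeness'' direction also invokes a generic result from~\cite{pzbpolc} (restated here as Proposition~\ref{prop:bpol:comp1}) that, for $s\in\opti{\bpol{\Cs}}{L,\rho}$, produces $r_1,\dots,r_k$ with $(r_i,\{r_1+\cdots+r_k\})\in\lrataux{\pol{\Cs}}{\rho_*}{\bratauxbc}(L)$ and $s\leq r_1+\cdots+r_k$; a group-specific step (Proposition~\ref{prop:bpol:comp2}), proved via Lemma~\ref{lem:half:pump} and the $\pol{\Cs}$-tameness of $\bratauxbc$, then converts these into elements of $\ioptic{\bwmrat{\bpoluopti}}$ satisfying~\eqref{eq:bpolcarac}. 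Your ``dictionary'' paragraph captures this correctly.

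For direction~(b), however, your plan has a genuine structural gap. You propose to fix an arbitrary $\bpol{\Cs}$-cover $\Kb$ of $L$ and directly produce $K\in\Kb$ with $s\leq\rho(K)$. But the hypothesis ``$S$ is $\bpol{\Cs}$-complete'' is \emph{self-referential}: the witnesses $(r_i,U_i)\in\ioptic{\bwmrats}$ have each $U_i$ of the shape $S\cdot\{\rho(a_1)\}\cdot S\cdots S\cdot\{\rho(a_\ell)\}\cdot S$, so the condition $r_1+\cdots+r_k\in\dclosr U_i$ only bounds the sum by products of elements of~$S$ again, and you must unfold once more. A one-shot construction of $K$ cannot cut this recursion; it has to be unwound inductively. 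The paper does this with an \emph{infinite tower} of auxiliary rating maps $\tau_0,\tau_1,\dots$ (each $\tau_n=\lrataux{\pol{\Cs}}{\rho_*}{\tau_{n-1}}$) and maps $f_n:Q_n\to 2^R$: the generic lemma from~\cite{pzbpolc} you allude to is $\bigcap_n f_n(\tau_n(L))\subseteq\opti{\bpol{\Cs}}{L,\rho}$ (Proposition~\ref{prop:sound1}), while the new, group-specific content is an \emph{induction on $n$} proving $S\subseteq f_n(\ioptic{\tau_n})$ for all $n$ (Proposition~\ref{prop:sound2}, via Lemma~\ref{lem:soundopti}). Note also that the $\tau_n$ are only \emph{quasi}-multiplicative (Lemma~\ref{lem:squasitame}), so Theorem~\ref{thm:half:gmainpolc} does not apply to them; the inductive step manipulates $\opti{\pol{\Cs}}{\cdot,\tau_{n-1}}$ directly through quasi-multiplicativity (Lemma~\ref{lem:qmult}) and Lemma~\ref{lem:polctoc}, rather than via the $\pol{\Cs}$-characterization you invoke.
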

\restorec

We fix a \nice \mratm $\rho: 2^{A^*} \to R$ for the proof. As expected, the argument involves two independent directions.
\begin{itemize}
\item First, we show that for every \bpol{\Cs}-complete subset $S \subseteq R$, the inclusion $S \subseteq \bpoluopti$ holds. This corresponds to soundness in the greatest fixpoint procedure computing \bpoluopti (it only computes elements in \bpoluopti).
\item Then, we show that \bpoluopti itself is \bpol{\Cs}-complete. This corresponds to completeness in the greatest fixpoint procedure computing \bpoluopti (it computes all elements in \bpoluopti).
\end{itemize}

Both directions reuse material which was originally introduced in~\cite{pzbpolc}. We start with soundness.

\subsection{Soundness}

We have to show that for every \bpol{\Cs}-complete subset $S \subseteq R$, the inclusion $S \subseteq \bpoluopti$ holds. We first introduce an infinite sequence of auxiliary \ratms (defined from $\rho$). We then use them to formulate two independent statements. When put together they imply the desired result.

\smallskip

Using induction, we define a \ratm $\tau_n: 2^{A^*} \to Q_n$ for every $n \in \nat$. When $n = 0$, the rating set $Q_0$ is $(2^R,\cup)$ and $\tau_0$ is defined as follows,
\[
  \begin{array}{llll}
    \tau_0: & (2^{A^*},\cup) & \to     & (2^R,\cup)                    \\
            & K       & \mapsto & \{\rho(w) \mid w \in K\}
  \end{array}
\]
It is immediate by definition that $\tau_0$ is indeed a \ratm (i.e. a monoid morphism).

Assume now that $n \geq 1$ and that $\tau_{n-1}: 2^{A^*} \to Q_{n-1}$ is defined. Recall that $\rho_*:  A^* \to R$ denotes the canonical monoid morphism associated to the \mratm $\rho$. We define $\tau_n$ as the \ratm \lrataux{\pol{\Cs}}{\rho_*}{\tau_{n-1}} (see Section~\ref{sec:units} for the definition). In particular, this means that for all $n \geq 1$, the rating set $Q_n$ of $\tau_n$ is $(2^{R \times Q_{n-1}},\cup)$.

We complete this definition with maps $f_n: Q_n \to 2^R$ for $n \in \nat$. In this case as well, we use induction on $n$.
\begin{itemize}
\item For $n = 0$, let $T \in Q_0= 2^R$. Given $s \in R$, we have $s \in f_0(T)$ if there exists $r_1,\dots,r_k \in T$ such that $s \leq r_1 + \cdots + r_k$.
\item For $n \geq 1$, let $T \in Q_n = 2^{R \times Q_{n-1}}$. Given $s \in R$, we have $s \in f_n(T)$ if there exists  $(r_1,T_1),\dots,(r_k,T_k) \in T$ such that  $s \leq r_1 + \cdots + r_k$ and $r_1 + \cdots + r_k \in f_{n-1}(T_i)$ for every $i \leq k$.
\end{itemize}
We have the following simple fact which is immediate from the definition.

\begin{fact} \label{fct:bpol:fmapinc}
  For every $n \in \nat$ and $T,T' \in Q_n$ such that $U \subseteq U'$, we have $f_n(T) \subseteq f_n(T')$.
\end{fact}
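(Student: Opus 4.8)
The plan is to prove this by a straightforward induction on $n$, though in fact the induction hypothesis is barely used: membership in $f_n(T)$ is always certified by finitely many elements of $T$, subject to conditions that refer only to those elements (and, when $n \geq 1$, to the fixed map $f_{n-1}$ evaluated at their second components). Enlarging $T$ to $T'$ keeps every such certificate available and leaves every condition unchanged, so the containment is immediate. (Here I read the hypothesis of the statement as $T \subseteq T'$.)

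First I would dispatch the base case $n = 0$. Here $Q_0 = 2^R$, and $s \in f_0(T)$ means there are $r_1,\dots,r_k \in T$ with $s \leq r_1 + \cdots + r_k$. If $T \subseteq T'$, the very same elements $r_1,\dots,r_k$ lie in $T'$ and witness $s \in f_0(T')$; hence $f_0(T) \subseteq f_0(T')$.

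For the inductive step $n \geq 1$, suppose $T \subseteq T'$ in $Q_n = 2^{R \times Q_{n-1}}$ and take $s \in f_n(T)$. By definition there exist $(r_1,T_1),\dots,(r_k,T_k) \in T$ with $s \leq r_1 + \cdots + r_k$ and $r_1 + \cdots + r_k \in f_{n-1}(T_i)$ for all $i \leq k$. Since $T \subseteq T'$, each pair $(r_i,T_i)$ belongs to $T'$, and the two conditions above involve only the $r_i$ and the sets $f_{n-1}(T_i)$, which are not affected by passing from $T$ to $T'$. Therefore $s \in f_n(T')$, giving $f_n(T) \subseteq f_n(T')$ and completing the induction.

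There is no genuine obstacle here; the only point requiring care is parsing the nested definition of $f_n$ correctly, and noticing that — in contrast to arguments where one would inflate the inner sets $T_i$ and thus invoke monotonicity of $f_{n-1}$ — we are only enlarging the outer set, so the $T_i$ literally do not change and the induction hypothesis is in fact not needed. The whole argument therefore reduces to unfolding definitions and transferring witnesses.
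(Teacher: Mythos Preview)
Your proof is correct and matches the paper's approach: the paper simply declares the fact ``immediate from the definition'' without spelling anything out, and what you wrote is exactly the unfolding that justifies this. Your observation that the induction hypothesis is not actually needed (since enlarging the outer set $T$ leaves the witnesses $(r_i,T_i)$ and the inner conditions untouched) is also correct, as is your reading of the hypothesis as $T \subseteq T'$ rather than the stated $U \subseteq U'$, which is a typo in the paper.
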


This concludes our preliminary definitions. We now come back to the soundness proof. Using the objects that we just defined, we present two propositions. The first one is taken from~\cite{pzbpolc} while the second is new and specific to our setting.

\begin{proposition} \label{prop:sound1}
  Consider a language $L \in \pol{\Cs}$. Then, the following inclusion holds,
  \[
    \bigcap_{n \in \nat} f_n(\tau_n(L)) \subseteq \opti{\bpol{\Cs}}{L,\rho}
  \]
\end{proposition}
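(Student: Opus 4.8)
The plan is to unfold the definition of the optimal $\bpol{\Cs}$-imprint and reduce the statement to a property of a single, arbitrary cover. Recall that $\opti{\bpol{\Cs}}{L,\rho}=\bigcap_{\Kb}\prin{\rho}{\Kb}$, the intersection ranging over all $\bpol{\Cs}$-covers $\Kb$ of $L$ (this is immediate from the definition of an optimal cover). Hence, fixing $s\in\bigcap_{n\in\nat}f_n(\tau_n(L))$ and an arbitrary $\bpol{\Cs}$-cover $\Kb$ of $L$, it suffices to produce some $K\in\Kb$ with $s\le\rho(K)$, i.e.\ to show $s\in\prin{\rho}{\Kb}$.

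First I would \emph{linearize} the cover. Since $\bpol{\Cs}=\bool{\pol{\Cs}}$, every $K\in\Kb$ is a Boolean combination of finitely many languages of $\pol{\Cs}$; gathering all of these gives a single finite set $\Vs=\{V_1,\dots,V_m\}\subseteq\pol{\Cs}$ such that each $K\in\Kb$ is a union of \emph{$\Vs$-atoms}, namely of the languages $A_\sigma=\bigcap_{i\in\sigma}V_i\cap\bigcap_{i\notin\sigma}(A^*\setminus V_i)$ for $\sigma\subseteq\{1,\dots,m\}$. The $A_\sigma$ partition $A^*$; since $\Kb$ covers $L$ and every member is a union of atoms, for each $\sigma$ with $L\cap A_\sigma\ne\emptyset$ there is $K_\sigma\in\Kb$ with $L\cap A_\sigma\subseteq K_\sigma$. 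Therefore it is enough to exhibit one $\sigma$ with $L\cap A_\sigma\ne\emptyset$ and $s\le\rho(L\cap A_\sigma)$: then $s\le\rho(L\cap A_\sigma)\le\rho(K_\sigma)$, as required.

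The heart of the argument is then the following claim, where the tower $(\tau_n)_{n\in\nat}$ finally enters: for every finite $\Vs\subseteq\pol{\Cs}$ there is an index $n$ (depending only on $\Vs$ and $L$) such that every $t\in f_n(\tau_n(L))$ satisfies $t\le\rho(L\cap A_\sigma)$ for some $\Vs$-atom with $L\cap A_\sigma\ne\emptyset$. This is proved by induction on the total number of marked concatenations occurring in fixed $\pol{\Cs}$-expressions for $V_1,\dots,V_m$. The base case $\Vs=\emptyset$ (a single atom $A^*$) amounts to $f_0(\tau_0(L))\subseteq\dclosr\{\rho(L)\}$, which holds precisely because $\rho$ is \nice: $\tau_0(L)=\{\rho(w)\mid w\in L\}$, any finite sum of its elements is $\le\rho(L)$ by idempotency, and $\rho(L)$ itself is such a finite sum. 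For the inductive step one writes a language of $\Vs$ as $\bigcup_j U_0^{(j)}a_1^{(j)}U_1^{(j)}\cdots a_{l_j}^{(j)}U_{l_j}^{(j)}$ with all $U_i^{(j)}\in\Cs$; membership of a word in such a language is witnessed by a factorization that peels off one letter $a_i^{(j)}$ and splits the word into two strictly simpler $\pol{\Cs}$-languages, and the construction $\tau_n=\lrataux{\pol{\Cs}}{\rho_*}{\tau_{n-1}}$ together with the description of $\pol{\Cs}$-optimal imprints furnished by Theorem~\ref{thm:half:gmainpolc} guarantees that each peeling step consumes exactly one additional level of the tower; a bookkeeping argument then bounds $n$ by the total degree. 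Feeding into this claim the set $\Vs$ extracted from $\Kb$, and using $s\in f_n(\tau_n(L))$ for that particular $n$, produces the required atom and finishes the proof.

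The inductive step of the claim is the main obstacle: one must control how the $\Vs$-atom partition is refined as marked concatenations are removed and match this refinement level by level against $\tau_0,\tau_1,\dots$, so that a single finite $n$ works uniformly over all $t\in f_n(\tau_n(L))$. This is exactly the delicate part established in~\cite{pzbpolc} (the corresponding soundness proposition there); in the present setting the only extra point to verify is that the $\pol{\Cs}$-layer used at each level of the tower behaves as needed, which is ensured by Theorem~\ref{thm:polclos} and Theorem~\ref{thm:half:gmainpolc}. In particular, unlike the completeness direction treated later, this inclusion does not use the hypothesis that $\Cs$ consists of group languages.
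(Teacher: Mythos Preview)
Your overall reduction---linearize $\Kb$ to atoms of a finite $\Vs\subseteq\pol{\Cs}$, then show some level $n$ of the tower suffices for these atoms---is the right shape and matches what the paper cites from~\cite[Proposition~8.2]{pzbpolc}. The paper also stresses that the result is \emph{generic}: it holds with $\pol{\Cs}$ replaced by an arbitrary lattice $\Gs$ throughout, so no feature specific to polynomial closure (marked concatenations in particular) should appear in the argument.

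The gap is in your inductive mechanism. By definition $\tau_n=\lrataux{\pol{\Cs}}{\rho_*}{\tau_{n-1}}$ refines along $\rho_*$-fibers: an element of $\tau_n(L)$ is a pair $(r,T)$ with $T\in\opti{\pol{\Cs}}{\rho_*^{-1}(r)\cap L,\tau_{n-1}}$. Descending one level means restricting to a fiber $\rho_*^{-1}(r)\cap L$ and taking a $\pol{\Cs}$-optimal $\tau_{n-1}$-imprint there; there is no ``peeling of a marked letter'' in this step, and in the generic version (arbitrary lattice $\Gs$) there are no marked concatenations to peel. Your induction parameter is therefore unavailable in the very setting where the proposition is actually established. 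The correct argument instead uses only that each $V_j$ lies in $\pol{\Cs}$ (generically, in $\Gs$) and can be fed directly into the covers defining the tower; the depth needed is governed by $\Vs$ as a set, not by the internal syntax of its members. Relatedly, invoking Theorem~\ref{thm:half:gmainpolc} is inconsistent with your own (correct) final remark that this direction does not use the group hypothesis: that theorem is specific to group-based $\Cs$ and in any case characterises $\pocopti$ algebraically rather than relating anything to the tower $(\tau_n)_n$.
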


\begin{proposition} \label{prop:sound2}
  Let $S \subseteq R$ which is \bpol{\Cs}-complete for $\rho$. Then,
  \[
    S \subseteq \bigcap_{n \in \nat} f_n(\ioptic{\tau_n})
  \]
\end{proposition}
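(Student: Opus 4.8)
The plan is to establish the finer, level-by-level statement $S \subseteq f_n(\ioptic{\tau_n})$ for every $n \in \nat$ by induction on $n$; the proposition then follows by intersecting over $n$. Before starting, fix a single $\Cs$-optimal \iden $L$ that serves simultaneously for $\bwmrats$ and for $\tau_0,\dots,\tau_n$: Lemma~\ref{lem:epswit} yields a $\Cs$-optimal \iden for each of these finitely many \ratms, and since $\Cs$ is a lattice their intersection lies in $\Cs$, contains $\veps$, and is below each of them; by $\Cs$-optimality of each individual \iden this forces the intersection to be optimal for all of them, so $\ioptic{\bwmrats} = \bwmrats(L)$ and $\ioptic{\tau_m} = \tau_m(L)$ for all $m \le n$. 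Recall also that $\bwmrats$ is a \nice \mratm and its values on words are singletons, so $\ioptic{\bwmrats} = \{\bwmrats(w) \mid w \in L\}$, and for $w = a_1 \cdots a_p \in L$ we have $\bwmrats(w) = \{(\rho(w), V_w)\}$ with $V_w = (S\{\rho(a_1)\}S)\cdots(S\{\rho(a_p)\}S)$ (and $V_\veps = \{1_R\}$).

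For the base case $n=0$, I would use that the first-coordinate projection $\pi_1 \colon 2^{R \times 2^R} \to 2^R$ is a semiring morphism and that $\pi_1 \circ \bwmrats$ and $\tau_0$ are \nice \mratms agreeing on letters, hence equal. Applying $\pi_1$ to $\ioptic{\bwmrats} = \bwmrats(L)$ gives $\pi_1(\ioptic{\bwmrats}) = \tau_0(L) = \ioptic{\tau_0}$. Then, given $s \in S$, \bpol{\Cs}-completeness provides $(r_1,U_1),\dots,(r_k,U_k) \in \ioptic{\bwmrats}$ with $s \le r_1 + \cdots + r_k$; projecting, $r_1,\dots,r_k \in \ioptic{\tau_0}$, so $s \in f_0(\ioptic{\tau_0})$ straight from the definition of $f_0$.

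For the inductive step, assume $S \subseteq f_{n-1}(\ioptic{\tau_{n-1}})$ and take $s \in S$ with witnesses $(r_i, U_i) \in \ioptic{\bwmrats}$ as above; set $q = r_1 + \cdots + r_k$, so that $s \le q$ and $q \in \dclosr U_i$ for every $i$. Write $(r_i,U_i) = \bwmrats(w_i)$ with $w_i \in L$, so $r_i = \rho(w_i)$, $U_i = V_{w_i}$, and $w_i$ belongs to the regular language $H_i := \rho_*\inv(r_i) \cap L$; note that $L$ is a group language containing $\veps$. By construction of $\tau_n$ as the nested \ratm $\lrataux{\pol{\Cs}}{\rho_*}{\tau_{n-1}}$, we have $\ioptic{\tau_n} = \tau_n(L) = \{(r,T) \mid T \in \opti{\pol{\Cs}}{\rho_*\inv(r) \cap L,\tau_{n-1}}\}$, so it suffices to produce, for each $i$, some $T_i \in \opti{\pol{\Cs}}{H_i,\tau_{n-1}}$ with $q \in f_{n-1}(T_i)$: the pairs $(r_i,T_i)$ then witness $s \in f_n(\ioptic{\tau_n})$. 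I would build $T_i$ by choosing an optimal $\pol{\Cs}$-cover of $H_i$ for $\tau_{n-1}$, taking the member $K_i$ containing $w_i$, and setting $T_i = \tau_{n-1}(K_i)$. To check $q \in f_{n-1}(T_i)$, one expands $q \le \sigma_0\, \rho((w_i)_1)\, \sigma_1 \cdots \rho((w_i)_m)\, \sigma_m$ (using $q \in \dclosr V_{w_i}$), with $\sigma_0,\sigma_m \in S$ and the interior $\sigma_l \in S\cdot S$, feeds the $\sigma_l$ into the inductive hypothesis $S \subseteq f_{n-1}(\ioptic{\tau_{n-1}})$, and matches this factorisation of $q$ against the structure of $T_i = \tau_{n-1}(K_i)$ by realising $K_i$ (or a cover member containing $w_i$) with the group-pumped shape $L(w_i)_1 L \cdots (w_i)_m L$ via Lemma~\ref{lem:half:pump}, using that $\tau_{n-1}$ is quasi-\tame (Lemma~\ref{lem:copelrat}), that imprints are downset-closed, and that $f_{n-1}$ is monotone (Fact~\ref{fct:bpol:fmapinc}).

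The main obstacle is exactly this last verification $q \in f_{n-1}(T_i)$, and it is the part genuinely specific to the group setting: $T_i$ is forced to be an element of a \emph{truly optimal} $\pol{\Cs}$-imprint of $H_i$, while all information we hold on $q$ flows through the second component $V_{w_i}$, whose alternating shape $S\,\rho((w_i)_1)\,S\cdots S\,\rho((w_i)_m)\,S$ was designed to mirror the shape $L(w_i)_1 L \cdots (w_i)_m L$ of the cover members produced by the group-pumping lemma. Making this match precise — choosing the optimal cover and the member $K_i$ compatibly with the factorisation of $w_i$, and absorbing the fact that an arbitrary \bpol{\Cs}-complete $S$ need not be closed under multiplication, so the interior factors of $V_{w_i}$ only lie a priori in $S\cdot S$ — is where the argument has to be carried out with care.
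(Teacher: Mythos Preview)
Your overall structure (induction on $n$, unfolding the \bpol{\Cs}-completeness to get witnesses $(r_i,U_i)$, and reducing the inductive step to producing $T_i \in \opti{\pol{\Cs}}{H_i,\tau_{n-1}}$ with $q \in f_{n-1}(T_i)$) matches the paper, and your base case is fine. The gap is precisely the step you flag as the ``main obstacle'': your two proposed constructions of $T_i$ are incompatible. Taking $K_i$ from an \emph{optimal} \pol{\Cs}-cover of $H_i$ gives $T_i = \tau_{n-1}(K_i) \in \opti{\pol{\Cs}}{H_i,\tau_{n-1}}$, but you then have no control over the shape of $K_i$, so there is no way to match $U_i = S\{\rho(a_1)\}S^2\cdots S^2\{\rho(a_m)\}S$ against $\tau_{n-1}(K_i)$. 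Switching instead to the Lemma~\ref{lem:half:pump} cover gives members of the desired shape $L a_1 L \cdots a_m L$, but that cover has no reason to be optimal, so $\tau_{n-1}(K_i)$ need not lie in $\opti{\pol{\Cs}}{H_i,\tau_{n-1}}$ and your $(r_i,T_i)$ would not land in $\ioptic{\tau_n}$. You cannot have both at once via covers of~$H_i$.

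The paper sidesteps this tension entirely: it does not pick $T_i$ from any cover of $H_i$. Instead it \emph{defines} $T_i$ explicitly as the product
\[
  T_i \;=\; \quasi{\tau_{n-1}}\bigl(V \cdot \tau_{n-1}(a_1)\cdot V^2 \cdot \tau_{n-1}(a_2)\cdots V^2 \cdot \tau_{n-1}(a_m)\cdot V\bigr),\qquad V = \ioptic{\tau_{n-1}},
\]
with the $V^2$ factors deliberately inserted to mirror the $S^2$ factors of $U_i$. Membership $T_i \in \opti{\pol{\Cs}}{H_i,\tau_{n-1}}$ is then obtained \emph{not} through a cover of $H_i$ but through the singleton $\{w_i\}$: Lemma~\ref{lem:polctoc} gives $V \in \opti{\pol{\Cs}}{\{\veps\},\tau_{n-1}}$, trivially $\tau_{n-1}(a_j) \in \opti{\pol{\Cs}}{\{a_j\},\tau_{n-1}}$, and repeated application of Lemma~\ref{lem:qmult} (quasi-multiplicativity of $\tau_{n-1}$) yields $T_i \in \opti{\pol{\Cs}}{\{w_i\},\tau_{n-1}}$; then $\{w_i\} \subseteq H_i$ and Fact~\ref{fct:linclus} push this up to $H_i$. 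Since the shape of $T_i$ is now under your control, the inductive hypothesis $S \subseteq f_{n-1}(V)$ together with Lemma~\ref{lem:bpol:fmapmult} and Lemma~\ref{lem:fmapmu} give $U_i \subseteq f_{n-1}(T_i)$ directly, and the $S\cdot S$ issue you worried about is absorbed automatically by the $V^2$ factors. This is the missing idea: build $T_i$ by hand as a product of $\ioptic{\tau_{n-1}}$'s and $\tau_{n-1}(a_j)$'s, and certify optimality via $\{w_i\}$ rather than via a cover of $H_i$.
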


When combined, the two propositions imply the desired result: for every \bpol{\Cs}-complete subset $S \subseteq R$, the inclusion $S \subseteq \bpoluopti$ holds.  Indeed, consider a \bpol{\Cs}-complete subset $S \subseteq R$. We let $L \subseteq A^*$ as a \Cs-optimal \iden for \bratauxbc. That is, we have $L \in \Cs$, $\veps \in L$ and $\bratauxbc(L) = \bpoluopti$. Consequently, it now suffices to show that $S \subseteq \bratauxbc(L)$. Since $L \in \Cs$ and $\veps \in L$, it is immediate that $\ioptic{\tau_n} \subseteq \tau_n(L)$ for every $n \in \nat$. Thus, by Fact~\ref{fct:bpol:fmapinc}, $f_n(\ioptic{\tau_n}) \subseteq f_n(\tau_n(L))$ for every $n \in \nat$. Consequently, Proposition~\ref{prop:sound1} yields that,
\[
  \bigcap_{n \in \nat} f_n(\ioptic{\tau_n}) \subseteq \opti{\bpol{\Cs}}{L,\rho} = \bratauxbc(L)
\]
Finally, since $S$ is \bpol{\Cs}-complete by hypothesis, we may combine this inclusion with the one given by Proposition~\ref{prop:sound2} to obtain that $S \subseteq \bpoluopti$ as desired. This concludes the soundness proof. It now remains to prove the two propositions.

Proposition~\ref{prop:sound1} is proved in~\cite[Proposition~8.2]{pzbpolc}. This is actually a generic result which holds for any class built using Boolean closure (Proposition~\ref{prop:sound1} remains true when replacing \pol{\Cs} by an arbitrary lattice \Gs in the definition of the \ratms $\tau_n$ and in the statement). Here we focus on proving Proposition~\ref{prop:sound2} which is specific to our setting: we require the hypothesis that we are considering the class \bpol{\Cs} for \Cs a \vari of group languages. The remainder of this subsection is devoted to this proof.

\medskip
\noindent
{\bf Preliminaries.} We start by presenting a few additional properties of the \ratms $\tau_n$ (again they are taken from~\cite{pzbpolc}). We know that these \ratms are quasi-\tame.

First observe that since $R$ is a semiring, it is immediate that we may lift its multiplication inductively to all rating sets $Q_n$ in the natural way. This yields semirings $(Q_n,\cup,\cdot)$ for all $n \in \nat$. We shall rely heavily on these multiplications. However, it is important to keep in mind that except for $\tau_0$, the \ratms $\tau_n$ are \textbf{not} \tame. Instead, there are only quasi-\tame. We state this property in the following lemma. The proof is a simple induction on $n \in \nat$ and is based on a generalization of Lemma~\ref{lem:copelrat} (see~\cite[Lemma~8.8]{pzbpolc})

\begin{lemma} \label{lem:squasitame}
  The \ratm $\tau_0: 2^{A^*} \to Q_0$ is \tame. Moreover, for every $n \geq 1$, the \ratm $\tau_n: 2^{A^*} \to Q_n$ is quasi-\tame and the associated endomorphism $\quasi{\tau_n}$ of $(Q_n,\cup)$ is as follows. For every $U \in Q_n = 2^{T \times Q_{n-1}}$,
  \[
    \quasi{\tau_n}(T) = \dclosp{Q_{n-1}} \{(r,\quasi{\tau_{n-1}}(V)) \mid (r,V) \in T\}
  \]
\end{lemma}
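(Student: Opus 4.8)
The plan is a routine induction on $n \in \nat$: the base case is a direct computation, and the inductive step follows from a generalization of Lemma~\ref{lem:copelrat}. For $n = 0$, equip $Q_0 = 2^R$ with the pointwise lift of the multiplication of $R$, so that $(2^R,\cup,\cdot)$ is an idempotent semiring. Then $\tau_0$ is a semiring morphism from $(2^{A^*},\cup,\cdot)$: indeed $\tau_0(\veps) = \{\rho(\veps)\} = \{1_R\}$ is the unit of $Q_0$ by axiom~\ref{itm:bgen:funit}, and for all $K_1,K_2 \subseteq A^*$,
\[
  \tau_0(K_1K_2) = \{\rho(uv) \mid u \in K_1,\ v \in K_2\} = \{\rho(u)\rho(v) \mid u \in K_1,\ v \in K_2\} = \tau_0(K_1) \cdot \tau_0(K_2)
\]
by axiom~\ref{itm:bgen:fmult}. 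Hence $\tau_0$ is \tame, and by Remark~\ref{rem:truem} it is quasi-\tame with $\quasi{\tau_0} = \mathrm{id}$; this is consistent with the displayed formula read at $n = 1$.

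For the inductive step, fix $n \geq 1$ and assume $\tau_{n-1}$ is quasi-\tame with endomorphism $\quasi{\tau_{n-1}}$ (which holds for $n - 1 = 0$ by the base case). By construction, $\tau_n = \lrataux{\pol{\Cs}}{\rho_*}{\tau_{n-1}}$, where $\rho_* : A^* \to R$ is a monoid morphism into $(R,\cdot)$ (since $\rho$ is \tame) and $\pol{\Cs}$ is a \pvari closed under concatenation by Theorem~\ref{thm:polclos}. I would then invoke the following generalization of Lemma~\ref{lem:copelrat}, proved as \cite[Lemma~8.8]{pzbpolc}: if $\Ds$ is a \pvari closed under concatenation, $\alpha : A^* \to M$ is a morphism, and $\sigma : 2^{A^*} \to Q$ is a quasi-\tame \ratm with endomorphism $\quasi{\sigma}$, then $\lrataux{\Ds}{\alpha}{\sigma}$ is quasi-\tame, with endomorphism sending $T \in 2^{M \times Q}$ to $\dclosp{Q}\{(s,\quasi{\sigma}(v)) \mid (s,v) \in T\}$. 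Instantiating $\Ds = \pol{\Cs}$, $\alpha = \rho_*$, $M = R$, $\sigma = \tau_{n-1}$ and $Q = Q_{n-1}$ yields exactly that $\tau_n$ is quasi-\tame with $\quasi{\tau_n}(T) = \dclosp{Q_{n-1}}\{(r,\quasi{\tau_{n-1}}(V)) \mid (r,V) \in T\}$, which completes the induction.

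The only substantial ingredient, and the step I expect to be the main obstacle, is the generalization of Lemma~\ref{lem:copelrat} quoted above; the induction itself is pure bookkeeping. I would establish it by mirroring the proof of Lemma~\ref{lem:copelrat}, verifying the three axioms of quasi-\tameness for $\lrataux{\Ds}{\alpha}{\sigma}$ with the candidate endomorphism $\nu$. The difference from the \tame case is that $\sigma$ is now only quasi-\tame, so wherever the original argument multiplies inner $Q$-components it must also apply $\quasi{\sigma}$; axiom~\ref{itm:ax1} for $\sigma$ is exactly what makes the result $\nu$-stable, while axioms~\ref{itm:ax2} and~\ref{itm:ax3} for $\sigma$, together with closure of $\Ds$ under concatenation (so that products of optimal $\Ds$-covers are $\Ds$-covers of the product language), give the identities of the form $\lrataux{\Ds}{\alpha}{\sigma}(HK) = \nu\big(\lrataux{\Ds}{\alpha}{\sigma}(H) \cdot \lrataux{\Ds}{\alpha}{\sigma}(K)\big)$ and the marked variant by decomposing $\alpha\inv(s) \cap HK$ over the factorizations $\alpha\inv(s_1) \cap H$, $\alpha\inv(s_2) \cap K$ with $s_1 s_2 = s$. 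As this computation is carried out in full in \cite{pzbpolc}, I would cite it rather than reproduce it here.
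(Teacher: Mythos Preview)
Your proposal is correct and matches the paper's own treatment exactly: the paper states that the proof is a simple induction on $n$ based on a generalization of Lemma~\ref{lem:copelrat}, citing \cite[Lemma~8.8]{pzbpolc}, which is precisely the argument you outline. Your added detail for the base case and the explanation of how the generalized lemma is instantiated are accurate and go slightly beyond what the paper spells out.
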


We complete this result with two lemmas which connect the hypothesis that the \ratms $\tau_n$ are quasi-\tame with the maps $f_n: Q_n \to 2^R$ (the proofs are straightforward and available in~\cite[Lemmas~8.9 and~8.10]{pzbpolc}).

\begin{lemma} \label{lem:fmapmu}
  For every $n \in \nat$ and $T \in Q_n$, we have $f_n(T) \subseteq f_n(\quasi{\tau_n}(T))$.
\end{lemma}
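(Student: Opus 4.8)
The proof will be a routine induction on $n \in \nat$, obtained by unwinding the definition of $f_n$ together with the explicit description of the endomorphism $\quasi{\tau_n}$ given by Lemma~\ref{lem:squasitame}.

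For the base case $n = 0$, I will use that $\tau_0$ is a genuine \mratm (Lemma~\ref{lem:squasitame}); hence by Remark~\ref{rem:truem} its associated endomorphism $\quasi{\tau_0}$ may be taken to be the identity on $Q_0 = 2^R$. Then $f_0(\quasi{\tau_0}(T)) = f_0(T)$ for every $T \in Q_0$, so the desired inclusion holds (in fact as an equality).

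For the inductive step, fix $n \geq 1$, assume the statement for $n - 1$, let $T \in Q_n = 2^{R \times Q_{n-1}}$, and let $s \in f_n(T)$. By definition of $f_n$ there are $(r_1,T_1), \dots, (r_k,T_k) \in T$ with $s \leq r_1 + \cdots + r_k$ and $r_1 + \cdots + r_k \in f_{n-1}(T_i)$ for every $i \leq k$. By Lemma~\ref{lem:squasitame}, $\quasi{\tau_n}(T) = \dclosp{Q_{n-1}}\{(r, \quasi{\tau_{n-1}}(V)) \mid (r,V) \in T\}$, so it contains each pair $(r_i, \quasi{\tau_{n-1}}(T_i))$. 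Applying the induction hypothesis to each $T_i$ turns $r_1 + \cdots + r_k \in f_{n-1}(T_i)$ into $r_1 + \cdots + r_k \in f_{n-1}(\quasi{\tau_{n-1}}(T_i))$. The pairs $(r_1, \quasi{\tau_{n-1}}(T_1)), \dots, (r_k, \quasi{\tau_{n-1}}(T_k))$ then witness, via the definition of $f_n$, that $s \in f_n(\quasi{\tau_n}(T))$, which closes the induction.

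There is essentially no obstacle in this argument; the only points requiring a moment of care are the base case (noting that for $n=0$ the endomorphism is the identity, so the statement becomes trivial) and the observation that the downset operator $\dclosp{Q_{n-1}}$ appearing in the formula for $\quasi{\tau_n}$ is harmless here, since the pairs we use to witness membership in $f_n(\quasi{\tau_n}(T))$ already belong to $\{(r, \quasi{\tau_{n-1}}(V)) \mid (r,V) \in T\}$ before closing downwards. One could alternatively streamline the step using the monotonicity of $f_n$ (Fact~\ref{fct:bpol:fmapinc}), but it is not needed, as the witnesses are produced explicitly.
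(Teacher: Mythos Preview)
Your proof is correct. The paper does not actually prove this lemma in-text; it defers to~\cite[Lemma~8.9]{pzbpolc}, calling the argument ``straightforward''. Your induction on $n$, using the explicit form of $\quasi{\tau_n}$ from Lemma~\ref{lem:squasitame} and the induction hypothesis on each $T_i$, is precisely the straightforward argument one expects here.
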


\begin{lemma} \label{lem:bpol:fmapmult}
  For every $n \in \nat$ and $U,U' \in Q_n$, we have $f_n(U) \cdot f_n(U') \subseteq  f_n(U \cdot U')$.
\end{lemma}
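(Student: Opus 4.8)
The plan is an induction on $n \in \nat$, unwinding the recursive definitions of the maps $f_n \colon Q_n \to 2^R$ and of the lifted multiplications on the semirings $(Q_n,\cup,\cdot)$. The two ingredients I will use throughout are that the canonical ordering on $R$ is compatible with both addition and multiplication (recalled in Section~\ref{sec:ratms}) and that multiplication distributes over addition; these transfer to every $Q_n$ since the semiring structures on the $Q_n$ are obtained by lifting componentwise.

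For the base case $n=0$, recall that $f_0(U)$ is the set of $s \in R$ with $s \leq r_1 + \cdots + r_k$ for some $r_1,\dots,r_k \in U$. Given $s \in f_0(U)$ and $s' \in f_0(U')$, fix witnessing sums $s \leq r_1 + \cdots + r_k$ with each $r_i \in U$ and $s' \leq r'_1 + \cdots + r'_m$ with each $r'_j \in U'$. By compatibility of $\leq$ with multiplication and by distributivity, $s \cdot s' \leq (r_1 + \cdots + r_k)(r'_1 + \cdots + r'_m) = \sum_{i \leq k,\ j \leq m} r_i r'_j$, and each $r_i r'_j$ lies in $U \cdot U'$ by definition of the lifted product; hence $s \cdot s' \in f_0(U \cdot U')$.

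For the inductive step $n \geq 1$, take $s \in f_n(U)$ and $s' \in f_n(U')$, and fix witnesses $(r_1,T_1),\dots,(r_k,T_k) \in U$ with $s \leq r := r_1 + \cdots + r_k$ and $r \in f_{n-1}(T_i)$ for all $i \leq k$, together with $(r'_1,T'_1),\dots,(r'_m,T'_m) \in U'$ with $s' \leq r' := r'_1 + \cdots + r'_m$ and $r' \in f_{n-1}(T'_j)$ for all $j \leq m$. The witnesses I will use for membership of $s \cdot s'$ in $f_n(U \cdot U')$ are the pairs $(r_i r'_j,\ T_i \cdot T'_j)$, which all belong to $U \cdot U'$. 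The inequality $s \cdot s' \leq r r' = \sum_{i,j} r_i r'_j$ is obtained exactly as in the base case. It remains to check that $r r' \in f_{n-1}(T_i \cdot T'_j)$ for every $i,j$: since $r \in f_{n-1}(T_i)$ and $r' \in f_{n-1}(T'_j)$, this is precisely what the induction hypothesis $f_{n-1}(T_i) \cdot f_{n-1}(T'_j) \subseteq f_{n-1}(T_i \cdot T'_j)$ yields. Thus $s \cdot s' \in f_n(U \cdot U')$, completing the induction.

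The argument has no real obstacle; the one point requiring care — the main bookkeeping issue — is that the single element $r r' = \sum_{i,j} r_i r'_j$ must serve simultaneously as the upper bound on $s \cdot s'$ and as the element forced to lie in each $f_{n-1}(T_i \cdot T'_j)$, so one must not replace it by different partial sums for different witnesses. Everything else is a direct computation using distributivity and monotonicity of the product.
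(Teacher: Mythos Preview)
Your proof is correct and follows the natural induction on $n$ that one would expect; the paper itself does not spell out a proof, merely calling it straightforward and deferring to~\cite[Lemma~8.10]{pzbpolc}, and your argument is precisely the intended one. The only remark is that your final paragraph about the ``main bookkeeping issue'' is accurate but perhaps overcautious: the definition of $f_n$ already forces a single common sum $r_1+\cdots+r_k$ to serve both roles, so there is no temptation to use different partial sums.
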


\medskip
\noindent
{\bf Proof of Proposition~\ref{prop:sound2}.} We fix a \bpol{\Cs}-complete subset $S\subseteq R$ for the proof. We have to show that for every $n \in \nat$, we have $S \subseteq f_n(\ioptic{\tau_n})$. We fix $n \in \nat$ and $s \in S$: we show that $s \in f_n(\ioptic{\tau_n})$. The proof is an induction on $n \in \nat$. We start with some definitions which are common to the base case ($n = 0$) and the inductive step ($n \geq 1$).

Since $S$ is \bpol{\Cs}-complete by hypothesis and $s \in S$, we have $(r_1,U_1),\dots,(r_k,U_k) \in \ioptic{\bwmrats}$ which satisfy~\eqref{eq:bpolcarac}: $s \leq r_1 + \cdots + r_k$ and $r_1 + \cdots + r_k \in \dclosr U_i$ for every $i \leq k$. We now consider two cases depending on whether $n = 0$ or $n \geq 1$.

\medskip
\noindent
{\bf Base case: $n=0$.} Let $L$ be a \Cs-optimal \iden for $\tau_0$: we have $L \in \Cs$, $\veps \in L$ and $\ioptic{\tau_0} = \tau_0(L)$. We now have to prove that $s \in f_0(\tau_0(L))$. We have $(r_i,U_i) \in \ioptic{\bwmrats}$ for every $i \leq k$. Since $L \in \Cs$ satisfies $\veps \in L$, we know that $\ioptic{\bwmrats} \subseteq \bwmrats(L)$ by definition. Consequently, we have $(r_i,U_i) \in \bwmrats(L)$ for every $i \leq k$.

By definition of \bwmrats, it follows that we have $w_1,\dots,w_k \in L$ such that $r_i = \rho(w_i)$ for every $i \leq k$. It follows that $r_1,\dots,r_k \in \tau_0(L)$ by definition of $\tau_0$. Hence, since $s \leq r_1 + \cdots + r_k$, we obtain that $s \in f_0(\tau_0(L))$ by definition of $f_0$ which concludes the proof in this case.

\medskip
\noindent
{\bf Inductive step: $n \geq 1$.} The argument is based on the following lemma which is proved using induction on $n$.

\begin{lemma} \label{lem:soundopti}
  For every $(r,U) \in \ioptic{\bwmrats}$, there exists $T \in Q_{n-1}$ such that $(r,T) \in \ioptic{\tau_n}$ and $\dclosr U \subseteq f_{n-1}(T)$.
\end{lemma}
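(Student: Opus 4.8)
The plan is to prove Lemma~\ref{lem:soundopti} by induction on $n \geq 1$, interleaved with the induction establishing Proposition~\ref{prop:sound2}; thus while proving the lemma at index $n$ we may use Proposition~\ref{prop:sound2} at index $n-1$ (which, in the base case $n=1$, is the instance at $n=0$ already proved directly in the main text).

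First I would unpack the hypothesis. Since $\bwmrats$ is a \nice \mratm, Lemma~\ref{lem:sepepswit} says that $(r,U)\in\ioptic{\bwmrats}$ holds precisely when $\{\veps\}$ is \emph{not} $\Cs$-separable from the (regular) language $H$ of words $w$ with $(\bwmrats)_*(w)=\{(r,U)\}$; by the definition of $\bwmrats$ on letters, such a word $w=b_1\cdots b_m$ satisfies $\rho(w)=r$ and $U=(S\rho(b_1)S)\cdots(S\rho(b_m)S)$ (with $r=1_R$, $U=\{1_R\}$ when $m=0$). Now let $L$ be a $\Cs$-optimal \iden for $\tau_n$, so $L\in\Cs$, $\veps\in L$, $\tau_n(L)=\ioptic{\tau_n}$, and $L$ is a group language because $\Cs$ consists of group languages. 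Since $\{\veps\}$ is not $\Cs$-separable from $H$ and $L\in\Cs$ contains $\veps$, the language $L$ fails to separate $\{\veps\}$ from $H$, so there is a word $w=b_1\cdots b_m\in L\cap H$; in particular $\rho(w)=r$, $w\in L$, and $U=(S\rho(b_1)S)\cdots(S\rho(b_m)S)$.

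Next I would fix a $\Cs$-optimal \iden $M$ for $\tau_{n-1}$ (again a group language containing $\veps$, with $\tau_{n-1}(M)=\ioptic{\tau_{n-1}}$) and take $T$ to be the $\tau_{n-1}$-value of a suitable $\pol{\Cs}$-language built from $w$ and $M$, namely (a bounded-length variant of) $Mb_1Mb_2M\cdots b_mM$, with enough $M$-blocks inserted to absorb the $S$-factors of $U$. Two things then have to be verified. \emph{(a)} $(r,T)\in\ioptic{\tau_n}$: unfolding $\ioptic{\tau_n}=\tau_n(L)=\{(s,V)\mid V\in\opti{\pol{\Cs}}{\rho_*\inv(s)\cap L,\tau_{n-1}}\}$, this means $T\in\opti{\pol{\Cs}}{\rho_*\inv(r)\cap L,\tau_{n-1}}$. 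Since $w\in\rho_*\inv(r)\cap L$ and $M$ is a group language with $\veps\in M$, this is obtained by re-running the pumping argument of Lemma~\ref{lem:half:pump} (and the description of $\pol{\Cs}$-optimal imprints over group bases underlying Theorem~\ref{thm:half:gmainpolc}, in its quasi-\tame form from \cite{pzbpolc}): $\rho_*\inv(r)\cap L$ is covered by languages $Mc_1M\cdots c_qM$ with $c_1\cdots c_q\in\rho_*\inv(r)\cap L$, the word $w$ can be reduced by the pigeonhole step of that proof to a bounded-length word of the same relevant type without shrinking the corresponding $\tau_{n-1}$-value, and so $T$ lies in the (downward closed) optimal imprint. \emph{(b)} $\dclosr U\subseteq f_{n-1}(T)$: using that $\tau_{n-1}$ is quasi-\tame (Lemma~\ref{lem:squasitame}), $T=\quasi{\tau_{n-1}}\big(\tau_{n-1}(M)\cdot\tau_{n-1}(b_1)\cdot\tau_{n-1}(M)\cdots\tau_{n-1}(b_m)\cdot\tau_{n-1}(M)\big)$; then Fact~\ref{fct:bpol:fmapinc}, Lemma~\ref{lem:fmapmu} and the multiplicativity Lemma~\ref{lem:bpol:fmapmult} for $f_{n-1}$ reduce the claim to the two inclusions $S\subseteq f_{n-1}(\ioptic{\tau_{n-1}})=f_{n-1}(\tau_{n-1}(M))$ — which is Proposition~\ref{prop:sound2} at index $n-1$, the induction hypothesis — and $\rho(b_i)\in f_{n-1}(\tau_{n-1}(b_i))$ (a short computation from the definition of $\tau_{n-1}$ on letters, in the spirit of Fact~\ref{fct:theletters}), together with the fact that consecutive $M$-blocks merge under $\quasi{\tau_{n-1}}$, so finitely many of them suffice to absorb all the $S$-factors of an arbitrary element of $U$.

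I expect step \emph{(a)} to be the main obstacle: the language $\rho_*\inv(r)\cap L$ is not of the bare form $\alpha\inv(t)$ and $\tau_{n-1}$ is neither \tame nor \nice, so Theorem~\ref{thm:half:gmainpolc} cannot be invoked as a black box; one must re-execute the pigeonhole/pumping reduction of Lemma~\ref{lem:half:pump} while tracking $\tau_{n-1}$-imprints and argue, using that $M$ is the $\Cs$-optimal \iden for $\tau_{n-1}$ and that $\Cs$ is quotient-closed, that no $\pol{\Cs}$-cover of $\rho_*\inv(r)\cap L$ can avoid the value $T$. The base case $n=1$ is the same argument with $\tau_0$ in place of $\tau_{n-1}$; since $\tau_0$ is genuinely \tame and $f_0$ is simple, it is lighter, and the $S$-factor bookkeeping in \emph{(b)} there appeals to Proposition~\ref{prop:sound2} at $n=0$, which is already available.
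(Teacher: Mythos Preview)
Your overall architecture matches the paper's: pick a \Cs-optimal \iden $L$ for $\tau_n$, extract a witness word $w\in L$ with $\bwmrats(w)=\{(r,U)\}$, define $T$ from $w$ using $V=\ioptic{\tau_{n-1}}$ and the letters of $w$, and verify the two properties. Your treatment of~(b) is essentially the paper's argument (Lemma~\ref{lem:bpol:fmapmult}, Lemma~\ref{lem:fmapmu}, and the inductive instance $S\subseteq f_{n-1}(V)$). Your extraction of $w$ via Lemma~\ref{lem:sepepswit} is a slight detour; the paper simply uses that $\veps\in L\in\Cs$ forces $\ioptic{\bwmrats}\subseteq\bwmrats(L)$, hence $(r,U)\in\bwmrats(L)$, which yields $w$ directly.

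The real divergence is in~(a), which you flag as the main obstacle and propose to handle by re-running the pumping argument of Lemma~\ref{lem:half:pump} over $\rho_*\inv(r)\cap L$. This is unnecessary and your sketch there does not quite close: exhibiting a particular \pol{\Cs}-cover of $\rho_*\inv(r)\cap L$ says nothing about the \emph{optimal} imprint, and the pigeonhole reduction of $w$ does not by itself force $T$ into $\prin{\tau_{n-1}}{\Kb}$ for an arbitrary cover~$\Kb$. The paper's route is short and avoids all of this. One has $V=\ioptic{\tau_{n-1}}\in\opti{\pol{\Cs}}{\{\veps\},\tau_{n-1}}$ by Lemma~\ref{lem:polctoc}, and trivially $\tau_{n-1}(a_i)\in\opti{\pol{\Cs}}{\{a_i\},\tau_{n-1}}$. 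Since $\tau_{n-1}$ is quasi-\tame (Lemma~\ref{lem:squasitame}), Lemma~\ref{lem:qmult} applied iteratively gives $T=\quasi{\tau_{n-1}}(V\cdot\tau_{n-1}(a_1)\cdot V^2\cdots V^2\cdot\tau_{n-1}(a_\ell)\cdot V)\in\opti{\pol{\Cs}}{\{w\},\tau_{n-1}}$. Finally, $\{w\}\subseteq\rho_*\inv(r)\cap L$ and Fact~\ref{fct:linclus} yield $T\in\opti{\pol{\Cs}}{\rho_*\inv(r)\cap L,\tau_{n-1}}$, i.e., $(r,T)\in\tau_n(L)=\ioptic{\tau_n}$. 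No pumping is needed; the three ingredients you were missing are Lemma~\ref{lem:polctoc}, Lemma~\ref{lem:qmult}, and Fact~\ref{fct:linclus}.
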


We first use Lemma~\ref{lem:soundopti} to show that $s \in f_n(\ioptic{\tau_n})$ and complete the main proof. For every $i \leq k$, we know that $(r_i,U_i) \in \ioptic{\bwmrats}$ and $r_1 + \cdots + r_k \in \dclosr U_i$. Therefore, Lemma~\ref{lem:soundopti} yields that for every $i \leq k$, we have $T_i \in Q_{n-1}$ such that $(r_i,T_i) \in \ioptic{\tau_n}$ and $r_1 + \cdots + r_k \in f_{n-1}(T_i)$. Since we have $s \leq r_1 + \cdots + r_k$ this yields that $s \in f_{n}(\ioptic{\tau_n})$ by definition of $f_n$. This concludes the main proof. It remains to prove Lemma~\ref{lem:soundopti}.

\begin{proof}[Proof of Lemma~\ref{lem:soundopti}]
  Consider $(r,U) \in \ioptic{\bwmrats}$. We have to exhibit the appropriate $T \in Q_{n-1}$ such that,
  \begin{equation} \label{eq:bsound}
    (r,T) \in \ioptic{\tau_n} \quad \text{and} \quad \dclosr U \subseteq f_{n-1}(T)
  \end{equation}
  For the proof, we let $L$ as a \Cs-optimal \iden for $\tau_n$: we have $L \in \Cs$, $\veps \in L$ and $\ioptic{\tau_n} = \tau_n(L)$.

  Since $L \in \Cs$ and $\veps \in L$, we know that $\ioptic{\bwmrats} \subseteq \bwmrats(L)$. Therefore, $(r,U) \in \bwmrats(L)$. By definition of \bwmrats, this yields $w \in L$ such that $\bwmrats(w) = \{(r,U)\}$. We now consider two cases depending on whether $w = \veps$ or $w \in A^+$.

  \medskip
  \noindent
  {\bf Case~1:} Assume first that $w = \veps$. It follows that $(r,U) = (1_R,\{1_R\})$ (since $\{(1_R,\{1_R\})\}$ is the neutral element of $2^{R \times 2^R}$). We prove that~\eqref{eq:bsound} holds for $T = \tau_{n-1}(\veps) \in Q_{n-1}$.

  We first show that $(r,T) \in \ioptic{\tau_n}$. Since $\veps \in L$ by definition, we have $\tau_n(\veps) \subseteq \tau_n(L) = \ioptic{\tau_n}$. Moreover,
  \[
    \begin{array}{lll}
      \tau_n(\veps) & = & \lrataux{\pol{\Cs}}{\rho_*}{\tau_{n-1}}(\veps)                        \\
                    & = & \{(q,T') \mid T' \in \opti{\pol{\Cs}}{\{\veps\} \cap \rho_*\inv(q),\tau_{n-1}}\}
    \end{array}
  \]
  One may verify that this yields $(r,T) = (1_R,\tau_{n-1}(\veps)) \in \tau_n(\veps)$. Altogether, we get $(r,T) \in \ioptic{\tau_n}$ as desired.

  It remains to show that $\dclosr U \subseteq f_{n-1}(T)$. One may verify that $1_R = \rho(\veps) \in f_{n-1}(\tau_{n-1}(\veps))$ (this follows from a simple induction on $n$). Thus, since $U = \{1_R\}$ and $T = \tau_{n-1}(\veps)$, we get $U \subseteq f_{n-1}(T)$. This implies $\dclosr U \subseteq f_{n-1}(T)$ since $\dclosr f_{n-1}(T) = f_{n-1}(T)$ by definition of $f_{n-1}$.

  \medskip
  \noindent
  {\bf Case~2:} Assume now that $w \in A^+$. In that case, we have $\ell \geq 1$ and $\ell$ letters $a_1,\dots,a_\ell$ such that $w = a_1 \cdots a_\ell$. Let $V = \ioptic{\tau_{n-1}} \in Q_{n-1}$. We show that~\eqref{eq:bsound} holds for,
  \[
    T = \quasi{\tau_{n-1}}(V \cdot \tau_{n-1}(a_1) \cdot V^2 \cdot \tau_{n-1}(a_2) \cdots V^2 \cdot \tau_{n-1}(a_\ell) \cdot V)
  \]
  We first show that $(r,T) \in \ioptic{\tau_n}$.	Clearly, we have $\tau_{n-1}(a) \in \opti{\pol{\Cs}}{\{a\},\tau_{n-1}}$ for every $a \in A$. Moreover, $V = \ioptic{\tau_{n-1}} \in \opti{\pol{\Cs}}{\{\veps\},\tau_{n-1}}$ by Lemma~\ref{lem:polctoc}. Therefore, since $w = a_1 \cdots a_\ell$,	it is immediate by definition of $T$ and Lemma~\ref{lem:qmult} that,
  \[
    T \in \opti{\pol{\Cs}}{\{w\},\tau_{n-1}}
  \]
  Moreover, $r = \rho(w)$ and $w \in L$. Thus, $w \in \rho_*\inv(r) \cap L$. By Fact~\ref{fct:linclus}, this yields,
  \[
    T \in \opti{\pol{\Cs}}{\rho_*\inv(r) \cap L,\tau_{n-1}}
  \]
  By definition, $\tau_n$ is the \ratm \lrataux{\pol{\Cs}}{\rho_*}{\tau_{n-1}}. Hence, the above yields $(r,T) \in \tau_n(L)$. By choice of $L$, this exactly says that $(r,T) \in \ioptic{\tau_n}$.

  It remains to show that $\dclosr U \subseteq f_{n-1}(T)$. Since $\bwmrats(w) = \{(r,U)\}$, it follows from the definition of \bwmrats that $r = \rho(w)$ and $U$ is equal to the following set,
  \begin{equation} \label{eq:bwminc}
    U = S \cdot \{\rho(a_1)\} \cdot S^2 \cdot \{\rho(a_2)\} \cdots S^2 \cdot \{\rho(a_\ell)\} \cdot S
  \end{equation}
  Since $V = \ioptic{\tau_{n-1}}$, we know by induction hypothesis in Proposition~\ref{prop:sound2} that $S \subseteq f_{n-1}(V)$. Moreover, it can be verified using a simple induction on $n$ that for every $a \in A$, we have  $\rho(a) \in f_{n-1}(\tau_{n-1}(a))$. Hence, we may combine~\eqref{eq:bwminc} with Lemma~\ref{lem:bpol:fmapmult} to obtain,
  \[
    U \subseteq f_{n-1}(V \cdot \tau_{n-1}(a_1) \cdot V^2 \cdot \tau_{n-1}(a_2) \cdots V^2 \cdot \tau_{n-1}(a_\ell) \cdot V)
  \]
  By definition of $T$, we then obtain from Lemma~\ref{lem:fmapmu} that,
  \[
    U \subseteq f_{n-1}(T)
  \]
  Since $\dclosr f_{n-1}(T) = f_{n-1}(T)$ by definition of $f_{n-1}$, this yields that $\dclosr U \subseteq f_{n-1}(T)$, finishing the proof.
\end{proof}

\subsection{Completeness}

We turn to the completeness direction in the proof of Theorem~\ref{thm:bpolg}. Recall that a \nice \mratm $\rho: 2^{A^*} \to R$ is fixed. We need to show that  \ioptic{\bratauxbc} is \bpol{\Cs}-complete.

In the whole section, we shall have to deal with the \ratm \bratauxbc. For the sake of avoiding clutter we denote it by $\tau$. Recall that the definition is as follows,
\[
  \begin{array}{llll}
    \tau: & 2^{A^*} & \to     & 2^R                       \\
          & K       & \mapsto & \opti{\bpol{\Cs}}{K,\rho}
  \end{array}
\]
Our objective is to show that $\ioptic{\tau} \subseteq R$ is \bpol{\Cs}-complete. For the sake of avoiding clutter, we write $S = \ioptic{\tau}$.

The completeness proof is now based on two propositions which are proved independently. The first one is taken from~\cite{pzbpolc} while the second is new.

\begin{proposition} \label{prop:bpol:comp1}
  For every $L \in \bpol{\Cs}$ and $s \in \tau(L)$, we have $r_1,\dots,r_k \in R$ such that $(r_i,\{r_1+ \cdots +r_k\}) \in \lrataux{\pol{\Cs}}{\rho_*}{\tau}(L)$ for every $i \leq k$ and $s \leq r_1 + \cdots + r_k$.
\end{proposition}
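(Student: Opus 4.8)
Throughout write $\tau:=\bratauxbc$, so that $\tau(K)=\opti{\bpol{\Cs}}{K,\rho}$ for every $K\subseteq A^*$, and $\theta:=\lrataux{\pol{\Cs}}{\rho_*}{\tau}$, so that by definition $\theta(K)=\{(m,T)\in R\times 2^R\mid T\in\opti{\pol{\Cs}}{\rho_*\inv(m)\cap K,\tau}\}$. The goal is then: from $L\in\bpol{\Cs}$ and $s\in\tau(L)$, produce $r_1,\dots,r_k\in R$ with $s\le t:=r_1+\cdots+r_k$ and, for each $i$, $\{t\}\in\opti{\pol{\Cs}}{\rho_*\inv(r_i)\cap L,\tau}$ (which is exactly $(r_i,\{t\})\in\theta(L)$). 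I would proceed in three steps.

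\textbf{Reduction to an atom.} Both $\tau$ and $\theta$ are \ratms (Proposition~\ref{prop:areratms}), hence additive and monotone. Since $L\in\bpol{\Cs}=\bool{\pol{\Cs}}$ and \pol{\Cs} is a \pvari (Theorem~\ref{thm:polclos}), $L$ is a finite union of languages of the form $G\cap(A^*\setminus G')$ with $G,G'\in\pol{\Cs}$. As $s\in\tau(L)=\bigcup_m\tau(H_m)$, fix one such piece $H$ with $s\in\tau(H)$; since $\theta(H)\subseteq\theta(L)$, it suffices to prove the statement for $H$ in place of $L$. So assume from now on $L=G\cap(A^*\setminus G')$ with $G,G'\in\pol{\Cs}$.

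\textbf{Choosing the tuple.} Take an optimal \bpol{\Cs}-cover $\Hb$ of $L$; intersecting every piece with $L$ (still \bpol{\Cs}-languages, and still optimal), assume $H'\subseteq L$ for all $H'\in\Hb$. Since $s\in\prin{\rho}{\Hb}$, pick $H_0\in\Hb$ with $s\le\rho(H_0)$. As $\rho$ is \nice, $\rho(H_0)=\sum_{r\in\rho_*(H_0)}r$; set $\{r_1,\dots,r_k\}:=\rho_*(H_0)$ and $t:=r_1+\cdots+r_k=\rho(H_0)$, so $s\le t$ and each $r_i$ is the $\rho$-image of some word of $H_0\subseteq L$. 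This choice has to be refined: one first applies an idempotent pumping (legitimate because $L$ is recognized by a finite monoid) to replace $H_0$ by a subset of $L$ in which a representative word $w_i$ of each $r_i$ is ``deep''; this deepness is what drives the last step.

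\textbf{The nested membership, and the obstacle.} It remains to prove $\{t\}\in\opti{\pol{\Cs}}{\rho_*\inv(r_i)\cap L,\tau}$ for each $i$. Unfolding the definitions of imprint and optimality, this is equivalent to: \emph{every} \pol{\Cs}-cover $\Kb$ of the slice $\rho_*\inv(r_i)\cap L$ has a piece $K$ with $t\in\tau(K)=\opti{\bpol{\Cs}}{K,\rho}$. For $k=1$ (so $t=r_i$) this is immediate — the piece $K\in\Kb$ containing the word $w_i$ of $\rho$-image $r_i$ already satisfies $r_i=\rho(w_i)\le\rho(M)$ for the piece $M$ of an optimal \bpol{\Cs}-cover of $K$ containing $w_i$, hence $r_i\in\opti{\bpol{\Cs}}{K,\rho}$. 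The main obstacle is $k\ge 2$: one must show that the \emph{whole} profile $t=\rho(H_0)$, not just $r_i$, lies in $\opti{\bpol{\Cs}}{K,\rho}$ — equivalently, that $H_0$ cannot be ``\bpol{\Cs}-split off'' from $K$. The intended mechanism: a \pol{\Cs}-language $K$ catching the deep word $w_i$ is, by a pumping argument using the finite recognizers of $G$, $G'$, $L$ (and the closure properties of \pol{\Cs} from Theorem~\ref{thm:polclos}), forced to contain words of $L$ realising \emph{all} the $\rho$-values in $\rho_*(H_0)$ in a configuration that no finite \bpol{\Cs}-cover of $K$ can separate, so that some piece of that cover has $\rho$-image $\ge t$. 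Everything else — additivity and monotonicity of \ratms, Fact~\ref{fct:linclus}, and the bookkeeping turning ``$\{t\}\in\opti{\pol{\Cs}}{\cdot,\tau}$'' into the universal statement over covers — is routine. The real work is arranging the pumping so that membership in $L$, the fixed $\rho_*$-slice, and the coarseness of both \pol{\Cs}- and \bpol{\Cs}-covers hold simultaneously; that is where I expect the difficulty to concentrate.
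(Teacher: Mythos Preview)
The paper does not prove this proposition in-house; it defers to~\cite[Proposition~9.1]{pzbpolc}, noting only that the result is generic for Boolean closure and that niceness of $\rho$ is used precisely here. So there is no detailed argument in the present paper to compare against. That said, your proposal is not a proof: it is an outline whose decisive step is left open.

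Steps~1 and~2 are reasonable scaffolding. The reduction to a single Boolean atom $G\setminus G'$ is harmless (monotonicity of $\theta$ gives $\theta(H)\subseteq\theta(L)$), though it does not obviously buy anything. Your use of niceness to extract $r_1,\dots,r_k$ from a piece $H_0$ of an optimal \bpol{\Cs}-cover is the natural move, and matches what the paper signals about where niceness enters.

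The genuine gap is Step~3. You must show that for \emph{every} \pol{\Cs}-cover $\Kb$ of $\rho_*^{-1}(r_i)\cap L$, some $K\in\Kb$ satisfies $t\in\opti{\bpol{\Cs}}{K,\rho}$. Your sketch says this should follow from a ``pumping argument'' forcing any \pol{\Cs}-language $K$ containing a ``deep'' representative $w_i$ to be \bpol{\Cs}-inseparable in a way that traps the whole profile $t$. But you give no mechanism linking the optimality of $\Hb$ (a \bpol{\Cs}-cover of $L$) to the \bpol{\Cs}-optimal imprint on an \emph{arbitrary} \pol{\Cs}-language $K$ that merely contains one word $w_i$. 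In particular, nothing you have set up prevents $K$ from being much larger than $L$, or from admitting a finer \bpol{\Cs}-cover than $\Hb$ restricted to $K$; conversely, nothing forces $H_0\subseteq K$. The vague appeal to ``deepness'' and ``pumping'' does not bridge this: you would need a concrete structural property of \pol{\Cs}-languages (an Ehrenfeucht--Fra\"iss\'e or stratification argument tying the \pol{\Cs}-type of $w_i$ to the presence of witnesses for every $r_j$ inside $K$), and that property is the entire content of the proposition. As written, Step~3 is a restatement of the goal rather than an argument toward it.
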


For the second proposition, we let $L'$ as a \Cs-optimal \iden for \bwmrats: we have $L' \in \Cs$, $\veps \in L'$ and $\ioptic{\bwmrats}=\bwmrats(L')$. Furthermore, we define $L =L' \cap A^+$. Note that $L \in \pol{\Cs} \subseteq \bpol{\Cs}$ since $L' \in \Cs \subseteq \pol{\Cs}$ and $A^+ = \bigcup_{a \in A} A^*aA^* \in \pol{\Cs}$. We shall later apply Proposition~\ref{prop:bpol:comp1} for this particular language $L$.

\begin{proposition} \label{prop:bpol:comp2}
  For every $(r,U) \in \lrataux{\pol{\Cs}}{\rho_*}{\tau}(L)$, there exists $V \in 2^R$ such that $(r,V) \in \ioptic{\bwmrats}$ and $U \subseteq\dclosr V$.
\end{proposition}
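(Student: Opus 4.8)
The plan is to unfold all the definitions, reduce to a single application of the group‑language pumping lemma (Lemma~\ref{lem:half:pump}), and then merge the resulting cover into one element of $\ioptic{\bwmrats}$. First recall that $(r,U) \in \lrataux{\pol{\Cs}}{\rho_*}{\tau}(L)$ means exactly $U \in \opti{\pol{\Cs}}{H,\tau}$ for the regular language $H := \rho_*\inv(r) \cap L = \rho_*\inv(r) \cap L' \cap A^+$. Since the empty language is covered by the empty family, $\opti{\pol{\Cs}}{\emptyset,\tau} = \emptyset$, so the hypothesis forces $H \ne \emptyset$; fix $w^* \in H$, so that $r = \rho(w^*)$ with $w^* \in L' \cap A^+$, which already shows that $\ioptic{\bwmrats} = \bwmrats(L')$ contains a pair with first component $r$. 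Recall also (Lemma~\ref{lem:copelbrat}) that $\tau = \bratauxbc$ is $\pol{\Cs}$-\tame, with $\quasi{\tau}(T) = \dclosr T$ on $2^R$.

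Next I would bound $U$ by a cover built from a $\Cs$-optimal \iden for $\tau$. Let $L_0 \in \Cs$ be such a language: then $\veps \in L_0$ and $\tau(L_0) = \ioptic{\tau} = S$ (this is where $S$ enters), and $L_0$ is a group language since \Cs consists of group languages. Applying Lemma~\ref{lem:half:pump} to $H$ and $L_0$ yields a finite cover $\Kb$ of $H$ in which each $K$ has the form $K = L_0 a_1 L_0 \cdots a_m L_0$ with $w_K := a_1 \cdots a_m \in H$ (hence $m \ge 1$, $\rho(w_K) = r$, $w_K \in L'$). Each such $K$ belongs to \pol{\Cs} (it is a repeated marked concatenation of languages in $\Cs \subseteq \pol{\Cs}$, and \pol{\Cs} is closed under marked concatenation), so $\Kb$ is a \pol{\Cs}-cover of $H$ and $U = \opti{\pol{\Cs}}{H,\tau} \subseteq \prin{\tau}{\Kb}$.

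Then I would compute $\tau(K)$ using $\pol{\Cs}$-\tameness. Iterating Axiom~\ref{itm:ax3} (with $\tau(L_0) = S$, and $\tau(a_i) = \dclosr\{\rho(a_i)\}$, which holds because $\{a_i\} \in \bpol{\Cs}$ by Fact~\ref{fct:finitelangs}) and flattening nested downsets with Axiom~\ref{itm:ax1} yields $\tau(K) = \dclosr\big(S\{\rho(a_1)\}S\{\rho(a_2)\}S \cdots S\{\rho(a_m)\}S\big)$. Since $1_R = \rho(\veps) \in \tau(L_0) = S$ we have $S \subseteq S^2$, whence $\tau(K) \subseteq \dclosr W_K$ with $W_K := S\{\rho(a_1)\}S^2\{\rho(a_2)\}S^2 \cdots S^2\{\rho(a_m)\}S$. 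But $W_K$ is exactly the second component of $\bwmrats(w_K) = \{(\rho(w_K),W_K)\} = \{(r,W_K)\}$, and $w_K \in L'$ gives $(r,W_K) \in \bwmrats(L') = \ioptic{\bwmrats}$. Summing over $\Kb$, $U \subseteq \dclosr\big(\bigcup_{K \in \Kb} W_K\big)$.

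The remaining step, which I expect to be the main obstacle, is to collapse this finite union into a single $\dclosr V$ with $(r,V) \in \ioptic{\bwmrats}$: it suffices to produce one word $w_0 \in L' \cap A^+$ with $\rho(w_0) = r$ such that $W_K \subseteq \dclosr W_{w_0}$ for every $K \in \Kb$, and then take $V := W_{w_0}$. This is precisely where the hypothesis that \Cs consists of group languages is used: writing $L' = \beta\inv(F)$ for a morphism $\beta : A^* \to G$ into a finite group $G$ with $1_G \in F$, the idea is to construct $w_0$ by absorbing each $w_K$ into a sufficiently long word of $H$ — interleaving (an idempotent power of) $w^*$ with the $w_K$'s — and to use a pigeonhole argument on $\langle \rho_*, \beta \rangle : A^* \to R \times G$ (in the spirit of the proof of Lemma~\ref{lem:half:pump}) to restore $\rho(w_0) = r$ and $w_0 \in L'$, the surrounding $S$-blocks of $W_{w_0}$ soaking up (using $1_R \in S$) the extra factors produced by the interleaving, so that $W_{w_0}$ still dominates every $W_K$. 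Carrying out this bookkeeping precisely is the technical heart of the argument; once it is done, combining with the previous paragraph gives $U \subseteq \dclosr W_{w_0} = \dclosr V$, as required.
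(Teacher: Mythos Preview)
Your approach is the same as the paper's --- apply Lemma~\ref{lem:half:pump} with a \Cs-optimal \iden for $\tau$ to build a \pol{\Cs}-cover $\Kb$ of $\rho_*\inv(r)\cap L$, then compute $\tau(K)$ via the $\pol{\Cs}$-\tame structure of $\tau$ --- but you manufacture an unnecessary obstacle at the end by momentarily losing track of what an imprint is.

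The rating set of $\tau$ is $(2^R,\cup)$, whose canonical order is \emph{inclusion}. Hence
\[
\prin{\tau}{\Kb}=\bigl\{T\in 2^R \mid T\subseteq \tau(K)\text{ for some }K\in\Kb\bigr\},
\]
and the hypothesis $U\in\opti{\pol{\Cs}}{H,\tau}\subseteq\prin{\tau}{\Kb}$ (you wrote ``$=$'' rather than ``$\in$'' at this step, which seems to be where the confusion starts) already hands you a \emph{single} $K\in\Kb$ with $U\subseteq\tau(K)$. Your own computation then gives $U\subseteq\dclosr W_K$ for the word $w_K\in L'\cap A^+$ attached to that particular $K$, so you may take $V=W_K$ directly: there is no union to collapse, and the ``technical heart'' you anticipate simply does not exist. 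This is exactly what the paper does.
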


Let us first use the two propositions to finish the completeness proof. We show that $S = \ioptic{\tau}$ is \bpol{\Cs}-complete for $\rho$. Let $s \in S$. We have to exhibit $(r_1,U_1),\dots,(r_k,U_k) \in \ioptic{\bwmrats}$ such that~\eqref{eq:bpolcarac} is satisfied: $s \leq r_1 + \cdots + r_k$ and $r_1 + \cdots + r_k \in\ \dclosr U_i$ for every $i \leq k$.

By definition, $S = \ioptic{\tau}$. Thus, since $L' \in \Cs$ and $\veps \in L'$, this implies that $S \subseteq \tau(L')$ and we obtain that $s \in \tau(L')$. Moreover, $L' = L \cup \{\veps\}$ by definition. Therefore, we have $s \in \tau(L) \cup \tau(\veps)$. We distinguish two cases depending on which set of this union $s \in S$ belongs to.

Assume first that $s \in \tau(\veps)$, i.e. $s \in \opti{\bpol{\Cs}}{\{\veps\},\rho}$ by definition of $\tau$. Since $\{\veps\} \in \bpol{\Cs}$ by Fact~\ref{fct:finitelangs}, $\{\{\veps\}\}$ is a \bpol{\Cs}-cover of $\{\veps\}$. Thus, we have $\opti{\bpol{\Cs}}{\{\veps\},\rho} \subseteq \prin{\rho}{\{\{\veps\}\}} = \dclosr \{\rho(\veps)\}$. It follows that $s \leq \rho(\veps) = 1_R$ (recall that $\rho$ is a \mratm). Clearly, $(1_R,\{1_R\}) \in \ioptic{\bwmrats}$ (since $(1_R,\{1_R\})$ is the neutral element of $R \times 2^R$, we have $\bwmrats(\veps) = (1_R,\{1_R\})$). Hence it is immediate that~\eqref{eq:bpolcarac} is satisfied for $k = 1$ and $(r_1,U_1) = (1_R,\{1_R\}) \in \ioptic{\bwmrats}$: we have $s \leq 1_R$ and $1_R \in \dclosr U_1$.

Assume now that $s \in \tau(L)$.  In that case, since $L \in \bpol{\Cs}$, we obtain from Proposition~\ref{prop:bpol:comp1} that there exist $r_1,\dots,r_k \in R$ such that $(r_i,\{r_1+ \cdots +r_k\}) \in \lrataux{\pol{\Cs}}{\rho_*}{\tau}(L)$ for every $i \leq k$ and $s \leq r_1 + \cdots + r_k$. We may now apply Proposition~\ref{prop:bpol:comp2} which yields that for every $i \leq k$, there exists $U_i \in 2^R$ such that $(r_i,U_i) \in \ioptic{\bwmrats}$ and $r_1+ \cdots +r_k \in \dclosr U_i$. Altogether, we get that~\eqref{eq:bpolcarac} holds, finishing the proof.

\medskip

It remains to prove the propositions. Similarly to what happened for the soundness direction, Proposition~\ref{prop:bpol:comp1} is actually a generic result which holds for any class built using Boolean closure (it remains true when replacing \pol{\Cs} by an arbitrary lattice \Gs in the definition of the \ratm $\tau$ and in the statement). It is proved in~\cite[Proposition~9.1]{pzbpolc}. An important point is that Proposition~\ref{prop:bpol:comp1} is where we require the hypothesis that $\rho$ is \nice. Here, we show Proposition~\ref{prop:bpol:comp2} which is specific to our setting: we need the hypothesis that we are considering the class \bpol{\Cs} for \Cs a \vari of group languages. The remainder of this subsection is devoted to this proof.

\medskip
\noindent
{\bf Proof of Proposition~\ref{prop:bpol:comp2}.} Consider $(r,U) \in \lrataux{\pol{\Cs}}{\rho_*}{\tau}(L)$. Our goal is to exhibit $V \in 2^R$ such that,
\begin{equation} \label{eq:cgoal}
  (r,V) \in \ioptic{\bwmrats} \text{ and } U \subseteq \dclosr V
\end{equation}

By definition of the \ratm \lrataux{\pol{\Cs}}{\rho_*}{\tau}, our hypothesis means that we have $U \in \opti{\pol{\Cs}}{\rho_*\inv(r) \cap L,\tau}$. We build a well-chosen \pol{\Cs}-cover of $\rho_*\inv(r) \cap L$ in order to use this hypothesis. The construction is based on Lemma~\ref{lem:half:pump}.

Let $H$ be a \Cs-optimal \iden for $\tau = \bratauxbc$: we have $H \in \Cs$, $\veps \in H$ and $S = \ioptic{\tau} = \tau(H)$. For every $w \in A^*$, we write $F_{w}$ for the language $H a_1 H  \cdots a_n H$ where $w = a_1 \cdots a_n$ (in particular, $F_{\veps} = H$). Note that by definition, $F_{w} \in \pol{\Cs}$ for every $w \in A^*$. We have the following immediate corollary of Lemma~\ref{lem:half:pump}.

\begin{lemma} \label{lem:coverpol2}
  There exists a \textbf{finite} subset $G \subseteq \rho_*\inv(r) \cap L$ such that $\{F_{w} \mid w \in G\}$ is a cover of $\rho_*\inv(r) \cap L$
\end{lemma}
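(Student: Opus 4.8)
The plan is to obtain this as a direct instance of Lemma~\ref{lem:half:pump}. First I would check that $\rho_*\inv(r) \cap L$ is a regular language. Since $\rho$ is \nice, it is determined by the monoid morphism $\rho_*: A^* \to R$ into the \emph{finite} monoid $R$, so $\rho_*\inv(r)$ is regular. Moreover $L = L' \cap A^+$, where $L' \in \Cs$ is regular (being a group language) and $A^+$ is regular; hence $L$, and therefore $\rho_*\inv(r) \cap L$, is regular. On the other side, $H$ is a \Cs-optimal \iden for $\tau$, so $H \in \Cs$ is a group language and $\veps \in H$.

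Then I would apply Lemma~\ref{lem:half:pump} with the regular language taken to be $\rho_*\inv(r) \cap L$ and the group language containing $\veps$ taken to be $H$. This produces a cover \Kb of $\rho_*\inv(r) \cap L$, which is finite by the definition of a cover, such that every $K \in \Kb$ can be written $K = H a_1 H \cdots a_n H$ for some $n \in \nat$ and letters $a_1,\dots,a_n$ with $a_1 \cdots a_n \in \rho_*\inv(r) \cap L$. For each $K \in \Kb$ fix such a decomposition and set $w_K = a_1 \cdots a_n$; then $F_{w_K} = K$ by the definition of $F_w$ adopted in this section. Taking $G = \{w_K \mid K \in \Kb\}$, the set $G$ is finite because \Kb is, we have $G \subseteq \rho_*\inv(r) \cap L$ by construction, and $\{F_w \mid w \in G\} = \Kb$ is a cover of $\rho_*\inv(r) \cap L$, as required.

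There is essentially no obstacle here: all of the combinatorial content — the pigeonhole pumping argument on the product of an automaton for the regular language with the finite group recognizing $H$ — has already been carried out in Lemma~\ref{lem:half:pump}. The only points needing care are the routine bookkeeping that $\rho_*\inv(r) \cap L$ is regular and that the pieces $H a_1 H \cdots a_n H$ returned by Lemma~\ref{lem:half:pump} are literally the languages $F_w$ used here; both are immediate from the definitions.
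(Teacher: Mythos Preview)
Your proposal is correct and matches the paper's approach exactly: the paper states this lemma as an ``immediate corollary of Lemma~\ref{lem:half:pump}'' and your write-up simply spells out the verification that $\rho_*\inv(r)\cap L$ is regular and that $H$ is a group language containing $\veps$, then invokes Lemma~\ref{lem:half:pump}. One minor point: the regularity of $\rho_*\inv(r)$ follows from $\rho$ being \emph{\tame} (so that $\rho_*$ is a monoid morphism into the finite monoid $(R,\cdot)$), not from niceness per se, but this does not affect the argument.
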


In view of Lemma~\ref{lem:coverpol2}, we let $\Kb = \{F_{w} \mid w \in G\}$ which is a \pol{\Cs}-cover of $\rho_*\inv(r) \cap L$ by definition. Therefore, we have,
\[
  \opti{\pol{\Cs}}{\rho_*\inv(r) \cap L,\tau} \subseteq \prin{\tau}{\Kb}
\]
Consequently, we know that $U \in \prin{\tau}{\Kb}$ which yields $w \in G$ such that $U \subseteq \tau(F_w)$.

By construction, we have $G \subseteq \rho_*\inv(r) \cap L$ and $L = L' \cap A^+$. Thus, $w \in A^+$: we have $n \geq 1$ and $n$ letters $a_1, \dots, a_n$ such that $w = a_1 \cdots a_n$. We let,
\[
  V = S \cdot \{\rho(a_1)\} \cdot S^2 \cdot \{\rho(a_2)\} \cdots S^2 \cdot \{\rho(a_n)\} \cdot S
\]
It remains to show that~\eqref{eq:cgoal} holds: we have $(r,V) \in \ioptic{\bwmrats}$ and $U \subseteq \dclosr V$. Let us start with the former. By definition, $\rho(w) = r$ since $w \in G \subseteq \rho_*\inv(r) \cap L$. Thus, since $w = a_1 \cdots a_n$ it is immediate from the definition of \bwmrats that $\bwmrats(w) = \{(r,V)\}$. Moreover, we have $w \in G \subseteq \rho_*\inv(r) \cap L \subseteq L'$ and $\bwmrats(L') = \ioptic{\bwmrats}$ by definition of $L'$. It follow that $\bwmrats(w) \subseteq \ioptic{\bwmrats}$. Altogether, we get $(r,V) \in \ioptic{\bwmrats}$.

Finally, we prove that $U \subseteq \dclosr V$. By definition, $F_w = H a_1 H  \cdots a_n H$.  Moreover, since $\veps \in H$, we have $H \subseteq H^2$. Thus,
\[
  F_{w} = H a_1 H  \cdots a_n H \subseteq H a_1 H^2  \cdots H^2a_n H
\]
Therefore, since we have $U \subseteq \tau(F_w)$, we also get $U \subseteq \tau(H a_1 H^2  \cdots H^2a_n H)$. Moreover, Lemma~\ref{lem:copelbrat} yields that the \ratm $\tau = \bratauxbc$ is \pol{\Cs}-\tame and the associated endomorphism of $(2^R,\cup)$ is defined by $\quasi{\tau}(W) = \dclosr W$. Hence, since $H \in \Cs \subseteq \pol{\Cs}$ and $\tau(H) = S$, it follows from Axiom~\ref{itm:ax3} in the definition of \pol{\Cs}-\tame \ratms that,
\[
  U  \subseteq \dclosr (S \cdot \tau(a_1) \cdot S^2 \cdots S^2 \cdot \tau(a_n) \cdot S)
\]
Additionally, observe that we have the following fact since $\tau = \bratauxbc$ by definition.

\begin{fact} \label{fct:cletters}
  For every $a \in A$, we have $\tau(a) \subseteq \dclosr \{\rho(a)\}$.
\end{fact}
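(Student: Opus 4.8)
The plan is to observe that this is the exact analogue, for the full level $\bpol{\Cs}$, of Fact~\ref{fct:theletters}, and to repeat that argument verbatim. Recall that $\tau = \bratauxbc$ means $\tau(K) = \opti{\bpol{\Cs}}{K,\rho}$, and that for a lattice $\Ds$ and any language $K$, the optimal imprint $\opti{\Ds}{K,\rho}$ is by definition contained in $\prin{\rho}{\Kb}$ for \emph{every} $\Ds$-cover $\Kb$ of $K$.

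First I would note that $\bpol{\Cs}$ is, by definition, level~$1$ in the concatenation hierarchy of basis $\Cs$, hence a full level $\geq 1$; thus Fact~\ref{fct:finitelangs} applies and yields $\{a\} \in \bpol{\Cs}$ for every $a \in A$. Consequently $\{\{a\}\}$ is a $\bpol{\Cs}$-cover of the singleton language $\{a\}$.

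Then I would conclude directly:
\[
  \tau(a) \;=\; \opti{\bpol{\Cs}}{\{a\},\rho} \;\subseteq\; \prin{\rho}{\{\{a\}\}} \;=\; \dclosr \{\rho(\{a\})\} \;=\; \dclosr \{\rho(a)\},
\]
where the inclusion is the defining property of the $\bpol{\Cs}$-optimal $\rho$-\imprint on $\{a\}$ (it lies below the $\rho$-\imprint of any $\bpol{\Cs}$-cover, in particular below that of $\{\{a\}\}$), and the last equality is just the convention $\rho(\{a\}) = \rho(a)$ together with the definition of $\prin{\rho}{-}$ as a downset.

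There is essentially no obstacle: the only nontrivial input is Fact~\ref{fct:finitelangs}, which supplies $\{a\}\in\bpol{\Cs}$; everything else is unwinding the definitions of $\tau$, of optimal \imprints, and of $\prin{\rho}{-}$. The statement is used in the completeness proof to bound $\tau(a_i)$ from above by $\dclosr\{\rho(a_i)\}$ inside the product expression defining $V$, so the simple downset form above is precisely what is needed there.
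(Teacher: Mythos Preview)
Your proposal is correct and follows exactly the same approach as the paper's own proof: use Fact~\ref{fct:finitelangs} to get $\{a\}\in\bpol{\Cs}$, take $\{\{a\}\}$ as a \bpol{\Cs}-cover of $\{a\}$, and conclude $\tau(a)=\opti{\bpol{\Cs}}{\{a\},\rho}\subseteq\prin{\rho}{\{\{a\}\}}=\dclosr\{\rho(a)\}$.
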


\begin{proof}
  By definition of $\tau$, $\tau(a) = \opti{\bpol{\Cs}}{\{a\},\rho}$. Moreover, we have $\{a\} \in \bpol{\Cs}$ by Fact~\ref{fct:finitelangs}. Therefore, it is immediate that $\Kb = \{\{a\}\}$ is \bpol{\Cs}-cover of $\{a\}$. Consequently, $\tau(a) \subseteq \prin{\rho}{\Kb} =  \dclosr \{\rho(a)\}$.
\end{proof}

In view of Fact~\ref{fct:cletters}, we now obtain that,
\[
  U  \subseteq \dclosr (S \cdot \{\rho(a_1)\} \cdot S^2 \cdots S^2 \cdot \{\rho(a_n)\} \cdot S) = \dclosr V
\]
This concludes the proof.

\section{Omitted proofs in Section~\ref{sec:pbpolg}}
\label{app:pbpolg}
This appendix is devoted to the proof of Theorem~\ref{thm:pbpolg}. The statement is as follows. Recall that a \vari of group languages \Cs is fixed.

\adjustc{thm:pbpolg}
\begin{theorem}
  Fix a morphism $\alpha: A^* \to M$ and a \nice \mratm $\rho: 2^{A^*} \to R$. Then, \pbpoluopti is the least \pbpol{\Cs}-complete subset of $M \times R$.
\end{theorem}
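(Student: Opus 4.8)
The plan is to prove the two inclusions separately, in the spirit of Theorems~\ref{thm:half:gmainpolc} and~\ref{thm:bpolg}. Throughout, abbreviate $\sigma = \lratauxppc$ and fix a \Cs-optimal \iden $L$ for $\sigma$, so $L\in\Cs$, $\veps\in L$ and $\pbpoluopti = \ioptic{\sigma} = \sigma(L)$ by definition. For \emph{soundness} one shows $\pbpoluopti$ satisfies the four clauses. Since $\pbpol{\Cs}=\pol{\bpol{\Cs}}$ is a \pvari closed under concatenation and marked concatenation (Theorem~\ref{thm:polclos}), Lemma~\ref{lem:copelrat} gives that $\sigma$ is quasi-\tame with $\quasi{\sigma}(T)=\dclosr T$; in particular $\sigma$ is \Cs-\tame. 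The \emph{downset} clause then follows from Axiom~\ref{itm:ax3} applied to $L\in\Cs$, which gives $\sigma(L)=\quasi{\sigma}(\sigma(L))=\dclosr\sigma(L)$. The \emph{multiplication} clause follows from Lemma~\ref{lem:qmult} applied to $\sigma$ with $H=L'=\{\veps\}$, after recording that $(1_M,1_R)\in\pbpoluopti$ because $\veps\in\alpha\inv(1_M)\cap L$ forces $1_R$ into every imprint of a \pbpol{\Cs}-cover of that language. For the \emph{\Cs-operation} and \emph{\pbpol{\Cs}-operation} clauses the argument mirrors the soundness half of Theorem~\ref{thm:half:gmainpolc}: an element $(r,T)\in\ioptic{\pbwmrats}$ arises (since \pbwmrats is \nice) from a word $w=a_1\cdots a_n$ in a \Cs-optimal \iden for \pbwmrats; one builds the \pbpol{\Cs}-language $La_1L\cdots a_nL$, and for the \pbpol{\Cs}-operation one additionally inserts, around an accessible $\bpol{\pol{\Cs}}$-idempotent witnessing $(e,f)$, a looped factor of value $1_R+r$ — legitimate since $\pbpol{\Cs}$ is closed under concatenation, $\{\veps\}\in\bpol{\pol{\Cs}}$ by Fact~\ref{fct:finitelangs}, and $L$ is a group language so it can be pumped. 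Testing these covers against the optimality of $\sigma(L)=\pbpoluopti$, using that $\sigma$ is a quasi-\tame \ratm with $L$ as a \Cs-\iden, yields $T\subseteq\pbpoluopti$ and $(e,f(1_R+r)f)\in\pbpoluopti$ respectively.

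For \emph{completeness}, fix a \pbpol{\Cs}-complete $S$; it suffices to show that whenever $(t,r)\in\sigma(L)$, i.e.\ $r\in\opti{\pbpol{\Cs}}{\alpha\inv(t)\cap L,\rho}$, one has $(t,r)\in S$. The crucial structural fact is that $L$, living in a \vari of group languages, is recognised by a \emph{finite group} $G$ via a morphism $\beta$ with $\beta\inv(1_G)\subseteq L$. By Lemma~\ref{lem:half:pump}, an optimal \pbpol{\Cs}-cover of $\alpha\inv(t)\cap L$ can be replaced by one whose \Cs-ingredients come only from $L$ itself (through languages $La_1L\cdots a_nL$), which confines the remaining freedom to the hierarchy of basis the \emph{finite} \vari $\Ds$ generated by $L$ and its quotients (which is $\subseteq\Cs$, being a finite Boolean algebra of $\beta$-preimages). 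On such a finite basis the main theorem of~\cite{pseps3j} characterises the level-$3/2$ optimal imprints as a least fixpoint; transporting that description back along Lemma~\ref{lem:half:pump}, each member of the cover obtained has a weight assembled from letter pairs $(\alpha(a),\rho(a))$, from $L$-contributions, and from iterations around idempotents. The letter pairs are absorbed by \emph{multiplication} and \emph{downset}; the $L$-contributions are exactly what \emph{\Cs-operation} supplies (every $(r,T)\in\ioptic{\pbwmrats}$ has $T\subseteq S$); and the idempotent iterations are exactly what \emph{\pbpol{\Cs}-operation} supplies. Hence $(t,r)\in S$, so $\pbpoluopti$ is the least \pbpol{\Cs}-complete subset, and — combining with the least fixpoint computation of $\ioptic{\pbwmrats}$ via Corollary~\ref{cor:epswit2} — \pbpol{\Cs}-covering, and by Lemma~\ref{lem:septocove} also \pbpol{\Cs}-separation, follow from Proposition~\ref{prop:utooptilat}.

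The main obstacle is the completeness direction, and within it the reduction to the finite-basis theorem of~\cite{pseps3j}. One must (i) prove that optimal \pbpol{\Cs}-covers lose nothing by using only \Cs-languages of the special shape produced by Lemma~\ref{lem:half:pump}, which is precisely where the group structure of \Cs is indispensable; (ii) isolate the right finite \vari $\Ds\subseteq\Cs$ so that the \pbpol{\Cs}-optimal imprints relevant to the input coincide with the \pbpol{\Ds}-optimal ones; and (iii) match, clause by clause, the operations of the \cite{pseps3j} fixpoint with \emph{downset}, \emph{multiplication}, \emph{\Cs-operation} and \emph{\pbpol{\Cs}-operation}, so that a separating cover produced over the finite basis lifts to one all of whose members carry an $S$-witnessed weight. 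Handling the interaction of the extra layer contributed by $L$ with the two nested polynomial closures, while keeping every constructed language inside $\pbpol{\Cs}$, is the most delicate part of the bookkeeping.
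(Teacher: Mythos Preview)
Your high-level plan matches the paper: prove soundness by verifying the four clauses for $\sigma(L)=\pbpoluopti$, and prove completeness by reducing to the finite-basis theorem of~\cite{pseps3j}. However, two of your key steps have genuine gaps.

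\textbf{Soundness of \pbpol{\Cs}-operation.} You write that one ``inserts, around an accessible idempotent witnessing $(e,f)$, a looped factor of value $1_R+r$''. This is the wrong quantifier direction. What must be shown is that $f(1_R+r)f \in \opti{\pbpol{\Cs}}{L\cap\alpha\inv(e),\rho}$, i.e.\ that \emph{every} \pbpol{\Cs}-cover~\Kb of $L\cap\alpha\inv(e)$ contains some $K$ with $f(1_R+r)f\leq\rho(K)$. Constructing a single \pbpol{\Cs}-language does not help; you need a property that constrains arbitrary languages in \pbpol{\Cs}. The paper obtains this via a nontrivial pumping argument: it factors $\pbpol{\Cs}$ as $\pol{\cocl{\Ds}}$ for a suitable \emph{finite} \pvari $\Ds\subseteq\pol{\Cs}$ containing $L$ and enough languages to generate the given cover, and then invokes the key property of polynomial closure (from~\cite{pzgenconcat}) that $u^{p\ell+1}\leq_{\Fs} u^{p\ell}vu^{p\ell}$ whenever $v\canod u$. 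This lets one force, inside any $K\in\Kb$ containing a suitable power $u^{pk+1}$, both the language $H^{pk+1}$ (contributing~$f$) and $H^{pk}vH^{pk}$ (contributing~$frf$), where $H$ is chosen so that $f\leq\rho(H)$ and $\rho(v)=r$. Your sketch does not mention this ingredient, and nothing simpler is known to work here.

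\textbf{Multiplication and the choice of $L$.} For multiplication you invoke Lemma~\ref{lem:qmult} with $H=L'=\{\veps\}$, but that lemma concerns $\opti{\Gs}{HL',\tau}$ and does not yield $S\cdot S\subseteq S$ directly. The paper instead arranges $LL=L$ (choosing $L=\beta\inv(1_N)$ for a recognizing morphism~$\beta$), so that $\sigma(L)\cdot\sigma(L)\subseteq\dclosr(\sigma(L)\cdot\sigma(L))=\sigma(LL)=\sigma(L)$ via quasi-multiplicativity. More seriously, in completeness you keep the same $L$ (the \Cs-optimal \iden for~$\sigma$). The paper does \emph{not}: it takes instead $L=\psi\inv(1_G)$ where $\psi$ recognizes a \Cs-optimal \iden~$H$ for the auxiliary \ratm $\pbwmrats$. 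This is essential, because the clauses defining ``\pbpol{\Cs}-complete'' quantify over $\ioptic{\pbwmrats}=\pbwmrats(H)$; to line up the \cite{pseps3j} fixpoint operations for the finite basis~$\Gs$ (languages recognized by~$\psi$) with your \Cs-operation and \pbpol{\Cs}-operation, one needs $\psi\inv(1_G)\subseteq H$, which is guaranteed by this choice but not by yours. Your task~(iii) (``match clause by clause'') is exactly where this bites: without tying the group to $\pbwmrats$ you cannot show that the \pbpol{\Gs}-closure step of~\cite{pseps3j} lands inside~$S$.
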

\restorec

For the proof, we fix a morphism $\alpha: A^* \to M$ and a \nice \mratm $\rho: 2^{A^*} \to R$. We need to show that \ioptic{\lratauxppc} is the least \pbpol{\Cs}-complete subset of $M \times R$. The proof involves two directions which are proved independently:
\begin{itemize}
\item First, we prove that the set \pbpoluopti is \pbpol{\Cs}-complete. This corresponds to soundness of the least fixpoint procedure computing \ioptic{\lratauxppc}. The argument is based on results of~\cite{pzgenconcat} about the polynomial closure operation.
\item Then, we prove that \pbpoluopti is included in every \pbpol{\Cs}-complete subset. This corresponds to completeness of the least fixpoint procedure computing \ioptic{\lratauxppc}. The argument is based on on the main theorem of~\cite{pseps3j}.
\end{itemize}

\subsection{Soundness}

We prove that \ioptic{\lratauxppc} is \pbpol{\Cs}-complete. For the sake of avoiding clutter, we shall write $\tau: 2^{A^*} \to 2^{M \times R}$ for the \ratm \lratauxppc. Moreover, we write $S = \ioptic{\tau}$: we have to show that $S$ is \pbpol{\Cs}-complete. For the proof we let $L$ as a \Cs-optimal \iden for $\tau$: $L \in \Cs$, $\veps \in L$ and $S = \tau(L)$. Moreover, since \Cs is a \vari, it follows from the next lemma that we may choose the language $L$ so that $LL=L$. This will be useful.

\begin{lemma}\label{lem:epswitbis}
  There exists a \Cs-optimal \iden $L$ for $\tau$ such that $LL=L$.
\end{lemma}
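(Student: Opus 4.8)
The plan is to start from an arbitrary \Cs-optimal \iden for $\tau$ and to replace it by its ``kernel'' submonoid, exploiting that languages in \Cs are group languages.

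First I would invoke Lemma~\ref{lem:epswit} to fix a \Cs-optimal \iden $L_0$ for $\tau$; note that $L_0\neq\emptyset$ since $\veps\in L_0$. Because $L_0\in\Cs$ and every language of \Cs is a group language, the syntactic monoid $G$ of $L_0$ is a finite \emph{group}. Let $\eta\colon A^*\to G$ be the (surjective) syntactic morphism and write $L_0=\eta\inv(P)$ with $1_G\in P$ (as $\veps\in L_0$). I then set $L=\eta\inv(1_G)$ and claim it has all the required properties. Checking the easy ones is immediate: $\eta(\veps)=1_G$ so $\veps\in L$; $\eta$ is a morphism so $LL\subseteq L$, whence $LL=L$; and $1_G\in P$ gives $L\subseteq L_0$, so, $\tau$ being a morphism of idempotent commutative monoids and hence monotone for the canonical order, $\tau(L)\leq\tau(L_0)$, while the reverse inequality $\tau(L_0)\leq\tau(L)$ follows from optimality of $L_0$ once we know $L\in\Cs$ and $\veps\in L$. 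Thus $\tau(L)=\tau(L_0)$ and $L$ is itself a \Cs-optimal \iden, satisfying $LL=L$.

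The real content is therefore the membership $L\in\Cs$, and this is the step I expect to be the main obstacle — not because it is deep, but because it is where the hypotheses must be combined precisely. Each ``bilateral translate'' $\eta\inv(g\inv P h\inv)$ with $g,h\in G$ equals a two-sided quotient $x\inv L_0 y\inv$ of $L_0$ (choose $x,y$ with $\eta(x)=g$, $\eta(y)=h$, using surjectivity), hence lies in \Cs since \Cs is quotient-closed. By the very definition of the syntactic monoid these translates \emph{separate the points} of $G$: if $g\neq g'$ there are words $x,y$ and $u,u'$ with $\eta(u)=g$, $\eta(u')=g'$ and, say, $xuy\in L_0$, $xu'y\notin L_0$, which places $g$ and $g'$ on opposite sides of $\eta\inv\big(\eta(x)\inv P\,\eta(y)\inv\big)$. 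A finite point-separating family of subsets of the finite set $G$ generates the full Boolean algebra $2^G$ — for each $x\in G$, intersecting the translates containing $x$ with the complements of those avoiding it isolates $\{x\}$. In particular $\{1_G\}$ is a finite Boolean combination of bilateral translates of $P$, so $L=\eta\inv(1_G)$ is the corresponding Boolean combination of members of \Cs, and hence $L\in\Cs$ because \Cs is a Boolean algebra.

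A preliminary fact to record along the way is that the syntactic monoid of a group language is a group: any morphism into a finite group recognizing $L_0$ has image a submonoid of a finite group, hence a finite group, and $G$ is a monoid quotient of that image, hence again a finite group. With that in hand the argument above is complete, and, as the reasoning shows, it uses nothing about $\tau$ beyond its being a \ratm.
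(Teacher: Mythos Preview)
Your proof is correct and follows essentially the same approach as the paper: start from an arbitrary \Cs-optimal \iden, pass to its syntactic morphism $\eta$, and take $L=\eta^{-1}(1_G)$. The only difference is that the paper invokes as ``standard'' (with a reference) the fact that every language recognized by the syntactic morphism of $L_0$ lies in \Cs, whereas you spell out this fact explicitly via quotients and Boolean combinations of bilateral translates --- which is exactly the classical argument behind the cited result.
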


\begin{proof}
  Using Lemma~\ref{lem:epswit}, we get a \Cs-optimal \iden $H$ for $\tau$. Since \Cs is a \vari, it is standard that there exists a morphism $\beta: A^* \to N$ recognizing $H$ and such that every language recognized by $\beta$ belongs to \Cs (it suffices to choose $\beta$ as the \emph{syntactic morphism} of $H$, see~\cite{pingoodref} for example). We let $L = \beta\inv(1_N)$. By hypothesis, we have $\veps \in L$ and $L \in \Cs$. Moreover, $L \subseteq H$ since $\veps \in H$ and $H$ is recognized by $\beta$. Thus, $\tau(L) \subseteq \tau(H)$ and since $H$ was a \Cs-optimal \iden for $\tau$, $L$ is one as well. It remains to show that $LL = L$. The inclusion $L \subseteq LL$ is immediate since $\veps \in L$. Additionally, if $w \in LL$, then $w = uv$ with $\beta(u) = \beta(v) = 1_N$ which implies that $\beta(w) = 1_N$, i.e. $w \in L$. Thus, $LL \subseteq L$ which concludes the proof.
\end{proof}

Recall that since \pbpol{\Cs} is a \pvari closed under concatenation and marked concatenation, the following fact which is immediate from Lemma~\ref{lem:copelrat}.

\begin{fact} \label{fct:pbpquasi}
  The \ratm $\tau$ is quasi-\tame and the associated endomorphism $\quasi{\tau}$ of $(2^{M \times R},\cup)$ is defined by $\quasi{\tau}(T) = \dclosr T$ for every $T \in 2^{M \times R}$.
\end{fact}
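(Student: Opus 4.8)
The plan is to read off Fact~\ref{fct:pbpquasi} as a direct instance of Lemma~\ref{lem:copelrat}. Recall that in this subsection $\tau$ denotes the \ratm $\lratauxppc = \lrataux{\pbpol{\Cs}}{\alpha}{\rho}$, which is exactly the map $\lratauxd$ of Lemma~\ref{lem:copelrat} once one takes $\Ds = \pbpol{\Cs}$ (with the same morphism $\alpha: A^* \to M$ and the same \mratm $\rho: 2^{A^*} \to R$ fixed throughout Section~\ref{sec:pbpolg}). So all that is needed is to check the three hypotheses of Lemma~\ref{lem:copelrat}: that $\alpha$ is a morphism, that $\rho$ is a \mratm, and that $\pbpol{\Cs}$ is a \pvari closed under concatenation. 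The first two are immediate: $\alpha$ is fixed as a morphism, and a \nice \mratm is in particular a \mratm.

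For the third hypothesis I would use the two closure results recalled in Section~\ref{sec:hiera}. Since the basis \Cs is a \vari, Theorem~\ref{thm:polclos} gives that $\pol{\Cs}$ is a \pvari, and Lemma~\ref{lem:boolclos} then gives that $\bpol{\Cs} = \bool{\pol{\Cs}}$ is a \vari, hence in particular a \pvari. Applying Theorem~\ref{thm:polclos} a second time, now to the \pvari $\bpol{\Cs}$, yields that $\pbpol{\Cs} = \pol{\bpol{\Cs}}$ is a \pvari closed under concatenation (in fact even under marked concatenation, though only plain concatenation is needed here). Hence Lemma~\ref{lem:copelrat} applies and produces precisely the conclusion of Fact~\ref{fct:pbpquasi}: $\tau$ is quasi-\tame, and its associated endomorphism of $(2^{M \times R},\cup)$ is $T \mapsto \dclosr T$.

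There is no genuine obstacle here; as already flagged in the text, the statement is immediate. The one point worth a moment's care is the bookkeeping of definitions — recognising that the \ratm $\tau$ used throughout the soundness argument for Theorem~\ref{thm:pbpolg} is literally the object $\lratauxd$ appearing in Lemma~\ref{lem:copelrat} when $\Ds = \pbpol{\Cs}$, so that the earlier lemma may be invoked verbatim without any further work.
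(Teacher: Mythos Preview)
Your proposal is correct and matches the paper's approach exactly: the paper simply notes that \pbpol{\Cs} is a \pvari closed under concatenation (and marked concatenation) and declares the fact immediate from Lemma~\ref{lem:copelrat}. Your explicit verification of the hypotheses via Theorem~\ref{thm:polclos} and Lemma~\ref{lem:boolclos} just spells out what the paper leaves implicit.
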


Moreover, the following property can be verified using Fact~\ref{fct:finitelangs} (which implies that \pbpol{\Cs} contains every finite language).

\begin{fact} \label{fct:taufinite}
  For every $w \in A^*$, $\tau(w) = \dclosr \{(\alpha(w),\tau(w))\}$.
\end{fact}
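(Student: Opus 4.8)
The plan is to unfold the definition of the \ratm $\tau=\lratauxppc$ on a singleton language and then appeal to Fact~\ref{fct:finitelangs}. Recall that for any language $K$,
\[
  \tau(K)=\{(s,q)\in M\times R\mid q\in\opti{\pbpol{\Cs}}{\alpha\inv(s)\cap K,\rho}\}.
\]
First I would specialize to $K=\{w\}$. Since $\alpha$ is a morphism, $\alpha\inv(s)\cap\{w\}$ equals $\{w\}$ when $s=\alpha(w)$ and is the empty language otherwise. In the latter case $\opti{\pbpol{\Cs}}{\emptyset,\rho}=\emptyset$, because the empty family of languages is a (trivially optimal) \pbpol{\Cs}-cover of $\emptyset$ and its $\rho$-\imprint is $\dclosr\emptyset=\emptyset$. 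Hence every element of $\tau(w)$ has first component $\alpha(w)$, and
\[
  \tau(w)=\{(\alpha(w),q)\mid q\in\opti{\pbpol{\Cs}}{\{w\},\rho}\}.
\]

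Next I would compute $\opti{\pbpol{\Cs}}{\{w\},\rho}=\dclosr\{\rho(w)\}$. For the inclusion from left to right, note that $\{w\}\in\pbpol{\Cs}$ by Fact~\ref{fct:finitelangs} (indeed \pbpol{\Cs} is a level $\geq 1$ in the hierarchy of basis~\Cs), so $\{\{w\}\}$ is a \pbpol{\Cs}-cover of $\{w\}$ and therefore $\opti{\pbpol{\Cs}}{\{w\},\rho}\subseteq\prin{\rho}{\{\{w\}\}}=\dclosr\{\rho(w)\}$. For the reverse inclusion, every \pbpol{\Cs}-cover \Kb of $\{w\}$ must contain some $K$ with $w\in K$, whence $\rho(w)\leq\rho(K)$ and $\rho(w)\in\prin{\rho}{\Kb}$; applying this to an optimal \pbpol{\Cs}-cover of $\{w\}$ gives $\rho(w)\in\opti{\pbpol{\Cs}}{\{w\},\rho}$, and since every $\rho$-\imprint is a downset of $R$, we conclude $\dclosr\{\rho(w)\}\subseteq\opti{\pbpol{\Cs}}{\{w\},\rho}$.

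Combining the two displays,
\[
  \tau(w)=\{(\alpha(w),q)\mid q\leq\rho(w)\}=\dclosr\{(\alpha(w),\rho(w))\},
\]
the last equality being exactly the definition of the downset operation on $M\times R$. There is no genuine obstacle here: the claim is a direct unfolding of the definition of \lratauxppc together with Fact~\ref{fct:finitelangs}. The only two points that need a moment of care are handling the empty intersection (to force the first coordinate of every element of $\tau(w)$ to be $\alpha(w)$) and using the downset-closure of optimal \imprints for the reverse inclusion.
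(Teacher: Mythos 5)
Your proof is correct and is exactly the verification the paper has in mind: the paper leaves this fact unproved, merely pointing to Fact~\ref{fct:finitelangs}, and your unfolding of the definition of \lratauxppc on a singleton --- including the observation that the empty intersection $\alpha\inv(s)\cap\{w\}$ forces the first coordinate to be $\alpha(w)$, which the paper itself uses in the proof of Lemma~\ref{lem:pollet}, and the singleton-cover argument it uses for Facts~\ref{fct:theletters} and~\ref{fct:cletters} --- is the intended argument. Note that you have also, correctly, read the right-hand side as $\dclosr\{(\alpha(w),\rho(w))\}$; the $\tau(w)$ appearing in the second coordinate of the printed statement is a typo.
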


It now remains to prove that $S= \tau(L)$ is \pbpol{\Cs}-complete. This involves four properties. We start with downset.

\medskip
\noindent
{\bf Downset.} We show that $\dclosr S = S$. The right to left inclusion is trivial. Hence, we focus on the converse one. Consider $(s,r) \in \dclosr S = \dclosr \tau(L)$. We have $r' \in R$ such that $r \leq r'$ and $(s,r') \in \tau(L)$. Since $\tau$ is an alias for \lratauxppc, this exactly says that,
\[
  r' \in \opti{\pbpol{\Cs}}{L \cap \alpha\inv(s),\rho}
\]
Since \imprints are closed under downset by definition, we get $r \in \opti{\pbpol{\Cs}}{L \cap \alpha\inv(s),\rho}$ and we conclude that $(s,r) \in \tau(L) = S$, finishing the proof.

\medskip
\noindent
{\bf Multiplication.} Consider $(s,r),(s',r') \in S = \tau(L)$, we have to show that $(ss',rr') \in S = \tau(L)$. We have,
\[
  (ss',rr') \in \tau(L) \cdot \tau(L)
\]
Since $\tau$ is quasi-\tame and the associated endomorphism of $(2^{M \times R},\cup)$ is $T \mapsto \dclosr T$, the axioms of quasi-\mratms yield that $\tau(LL) = \dclosr(\tau(L) \cdot \tau(L))$. We get that $(ss,rr') \in \tau(LL)$. Since $LL = L$ by definition of $L$, this yields $(ss',rr') \in \tau(L)$ as desired.

\medskip

It remains to handle the last two properties in the definition of \pbpol{\Cs}-complete subsets: \Cs-operation and \pbpol{\Cs}-operation.  First, we prove the following lemma which we shall need for both properties.

\begin{lemma} \label{lem:pbp:pbpolsound}
  For every $(r,T) \in \ioptic{\pbwmrats}$, we have,
  \[
    T \in \opti{\pol{\Cs}}{L \cap \rho_*\inv(r),\tau}
  \]
\end{lemma}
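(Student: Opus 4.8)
The plan is to peel $\ioptic{\pbwmrats}$ down to its ``word level'', reduce the goal to a statement about singleton languages, and then close the gap using the multiplicativity of optimal \imprints. Throughout, write $\tau$ for \lratauxppc, $S=\ioptic{\tau}$, and let $L$ be the fixed \Cs-optimal \iden for $\tau$ (so $L\in\Cs$, $\veps\in L$, $S=\tau(L)$). Recall that $\tau$ is a \ratm (Proposition~\ref{prop:areratms}) and, by Fact~\ref{fct:pbpquasi}, is quasi-\tame, hence also \pol{\Cs}-\tame, with $\quasi{\tau}(T')=\dclosr T'$; thus Lemma~\ref{lem:qmult} is available for $\tau$ and the \pvari \pol{\Cs}. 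I will also use two elementary facts, valid for any \ratm $\rho'$: optimal $\rho'$-\imprints are downward closed for the canonical order of the rating set of $\rho'$ (they are downsets of finite sets of values), and $\rho'(u)\in\opti{\pol{\Cs}}{\{u\},\rho'}$ for every word $u$, since $u$ belongs to a block of every \pol{\Cs}-cover of $\{u\}$. Applied to $\tau$, the first point together with Lemma~\ref{lem:polctoc} specializes to $S=\ioptic{\tau}\in\opti{\pol{\Cs}}{\{\veps\},\tau}$.

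First I would make $\ioptic{\pbwmrats}$ explicit. Since $L\in\Cs$ and $\veps\in L$, minimality of \idens gives $\ioptic{\pbwmrats}\subseteq\pbwmrats(L)$, and because $\pbwmrats$ is a \nice \mratm, reading off its values on letters and multiplying yields $\pbwmrats(L)=\{(\rho(w),V_w)\mid w\in L\}$, where $V_\veps=\{(1_M,1_R)\}$ and, for $w=a_1\cdots a_n$ with $n\geq 1$,
\[ V_w=\big(S\cdot\{(\alpha(a_1),\rho(a_1))\}\cdot S\big)\cdots\big(S\cdot\{(\alpha(a_n),\rho(a_n))\}\cdot S\big). \]
Hence, given $(r,T)\in\ioptic{\pbwmrats}$, there is $w\in L$ with $r=\rho(w)$ and $T=V_w$; in particular $\{w\}\subseteq L\cap\rho_*\inv(r)$, so by Fact~\ref{fct:linclus} it suffices to prove $T=V_w\in\opti{\pol{\Cs}}{\{w\},\tau}$.

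The core step is to realize $V_w$ as an optimal $\tau$-\imprint of the singleton $\{w\}$ through Lemma~\ref{lem:qmult}. By Fact~\ref{fct:taufinite}, $\tau(a)=\dclosr\{(\alpha(a),\rho(a))\}$, so $\{(\alpha(a),\rho(a))\}\subseteq\tau(a)$, and since $\tau(a)\in\opti{\pol{\Cs}}{\{a\},\tau}$ and this \imprint is downward closed we get $\{(\alpha(a),\rho(a))\}\in\opti{\pol{\Cs}}{\{a\},\tau}$ for every letter $a$. When $n\geq 1$, $V_w$ is a product of factors each equal either to $S\in\opti{\pol{\Cs}}{\{\veps\},\tau}$ (carrying the language $\{\veps\}$) or to some $\{(\alpha(a_i),\rho(a_i))\}\in\opti{\pol{\Cs}}{\{a_i\},\tau}$ (carrying $\{a_i\}$), and the product of the associated languages collapses to $\{a_1\cdots a_n\}=\{w\}$. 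Iterating Lemma~\ref{lem:qmult} and telescoping the nested copies of $\quasi{\tau}=\dclosr(\cdot)$ via Axiom~\ref{itm:ax1} yields $\dclosr(V_w)\in\opti{\pol{\Cs}}{\{w\},\tau}$; as $V_w\subseteq\dclosr(V_w)$ and the \imprint is downward closed, $V_w\in\opti{\pol{\Cs}}{\{w\},\tau}$, as wanted. The case $w=\veps$ is immediate: $T=\{(1_M,1_R)\}\subseteq S\in\opti{\pol{\Cs}}{\{\veps\},\tau}$ (note $(1_M,1_R)=(\alpha(\veps),\rho(\veps))\in\tau(\veps)\subseteq\tau(L)=S$), and $\{\veps\}\subseteq L\cap\rho_*\inv(1_R)$, so Fact~\ref{fct:linclus} finishes it. The step I expect to require the most care is this iterated use of Lemma~\ref{lem:qmult}: one must keep straight that the operator it produces is $\dclosr$ (the downset induced by the order of $R$, extended to $M\times R$) rather than the downset of the rating set $2^{M\times R}$, check that the nested downsets telescope by Axiom~\ref{itm:ax1}, and invoke downward closure of the relevant \imprints at the right moments.
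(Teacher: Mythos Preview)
Your proof is correct and follows essentially the same route as the paper's: reduce to a word $w\in L$ via $\ioptic{\pbwmrats}\subseteq\pbwmrats(L)$, split on $w=\veps$ versus $w\in A^+$, use Lemma~\ref{lem:polctoc} for $S\in\opti{\pol{\Cs}}{\{\veps\},\tau}$ and Fact~\ref{fct:taufinite} for the letter case, iterate Lemma~\ref{lem:qmult} over the factors, and finish with Fact~\ref{fct:linclus}. The only cosmetic difference is that the paper first bounds $T\subseteq S\cdot\tau(a_1)\cdot S^2\cdots$ and applies Lemma~\ref{lem:qmult} to the $\tau(a_i)$'s, whereas you apply it directly to the singletons $\{(\alpha(a_i),\rho(a_i))\}$; both work since optimal \imprints are downward closed for the inclusion order on $2^{M\times R}$.
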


\begin{proof}
  Consider $(r,T) \in \ioptic{\pbwmrats}$. By definition of \pbwmrats, we have $w \in L$ such that $\pbwmrats(w) = \{(r,T)\}$. We consider two cases depending on whether $w = \veps$ or $w \in A^+$.

  \medskip
  \noindent
  {\bf Case~1:} Assume first that $w = \veps$. It follows that $(r,T) = (1_R,\{(1_M,1_R)\})$ (since $\{(1_R,\{(1_M,1_R)\})\}$ is the neutral element of $2^{R \times 2^{M \times R}}$). It follows that $L \cap \rho_*\inv(r) =  \{\veps\}$ (recall that $\veps \in L$ by definition). Thus, we have to show that,
  \[
    \{(1_M,1_R)\} \in \opti{\pol{\Cs}}{\{\veps\},\tau}
  \]
  By Fact~\ref{fct:taufinite}, we have $(1_M,1_R) \in \tau(\veps)$. Thus, since it is clear that $\tau(\veps) \in \opti{\pol{\Cs}}{\{\veps\},\tau}$, we get $\{(1_M,1_R)\} \in \opti{\pol{\Cs}}{\{\veps\},\tau}$ which concludes this case.

  \medskip
  \noindent
  {\bf Case~2:} Assume now that $w \in A^+$. In that case, we have $\ell \geq 1$ and $\ell$ letters $a_1,\dots,a_\ell$ such that $w = a_1 \cdots a_\ell$.

  Since $\pbwmrats(w) = \{(r,T)\}$, it follows from the definition of \pbwmrats that $r = \rho(w)$ and $T$ is equal to the following set,
  \[
    S \cdot \{(\alpha(a_1),\rho(a_1))\} \cdot S^2 \cdots S^2 \cdot \{(\alpha(a_\ell),\rho(a_\ell))\} \cdot S
  \]
  By Fact~\ref{fct:taufinite}, we have $\{(\alpha(a),\rho(a))\} \subseteq \tau(a)$ for every $a \in A$. Therefore, we obtain that,
  \[
    T \subseteq S \cdot \tau(a_1) \cdot S^2 \cdots S^2 \cdot \tau(a_\ell) \cdot S
  \]
  It is clear that, $\tau(a) \in \opti{\pol{\Cs}}{\{a\},\tau}$ for every $a \in A$. Moreover, $S = \ioptic{\tau} \in \opti{\pol{\Cs}}{\{\veps\},\tau}$ by Lemma~\ref{lem:polctoc}. Therefore, since $w = a_1 \cdots a_\ell$,	it is immediate from the above and Lemma~\ref{lem:qmult} that,
  \[
    T \in \opti{\pol{\Cs}}{\{w\},\tau}
  \]
  Moreover, since $r = \rho(w)$ and $w \in L$, we have $w \in \rho_*\inv(r) \cap L$. Thus, Fact~\ref{fct:linclus} yields that $T \in \opti{\pol{\Cs}}{\rho_*\inv(r) \cap L,\tau}$, concluding the proof.
\end{proof}

We may now handle the last two operations in the definition of \pbpol{\Cs}-complete subsets.

\medskip
\noindent
{\bf \Cs-operation.} Consider $(r,T) \in \ioptic{\pbwmrats}$. we have to show that $T \subseteq S = \tau(L)$. By Lemma~\ref{lem:pbp:pbpolsound}, we have,
\[
  T \in \opti{\pol{\Cs}}{L \cap \rho_*\inv(r),\tau}
\]
Since $L \in \Cs \subseteq \pol{\Cs}$, $\{L\}$ is a \pol{\Cs}-cover of $L \cap \rho_*\inv(r)$. Therefore, it follows that $T \subseteq \tau(L) = S$, concluding the proof.

\medskip
\noindent
{\bf \pbpol{\Cs}-operation.} Consider $(r,T) \in \ioptic{\pbwmrats}$ and a pair of idempotents $(e,f) \in \dclosr T \subseteq M \times R$. We have to show that,
\[
  (e,f + frf) = (e, f \cdot (1_R + r) \cdot f) \in \tau(L)
\]
Recall that $\tau = \lratauxppc$. Hence, this amounts to proving that $frf + f \in \opti{\pbpol{\Cs}}{L \cap \alpha\inv(e),\rho}$. Thus, we fix an arbitrary \pbpol{\Cs}-cover \Kb of $L \cap \alpha\inv(e)$ for the proof and show that $frf+f \in \prin{\rho}{\Kb}$.

First, we use \Kb to construct two auxiliary classes of languages \Ds and \Fs that we shall need for the proof (the argument reuses some standard results which are detailed in~\cite{pzgenconcat}).

\begin{fact} \label{fct:pbp:strat}
  There exist two finite \pvaris \Ds and \Fs which satisfy the following properties:
  \begin{itemize}
  \item $L \in \Ds$ and  $\Ds \subseteq \pol{\Cs}$, and,
  \item $\Fs \subseteq \pol{\cocl{\Ds}} \subseteq \pbpol{\Cs}$ and $K \in \Fs$ for every $K \in \Kb$.
  \end{itemize}
\end{fact}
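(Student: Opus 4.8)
The plan is to derive Fact~\ref{fct:pbp:strat} from a normal-form identity for \pbpol{\Cs} together with the standard fact that any finite set of regular languages can be carved out of a suitable finite \pvari by restricting to the languages recognized by one fixed morphism.

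First I would record the identity $\pbpol{\Cs} = \pol{\cocl{\pol{\Cs}}}$, which is among the standard consequences of the material recalled in~\cite{pzgenconcat}. The inclusion $\supseteq$ is clear, since $\cocl{\pol{\Cs}} \subseteq \bool{\pol{\Cs}}$ and \pol is monotone. For $\subseteq$, note that $\cocl{\pol{\Cs}}$ is again a \pvari (complementation exchanges unions and intersections and commutes with quotients, and $\emptyset = \overline{A^*}$, $A^* = \overline{\emptyset}$), so $\pol{\cocl{\pol{\Cs}}}$ is a \pvari by Theorem~\ref{thm:polclos}; in particular it is closed under finite intersection. Since \Cs is a \vari it is closed under complement, hence $\Cs = \cocl{\Cs} \subseteq \cocl{\pol{\Cs}} \subseteq \pol{\cocl{\pol{\Cs}}}$, and because the right-hand class is closed under union and marked concatenation it contains the least such class containing \Cs, namely \pol{\Cs}. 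Writing each element of $\bool{\pol{\Cs}}$ in disjunctive normal form as a finite union of languages $P \cap \overline Q$ with $P,Q \in \pol{\Cs}$ (merging the negative conjuncts using closure of \pol{\Cs} under intersection) and using $\overline Q \in \cocl{\pol{\Cs}}$, we get $\bool{\pol{\Cs}} \subseteq \pol{\cocl{\pol{\Cs}}}$; a final application of \pol yields $\pbpol{\Cs} \subseteq \pol{\cocl{\pol{\Cs}}}$.

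Next I would construct \Ds. For each $K \in \Kb$, unfold the membership $K \in \pbpol{\Cs} = \pol{\cocl{\pol{\Cs}}}$ according to the definition of polynomial closure: there are finitely many languages $\overline{P^K_1},\dots,\overline{P^K_{m_K}} \in \cocl{\pol{\Cs}}$ (so $P^K_i \in \pol{\Cs}$) from which $K$ is built by finitely many unions and marked concatenations. Let $\mathcal P$ be the finite set consisting of $L$ together with all the $P^K_i$; these are regular languages lying in \pol{\Cs}. Fix a morphism $\gamma\colon A^* \to N$ into a finite monoid recognizing every member of $\mathcal P$, and set $\Ds = \{K' \subseteq A^* \mid K' \text{ is recognized by } \gamma \text{ and } K' \in \pol{\Cs}\}$. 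Then \Ds is finite, and it is a \pvari because both ``recognized by $\gamma$'' and ``lies in \pol{\Cs}'' are preserved by union, intersection and quotients, and $\emptyset, A^* \in \Ds$. By construction $\Ds \subseteq \pol{\Cs}$ and $\mathcal P \subseteq \Ds$, so $L \in \Ds$ and each $P^K_i \in \Ds$, hence $\overline{P^K_i} \in \cocl{\Ds}$ and therefore $K \in \pol{\cocl{\Ds}}$ for every $K \in \Kb$. Moreover $\Ds \subseteq \pol{\Cs}$ gives $\cocl{\Ds} \subseteq \cocl{\pol{\Cs}}$, whence $\pol{\cocl{\Ds}} \subseteq \pol{\cocl{\pol{\Cs}}} = \pbpol{\Cs}$.

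Finally, \Fs is produced by the same carving trick applied to $\pol{\cocl{\Ds}}$: since $\cocl{\Ds}$ is a finite \pvari, $\pol{\cocl{\Ds}}$ is a \pvari by Theorem~\ref{thm:polclos}, though possibly infinite. Choosing a morphism $\gamma'\colon A^* \to N'$ recognizing every language of the finite set \Kb and setting $\Fs = \{K' \mid K' \text{ recognized by } \gamma' \text{ and } K' \in \pol{\cocl{\Ds}}\}$ gives a finite \pvari with $\Kb \subseteq \Fs \subseteq \pol{\cocl{\Ds}} \subseteq \pbpol{\Cs}$, which is exactly the second bullet. The only genuinely delicate ingredient is the normal-form identity $\pbpol{\Cs} = \pol{\cocl{\pol{\Cs}}}$, and within it the step $\pol{\Cs} \subseteq \pol{\cocl{\pol{\Cs}}}$, which is precisely where the hypothesis that \Cs is a \vari of group languages (hence closed under complement) is used; the remaining work is routine bookkeeping with finite \pvaris.
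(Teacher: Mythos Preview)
Your proposal is correct and follows essentially the same approach as the paper's proof: both rely on the normal-form identity $\pbpol{\Cs} = \pol{\cocl{\pol{\Cs}}}$ (which the paper cites as~\cite[Lemma~33]{pzgenconcat}) to extract finitely many $\pol{\Cs}$-languages underlying each $K\in\Kb$, then carve out finite \pvaris $\Ds$ and $\Fs$ containing the required languages. The only difference is presentational: you spell out the proof of the identity and implement the finite-\pvari construction explicitly via a recognizing morphism, whereas the paper delegates both steps to~\cite{pzgenconcat} (Lemmas~17 and~33). One small remark: the hypothesis you actually use in proving the identity is that \Cs is closed under complement (i.e., a \vari), not that it consists of group languages specifically.
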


\begin{proof}
  By definition \Kb contains finitely many languages in \pbpol{\Cs}. Moreover, it is shown in~\cite[Lemma~33]{pzgenconcat} that we have $\pbpol{\Cs} = \pol{\copol{\Cs}}$ where \copol{\Cs} denotes the class containing the languages with a complement in \pol{\Cs}. Consequently, there are finitely many languages in \pol{\Cs} such that every $K \in \Kb$ is built by applying unions and of marked concatenations to complements of these languages. It then follows from~\cite[Lemma~17]{pzgenconcat} that we may build a finite \pvari $\Ds \subseteq \pol{\Cs}$ which contains all these languages as well as $L \in \Cs \subseteq \pol{\Cs}$. Thus, by definition, we have $K \in \pol{\cocl{\Ds}}$ for every $K \in \Kb$. Finally, since \Kb is finite, we may reuse~\cite[Lemma~17]{pzgenconcat} to build another finite \pvari \Fs such that $\Fs \subseteq \pol{\cocl{\Ds}}$ and $K \in \Fs$ for every $K \in \Kb$. This concludes the proof.
\end{proof}

We fix \Ds and \Fs as the finite \pvaris described in Fact~\ref{fct:pbp:strat} for the remainder of the proof. Since \Ds and \Fs are finite, it is classical to associate relations \canod and \canof over $A^*$. Given $w,w' \in A^*$, we have, write $w \canod w'$ if and only if the following holds:
\[
  \begin{array}{lll}
    w \canod w' & \text{if and only if} & \text{$\forall L \in \Ds$,} \quad w \in L \ \Rightarrow\ w' \in L \\
    w \canof w' & \text{if and only if} & \text{$\forall L \in \Fs$,} \quad w \in L \ \Rightarrow\ w' \in L
  \end{array}
\]
Clearly, these are preorder relations. In particular, given $u \in A^*$, we let $\upset[\Ds]{u} \subseteq A^*$ as the least upper set of \canod containing $u$: $\upset[\Ds]{u} = \{v \in A^* \mid u \canod v\}$. Similarly, we define $\upset[\Fs]{u}$ from \canod. It is straightforward to verify that $\upset[\Ds]{u} \in \Ds$ and $\upset[\Fs]{u} \in \Fs$ for every $u \in A^*$ (see~\cite[Section~2]{pzgenconcat} for the proofs). Finally, since \Ds and \Fs are closed under quotients, one may verify that \canod and \canof are compatible with word concatenation (again, we refer the reader to~\cite{pzgenconcat} for details).

Finally, we have the following property which holds because $\Fs \subseteq \pol{\cocl{\Ds}}$. This is a corollary of a generic property of polynomial closure presented in~\cite{pzgenconcat}.

\begin{lemma} \label{lem:pbp:choiceofk}
  There exist natural numbers $p,k \in \nat$ such that for every $\ell \geq k$ and $u,v \in A^*$ satisfying $v \leq_\Ds u$, we have,
  \[
    u^{p\ell+1} \leq_{\Fs} u^{p\ell} v u^{p\ell}
  \]
\end{lemma}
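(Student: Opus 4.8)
The plan is to deduce this lemma from the generic "$\omega$-identity" for polynomial closure established in~\cite{pzgenconcat}, after a short translation between $\Ds$ and $\cocl{\Ds}$. The first preliminary step is to note that $\cocl{\Ds}$ is again a \emph{finite} \pvari: it is closed under finite unions and intersections by De Morgan's laws, contains $\emptyset$ and $A^*$, and is quotient-closed since quotients commute with complementation. The second step is the elementary observation that, upon complementing the finitely many defining languages, one has $v\leq_{\Ds}u$ if and only if $u\leq_{\cocl{\Ds}}v$ for all $u,v\in A^*$. The third step uses $\Fs\subseteq\pol{\cocl{\Ds}}$: adding languages can only refine a preorder, so $u^{p\ell+1}\leq_{\pol{\cocl{\Ds}}}u^{p\ell}vu^{p\ell}$ implies $u^{p\ell+1}\leq_{\Fs}u^{p\ell}vu^{p\ell}$. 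Hence it suffices to exhibit $p,k\in\nat$ such that $u^{p\ell+1}\leq_{\pol{\cocl{\Ds}}}u^{p\ell}vu^{p\ell}$ whenever $\ell\geq k$ and $u\leq_{\cocl{\Ds}}v$. This is exactly the instance, for the finite \pvari $\cocl{\Ds}$, of the fact that for every finite \pvari $\Es$ the class $\pol{\Es}$ satisfies the inequality $x^{\omega+1}\leq x^{\omega}yx^{\omega}$ whenever $x\leq_{\Es}y$, which is the generic property I would invoke.

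Should a self-contained argument be wanted, I would reprove this instance directly. I would fix a morphism $\eta: A^*\to N$ onto a finite monoid recognising every language of $\cocl{\Ds}$, with its stable preorder, so that the $\eta$-recognisable languages of $\cocl{\Ds}$ are the upper sets of $N$ and $u\leq_{\cocl{\Ds}}v$ amounts to $\eta(u)$ lying below $\eta(v)$. I would take $p:=\omega(N)$, so that $\eta(u^{p\ell})=\eta(u)^{\omega}$ is idempotent for every $\ell\geq 1$. Every language of $\pol{\cocl{\Ds}}$ is a finite union of marked products $L_0a_1L_1\cdots a_nL_n$ with $L_i\in\cocl{\Ds}$; since $\Fs$ is finite, $n$ is bounded by a constant depending only on $\Fs$, and I would choose $k$ large enough that for $\ell\geq k$ the word $u^{p\ell+1}$, viewed as $p\ell+1$ consecutive copies of $u$, contains a run of at least $\omega(N)$ consecutive copies of $u$ untouched by the $n$ marker positions. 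Given a marked product with $u^{p\ell+1}=x_0a_1x_1\cdots a_nx_n$, $x_i\in L_i$, I would reroute this factorisation onto $u^{p\ell}vu^{p\ell}$: the markers in the first half keep their positions, those in the second half are shifted rightward past the inserted $v$ and the extra copies of $u$, and the single block straddling the "jump" is taken to be the long untouched run above, cut at a position aligned with the period of $u$. Using $\eta(u)\leq\eta(v)$, the idempotency of $\eta(u)^{\omega}$, and the stabilisation of high powers of $\eta(u)$, each new block then has $\eta$-image above that of the block it replaces, hence lies in the same $L_i$, so $u^{p\ell}vu^{p\ell}$ lies in the marked product as well. Applying this to $L=\upset[\Fs]{u^{p\ell+1}}\in\Fs$ yields the claim.

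The delicate part is precisely this surgery: rerouting the markers across the inserted factor and choosing the cut so that the power of $u$ flanking $v$ is simultaneously period-aligned and deep enough for $\eta(u)$ to have stabilised. This is why $k$ must be taken large relative to the (bounded) number of markers needed to write the finitely many languages of $\Fs$ as unions of marked products over $\cocl{\Ds}$; once that is arranged, checking that the modified blocks remain in the corresponding $\cocl{\Ds}$-languages is a routine computation in $N$. Everything else — the two preliminary reductions and the passage to $\upset[\Fs]{u^{p\ell+1}}$ — is bookkeeping.
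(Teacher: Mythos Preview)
Your proposal is correct and follows essentially the same route as the paper: both invoke the generic polynomial-closure inequality from~\cite{pzgenconcat} (the paper cites it as Proposition~34) together with the observation that $v \leq_\Ds u$ is equivalent to $u \leq_{\cocl{\Ds}} v$.

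One point of bookkeeping is worth flagging. The paper applies Proposition~34 \emph{per language}: for each $L \in \Fs$ it obtains constants $k_L,p_L$, and then sets $k = \max_L k_L$ and $p = \mathrm{lcm}_L\, p_L$, using that $\Fs$ is finite. Your intermediate reduction to the preorder $\leq_{\pol{\cocl{\Ds}}}$ instead asks for constants $p,k$ that work uniformly across the \emph{infinite} class $\pol{\cocl{\Ds}}$, which is not the form in which Proposition~34 is stated. This is harmless in the end, since your self-contained sketch correctly falls back on the finiteness of $\Fs$ to bound the number $n$ of markers---which is exactly the paper's argument---but the cleaner path is to skip the $\leq_{\pol{\cocl{\Ds}}}$ detour and quantify over $L\in\Fs$ from the start, as the paper does.
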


\begin{proof}
  Since $\Fs \subseteq \pol{\cocl{\Ds}}$, it follows from~\cite[Proposition~34]{pzgenconcat} that for every $L \in \Fs$, there exists $k_L,p_L \in \nat$ such that for every $\ell \geq k_L$ and every $u,v \in A^*$ satisfying $u \leqslant_{\cocl{\Ds}} v$, we have:
  \[
    u^{p_H\ell+1} \in L \quad \Rightarrow \quad u^{p_H\ell} v u^{p_H\ell} \in L.
  \]
  We choose $k$ as the maximum of all numbers $k_L$ for $L \in \Fs$ and $p$ as the least common multiplier of all numbers $p_L$ for $L \in \Fs$ (recall that \Fs is finite by definition). It then follows, that for every $\ell \geq k$ and $u,v \in A^*$ satisfying $u \leqslant_{\cocl{\Ds}} v$, we have:
  \[
    u^{p\ell+1} \in L \quad \Rightarrow \quad u^{p\ell} v u^{p\ell} \in L \quad \text{for every $L \in \Fs$.}
  \]
  By definition, this exactly says that $u^{p\ell+1} \leq_{\Fs} u^{p\ell} v u^{p\ell}$.

  Finally, observe that for every $u,v \in A^*$, the hypothesis $v \canod u$ implies that $u \leqslant_{\cocl{\Ds}} v$. Indeed, $v \canod u$ means that for every $L \in \Ds$, we have $v \in L \Rightarrow u \in L$. The contrapositive then states that for every $L \in \Ds$, we have $u \not\in L \Rightarrow v \not\in L$. Finally, since the languages of $\cocl{\Ds}$ are the complements of those in \Ds, it follows that for every $L \in \cocl{\Ds}$, we have $u \in L \Rightarrow v \in L$, i.e. $u \leqslant_{\cocl{\Ds}} v$. This concludes the proof.
\end{proof}

We may now come back to the main argument. Recall that we have $(r,T) \in \ioptic{\pbwmrats}$ and an idempotent $(e,f) \in \dclosr T \subseteq M \times R$. Furthermore, we have a \pbpol{\Cs}-cover \Kb of $L \cap \alpha\inv(e)$. Our objective is to show that $frf+f \in \prin{\rho}{\Kb}$. By definition, this amounts to exhibiting $K \in \Kb$ such that $frf+f \leq \rho(K)$. We start with the following lemma which exhibits a word $v \in A^*$ which we shall use for the construction of $K$. This is where we use the hypothesis that $(r,T) \in \ioptic{\pbwmrats}$.

\begin{lemma} \label{lem:pbp:thev}
  There exists $v \in A^*$ such that $r = \rho(v)$, $v \in L$ and $f \in \opti{\pbpol{\Cs}}{\left(\upset[\Ds]{v}\right) \cap \alpha\inv(e),\rho}$
\end{lemma}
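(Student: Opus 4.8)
The plan is to reduce everything to Lemma~\ref{lem:pbp:pbpolsound} by feeding it a carefully chosen finite $\pol{\Cs}$-cover of $L \cap \rho_*\inv(r)$ built from the \canod-classes. Recall that $\tau$ denotes the \ratm \lratauxppc. By Lemma~\ref{lem:pbp:pbpolsound}, the hypothesis $(r,T) \in \ioptic{\pbwmrats}$ gives $T \in \opti{\pol{\Cs}}{L \cap \rho_*\inv(r),\tau}$. The idea is to locate, inside $L \cap \rho_*\inv(r)$, a single word $v$ onto whose \canod-upper set $\upset[\Ds]{v}$ the whole of $T$ is ``concentrated''; the three required properties of $v$ will then be immediate.

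Concretely, I would consider the family $\Hb = \{\,\upset[\Ds]{u} \mid u \in L \cap \rho_*\inv(r)\,\}$. Since \Ds is a finite \pvari (Fact~\ref{fct:pbp:strat}), the preorder \canod has only finitely many classes, so $\Hb$ is finite; each $\upset[\Ds]{u}$ lies in $\Ds \subseteq \pol{\Cs}$; and $u \in \upset[\Ds]{u}$, so $\Hb$ is a finite $\pol{\Cs}$-cover of $L \cap \rho_*\inv(r)$. Because $\pol{\Cs}$ is a lattice (Theorem~\ref{thm:polclos}), an optimal $\pol{\Cs}$-cover of $L \cap \rho_*\inv(r)$ for $\tau$ exists (Lemma~\ref{lem:bgen:opt}), and therefore $\opti{\pol{\Cs}}{L \cap \rho_*\inv(r),\tau} \subseteq \prin{\tau}{\Hb}$. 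Hence $T \in \prin{\tau}{\Hb}$ (in particular $\Hb \neq \emptyset$), which yields some $v \in L \cap \rho_*\inv(r)$ with $T \subseteq \tau(\upset[\Ds]{v})$. This $v$ already satisfies the first two requirements: $v \in L$ and $\rho(v) = r$.

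For the last requirement I would just unfold the definitions. From $(e,f) \in \dclosr T$ pick $f' \in R$ with $f \leq f'$ and $(e,f') \in T \subseteq \tau(\upset[\Ds]{v}) = \lratauxppc(\upset[\Ds]{v})$. By the definition of the \ratm \lratauxppc, this says $f' \in \opti{\pbpol{\Cs}}{\alpha\inv(e) \cap \upset[\Ds]{v},\rho}$, and since optimal \imprints are downward closed and $f \leq f'$, we get $f \in \opti{\pbpol{\Cs}}{(\upset[\Ds]{v}) \cap \alpha\inv(e),\rho}$, which is exactly the claim. The only step that is more than bookkeeping is the choice of $\Hb$: taking \emph{all} \canod-classes simultaneously is what makes the cover at once finite, contained in $\pol{\Cs}$, and indexed by words of $L \cap \rho_*\inv(r)$ — and it is this indexing that forces the witness $v$ produced by optimality to lie in $L$ and to have $\rho$-value $r$. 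I do not expect a genuine obstacle beyond getting that construction right and making sure the \ratm $\tau$ interacts with it as stated.
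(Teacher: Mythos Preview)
Your proposal is correct and follows essentially the same approach as the paper: both apply Lemma~\ref{lem:pbp:pbpolsound}, take the finite cover $\Hb = \{\upset[\Ds]{u} \mid u \in L \cap \rho_*\inv(r)\}$, use optimality to extract a witness $v \in L \cap \rho_*\inv(r)$ with $T \subseteq \tau(\upset[\Ds]{v})$, and then unfold $\tau = \lratauxppc$ together with $(e,f) \in \dclosr T$. The only cosmetic difference is that the paper first passes from $\opti{\pol{\Cs}}{\cdot,\tau}$ to $\opti{\Ds}{\cdot,\tau}$ before using $\Hb$ as a \Ds-cover, whereas you use $\Hb$ directly as a $\pol{\Cs}$-cover; since $\Ds \subseteq \pol{\Cs}$, both routes are equivalent.
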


\begin{proof}
  Since $(r,T) \in \ioptic{\pbwmrats}$, it follows from Lemma~\ref{lem:pbp:pbpolsound} that,
  \[
    T \in \opti{\pol{\Cs}}{L \cap \rho_*\inv(r),\tau}
  \]
  Since $\Ds \subseteq \pol{\Cs}$ by definition in Fact~\ref{fct:pbp:strat}, it then follows from Fact~\ref{fct:linclus} that,
  \[
    T \in \opti{\Ds}{L \cap \rho_*\inv(r),\tau}
  \]
  Let $\Hb = \{\upset[\Ds]{v} \mid v \in L \cap \rho_*\inv(r)\}$. By definition, \Hb is a finite set of languages in \Ds. Moreover \Hb is clearly a \Ds-cover of $L \cap \rho_*\inv(r)$. Consequently, we have $\opti{\Ds}{L \cap \rho_*\inv(r),\tau} \subseteq \prin{\tau}{\Hb}$. We get $H \in \Hb$ such that $T \subseteq \tau(H)$. By definition $H = \upset[\Ds]{v}$ for some $v \in A^*$ such that $\rho(v) = r$ and $v \in L$. We now unravel the definition of $\tau = \lratauxppc$ which yields,
  \[
    T \subseteq  \{(s,q) \in M \times R \mid q \in \opti{\pbpol{\Cs}}{\left(\upset[\Ds]{v}\right) \cap \alpha\inv(s),\rho}\}
  \]
  Since $(e,f) \in \dclosr T$ by hypothesis, we obtain as desired that $f \in \opti{\pbpol{\Cs}}{\left(\upset[\Ds]{v}\right) \cap \alpha\inv(e),\rho}$. This concludes the proof.
\end{proof}

We are now ready to construct the desired language $K \in \Kb$ such that $frf+f \leq \rho(K)$. We fix $v \in A^*$ as the word described in Lemma~\ref{lem:pbp:thev}. Consider the following set \Hb,
\[
  \Hb = \{\upset[\Fs]{u} \mid u  \in \left(\upset[\Ds]{v}\right) \cap \alpha\inv(e)\}
\]
Since $\Fs \subseteq \pbpol{\Cs}$ it is immediate by definition that \Hb is a finite set of languages in \pbpol{\Cs}. Therefore, \Hb is clearly a \pbpol{\Cs}-cover of $\left(\upset[\Ds]{v}\right) \cap \alpha\inv(e)$. Consequently, since $f \in \opti{\pbpol{\Cs}}{\left(\upset[\Ds]{v}\right) \cap \alpha\inv(e),\rho}$ by Lemma~\ref{lem:pbp:thev}, we have $f \in \prin{\rho}{\Hb}$. This yields a language $H \in \Hb$ such that $f \leq \rho(H)$. Consider the natural numbers $k,p \in \nat$ given by Lemma~\ref{lem:pbp:choiceofk}. We define,
\[
  G = H^{pk}\cdot \{v\} \cdot H^{pk} \cup H^{pk+1}
\]
In the following lemma, we exhibit a language $K \in \Kb$ which contains $G$. This is the desired language: using the definition of $G$ we show that $frf+f \leq \rho(K)$.

\begin{lemma} \label{lem:pbp:therightlang}
  There exists $K \in \Kb$ such that $G \subseteq K$.
\end{lemma}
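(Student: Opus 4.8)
The plan is to exhibit one explicit word lying in $L\cap\alpha\inv(e)$, use the covering property of $\Kb$ to pick $K\in\Kb$ containing it, and then propagate membership to all of $G$ using that every language of $\Kb$ lies in $\Fs$, hence is upward closed for \canof. Recall that, by construction, $H=\upset[\Fs]{u_0}$ for some word $u_0\in\bigl(\upset[\Ds]{v}\bigr)\cap\alpha\inv(e)$ (this is exactly what ``$H\in\Hb$'' provides), and recall the numbers $p,k$ supplied by Lemma~\ref{lem:pbp:choiceofk}. I would take $w_0=u_0^{pk+1}$ as the witness word.

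First I would check that $w_0\in L\cap\alpha\inv(e)$. Since $\alpha(u_0)=e$ and $e$ is idempotent, $\alpha(w_0)=e^{pk+1}=e$. For $w_0\in L$: the relation $u_0\in\upset[\Ds]{v}$ reads $v\canod u_0$, so from $v\in L$ together with $L\in\Ds$ (Fact~\ref{fct:pbp:strat}) one gets $u_0\in L$; and because our choice of $L$ satisfies $LL=L$ (Lemma~\ref{lem:epswitbis}), it follows that $u_0^{pk+1}\in L$. As $\Kb$ is a $\pbpol{\Cs}$-cover of $L\cap\alpha\inv(e)$, there is then $K\in\Kb$ with $w_0\in K$, and this is the $K$ I claim works.

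Next I would show $G=H^{pk}\{v\}H^{pk}\cup H^{pk+1}\subseteq K$. Since $K\in\Fs$, it is upward closed for the preorder \canof, so it suffices to prove $w_0\canof g$ for every $g\in G$. Every factor of such a $g$ coming from $H=\upset[\Fs]{u_0}$ is $\canof$-above $u_0$, and \canof{} is compatible with concatenation; hence $u_0^{pk+1}\canof g$ when $g\in H^{pk+1}$, whereas $u_0^{pk}\,v\,u_0^{pk}\canof g$ when $g\in H^{pk}\{v\}H^{pk}$. In the latter case I would invoke Lemma~\ref{lem:pbp:choiceofk} with $\ell=k$ — legitimate precisely because $v\canod u_0$ — to obtain $u_0^{pk+1}\canof u_0^{pk}\,v\,u_0^{pk}$, and then conclude $w_0\canof g$ by transitivity of \canof.

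I do not anticipate a genuine obstacle: compatibility of \canof{} with concatenation, upward closure of $\Fs$-languages under \canof, and transitivity are all recalled in the surrounding setup; the only thing that must line up is the bookkeeping of exponents, i.e.\ the ``$+1$'' in $u_0^{pk+1}$ against the patterns $H^{pk}\{v\}H^{pk}$ and $H^{pk+1}$, which is exactly why $G$ was defined with those exponents in the first place. The one spot calling for a little care is the verification that $w_0\in L$, since it uses $L\in\Ds$ and $LL=L$ simultaneously.
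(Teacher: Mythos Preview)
Your proof is correct and follows essentially the same route as the paper: the witness word $u_0^{pk+1}$, the verification that it lies in $L\cap\alpha\inv(e)$ via $L\in\Ds$ and $LL=L$, the choice of $K\in\Kb$ containing it, and the two-case argument using compatibility of $\canof$ with concatenation together with Lemma~\ref{lem:pbp:choiceofk} are all exactly what the paper does.
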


Let us first use Lemma~\ref{lem:pbp:therightlang} to finish the main argument. Let $K \in \Kb$ be the language given by the lemma. Since $G \subseteq K$, we have $\rho(G) \subseteq \rho(K)$ by definition of \ratms. Moreover, since $f \leq \rho(H)$ and $f$ is idempotent, we get from the definition of $G$ that,
\[
  f \cdot \rho(v) \cdot f + f \leq \rho(G)
\]
By definition in Lemma~\ref{lem:pbp:thev}, we have $r = \rho(v)$. Therefore, we get that $frf +f \leq \rho(G) \leq \rho(K)$ which concludes the proof: we have $frf+f \in \prin{\rho}{\Kb}$. It remains to prove the lemma.

\begin{proof}[Proof of Lemma~\ref{lem:pbp:therightlang}]
  By definition of \Hb, $H = \upset[\Fs]{u}$ for some $u \in \left(\upset[\Ds]{v}\right) \cap \alpha\inv(e)$. This implies that $u \in \alpha\inv(e)$ and since $e$ is idempotent, it follows that $u^{pk+1} \in \alpha\inv(e)$. Moreover, we have $v \in L$ by definition in Lemma~\ref{lem:pbp:thev} and since $L \in \Ds$, $u \in \upset[\Ds]{v}$ yields $u \in L$. Thus, since $LL = L$, we get $u^{pk+1} \in L$. Altogether, this yields $u^{pk+1} \in L \cap \alpha\inv(e)$ and since \Kb is by definition a \pbpol{\Cs}-cover of $L \cap \alpha\inv(e)$, there exists $K \in \Kb$ such that $u^{pk+1} \in K$. We prove that $G \subseteq K$ which concludes the proof.

  Consider $w \in G$. We show that $u^{pk+1} \leq_{\Fs} w$. Since $K \in \Fs$ by definition of \Fs in Fact~\ref{fct:pbp:strat} and $u^{pk+1} \in K$, it will follow as desired that $w \in K$.  By definition, $G$ is the union of two languages. Thus,  there are two cases. Assume first that $w \in H^{pk+1}$ (i.e. $w$ is the concatenation of $pk+1$ words in $H$). Since $H = \upset[\Fs]{u}$ and $\leq_{\Fs}$ is compatible with concatenation, it is immediate that this implies $u^{pk+1} \leq_{\Fs} w$, finishing the proof for this case.

  We now assume that $w \in H^{pk}\cdot \{v\} \cdot H^{pk}$. It follows that $w = x v y$ with $x,y \in H^{pk}$. We may reuse the fact that $\leq_{\Fs}$ is compatible with concatenation to obtain $u^{pk} \leq_{\Fs} x$ and $u^{pk} \leq_{\Fs} y$. Consequently, we get $u^{pk}vu^{pk} \leq_{\Fs} xvy$ and by transitivity, it suffices to show that $u^{pk+1} \leq_{\Fs} u^{pk}vu^{pk}$. This is immediate from Lemma~\ref{lem:pbp:choiceofk} as we have $u \in \upset[\Ds]{v}$ which means that $v \canod u$.
\end{proof}

\subsection{Completeness}

We have proved that \ioptic{\lratauxppc} is a \pbpol{\Cs}-complete subset of $M \times R$. It remains to show that it is the least such subset. Therefore, we fix an arbitrary \pbpol{\Cs}-complete set $S \subseteq M \times R$ and show that $\ioptic{\lratauxppc} \subseteq S$.

By definition, $\ioptic{\lratauxppc} \subseteq \lratauxppc(L)$ for every language $L \in \Cs$ such that $\veps \in L$. Hence, it suffices to exhibit a language $L \in \Cs$ such that $\veps \in L$ and $\lratauxppc(L) \subseteq S$. We first choose the appropriate language $L$.

We let $H$ as a \Cs-optimal \iden for the \nice \mratm \pbwmrats. That is, we have $H \in \Cs$, $\veps \in H$ and $\pbwmrats(H) = \ioptic{\pbwmrats}$. Since $H \in \Cs$ is a group language and \Cs is a \vari, it is standard that $H$ is recognized by a morphism $\psi: A^* \to G$ into a finite group $G$ such that every language recognized by $\psi$ belongs to \Cs. (it suffices to choose $\psi$ as the \emph{syntactic morphism} of $H$, see~\cite{pingoodref} for example). In particular, the language $L = \psi\inv(1_G)$ belongs to \Cs and contains \veps by definition. It now remains to show the following inclusion:
\begin{equation} \label{eq:pbp:ginc}
  \lratauxppc(L) \subseteq S
\end{equation}
The argument reuses the main theorem of~\cite{pseps3j} which characterizes the set \popti{\pbpol{\Gs}}{\alpha}{\rho} when \Gs is a \textbf{finite} \vari. Let us first define \Gs.

\medskip

We let \Gs as the class containing all languages recognized by $\psi$ (which is included in \Cs by definition). It is immediate that \Gs is a finite \vari. Moreover, the quotient set ${A^*}/{\sim_{\Gs}}$ is isomorphic to the group $G$ and the morphism $w \mapsto \typ{w}{\Gs}$ from $A^*$ to ${A^*}/{\sim_{\Gs}}$ corresponds to the morphism $\psi: A^* \to G$. Hence, for the sake of avoiding clutter, we identify ${A^*}/{\sim_{\Gs}}$ with $G$ in the proof. We may now recall the theorem of~\cite{pseps3j}. Note that we slightly adapt the formulation for the sake of convenience. In particular, the theorem of~\cite{pseps3j} requires manipulating monoid morphisms and \mratms which are ``\Gs-compatible'' (essentially, this means that they compute $\sim_{\Gs}$-classes). Here, we avoid this requirement (which is not satisfied by $\alpha$ and $\rho$) by manipulating  $\sim_{\Gs}$-classes (which we identify with elements of $G$) explicitly.

Consider two sets $S' \subseteq G \times M \times R$ and $\Ts \subseteq G \times R \times 2^{M \times R}$. We say that the pair $(S',\Ts)$ is \pbpol{\Gs}-saturated if the following conditions are satisfied:
\begin{itemize}
\item Properties on $S' \subseteq G \times M \times R$:
  \begin{enumerate}
  \item {\bf Trivial elements:} For every $w \in A^*$, we have $(\psi(w),\alpha(w),\rho(w)) \in S'$.
  \item {\bf Downset:} We have $\dclosr S' = S'$.
  \item {\bf Multiplication:} For every $(g,s,q),(h,t,r) \in S'$, we have $(gh,st,qr) \in S'$.
  \item {\bf \pbpol{\Gs}-closure.}\label{op:pbp:clos} For every $(g,r,U) \in \Ts$ where $g \in G$ is idempotent (i.e, $g = 1_G$) and every pair of idempotents $(e,f) \in \dclosr U \subseteq M \times R$, we have $(g,e, f \cdot (r + 1_R)  \cdot f) \in S'$.
  \end{enumerate}
\item Properties on $\Ts\subseteq G \times R \times 2^{M \times R}$:
  \begin{enumerate}
    \setcounter{enumi}{4}
  \item {\bf Trivial elements:} For every $w \in A^*$, we have $(\psi(w),\rho(w),\{(\alpha(w),\rho(w))\}) \in \Ts$.
  \item {\bf Multiplication:} For every $(g,q,U),(h,r,V) \in \Ts$, we have $(gh,qr,UV) \in \Ts$.
  \item {\bf Nested \pol{\Cs}-closure.}\label{op:pbp:nest} For every triple of idempotents $(g,f,F) \in \Ts$, we have $(g,f,F \cdot S'(g) \cdot F) \in \Ts$ (where $S'(g) = \{(s,r) \mid M \times R \mid (g,s,r) \in S'\}$).
  \end{enumerate}
\end{itemize}

The following proposition can be verified from~\cite[Theorem~7.11]{pseps3j} (it is a reformulation of this result with our terminology).

\begin{proposition} \label{prop:pbp:algo}
  Consider the least \pbpol{\Gs}-saturated pair $(S',\Ts)$ (with respect to pairwise inclusion). Then, for every $g \in G$ and every $s \in M$, the following equality holds:
  \[
    \opti{\pbpol{\Gs}}{\psi\inv(g) \cap \alpha\inv(s),\rho} = \{r \in R \mid (g,s,r) \in S'\}
  \]
\end{proposition}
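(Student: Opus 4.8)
The plan is to derive Proposition~\ref{prop:pbp:algo} as a faithful reformulation of the main theorem of~\cite{pseps3j} (Theorem~7.11 there), which characterizes, for a \textbf{finite} \vari \Ds, the \pbpol{\Ds}-optimal pointed imprints as a component of a least fixpoint. We instantiate it with $\Ds = \Gs$, which is legitimate: $\Gs$ consists of the finitely many languages recognized by $\psi$, hence is a finite \vari. The only discrepancy between the two statements is packaging: \cite{pseps3j} works with morphisms and \mratms that are ``\Gs-compatible'' --- roughly, that already compute $\sim_\Gs$-classes --- whereas $\alpha$ and $\rho$ are not. So the proof is essentially a translation, carried out in three steps, and no genuinely new argument beyond the translation is needed.

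First I would move to the \Gs-compatible setting. Consider the morphism $\psi \times \alpha : A^* \to G \times M$; since $A^*/{\sim_\Gs}$ is identified with $G$ via $\psi$, composing $\psi \times \alpha$ with the projection $G \times M \to G$ gives the canonical projection, so $\psi \times \alpha$ is \Gs-compatible. If the formulation of~\cite{pseps3j} additionally insists on a \Gs-compatible rating map, I would replace $\rho$ by $\rho_\Gs \times \rho$, where $\rho_\Gs(K) = \{\psi(w) \mid w \in K\}$ is the canonical \Gs-rating map; the $\rho_\Gs$-component records nothing beyond the $G$-component of $\psi \times \alpha$ (because $\psi\inv(g) \in \Gs \subseteq \pbpol{\Gs}$, so optimal covers of $\psi\inv(g)\cap\alpha\inv(s)$ may always be intersected with $\psi\inv(g)$, on which $\rho_\Gs$ is essentially constant), and is projected away at the end with no loss. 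Since $(\psi \times \alpha)\inv(g,s) = \psi\inv(g)\cap\alpha\inv(s)$, the set $\popti{\pbpol{\Gs}}{\psi\times\alpha}{\rho}$, read as a subset of $G\times M\times R$ by splitting its first coordinate, equals
\[
  \{(g,s,r) \mid r \in \opti{\pbpol{\Gs}}{\psi\inv(g)\cap\alpha\inv(s),\rho}\},
\]
so it suffices to show this set is the first component $S'$ of the least \pbpol{\Gs}-saturated pair $(S',\Ts)$.

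The second step lines up the two fixpoint systems. Using $\pbpol{\Gs} = \pol{\copol{\Gs}}$ (\cite[Lemma~33]{pzgenconcat}), Theorem~7.11 of~\cite{pseps3j} presents $\popti{\pbpol{\Gs}}{\psi\times\alpha}{\rho}$ as one component of the least solution of a saturation system whose other component tracks a \emph{nested} polynomial-closure imprint object. I would then check, rule by rule, that under the dictionary of step one these rules become exactly the conditions defining a \pbpol{\Gs}-saturated pair: the generic ``trivial elements'', ``downset'' and ``multiplication'' rules become the three corresponding rules on $S'$; the Boolean-closure rule of~\cite{pseps3j} becomes \pbpol{\Gs}-closure, the restriction to the idempotent first coordinate $g = 1_G$ being precisely the observation that $1_G$ is the \emph{only} idempotent of the group $G$ --- this is the one place where ``\Gs made of group languages'' is used; and the rules governing the auxiliary object (``trivial elements'', ``multiplication'', the nested \pol{\Cs}-closure rule feeding the slice $S'(g)$ back into $\Ts$) become the three rules on $\Ts$. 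Since all rules are monotone and the dictionary is an order isomorphism between the two presentations, the least solution on one side maps to the least solution on the other, which gives the stated identity for all $g \in G$ and $s \in M$.

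The main obstacle is purely faithful bookkeeping: making the correspondence between the auxiliary object of~\cite{pseps3j} and the set $\Ts$ defined here completely explicit --- in particular verifying that the nested \pol{\Cs}-closure rule as written, with $S'(g)$ plugged in, is literally the instantiation of the nested polynomial-closure step of~\cite{pseps3j}, and that no \Gs-compatibility hypothesis used there is silently lost in the passage to $\psi\times\alpha$ and (if needed) $\rho_\Gs\times\rho$. Once this dictionary is pinned down, the proposition is a reindexed copy of an already-established theorem.
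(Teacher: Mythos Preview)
Your proposal is correct and matches the paper's approach: the paper does not give an independent proof either, but explicitly states that the proposition ``can be verified from~\cite[Theorem~7.11]{pseps3j} (it is a reformulation of this result with our terminology)'', the adaptation being precisely to replace the \Gs-compatibility hypothesis on $\alpha$ and $\rho$ by carrying the $G$-coordinate explicitly --- which is exactly your product-morphism translation $\psi\times\alpha$. Your write-up is actually more detailed than the paper's, which leaves the rule-by-rule dictionary to the reader.
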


We now use Proposition~\ref{prop:pbp:algo} to prove that~\eqref{eq:pbp:ginc} holds (i.e. $\lratauxppc(L) \subseteq S$). First, we exhibit a specific  \pbpol{\Gs}-saturated pair $(S',\Ts)$ which we build from $S$ and the \nice \mratm \pbwmrats. Then, we apply the proposition to this pair and use the information we get to deduce~\eqref{eq:pbp:ginc}.

\medskip
\noindent
{\bf Definition of $(S',\Ts)$.}  We start with the definition of the set $\Ts \subseteq G \times R \times 2^{M \times R}$:
\[
  \Ts = \{(1_G,1_R,S)\} \cup \{(g,q,T) \mid (q,T) \in \pbwmrats(\psi\inv(g))\}
\]
We may now define $S' \subseteq G \times M \times R$. We let,
\[
  S' =  \bigcup_{(g,q,T) \in \Ts} \{(g,s,r) \mid (s,r) \in \dclosr T\}
\]
Before we prove that the pair $(S',\Ts)$ is \pbpol{\Gs}-saturated, let us present two simple facts about $S'$ and \Ts that will be useful.

\begin{fact} \label{fct:pbp:gfact1}
  For every $(g,q,T) \in \Ts$, one of the two following properties hold:
  \begin{itemize}
  \item $(g,q,T) = (1_G,1_R,\{(1_M,1_R)\})$, or,
  \item we have $STS = ST = TS = T$.
  \end{itemize}
\end{fact}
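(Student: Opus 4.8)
The plan is to reduce everything to the single identity $S\cdot S=S$, which I would establish first. Since $\pbwmrats$ is a \nice \mratm, its value on the empty word is the multiplicative neutral element of the semiring $(2^{R\times 2^{M\times R}},\cup,\cdot)$, namely $\{(1_R,\{(1_M,1_R)\})\}$. As any \Cs-optimal \iden $L$ for $\pbwmrats$ contains $\veps$, we have $(1_R,\{(1_M,1_R)\})\in\pbwmrats(L)=\ioptic{\pbwmrats}$. Applying the \Cs-operation clause in the definition of \pbpol{\Cs}-completeness to this element yields $\{(1_M,1_R)\}\subseteq S$, i.e. $(1_M,1_R)\in S$. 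Together with the multiplication clause $S\cdot S\subseteq S$, the presence of the neutral element $(1_M,1_R)$ in $S$ gives the reverse inclusion $S\subseteq S\cdot S$, hence $S\cdot S=S$.

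Next I would run a case analysis on $(g,q,T)\in\Ts$, following the two possibilities in the definition $\Ts=\{(1_G,1_R,S)\}\cup\{(g,q,T)\mid(q,T)\in\pbwmrats(\psi\inv(g))\}$. If $(g,q,T)=(1_G,1_R,S)$, then $T=S$ and the claimed identities $STS=ST=TS=T$ follow at once from $S\cdot S=S$ and associativity of the lifted multiplication on $2^{M\times R}$. Otherwise $(q,T)\in\pbwmrats(\psi\inv(g))$; since $\pbwmrats$ is \nice we have $\pbwmrats(\psi\inv(g))=\bigcup_{w\in\psi\inv(g)}\pbwmrats(w)$, and since $\pbwmrats$ sends every letter to a singleton and products of singletons are singletons, $\pbwmrats(w)$ is a singleton for every word $w$. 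Hence there is a word $w$ with $\psi(w)=g$ and $\pbwmrats(w)=\{(q,T)\}$. If $w=\veps$, then $(g,q,T)=(1_G,1_R,\{(1_M,1_R)\})$, which is the excluded alternative. If $w=a_1\cdots a_\ell$ with $\ell\ge 1$, then unravelling the definition of $\pbwmrats$ (and using that it is \tame) gives
\[
  T = S\cdot\{(\alpha(a_1),\rho(a_1))\}\cdot S^2\cdot\{(\alpha(a_2),\rho(a_2))\}\cdots S^2\cdot\{(\alpha(a_\ell),\rho(a_\ell))\}\cdot S .
\]
Since $S\cdot S=S$, prepending or appending a factor $S$ to this product leaves it unchanged, so $ST=TS=STS=T$, as required.

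I do not expect a serious obstacle here: the only step that needs a little care is correctly identifying $\pbwmrats(\veps)$ as the neutral element and invoking the \Cs-operation clause to deduce $(1_M,1_R)\in S$; once $S\cdot S=S$ is in hand, everything else is routine bookkeeping on the explicit shape of $T$.
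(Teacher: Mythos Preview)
Your proposal is correct and follows essentially the same approach as the paper: first derive $(1_M,1_R)\in S$ from the \Cs-operation clause applied to $\pbwmrats(\veps)=\{(1_R,\{(1_M,1_R)\})\}\subseteq\ioptic{\pbwmrats}$, combine with multiplication to get $SS=S$, and then do the case split on the definition of $\Ts$, using the explicit shape of $T$ when it comes from $\pbwmrats(w)$ for $w\in A^+$. The paper's proof is slightly terser (it does not spell out the form of $T$ or the niceness argument giving the witness $w$), but the logical content is identical.
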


\begin{proof}
  Let us make two preliminary observations. First, we have $(1_M,1_R) \in S$. Indeed, we know that $S$ is \pbpol{\Cs}-complete which implies that for every $(r,T) \in \ioptic{\pbwmrats} = \pbwmrats(H)$, we have $T \subseteq S$. Since $\veps \in H$ and $\pbwmrat{\veps} = \{(1_R,\{(1_M,1_R)\})\}$, it follows that $(1_M,1_R) \in S$. Moreover, we know that $SS = S$. Indeed, since $S$ is \pbpol{\Cs}-complete, we have $S \cdot SS$ and the converse inclusion is immediate since $(1_M,1_R) \in S$.

  We may now prove the fact. Consider $(g,q,T) \in \Ts$. By definition of \Ts, there are two cases. First, if $(g,q,T) = (1_G,1_R,S)$, it is immediate that second assertion in the fact holds since $SS = S$. Otherwise, we have $(q,T) \in \pbwmrats(\psi\inv(g))$. By definition of \pbwmrats, it is immediate that there exists some word $w \in \psi\inv(g)$ such that $\pbwmrats(w) = \{(q,T)\}$. If $w = \veps$, then $g = 1_G$, $q= 1_R$ and $T = \{(1_M,1_R)\}$: the first assertion in the fact holds. Otherwise $w \in A^+$ and since $SS = S$ it is immediate from the definition of \pbwmrats that $STS = ST = TS = T$: the second assertion in the fact holds.
\end{proof}

\begin{fact} \label{fct:pbp:gfact2}
  We have $S'(1_G) = S$.
\end{fact}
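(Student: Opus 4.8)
The plan is to unfold the three layers of notation involved—the set $S'(1_G)$, the set $S'$ itself, and the set $\Ts$—reduce the claimed identity to two inclusions, and discharge each using properties already available in this section. Throughout, recall that the canonical order on the rating set of the \mratm \pbwmrats is inclusion, so that ``$\leq$'' there literally means ``$\subseteq$''.

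First I would compute $S'(1_G)$. By the definition of the notation $S'(g)$, we have $S'(1_G)=\{(s,r)\mid (1_G,s,r)\in S'\}$, and unwinding the definition of $S'$ this equals $\bigcup\{\dclosr T\mid (1_G,q,T)\in\Ts\}$. The triples of $\Ts$ whose first coordinate is $1_G$ are exactly $(1_G,1_R,S)$ together with the triples $(1_G,q,T)$ for $(q,T)\in\pbwmrats(\psi\inv(1_G))$. Since $\psi\inv(1_G)=L$ by the definition of $L$, this yields the working formula $S'(1_G)=\dclosr S\ \cup\ \bigcup_{(q,T)\in\pbwmrats(L)}\dclosr T$.

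For the inclusion $S\subseteq S'(1_G)$, I would use that $S$ is \pbpol{\Cs}-complete, hence satisfies $\dclosr S\subseteq S$; as the reverse inclusion is trivial, $\dclosr S=S$, and this set is one of the terms of the union above. For the converse $S'(1_G)\subseteq S$, since $\dclosr S=S$ it suffices to show $\dclosr T\subseteq S$ for every $(q,T)\in\pbwmrats(L)$. The key observation is that $L\subseteq H$: indeed $\psi$ was chosen as the syntactic morphism of $H$, so $H=\psi\inv(F)$ for some $F\subseteq G$, and $\veps\in H$ forces $1_G\in F$, whence $L=\psi\inv(1_G)\subseteq H$. As \pbwmrats is a \ratm, and therefore monotone, this gives $\pbwmrats(L)\subseteq\pbwmrats(H)=\ioptic{\pbwmrats}$, the last equality holding because $H$ is a \Cs-optimal \iden for \pbwmrats. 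Thus every $(q,T)\in\pbwmrats(L)$ lies in $\ioptic{\pbwmrats}$, and the \Cs-operation clause in the definition of \pbpol{\Cs}-completeness gives $T\subseteq S$; monotonicity of $\dclosr$ together with $\dclosr S=S$ then gives $\dclosr T\subseteq S$. Combining the two inclusions proves $S'(1_G)=S$.

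The only genuinely load-bearing step is the inclusion $L\subseteq H$, which is precisely why $\psi$ was taken to be the syntactic morphism of $H$ rather than an arbitrary recognizing morphism; once it is in place, the rest is a routine application of the downset and \Cs-operation properties of \pbpol{\Cs}-complete subsets, so I do not anticipate a real obstacle here.
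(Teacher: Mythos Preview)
Your proposal is correct and follows essentially the same route as the paper's proof: both unfold the definitions of $S'$ and $\Ts$, use the inclusion $\psi\inv(1_G)\subseteq H$ (coming from $\psi$ recognizing $H$ with $\veps\in H$) to obtain $\pbwmrats(\psi\inv(1_G))\subseteq\ioptic{\pbwmrats}$, and then invoke the \Cs-operation and downset properties of the \pbpol{\Cs}-complete set $S$ to conclude.
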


\begin{proof}
  We first show that $S \subseteq S'(1_G)$. By definition, $(1_G,1_R,S) \in \Ts$ which implies that $S \subseteq S'(1_G)$ by definition of $S'$. it remains to prove that $S'(1_G) \subseteq S$. let $(s,r) \in S'(1_G)$. By definition, we have $(q,T) \in R \times 2^{M \times R}$ such that $(1_G,q,T) \in \Ts$ and $(s,r) \in \dclosr T$.  By definition of \Ts, there are now two cases. First, if $(1_G,q,T) = (1_G,1_R,S)$, it is immediate that $(s,r) \in \dclosr S$ which concludes the proof: since $S$ is \pbpol{\Cs}-saturated, we have $\dclosr S = S$. Otherwise, we have $(q,T) \in \pbwmrats(\psi\inv(1_G))$. Recall that $\psi$ recognizes $H$ and $\veps \in H$. Thus, $\psi\inv(1_G) \subseteq H$ which yields $(q,T) \in \pbwmrats(H)$. Since $H$ is a \Cs-optimal \iden for \pbwmrats by definition, it follows that $(q,T) \in \ioptic{\pbwmrats}$. Thus, since $S$ is \pbpol{\Cs}-complete by definition, we have $\dclosr T \subseteq \dclosr S = S$ and we conclude that $(s,r) \in S$, finishing the proof.
\end{proof}

\medskip
\noindent
{\bf Main argument.} We may now prove the inclusion described in~\eqref{eq:pbp:ginc}. Using the hypothesis that $S \subseteq M \times R$ is \pbpol{\Cs}-complete, one may prove the following proposition.

\begin{proposition} \label{prop:pbp:compgmain}
  The pair $(S',\Ts)$ is \pbpol{\Gs}-saturated.
\end{proposition}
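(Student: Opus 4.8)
The plan is to go through all seven closure conditions in the definition of a \pbpol{\Gs}-saturated pair and verify each of them for the pair $(S',\Ts)$ constructed just above. Three ingredients do the work: the standing hypothesis that $S$ is \pbpol{\Cs}-complete, the two dichotomy facts (Fact~\ref{fct:pbp:gfact1} and Fact~\ref{fct:pbp:gfact2}), and the observation that $\pbwmrats$ is a \emph{nice} \mratm, so that $\pbwmrats(\psi\inv(g))$ is the union of the singletons $\pbwmrats(w)$ for $w\in\psi\inv(g)$ and multiplication in its rating set $2^{R\times 2^{M\times R}}$ is computed componentwise. The identities $(1_M,1_R)\in S$ and $SS=S$ (already established inside the proof of Fact~\ref{fct:pbp:gfact1}) are used throughout, as is the inclusion $\psi\inv(1_G)\subseteq H$, which holds since $\veps\in H$ and $H$ is recognized by $\psi$, whence $\pbwmrats(\psi\inv(1_G))\subseteq\pbwmrats(H)=\ioptic{\pbwmrats}$.

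I would handle the conditions on $\Ts$ first. \emph{Trivial elements} and \emph{multiplication} follow from evaluating $\pbwmrats$ on words of $\psi\inv(g)$ together with the fact that $\psi$ is a group morphism (so $\psi\inv(g)\psi\inv(h)\subseteq\psi\inv(gh)$); the only extra point is to account for the distinguished element $(1_G,1_R,S)$, where products collapse via $SS=S$. The \emph{nested \pol{\Cs}-closure} condition is where the idempotency hypothesis is essential: given an idempotent triple $(g,f,F)\in\Ts$, one has $g=1_G$ and hence $S'(g)=S$ by Fact~\ref{fct:pbp:gfact2}; Fact~\ref{fct:pbp:gfact1} then splits into the degenerate case $(g,f,F)=(1_G,1_R,\{(1_M,1_R)\})$, where $F\cdot S'(g)\cdot F=S$ and $(1_G,1_R,S)\in\Ts$ by construction, and the case $SFS=SF=FS=F$, where $F\cdot S'(g)\cdot F=F\cdot S\cdot F=F\cdot F=F$ (using $FS=F$ and $F=FF$), so the produced triple is already in $\Ts$.

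For the conditions on $S'$: \emph{downset} is immediate since $S'$ is a union of sets $\{(g,s,r)\mid(s,r)\in\dclosr T\}$, each already downward closed in the last two coordinates; \emph{trivial elements} reduces to the trivial elements of $\Ts$ via $(s,r)\in\dclosr\{(s,r)\}$; \emph{multiplication} follows from the multiplication condition on $\Ts$ together with monotonicity of the products of $M$ and $R$ for the canonical order. The substantive one is \pbpol{\Gs}-\emph{closure}: given $(1_G,r,U)\in\Ts$ and an idempotent $(e,f)\in\dclosr U$, apply Fact~\ref{fct:pbp:gfact1} once more. In the degenerate case $r=1_R$, so $f\cdot(r+1_R)\cdot f=f\cdot f=f$ and $(e,f)\in\dclosr S=S=S'(1_G)$ by Fact~\ref{fct:pbp:gfact2}; otherwise $(r,U)\in\pbwmrats(\psi\inv(1_G))\subseteq\ioptic{\pbwmrats}$, so the \pbpol{\Cs}-operation in the definition of a \pbpol{\Cs}-complete set gives $(e,f\cdot(1_R+r)\cdot f)\in S=S'(1_G)$, that is, $(1_G,e,f\cdot(r+1_R)\cdot f)\in S'$.

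I expect the main difficulty to be organizational rather than conceptual: the argument lives in three nested semiring structures ($R$, $2^{M\times R}$ and $2^{R\times 2^{M\times R}}$) and each condition requires isolating the distinguished element $(1_G,1_R,S)$ of $\Ts$ as a special case — Fact~\ref{fct:pbp:gfact1} is exactly what makes every such special case, and more generally every idempotent triple, collapse cleanly. Once Proposition~\ref{prop:pbp:compgmain} is established, $(S',\Ts)$ contains the least \pbpol{\Gs}-saturated pair; combining this with Proposition~\ref{prop:pbp:algo} applied at $g=1_G$, with $\pbpol{\Gs}\subseteq\pbpol{\Cs}$, and with the definition $L=\psi\inv(1_G)$ yields $\lratauxppc(L)\subseteq S$, hence $\pbpoluopti\subseteq S$, completing the completeness direction of Theorem~\ref{thm:pbpolg}.
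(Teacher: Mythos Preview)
Your plan is correct and matches the paper's proof essentially step for step: verify the three properties on $\Ts$ first (using Fact~\ref{fct:pbp:gfact1} to absorb the distinguished element $(1_G,1_R,S)$ in products, and multiplicativity of $\pbwmrats$ together with $\psi\inv(g)\psi\inv(h)\subseteq\psi\inv(gh)$ for the rest), then derive the four properties on $S'$ from those on $\Ts$ and, for \pbpol{\Gs}-closure, from the \pbpol{\Cs}-completeness of $S$ via the inclusion $\psi\inv(1_G)\subseteq H$.

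One slip to correct. In the \pbpol{\Gs}-closure step you write ``apply Fact~\ref{fct:pbp:gfact1} once more'' and then conclude, in the non-degenerate branch, that $(r,U)\in\pbwmrats(\psi\inv(1_G))$. That inference is not licensed by Fact~\ref{fct:pbp:gfact1}: its non-degenerate case only gives $SUS=SU=US=U$, and the distinguished element $(1_G,1_R,S)$ itself lands in that case without $(1_R,S)$ having to lie in $\pbwmrats(\psi\inv(1_G))$. The dichotomy you actually want here is the one coming directly from the \emph{definition} of $\Ts$ as a union: either $(1_G,r,U)=(1_G,1_R,S)$, or $(r,U)\in\pbwmrats(\psi\inv(1_G))$. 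With that split, your two sub-arguments (the first giving $r=1_R$, $U=S$, hence $(e,f)\in\dclosr S=S$ and $f(1_R+r)f=f$; the second invoking \pbpol{\Cs}-operation on $S$) go through verbatim and coincide with the paper's proof.
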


Let us first use Proposition~\ref{prop:pbp:compgmain} to show that~\eqref{eq:pbp:ginc} holds and conclude the main argument. We have to show that $\lratauxppc(L) \subseteq S$. Consider $(s,r) \in \lratauxppc(L)$. We show that $(s,r) \in S$. By definition, we have,
\[
  r \in \opti{\pbpol{\Cs}}{\alpha\inv(s) \cap L,\rho}
\]
Since $\Gs \subseteq \Cs$, it is immediate that $\pbpol{\Gs} \subseteq \pbpol{\Cs}$. Therefore, it is clear that,
\[
  r \in \opti{\pbpol{\Gs}}{\alpha\inv(s) \cap L,\rho}
\]
Moreover, we have $L = \psi\inv(1_G)$ by definition. Therefore, since $(S',\Ts)$ is \pbpol{\Gs}-saturated, it is immediate from Proposition~\ref{prop:pbp:algo} that,
\[
  (1_G,s,r) \in S'
\]
By Fact~\ref{fct:pbp:gfact2}, this yield $(s,r) \in S$, finishing the completeness proof. It remains to present an argument for Proposition~\ref{prop:pbp:compgmain}.

\begin{proof}[Proof of Proposition~\ref{prop:pbp:compgmain}]
  We prove that $(S',\Ts)$ is \pbpol{\Gs}-saturated. There are seven properties to verify: four apply to $S'$ and three to \Ts. We start with the latter: we must show that \Ts contains the trivial elements and is closed under multiplication and nested \pol{\Gs}-closure.

  \medskip
  \noindent
  {\bf Multiplication for \Ts.} Consider $(g,q,T),(g',q',T') \in \Ts$, we have to show that $(gg',qq',TT') \in \Ts$. Since \Ts is defined as the union of two sets, there are several cases. Assume first that either $(g,q,T)$ or $(g',q',T')$ is equal to $(1_G,1_R,S)$. By symmetry, we consider the later case: $(g',q',T') = (1_G,1_R,S)$. it follows that $(gg',qq',TT') = (g,q,TS)$. In view of Fact~\ref{fct:pbp:gfact1}, there are now two cases:
  \begin{itemize}
  \item First, if we have $(g,q,T) = (1_G,1_R,\{(1_M,1_R)\})$, then $(gg',qq',TT') = (1_G,1_R,S) \in \Ts$.
  \item Otherwise, $TS = T$ and  $(gg',qq',TT') = (g,q,T) \in \Ts$.
  \end{itemize}
  Now assume that $(q,T) \in \pbwmrats(\psi\inv(g))$ and $(q',T') \in \pbwmrats(\psi\inv(g'))$. Under these hypotheses, we obtain that, $(qq',TT') \in \pbwmrats(\psi\inv(g)\psi\inv(g'))$. Since $\psi$ is a morphism, we have  $\psi\inv(g)\psi\inv(g') \subseteq \psi\inv(gg')$ which yields $(qq',TT') \in \pbwmrats(\psi\inv(gg'))$. By definition, this implies $(gg',qq',TT') \in \Ts$.

  \medskip
  \noindent
  {\bf Trivial elements for \Ts.} Consider $w \in A^*$, we show that $(\psi(w),\rho(w),\{\alpha(w),\rho(w)\}) \in \Ts$. Assume first that $w = \veps$. In that case, observe that $\pbwmrats(\veps) = \{(1_R,\{(1_M,1_R)\})\}$ (this is the neutral element of $2^{R \times 2^{M \times R}}$). Hence, since $\veps \in \psi\inv(\psi(\veps))$, we have $(\rho(\veps),\{(\alpha(\veps),\rho(\veps)\}) \in \pbwmrats(\psi\inv(\psi(\veps)))$ which yields that $(\psi(\veps),\rho(\veps),\{\alpha(\veps),\rho(\veps)\}) \in \Ts$.

  Assume now that $w \in A^+$. Since we already proved that \Ts is closed under multiplication, it suffices to consider the case when $w = a \in A$. We already observed that $(1_M,1_R) \in S$ (see the proof of Fact~\ref{fct:pbp:gfact1}). Therefore, it is immediate that $(\rho(a),\{(\alpha(a),\rho(a))\}) \in \pbwmrats(a)$ by definition. Hence, since $a \in  \psi\inv(\psi(a))$, we have $(\rho(a),\{(\alpha(a),\rho(a)\}) \in \pbwmrats(\psi\inv(\psi(a)))$ which yields that $(\psi(a),\rho(a),\{\alpha(a),\rho(a)\}) \in \Ts$.

  \medskip
  \noindent
  {\bf Nested \pol{\Gs}-closure.}  Consider a triple of idempotents $(1_G,f,F) \in \Ts$ (note that the first element is $1_G$ since the only idempotent in a group is its neutral element). We have to show that,
  \[
    (1_G,f,F \cdot S'(1_G) \cdot F) \in \Ts
  \]
  By Fact~\ref{fct:pbp:gfact2}, we know that $S'(1_G) = S$. Hence, we have to show that $(1_G,f,F \cdot S \cdot F) \in \Ts$. In view of Fact~\ref{fct:pbp:gfact1}, there are two cases:
  \begin{itemize}
  \item First, if we have $(1_G,f,F) = (1_G,1_R,\{(1_M,1_R)\})$, then $(1_G,f,F \cdot S \cdot F) = (1_G,1_R,S) \in \Ts$.
  \item Otherwise, $FSF = F$ and  $(1_G,f,F \cdot S \cdot F) = (1_G,f,F) \in \Ts$.
  \end{itemize}

  \medskip

  We turn to the properties concerning $S'$: we have to show that $S'$ contains the trivial elements and is closed under downset, multiplication and \pbpol{\Gs}-closure.

  \medskip
  \noindent
  {\bf Trivial elements for $S'$.} Consider $w \in A^*$, we have to show that $(\psi(w),\alpha(w),\rho(w)) \in S'$. We already show that $(\psi(w),\rho(w),\{\alpha(w),\rho(w)\}) \in \Ts$. Thus, it is immediate by definition of $S'$ from \Ts that $(\psi(w),\alpha(w),\rho(w)) \in S'$.

  \medskip
  \noindent
  {\bf Downset for $S'$.} It is immediate from the definition of $S'$ that $S' = \dclosr S'$ (the closure under downset is built-in).

  \medskip
  \noindent
  {\bf Multiplication for $S'$.} Consider $(g_1,s_1,r_1),(g_2,s_2,r_2) \in S'$, we show that $(g_1g_2,s_1s_2,r_1r_2) \in S'$. By definition of $S'$, we have $(q_1,T_1),(q_2,T_2) \in R \times 2^{M \times R}$ such that $(g_i,q_i,T_i) \in \Ts$ and $(s_i,r_i) \in \dclosr T_i$ for $i = 1,2$. Since we already established that \Ts is closed under multiplication, we have $(g_1q_2,q_1q_2,T_1T_2) \in \Ts$. Moreover, it is clear that $(s_1q_2,r_1r_2) \in (\dclosr T_1)(\dclosr T_2) \subseteq \dclosr (T_1T_2)$. Hence, we get as desired that $(g_1g_2,s_1s_2,r_1r_2) \in S'$ by definition of $S'$.

  \medskip
  \noindent
  {\bf \pbpol{\Gs}-closure.} Consider an element $(1_G,r,U) \in \Ts$ and a pair of idempotents $(e,f) \in \dclosr U$, we have to show that,
  \[
    (1_G,e,f \cdot (1_R + r) \cdot f) \in S'
  \]
  By definition of \Ts, there are two cases. First, it $(1_G,r,U) = (1_G,1_R,S)$, then $(e,f) \in \dclosr S$ and $(1_G,e,f \cdot (1_R + r) \cdot f) = (1_G,e,f)$. It is now immediate by definition of $S'$ from \Ts that $(1_G,e,f) \in S'$.

  Otherwise, we have $(r,U) \in \pbwmrats(\psi\inv(1_G))$.  Recall that $\psi$ recognizes $H$ and $\veps \in H$. Thus, $\psi\inv(1_G) \subseteq H$ which yields $(r,U) \in \pbwmrats(H)$. Since $H$ is a \Cs-optimal \iden for \pbwmrats by definition, it follows that $(r,U) \in \ioptic{\pbwmrats}$. Thus, since $S$ is \pbpol{\Cs}-complete by definition, $(e,f) \in \dclosr U$ implies that $(e,f \cdot (1_R + r) \cdot f) \in S$. Since $ (1_G,1_R,S) \in \Ts$, this implies that $(1_G,e,f \cdot (1_R + r) \cdot f) \in S'$ by definition of $S'$ from \Ts, finishing the proof.
\end{proof}

\end{document}